\newtheorem{theorem}{Theorem}
\newtheorem{proposition}{Proposition}
\newtheorem{lemma}{Lemma}
\newtheorem{remark}{Remark}
\newtheorem{corollary}{Corollary}
\newtheorem{definition}{Definition}
\newtheorem{conjecture}{Conjecture}
\newtheorem{model}{Model}
\journal{Performance Evaluation}
\begin{document}
\begin{frontmatter}

\title{Fundamental Results on Fluid Approximations of Stochastic Process Algebra Models}


\author{Jie Ding} \ead{jieding@yzu.edu.cn}
\address{School of Information Engineering, The University of Yangzhou, Yangzhou, China}

\author{Jane Hillstion} \ead{jane.hillston@ed.ac.uk}
\address{ Laboratory for Foundations of Computer Science,
School of Informatics, The University of Edinburgh, Edinburgh, UK.}

\begin{abstract}


\par In order to avoid the state space explosion problem encountered
in the quantitative analysis of large scale PEPA models, a fluid
approximation approach has recently been proposed, which results in
a set of ordinary differential equations (ODEs) to approximate the
underlying continuous time Markov chain (CTMC). This paper presents
a mapping semantics from PEPA to ODEs based on a numerical
representation scheme, which extends the class of PEPA models that
can be subjected to fluid approximation. Furthermore, we have
established the fundamental characteristics of the derived ODEs,
such as the existence, uniqueness, boundedness and nonnegativeness
of the solution. The convergence of the solution as time tends to
infinity for several classes of PEPA models, has been proved under
some mild conditions. For general PEPA models, the convergence is
proved under a particular condition, which has been revealed to
relate to some famous constants of Markov chains such as the
spectral gap and the Log-Sobolev constant. This thesis has
established the consistency between the fluid approximation and the
underlying CTMCs for PEPA, i.e.\ the limit of the solution is
consistent with the equilibrium probability distribution
corresponding to a family of underlying density dependent CTMCs.
\end{abstract}

\begin{keyword}
Fluid Approximation; PEPA; Convergence
\end{keyword}

\end{frontmatter}

\tableofcontents

\section{Introduction}

Stochastic process algebras, such as PEPA~\cite{Jane1},
TIPP~\cite{TIPP}, EMPA~\cite{EMPA}, IMC~\cite{IMC-Thesis}, are
powerful modelling formalisms for concurrent systems which have
enjoyed considerable success over the last decade. Such modeling can
help designers and system managers by allowing aspects of a system
which are not readily tested, such as scalability, to be analysed
before a system is deployed. However, both model construction and
analysis can be challenged by the size and complexity of large scale
systems. This problem, called \emph{state space explosion}, is
inherent in the discrete state approach employed in stochastic
process algebras and many other formal modelling approaches. To
overcome this problem, many work devoted exploiting the
compositionality of the process algebra to decompose or simplify the
underlying CTMC e.g.\
\cite{Towards-Product-Form,HillstonNigel99-product-form,
ClarkHillston-Product-Form-Insensitive,Harrison-Turning-back,
Mertsiotakis97-PhDThesis,Semi-numerical-Solution-SPA}. Another
technique 
is the use of Kronecker
algebra~\cite{Kloul-Efficient-Kronecker-Representation,Kloul-Formal-Techniques,
Buchholz-Compositional-Markovian-ProcessAlgebra,Rettelbach94-compositionalminimal}.
In addition, abstract Markov chains and stochastic bounds techniques
have also been used to analyse large scale PEPA
models~\cite{MichaelSmith-Abstraction-Model-Checking,MichaelSmith-Composition-Abstraction}.

The techniques reported above are based on the discrete state space.
Therefore as the size of the state space is extremely large, these
techniques are not always strong enough to handle the state space
explosion problem. For example, in the modelling of biochemical
mechanisms using the stochastic $\pi$-calculus
\cite{PI-Calculus-Bio-Sto-Extension,Pi-bio-lambda-switch} and PEPA
\cite{Calder-06strongercomputational,RKIPTCSB06}, the state space
explosion problem becomes almost insurmountable. Consequently in
many cases models are analysed by discrete event simulation rather
than being able to abstractly consider all possible behaviours.

To avoid this problem Hillston proposed a radically different
approach in \cite{Jane2} from the following two perspectives:
choosing a more abstract state representation in terms of state
variables, quantifying the types of behaviour evident in the model;
and assuming that these state variables are subject to continuous
rather than discrete change. This approach results in a set of ODEs,
leading to the evaluation of transient, and in the limit, steady
state measures.

\par  However, there
are not many discussions on the fundamental problems, such as the
existence, uniqueness, boundedness and nonnegativeness of the
solution, as well as the its asymptotic behaviour as time tends to
infinity, and the relationship between the derived ODEs and the
underlying CTMCs for general PEPA models. Solving these problems can
not only bring confidence in the new approach, but can also provide
new insight into, as well as a profound understanding of,
performance formalisms. This paper will focus on these topics  and
give answers to these problems.

\par The remainder of this paper is structured as follows.
Section~2 will give a brief introduction to PEPA as well as a
numerical representation scheme developed for PEPA. Based on this
scheme, the fluid approximation of PEPE models will be introduced in
Section~3. In this section, the existence and uniqueness of
solutions of the derived ODEs will be presented. In addition, we
will show that for a PEPA model without synchronisations, the
solution of the ODEs converges as time goes to infinity and the
limit coincides with the steady-state probability distribution of
the underlying CTMC\@. In Section~4, we demonstrate the consistency
between the fluid approximation and the Markov chains underlying the
same PEPA model. This relationship will be utilised to investigate
the long-time behaviour of the ODEs' solutions in Section~5. The
convergence of the solutions will be proved under a particular
condition, which relates the convergence problem to some well-known
constants of Markov chains such as the spectral gap and the
Log-Sobolev constant. Section~6 and~7 present an analytic approach
to analyse the fluid approximation. For several classes of PEPA
models, the convergence will be demonstrated under some mild
conditions, and the coefficient matrices of the derived ODEs have
been exposed to have the following property: all eigenvalues are
either zeros or have negative real parts. In addition, the
structural property of invariance in PEPA models will be shown to
play an important role in the proof of convergence. Finally, after
presenting some related work in Section~8, we conclude the paper in
Section~9.

\section{The PEPA modelling formalism}

This section will briefly introduce the PEPA language and its
numerical representation scheme. The numerical representation scheme
for PEPA was developed by Ding in his thesis~\cite{JieThesis}, and
represents a model numerically rather than syntactically supporting
the use of mathematical tools and methods to analyse the model.

\subsection{Introduction to PEPA}\label{section:PEPA}

PEPA (Performance Evaluation Process Algebra)~\cite{Jane1},
developed by Hillston in the 1990s, is a high-level model
specification language for low-level stochastic models, and
describes a system as an interaction of the components which engage
in activities.  In contrast to classical process algebras,
activities are assumed to have a duration which is a random variable
governed by an exponential distribution. Thus each activity in PEPA
is a pair $(\alpha,r)$ where $\alpha$ is the action type and $r$ is
the activity rate. The language has a small number of combinators,
for which we provide a brief introduction below; the structured
operational semantics can be found in~\cite{Jane1}.  The grammar is
as follows:
\begin{eqnarray*}
  S & ::= & (\alpha, r). S \mid S + S \mid C_S \\
  P & ::= & P \sync{L} P \mid P/L \mid C
\end{eqnarray*}
where $S$ denotes a \emph{sequential component} and $P$ denotes a
\emph{model component} which executes in parallel.  $C$ stands for a
constant which denotes either a sequential component or a model
component as introduced by a definition.  $C_S$ stands for constants
which denote sequential components.  The effect of this syntactic
separation between these types of constants is to constrain legal
PEPA components to be cooperations of sequential processes.

\textbf{Prefix}: The prefix component $(\alpha,r).S$ has a
designated first activity $(\alpha,r)$, which has action type
$\alpha$ and a duration which satisfies exponential distribution
with parameter $r$, and subsequently behaves as $S$.

\textbf{Choice}: The component $S+T$ represents a system which may
behave either as $S$ or as $T$. The activities of both $S$ and $T$
are enabled.  Since each has an associated rate there is a
\emph{race condition} between them and the first to complete is
selected.  This gives rise to an implicit probabilistic choice
between actions dependent of the relative values of their rates.

\textbf{Hiding}: Hiding provides type abstraction, but note that the
duration of the activity is unaffected.  In $P/L$ all activities
whose action types are in $L$ appear as the ``private'' type $\tau$.

\textbf{Cooperation}:$P\sync{L}Q$ denotes cooperation between $P$
and $Q$ over action types in the cooperation set $L$. The cooperands
are forced to synchronise on action types in $L$ while they can
proceed independently and concurrently with other enabled activities
(\emph{individual} activities).  The rate of the synchronised  or
\emph{shared} activity is determined by the slower cooperation (see
\cite{Jane1} for details). We write $P \parallel Q$ as an
abbreviation for $P\sync{L}Q$ when $L = \emptyset$ and $P[N]$ is
used to represent $N$ copies of $P$ in a parallel, i.e.\ $P[3] = P
\parallel P \parallel P$.

\textbf{Constant}: The meaning of a constant is given by a defining
equation such as $A\rmdef P$.  This allows infinite behaviour over
finite states to be defined via mutually recursive definitions.

On the basis of the operational semantic rules (please refer
to~\cite{Jane1} for details),
 a PEPA model may be regarded as a labelled multi-transition system
$$\left(\mathcal{C},\mathcal{A}ct,
\left\{\mathop{\longrightarrow}\limits^{(\alpha,r)}|(\alpha,r)\in
\mathcal{A}ct\right\}\right)$$ where $\mathcal{C}$ is the set of
components, $\mathcal{A}ct$ is the set of activities and the
multi-relation $\mathop{\longrightarrow}\limits^{(\alpha,r)}$ is
given by the rules. If a component $P$ behaves as $Q$ after it
completes activity $(\alpha, r)$, then denote the transition as
$P\mathop{\longrightarrow}\limits^{(\alpha,r)}Q$.

The memoryless property of the exponential distribution, which is
satisfied by the durations of all activities, means that the
stochastic process underlying the labelled transition system has the
Markov property. Hence the underlying stochastic process is a
CTMC\@. Note that in this representation the states of the system
are the syntactic terms derived by the operational semantics.  Once
constructed the CTMC can be used to find steady state or transient
probability distributions from which quantitative performance  can
be derived.

\subsection{Numerical Representation of PEPA Models}
\label{subsec:Background-NumericalRepresentation}

As explained above there have been two key steps in the use of fluid
approximation for PEPA models: firstly, the shift to a numerical
vector representation of the model, and secondly, the use of
ordinary differential equations to approximate the dynamic behaviour
of the underlying CTMC\@.  In this paper we are only concerned with
the former modification.

This section presents the numerical representation of PEPA models
developed in~\cite{JieThesis}. For convenience, we may represent a
transition $P\stackrel{(\alpha,r)}{\longrightarrow}Q$ as
$P\stackrel{(\alpha,r_\alpha^{P\rightarrow Q})}{\longrightarrow}Q$,
or often simply as $P\stackrel{\alpha}{\longrightarrow}Q$ since the
rate is not pertinent to structural analysis, where $P$ and $Q$ are
two local derivatives.  Following~\cite{Jane2}, hereafter the term
\emph{local derivative} refers to the local state of a single
sequential component. In the standard structured operational
semantics of PEPA used to derive the underlying CTMC, the state
representation is syntactic, keeping track of the local derivative
of each component in the model.  In the alternative \emph{numerical
vector representation} some information is lost as states only
record the number of instances of each local derivative:

\begin{definition} (\textbf{{Numerical Vector
Form}~\cite{Jane2}})\label{def:Ch3-NumVectorForm}. For an arbitrary
PEPA model $\mathcal{M}$ with $n$ component types ${C}_i,
i=1,2,\cdots,n$, each with $d_i$ distinct local derivatives, the
numerical vector form of $\mathcal{M}$, $\mathbf{m}(\mathcal{M})$,
is a vector with $d=\sum_{i=1}^nd_i$ entries. The entry
$\mathbf{m}[{C}_{i_j}]$ records how many instances of the $j$th
local derivative of component type ${C}_i$ are exhibited in the
current state.
\end{definition}

In the following we will find it useful to distinguish derivatives
according to whether they enable an activity, or are the result of
that activity:

\begin{definition}(\textbf{Pre and post local
derivative})\label{def:Ch3-pre-post-derivative}
\begin{enumerate}
  \item If a local derivative $P$ can enable an activity $\alpha$, that
is $P\mathop{\longrightarrow}\limits^{\alpha}\cdot$, then $P$ is
called a \emph{pre local derivative} of $\alpha$. The set of all pre
local derivatives of $\alpha$ is denoted by $\mathrm{pre}(\alpha)$,
called the
\emph{pre set} of $\alpha$.\\

  \item  If $Q$ is a
local derivative obtained by firing an activity $\alpha$, i.e.\
$\cdot\mathop{\longrightarrow}\limits^{\alpha}Q$, then $Q$ is called
a \emph{post local derivative} of $\alpha$. The set of all post
local derivatives is denoted by $\mathrm{post}(\alpha)$, called the
\emph{post
set} of $\alpha$.\\

  \item  The set of all the local derivatives derived from
$P$ by firing $\alpha$, i.e.\
$$
      \mathrm{post}(P,\alpha)=\{Q \mid P\stackrel{\alpha}{\longrightarrow}Q\},
$$
is called the \emph{post set of $\alpha$ from
  $P$}.
\end{enumerate}

\end{definition}

Within a PEPA model there may be many instances of the same activity
type but we will wish to identify those that have exactly the same
effect within the model.  In order to do this we additionally label
activities according to the derivatives to which they relate, giving
rise to \emph{labelled activities}:

\begin{definition}\label{definition:LabelledAcitvity}(\textbf{Labelled
Activity}).

\begin{enumerate}
  \item For any individual activity $\alpha$, for
each $P\in \mathrm{pre}(\alpha), Q\in \mathrm{post}(P,\alpha)$,
label $\alpha$ as
$\alpha^{P\rightarrow Q}$.\\

  \item For a shared activity $\alpha$, for each
  $(Q_1,Q_2,\cdots,Q_k)$ in
$$
\mathrm{post}(\mathrm{pre}(\alpha)[1],\alpha)\times
\mathrm{post}(\mathrm{pre}(\alpha)[2],\alpha)\times\cdots\times
\mathrm{post}(\mathrm{pre}(\alpha)[k],\alpha),
$$
label $\alpha$ as $\alpha^{w}$, where
$$w=(\mathrm{pre}(\alpha)[1]\rightarrow Q_1,\mathrm{pre}(\alpha)[2]\rightarrow
Q_2,\cdots,\mathrm{pre}(\alpha)[k]\rightarrow Q_k).$$
\end{enumerate}
Each $\alpha^{P\rightarrow Q}$ or $\alpha^{w}$ is called a
\textbf{labelled activity}. The set of all labelled activities  is
denoted by $\mathcal{A}_{\mbox{\small label }}$. For the above
labelled activities $\alpha^{P\rightarrow Q}$ and $\alpha^{w}$,
their respective pre and post sets are defined as
$$\mathrm{pre}(\alpha^{P\rightarrow Q})=\{P\},\;\mathrm{post}(\alpha^{P\rightarrow
Q})=\{Q\},$$
$$\mathrm{pre}(\alpha^{w})=\mathrm{pre}(\alpha),\;\mathrm{post}(\alpha^{w})
=\{Q_1,Q_2,\cdots,Q_k\}.$$
\end{definition}

In the numerical representation scheme, the transitions between
states of the model are represented by a matrix, termed the
\emph{activity matrix} --- this records the impact of the labelled
activities on the local derivatives.

\begin{definition}\label{definition: JieActivityMatrix}(\textbf{{Activity
Matrix, Pre Activity Matrix, Post Activity Matrix}}). For a model
with $N_{\mathcal{A}_{\mbox{\small label }}}$ labelled activities
and $N_\mathcal{D}$ distinct local derivatives, the activity matrix
$\mathbf{C}$ is an $N_\mathcal{D}\times N_{\mathcal{A}_{\mbox{\small
label }}}$ matrix, and the entries are defined as following
\begin{equation*}
     \mathbf{C}(P_i,\alpha_j)=\left\{
             \begin{array}{ll}
               +1 \hspace*{3mm} & \mbox{if }\;\; P_i\in \mathrm{post}(\alpha_j)\\
               -1 & \mbox{if }\;\; P_i\in \mathrm{pre}(\alpha_j) \\
               0 & \mbox{otherwise}
             \end{array}
     \right.
\end{equation*}
where $\alpha_j$ is a labelled activity. The pre activity matrix
$\mathbf{C^{pre}}$ and post activity matrix $\mathbf{C^{post}}$  are
defined as
\begin{equation*}
     \mathbf{C^{Pre}}(P_i,\alpha_j)=\left\{
             \begin{array}{ll}
               +1 \hspace*{3mm}  & \mbox{if } \;\; P_i\in \mathrm{pre}(\alpha_j) \\
                0 & \mbox{otherwise}.
             \end{array}
     \right.,
\end{equation*}
\begin{equation*}
     \mathbf{C^{Post}}(P_i,\alpha_j)=\left\{
             \begin{array}{ll}
               +1 \hspace*{3mm} & \mbox{if } \;\;P_i\in \mathrm{post}(\alpha_j)\\
                0 & \mbox{otherwise}.
             \end{array}
     \right.
\end{equation*}
\end{definition}

From Definitions~\ref{definition:LabelledAcitvity}
and~\ref{definition: JieActivityMatrix}, each column of the activity
matrix corresponds to a system transition and each transition can be
represented by a column of the activity matrix. The activity matrix
equals the difference between the pre and post activity matrices,
i.e.\ $\mathbf{C=C^{Post}-C^{Pre}}$. The rate of the transition
between states is specified by a \emph{transition rate function},
but we omit this detail here since we are concerned with qualitative
analysis.  See \cite{JieThesis} for details.

We first give the definition of the apparent rate
of an activity in a local derivative.

\begin{definition}(\textbf{{Apparent Rate of $\alpha$ in $P$}})\label{def:DingApparentRate}
 Suppose $\alpha$ is an activity of a PEPA model and $P$ is a local derivative
 enabling $\alpha$ (i.e.\ $P\in \mathrm{pre}(\alpha)$). Let $\mathrm{post}(P,\alpha)$
be the set of all the local derivatives derived from $P$ by firing $\alpha$, i.e.\ $
     \mathrm{post}(P,\alpha)=\{Q\mid P\stackrel{(\alpha,r_\alpha^{P\rightarrow
Q})}{\longrightarrow}Q\}.
$ Let
\begin{equation}\label{eq:r_l^P}
       r_\alpha(P)=\sum_{Q\in \mathrm{post}(P,\alpha)}r_\alpha^{P\rightarrow Q}.
\end{equation}
The \emph{apparent rate} of $\alpha$ in $P$ in  state $\mathbf{x}$,
denoted by $r_\alpha(\mathbf{x},P)$, is defined as
\begin{equation}\label{eq:r_l(P)}
         r_\alpha(\mathbf{x},P)=\mathbf{x}[P]r_\alpha(P).
\end{equation}
\end{definition}
The above definition is used to define the following transition rate
function.
\begin{definition}(\textbf{{Transition Rate Function}})
\label{def:TransitionRateFunction} Suppose $\alpha$ is an activity of a
PEPA model and
 $\mathbf{x}$ denotes a state vector.

\begin{enumerate}
  \item  If $\alpha$ is individual, then for each
  $P\stackrel{(\alpha,r^{P\rightarrow Q})}{\longrightarrow}Q$,
the transition rate function of labelled activity $\alpha^{P\rightarrow
Q}$ in state $\mathbf{x}$ is defined as
\begin{equation}\label{eq:Ch3-RateFunction1}
     f(\mathbf{x}, \alpha^{P\rightarrow Q})=\mathbf{x}[P]r_\alpha^{P\rightarrow Q}.
\end{equation}

\item If $\alpha$ is synchronised, with
$\mathrm{pre}(\alpha)=\{P_1,P_2,\cdots,P_k\}$, then for each
$$(Q_1,Q_2,\cdots,Q_k)\in
\mathrm{post}(P_1,\alpha)\times \mathrm{post}(P_2,\alpha)\times\cdots\times
\mathrm{post}(P_k,\alpha),$$
 let $w=(P_1\rightarrow Q_1,P_2\rightarrow
Q_2,\cdots,P_k\rightarrow Q_k)$. Then the transition rate function
of labelled activity $\alpha^{w}$ in state $\mathbf{x}$ is defined as
\begin{equation*}
f(\mathbf{x}, \alpha^w)=\left(\prod_{i=1}^k\frac{r_\alpha^{P_i\rightarrow
Q_i}}{r_\alpha(P_i)}\right)\min_{i\in\{1,\cdots,k\}}\{r_\alpha(\mathbf{x},P_i)\},
\end{equation*}
where $r_\alpha(\mathbf{x},P_i)=\mathbf{x}[P_i]r_\alpha(P_i)$ is the apparent
rate of $\alpha$ in $P_i$ in state $\mathbf{x}$. So
\begin{equation}\label{eq:Ch3-RateFunction20}
f(\mathbf{x}, \alpha^w)=\left(\prod_{i=1}^k\frac{r_\alpha^{P_i\rightarrow
Q_i}}{r_\alpha(P_i)}\right)\min_{i\in\{1,\cdots,k\}}\{\mathbf{x}[P_i]r_\alpha(P_i)\}.
\end{equation}
\end{enumerate}
\end{definition}

Note that Definition~\ref{def:TransitionRateFunction} accommodates the
passive or unspecified rate $\infty$. An algorithm for automatically
deriving the numerical representation of a PEPA model was presented
in~\cite{JieThesis}.

\begin{remark}\label{remark:Ch3-Top*0=0}
Definition~\ref{def:TransitionRateFunction} accommodates the passive
or unspecified rate $\infty$. If there are some $r^{U\rightarrow
V}_l$ which are $\infty$, then the relevant calculation in the rate
functions (\ref{eq:Ch3-RateFunction1}) and
(\ref{eq:Ch3-RateFunction20}) can be made  according to the
following inequalities and equations that define the comparison and
manipulation of unspecified activity rates (see Section 3.3.5
in~\cite{Jane1}):
\begin{displaymath}
\begin{array}{cl}
 r<w\infty             & \emph{\mbox{for all $r\in \mathbb{R}^{+}$ and for all $w\in \mathbb{N}$}} \\
  w_1\infty<w_2\infty    & \emph{\mbox{if $w_1<w_2$ for all $w_1,w_2\in \mathbb{N}$}} \\
  w_1\infty+w_2\infty=(w_1+w_2)\infty & \emph{\mbox{for all $w_1,w_2\in \mathbb{N}$}}\\
  \frac{w_1\infty}{w_2\infty}=\frac{w_1}{w_2} & \emph{\mbox{for all $w_1,w_2\in \mathbb{N}$}}
\end{array}
\end{displaymath}
Moreover, we assume that $0\cdot\infty=0$. So the terms such as
``$\min\{A\infty, rB\}$'' are interpreted as~\cite{WormAttacks}:
\begin{equation*}
    \min\{A\infty, rB\}=\left\{\begin{array}{cc}
                                 rB, & A>0, \\
                                 0, & A=0.
                               \end{array}
    \right.
\end{equation*}
\end{remark}

The transition rate function has the following
properties~\cite{JieThesis}:
\begin{proposition}\label{proposition:Ch3-RateFunctionComparison}
The transition rate function is nonnegative; if $P$ is a pre local
derivative of $l$, i.e. $P\in \mathrm{pre}(l)$, then the transition
rate function of $l$ in a state $\mathbf{x}$ is less than the
apparent rate of $l$ in $U$ in this state, that is
$$
   0\leq f(\mathbf{x},l)\leq r_l(\mathbf{x},P)=\mathbf{x}[P]r_l(P),
$$
where $r_l(P)$ is the apparent rate of $l$ in $P$ for a single
instance of $P$.
\end{proposition}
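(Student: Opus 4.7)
The plan is to prove both claims by a case split on whether the labelled activity $l$ is individual (of the form $\alpha^{P\to Q}$) or shared (of the form $\alpha^{w}$), using the explicit formulas from Definition~\ref{def:TransitionRateFunction}. In both cases, the key observation is that, by (\ref{eq:r_l^P}), each partial rate $r_\alpha^{P\to Q}$ is one of the summands defining $r_\alpha(P)$, so $0\le r_\alpha^{P\to Q}\le r_\alpha(P)$, and hence $r_\alpha^{P\to Q}/r_\alpha(P)\in[0,1]$ whenever $r_\alpha(P)>0$.

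For the individual case, (\ref{eq:Ch3-RateFunction1}) gives $f(\mathbf{x},\alpha^{P\to Q})=\mathbf{x}[P]\,r_\alpha^{P\to Q}$. Nonnegativity is immediate since $\mathbf{x}[P]\ge 0$ is a count of instances and $r_\alpha^{P\to Q}\ge 0$. The upper bound then follows from $r_\alpha^{P\to Q}\le r_\alpha(P)$, yielding $f(\mathbf{x},\alpha^{P\to Q})\le \mathbf{x}[P]r_\alpha(P)=r_\alpha(\mathbf{x},P)$, which matches the statement since $\mathrm{pre}(\alpha^{P\to Q})=\{P\}$.

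For the shared case, I would exploit (\ref{eq:Ch3-RateFunction20}). The product factor $\prod_{i=1}^{k}r_\alpha^{P_i\to Q_i}/r_\alpha(P_i)$ is a product of numbers in $[0,1]$, hence lies in $[0,1]$. The minimum $\min_i\{\mathbf{x}[P_i]r_\alpha(P_i)\}$ is nonnegative, giving $f(\mathbf{x},\alpha^w)\ge 0$. For the upper bound, since $\mathrm{pre}(\alpha^w)=\mathrm{pre}(\alpha)=\{P_1,\dots,P_k\}$, the hypothesis $P\in\mathrm{pre}(l)$ means $P=P_{i_0}$ for some index $i_0$, so the min is bounded above by the single term $\mathbf{x}[P_{i_0}]r_\alpha(P_{i_0})=r_\alpha(\mathbf{x},P)$, and multiplying by a product factor in $[0,1]$ preserves this bound.

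The only delicate point will be accommodating passive (infinite) rates, where some $r_\alpha^{P_i\to Q_i}$ and $r_\alpha(P_i)$ are of the form $w\infty$. Here I would invoke the conventions recalled in Remark~\ref{remark:Ch3-Top*0=0}: the ratio $w_1\infty/w_2\infty=w_1/w_2$ still lies in $[0,1]$ because $w_1\le w_2$ by (\ref{eq:r_l^P}), and the interpretation $\min\{A\infty,rB\}\in\{0,rB\}$ together with $0\cdot\infty=0$ ensures $\min_i\{\mathbf{x}[P_i]r_\alpha(P_i)\}\le\mathbf{x}[P]r_\alpha(P)$ continues to hold under the ordering $r<w\infty$. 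Checking these conventions case by case on which coordinates are passive is the main obstacle; everything else reduces to the two elementary inequalities above.
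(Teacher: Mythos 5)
Your proof is correct. The paper itself gives no proof of this proposition (it simply cites~\cite{JieThesis}), and your direct case split on Definition~\ref{def:TransitionRateFunction} --- bounding the individual rate $\mathbf{x}[P]r_\alpha^{P\to Q}$ by $\mathbf{x}[P]r_\alpha(P)$ via (\ref{eq:r_l^P}), and the shared rate by observing that the product factor lies in $[0,1]$ while the minimum is dominated by the term for $P\in\mathrm{pre}(\alpha)$, with the passive-rate conventions of Remark~\ref{remark:Ch3-Top*0=0} handled separately --- is exactly the argument one expects.
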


\begin{proposition}\label{proposition:Ch3-HomogeousAndLipschitz}
Let $l$ be an labelled activity, and $\mathbf{x,y}$ be two states.
The transition rate function $f(\mathbf{x},l)$ defined in
Definition~\ref{def:TransitionRateFunction} satisfies:
\begin{enumerate}
  \item For any $H>0$, $Hf(\mathbf{x}/H,l)=f(\mathbf{x},l)$.
  \item There exists $M>0$ such that $|f(\mathbf{x},l)-f(\mathbf{y},l)|\leq
  M\|\mathbf{x}-\mathbf{y}\|$ \;for any $\mathbf{x},\mathbf{y}$ and $l$.
\end{enumerate}
\end{proposition}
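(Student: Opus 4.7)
The proof splits naturally into the two cases of Definition~\ref{def:TransitionRateFunction}: an individual labelled activity $\alpha^{P\to Q}$ with rate function $f(\mathbf{x},\alpha^{P\to Q})=\mathbf{x}[P]\,r_\alpha^{P\to Q}$, and a shared labelled activity $\alpha^{w}$ whose rate function has the form
\[
f(\mathbf{x},\alpha^{w}) \;=\; K_w \cdot \min_{1\le i\le k}\bigl\{\mathbf{x}[P_i]\,r_\alpha(P_i)\bigr\},
\qquad K_w:=\prod_{i=1}^{k}\frac{r_\alpha^{P_i\to Q_i}}{r_\alpha(P_i)} .
\]
The key observation is that $K_w$ is a constant (independent of $\mathbf{x}$), and under the conventions of Remark~\ref{remark:Ch3-Top*0=0} it is a well-defined nonnegative number (in fact $K_w\le 1$). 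So in both cases $f(\mathbf{x},l)$ is, up to a constant multiplier, either a coordinate projection or the minimum of finitely many coordinate projections scaled by the per-component apparent rates $r_\alpha(P_i)$.

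For part~(1), the individual case is immediate from linearity: $H\,f(\mathbf{x}/H,\alpha^{P\to Q}) = H\cdot(\mathbf{x}[P]/H)\,r_\alpha^{P\to Q} = f(\mathbf{x},\alpha^{P\to Q})$. For the shared case I would use the positive homogeneity of $\min$, namely $\min_i\{c\,a_i\}=c\min_i\{a_i\}$ for $c\ge0$, applied with $c=1/H$ to the arguments $\mathbf{x}[P_i]\,r_\alpha(P_i)$; multiplying the resulting expression by $H$ cancels the $1/H$ and yields $f(\mathbf{x},\alpha^{w})$. The passive-rate conventions ($0\cdot\infty=0$ and $(w_1\infty)/(w_2\infty)=w_1/w_2$) are compatible with this scaling, since scaling $\mathbf{x}[P_i]>0$ by $1/H>0$ preserves the positivity status used to evaluate the min.

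For part~(2), in the individual case $|f(\mathbf{x},l)-f(\mathbf{y},l)|=r_\alpha^{P\to Q}\,|\mathbf{x}[P]-\mathbf{y}[P]|\le r_\alpha^{P\to Q}\|\mathbf{x}-\mathbf{y}\|$. For the shared case I would invoke the elementary inequality
\[
\Bigl|\min_{i}a_i-\min_{i}b_i\Bigr|\;\le\;\max_{i}|a_i-b_i|,
\]
applied to $a_i=\mathbf{x}[P_i]r_\alpha(P_i)$ and $b_i=\mathbf{y}[P_i]r_\alpha(P_i)$. This gives
\[
|f(\mathbf{x},\alpha^{w})-f(\mathbf{y},\alpha^{w})|\;\le\; K_w\,\max_{i}r_\alpha(P_i)\cdot\|\mathbf{x}-\mathbf{y}\|.
\]
Since the model has only finitely many labelled activities, I would take
\[
M\;=\;\max_{l\in\mathcal{A}_{\text{label}}}M_l,
\qquad M_l\;=\;\begin{cases} r_\alpha^{P\to Q}, & l=\alpha^{P\to Q},\\[2pt] K_w\max_i r_\alpha(P_i), & l=\alpha^{w},\end{cases}
\]
which is a finite positive constant serving uniformly for all $l$.

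The only real obstacle is what to do when some per-derivative apparent rate $r_\alpha(P_i)$ is $\infty$ (passive synchronisation): then the max-coordinate bound above is formally infinite, and the Lipschitz statement as written would fail because $f$ jumps when some $\mathbf{x}[P_i]$ crosses zero. I would handle this by observing, via Remark~\ref{remark:Ch3-Top*0=0}, that in the shared formula the $\infty$ factors from passive cooperands cancel against the corresponding $r_\alpha(P_i)$ in the denominator of $K_w$, so after simplification the rate function is again a finite linear combination of minima of coordinate projections scaled by the finite active apparent rates; the same argument then yields the Lipschitz bound with $M$ depending only on the finite (non-passive) rates of the model.
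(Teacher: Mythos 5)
The paper itself does not prove this proposition --- it is quoted from~\cite{JieThesis} --- so your argument can only be judged on its own terms. The core of it is correct and is the standard route: for an individual activity the rate function is a single scaled coordinate, so both homogeneity and Lipschitz continuity are immediate; for a shared activity you factor out the constant $K_w$, use positive homogeneity of $\min$ for part~(1), and the $1$-Lipschitz property $\left|\min_i a_i-\min_i b_i\right|\leq\max_i|a_i-b_i|$ for part~(2), then take the maximum of the finitely many per-activity constants. That is exactly the argument one would expect, and it establishes the proposition whenever all apparent rates $r_\alpha(P_i)$ are finite.

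Your final paragraph, however, contains a genuine error. You correctly observe that with a passive cooperand the function jumps when $\mathbf{x}[P_i]$ crosses zero, but your proposed repair --- that the $\infty$ factors cancel inside $K_w$ so that ``the same argument then yields the Lipschitz bound'' --- does not work. The cancellation in $K_w$ is real (the convention $(w_1\infty)/(w_2\infty)=w_1/w_2$ makes $K_w$ finite), but it does not remove the $\infty$ from the argument of the $\min$: by Remark~\ref{remark:Ch3-Top*0=0} the term $\min\{A\infty,rB\}$ equals $rB$ for $A>0$ and $0$ for $A=0$, which is a discontinuity in $A$ whenever $B>0$, and no finite $M$ can give a Lipschitz bound across that jump. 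The honest conclusion is that part~(2) simply fails for models with passive rates, and the paper implicitly concedes this: its existence--uniqueness theorem, the only place the Lipschitz property is invoked, is stated ``for a given PEPA model without passive rates.'' You should either add that hypothesis to the proposition or delete the claimed repair; as written, the last paragraph asserts something false.
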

Hereafter $\|\cdot\|$ denotes any matrix norm since all finite
matrix norms are equivalent. The first term of this proposition
illustrates a homogenous property of the rate function, while the
second indicates the Lipschtiz continuous property, both with
respect to states.

\section{Fluid approximations for PEPA models}

The section will introduce the fluid-flow approximations for PEPA
models, which leads to some kind of nonlinear ODEs. The existence
and uniqueness of the solutions of the ODEs will be established.
Moreover, a conservation law satisfied by the ODEs will be shown.

\subsection{State space explosion problem: an illustration by a tiny example}
\label{section:Ch3-StateSpaceExplosion} Let us first consider the
following tiny example.

\begin{model}\label{model:Uer-Provider}
\begin{equation*}
\begin{split}
User_1 \rmdef &(task_1, a).User_2\\
User_2 \rmdef &(task_2, b).User_1\\
Provider_1 \rmdef &(task_1, a).Provider_2\\
Provider_2 \rmdef &(reset, d).Provider_1\\
 \underbrace{User_1||\cdots||User_1}_{\mbox{$M$ copies}}
&\sync{\{task1\}}
 \underbrace{ Provider_1||\cdots|| Provider_1}_{\mbox{$N$
 copies}}
\end{split}
\end{equation*}
\end{model}
%
%
%

According to the semantics of PEPA  originally defined
in~\cite{Jane1}, the size of the state space of the CTMC underlying
Model~\ref{model:Uer-Provider} is $2^{M+N}$. That is, the size of
the state space increases exponentially with the numbers of the
users and providers in the system. Consequently, the dimension of
the infinitesimal generator of the CTMC is $2^{M+N}\times 2^{M+N}$.
The computational complexity of solving the global balance equation
to get the steady-state probability distribution and thus  derive
the system performance, is therefore exponentially increasing with
the numbers of the components. When $M$ and/or $N$ are large, the
calculation of the stationary probability distribution will be
infeasible due to limited resources of memory and time. The problem
encountered here is the so-called \emph{state-space explosion}
problem.

\par A model aggregation technique, i.e. representing the states by
numerical vector forms, which is introduced by Gilmore \emph{et al.}
in~\cite{Ribaudo-AggregatingAlgorithm} and by Hillston
in~\cite{Jane2}, can help to relieve the state-space explosion
problem. It has been proved in~\cite{JieThesis} that by employing
numerical vector forms the size of the state space can be reduced to
$2^{M}\times 2^N$ to $(M+1)\times (N+1)$, without relevant
information and accuracy loss. However, this does not imply there is
no complexity problem.
The following table,
Table~\ref{table:Ch3-RunningTimeStateSpaceDerivation}, gives the
runtimes of deriving the state space in several different scenarios.
All experiments were carried out using the PEPA Plug-in (v0.0.19)
for Eclipse Platform (v3.4.2), on a 2.66GHz Xeon CPU with 4Gb RAM
running Scientific Linux 5. The runtimes here are elapsed times
reported by the Eclipse platform.
\begin{table*}[htbp]
\begin{center}
\begin{center}\caption{Elapsed time
of state pace
derivation}\label{table:Ch3-RunningTimeStateSpaceDerivation}
\begin{tabular}{|c|c|c|c|p{3.8cm}|} \hline\hline
  $(M,N)$ &  (300,300) & (350,300) &  (400,300)        & (400,400)  \\\hline
  time    & $2879$ ms & 4236 ms  & ``Java heap space" & ``GC overhead limit exceeded"\\\hline
\end{tabular}\end{center}
\end{center}
\end{table*}

\par If there are 400 users and 300 providers in the system, the
Eclipse platform reports the error message of ``Java heap space",
while 400 users and 400 providers result in the error information of
``GC overhead limit exceeded". These experiments show that the
state-space explosion problem cannot be  completely solved by just
using the technique of numerical vector form, even for a tiny PEPA
model. That is, in order to do practical analysis for large scale
PEPA models we need new approaches.


\subsection{Fluid approximation of PEPA models}

In the numerical representation of PEPA presented in
Section~\ref{subsec:Background-NumericalRepresentation}, a numerical
vector form is introduced to capture the state information of models
with repeated components. In this vector form there is one entry for
each local derivative of each component type in the model. The
entries in the vector are no longer syntactic terms representing the
local derivative of the sequential component, but the number of
components currently exhibiting this local derivative. Each
numerical vector represents a single state of the system. The rates
of the transitions between states are specified by the transition
rate functions. For example, the transition from state $\mathbf{x}$
to $\mathbf{x}+l$ can be written as
$$
   \mathbf{x}\mathop{\longrightarrow}\limits^{(l,f(\mathbf{x},l))}\mathbf{x}+l,
$$
where $l$ is a transition vector corresponding to the labelled
activity $l$ (for convenience, hereafter each pair of transition
vectors and corresponding labelled activities shares the same
notation),  and $f(\mathbf{x},l)$ is the transition rate function,
reflecting the intensity of the transition from $\mathbf{x}$ to
$\mathbf{x}+l$.

\par The state space is inherently discrete with the entries within
the numerical vector form always being non-negative integers and
always being incremented or decremented in steps of one. As pointed
out in~\cite{Jane2}, when the numbers of components are large these
steps are relatively small and we can approximate the behaviour by
considering the movement between states to be continuous, rather
than occurring in discontinuous jumps. In fact, let us consider the
evolution of the numerical state vector. Denote the state at time
$t$ by $\mathbf{x}(t)$. In a short time $\Delta t$, the change to
the vector $\mathbf{x}(t)$ will be
$$
       \mathbf{x}(\cdot,t+\Delta t)-\mathbf{x}(\cdot,t)=F(\mathbf{x}(\cdot,t))\Delta t
       =\Delta t\sum_{l\in\mathcal{A}_{\mathrm{label}}}lf(\mathbf{x}(\cdot,t),l).
$$
Dividing by $\Delta t$ and taking the limit, $\Delta t\rightarrow
0$, we obtain  a set of ordinary differential equations (ODEs):
\begin{equation}\label{eq:ChFP-DingDerivedODEs}
   \frac{\mathrm{d}\mathbf{x}}{\mathrm{d}t}=F(\mathbf{x}),
\end{equation}
where
\begin{equation}\label{eq:ChFP-DingF(x)}
  F(\mathbf{x})=\sum_{l\in\mathcal{A}_{\mathrm{label}}}lf(\mathbf{x},l).
\end{equation}

\par Once the  activity matrix and the transition rate functions are
generated, the ODEs are immediately available. All of them can be
obtained automatically by a derivation algorithm presented
in~\cite{JieThesis}.

\par Let $U$ be a local derivative. For any transition vector $l$,
$l[U]$ is either $\pm 1$ or $0$. If $l[U]=-1$ then $U$ is in the pre
set of $l$, i.e.\ $U\in\mathrm{pre}(l)$,  while $l[U]=1$ implies
$U\in\mathrm{post}(l)$.
 According to
(\ref{eq:ChFP-DingDerivedODEs}) and (\ref{eq:ChFP-DingF(x)}),
\begin{equation}\label{eq:ChFP-ODEsDerivativeCentric}
\begin{split}
\frac{\mathrm{d}\mathbf{x}(U,t)}{\mathrm{d}t}&=\sum_ll[U]f(\mathbf{x},l)\\
&=-\sum_{l:l[U]=-1}f(\mathbf{x},l)+\sum_{l:l[U]=1}f(\mathbf{x},l)\\
&=-\sum_{\{l\mid
U\in\mathrm{pre}(l)\}}\!\!\!\!f(\mathbf{x},l)+\sum_{\{l\mid
U\in\mathrm{post}(l)\}}\hspace{-3mm}f(\mathbf{x},l).
\end{split}
\end{equation}
The term $\sum_{\{l\mid
U\in\mathrm{pre}(l)\}}\hspace{-0mm}f(\mathbf{x},l)$  represents the
``exit rates'' in the local derivative $U$, while the term
$\sum_{\{l\mid U\in\mathrm{post}(l)\}}\hspace{-0mm}f(\mathbf{x},l)$
reflects the ``entry rates'' in $U$. The formulae
(\ref{eq:ChFP-DingDerivedODEs}) and (\ref{eq:ChFP-DingF(x)}) are
activity centric while (\ref{eq:ChFP-ODEsDerivativeCentric}) is
local derivative centric. Our approach to derive ODEs has extended
previous results presented in the
literature~\cite{Jane2,WormAttacks,Jane4}, by relaxing restrictions
such as allowing shared activities may have different local rates,
each action name may appear in different local derivatives within
the definition of a sequential component, and may occur multiple
times with that derivative definition, etc.



For an arbitrary CTMC,
the evolution of probabilities distributed on each state can be
described by a set of linear ODEs~(\cite{QueueingNetworks}, page
52). For example, for the (aggregated) CTMC underlying a PEPA model,
the corresponding differential equations describing the evolution of
the probability distributions are
\begin{equation}\label{eq:ChFP-KolmogorovProbDistr}
    \frac{\mathrm{d}\mathbf{\pi}}{\mathrm{d}t}=Q^T\mathbf{\pi},
\end{equation}
where each entry of $\mathbf{\pi}(t)$ represents the probability of
the system being in  each state at time $t$, and $Q$ is an
infinitesimal generator matrix corresponding to the CTMC\@. Clearly,
the dimension of the coefficient matrix $Q$ is the square of the
size of the state space, which increases  as the number of
components increases.

\par  The derived ODEs (\ref{eq:ChFP-DingDerivedODEs}) describe the
evolution of the population of the components in \emph{each local
derivative}, while (\ref{eq:ChFP-KolmogorovProbDistr}) reflects the
the probability evolution  at \emph{each state}. Since the scale of
(\ref{eq:ChFP-DingDerivedODEs}), i.e. the number of the ODEs, is
only determined by the number of local derivatives and is unaffected
by the size of the state space, so it avoids the state-space
explosion problem. But the scale of
(\ref{eq:ChFP-KolmogorovProbDistr}) depends on the size of the state
space, so it suffers from the explosion problem. The price paid is
that the ODEs (\ref{eq:ChFP-DingDerivedODEs}) are generally
nonlinear due to synchronisations, while
(\ref{eq:ChFP-KolmogorovProbDistr}) is linear. However, if there is
no synchronisation contained then (\ref{eq:ChFP-DingDerivedODEs})
becomes linear, and there is some correspondence and consistency
between these two different types of ODEs, which will be
demonstrated in Section~\ref{section:ConvergWithoutSync}.

\subsection{Existence and uniqueness of  ODEs' solution}
\par For any set of ODEs, it is important to consider if the
equations have a solution, and if so whether that solution is
unique.

\begin{theorem}
For a given PEPA model without passive rates, the derived ODEs from
this model have a unique solution in the time interval
$[0,\infinity)$.
\end{theorem}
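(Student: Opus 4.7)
The plan is to reduce the theorem to a textbook application of the Picard--Lindel\"of (Cauchy--Lipschitz) existence-and-uniqueness theorem for ODEs, using the Lipschitz property of the transition rate functions that Proposition~\ref{proposition:Ch3-HomogeousAndLipschitz} has already supplied.

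First I would verify that the right-hand side $F(\mathbf{x})=\sum_{l\in\mathcal{A}_{\mathrm{label}}} l\,f(\mathbf{x},l)$ is globally Lipschitz on the whole state space $\mathbb{R}^{N_{\mathcal{D}}}$. Because the model has finitely many labelled activities and, by Proposition~\ref{proposition:Ch3-HomogeousAndLipschitz}, each $f(\cdot,l)$ satisfies $|f(\mathbf{x},l)-f(\mathbf{y},l)|\le M_l\|\mathbf{x}-\mathbf{y}\|$ for some constant $M_l$, the triangle inequality gives
\begin{equation*}
\|F(\mathbf{x})-F(\mathbf{y})\|\le\Bigl(\sum_{l\in\mathcal{A}_{\mathrm{label}}}\|l\|\,M_l\Bigr)\|\mathbf{x}-\mathbf{y}\|=:L\,\|\mathbf{x}-\mathbf{y}\|.
\end{equation*}
The hypothesis that the model contains no passive rates is what makes the constants $M_l$ genuinely finite: the expressions in Definition~\ref{def:TransitionRateFunction} then involve only finite real numbers, so the min-of-linear-functions formula for synchronised activities is Lipschitz in the ordinary sense (the min of finitely many Lipschitz functions is Lipschitz), and no special treatment of the symbol $\infty$ from Remark~\ref{remark:Ch3-Top*0=0} is required.

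Next I would invoke the standard Picard--Lindel\"of theorem: since $F:\mathbb{R}^{N_{\mathcal{D}}}\to\mathbb{R}^{N_{\mathcal{D}}}$ is globally Lipschitz (hence in particular continuous and locally Lipschitz), for every initial condition $\mathbf{x}(0)=\mathbf{x}_0$ there is a unique maximal solution $\mathbf{x}(t)$ defined on some interval $[0,T_{\max})$. To upgrade this to $T_{\max}=\infty$, I would use the standard Gr\"onwall-type a~priori bound implied by a global Lipschitz right-hand side: from $\|F(\mathbf{x})\|\le \|F(\mathbf{0})\|+L\|\mathbf{x}\|$ one obtains $\|\mathbf{x}(t)\|\le(\|\mathbf{x}_0\|+\|F(\mathbf{0})\|\,t)e^{Lt}$, ruling out finite-time blow-up; equivalently, one may appeal directly to the global version of Picard--Lindel\"of for globally Lipschitz vector fields, which yields existence and uniqueness on $[0,\infty)$ in one stroke.

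The only mildly delicate point, and the main obstacle, is confirming the global Lipschitz constant in the synchronised case, where $f(\mathbf{x},\alpha^w)$ contains a $\min_i\{\mathbf{x}[P_i]r_\alpha(P_i)\}$ term. This is already subsumed by Proposition~\ref{proposition:Ch3-HomogeousAndLipschitz}, but if one wished to re-derive it one would use the elementary inequality $|\min_i a_i-\min_i b_i|\le\max_i|a_i-b_i|$ together with the finiteness of $r_\alpha(P_i)$ guaranteed by the no-passive-rates assumption. Everything else is routine, and the conservation laws that the next subsection will prove can, if desired, be used as an alternative route to global existence by confining the trajectory to a bounded invariant polytope.
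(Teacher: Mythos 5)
Your proposal is correct and follows essentially the same route as the paper: establish that $F(\mathbf{x})=\sum_l l f(\mathbf{x},l)$ is globally Lipschitz as a finite linear combination of the globally Lipschitz rate functions from Proposition~\ref{proposition:Ch3-HomogeousAndLipschitz}, then apply the classical global existence-and-uniqueness theorem for ODEs with globally Lipschitz right-hand side. Your version merely fills in details the paper leaves implicit (the triangle-inequality bookkeeping, the role of the no-passive-rates hypothesis, and the Gr\"onwall bound ruling out finite-time blow-up).
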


\begin{proof}
Notice that each entry of $F(\mathbf{x})=\sum_llf(\mathbf{x},l)$ is
a linear combination of the transition rate functions
$f(\mathbf{x},l)$, so $F(\mathbf{x})$ is globally Lipschitz
continuous since each $f(\mathbf{x},l)$ is globally Lipschitz
continuous. That is, there exits $M>0$ such that $\forall
\mathbf{x},\mathbf{y}$,
\begin{equation}\label{eq:F(x)Lip}
||F(\mathbf{x})-F(\mathbf{y})||\leq M||\mathbf{x}-\mathbf{y}||.
\end{equation}
By the classical theory in ODEs (e.g.\ Theorem~6.2.3
in~\cite{DifferentialEquations2}, page 14), the derived ODEs have a
unique solution in $[0,\infinity)$.
\end{proof}

As we have mentioned, in the formula
(\ref{eq:ChFP-ODEsDerivativeCentric}), the term $\sum_{\{l\mid
U\in\mathrm{pre}(l)\}}\hspace{-0mm}f(\mathbf{x},l)$ represents the
exit rates in the local derivative $U$, while the term \linebreak
$\sum_{\{l\mid U\in\mathrm{post}(l)\}}\hspace{-0mm}f(\mathbf{x},l)$
reflects the entry rates in $U$. For each type of component at any
time, the sum of all exit activity rates must be equal to the sum of
all entry activity rates, since the system is closed and there is no
exchange with the environment. This leads to the following
proposition.
\begin{proposition}\label{pro:ChFP-ODEsConservationLaw} Let $C_{i_j}$ be a local
derivative of  component type $C_i$. Then for any $i$ and $t$,
$\displaystyle\sum_j\frac{\mathrm{d}\mathbf{x}\left(C_{i_j},t\right)}{\mathrm{d}t}=0$,
and
$\sum_j\mathbf{x}\left(C_{i_j},t\right)=\sum_j\mathbf{x}\left(C_{i_j},0\right).$
\end{proposition}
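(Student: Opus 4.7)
The plan is to derive the claim directly from the derivative-centric form (\ref{eq:ChFP-ODEsDerivativeCentric}) of the ODE system, combined with a structural observation about labelled activities: every labelled activity preserves the count of components of each type. Once that observation is in place, the conservation law follows by interchanging summations.

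First, I would fix a component type $C_i$ and write, using (\ref{eq:ChFP-DingDerivedODEs}) and (\ref{eq:ChFP-DingF(x)}),
\begin{equation*}
\sum_j \frac{\mathrm{d}\mathbf{x}(C_{i_j},t)}{\mathrm{d}t}
= \sum_j \sum_{l\in\mathcal{A}_{\mathrm{label}}} l[C_{i_j}]\,f(\mathbf{x}(t),l)
= \sum_{l\in\mathcal{A}_{\mathrm{label}}} f(\mathbf{x}(t),l)\sum_j l[C_{i_j}].
\end{equation*}
The task then reduces to showing $\sum_j l[C_{i_j}] = 0$ for every labelled activity $l$ and every component type $C_i$.

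The second step is precisely this structural lemma. Consider a labelled activity $l$. If $l = \alpha^{P\to Q}$ is individual, then by Definition~\ref{definition:LabelledAcitvity}, $\mathrm{pre}(l)=\{P\}$ and $\mathrm{post}(l)=\{Q\}$. Since $Q$ is obtained from $P$ by firing an activity of a single sequential component, $P$ and $Q$ are local derivatives of the same component type. Hence $l$ contributes $-1$ to the entry for $P$ and $+1$ to the entry for $Q$, both in the block corresponding to a single $C_i$, and $0$ elsewhere; so $\sum_j l[C_{i_j}]=0$ for every $i$. If $l=\alpha^w$ is shared with $w=(P_1\to Q_1,\dots,P_k\to Q_k)$, then for each index $m$, $P_m$ and $Q_m$ are local derivatives of the same sequential component and thus of the same type. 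So for every component type $C_i$, the pairs $(P_m,Q_m)$ contribute $-1$ and $+1$ to the $C_i$-block in equal numbers, again yielding $\sum_j l[C_{i_j}]=0$.

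Combining these two steps, the inner sum vanishes identically, so $\sum_j \mathrm{d}\mathbf{x}(C_{i_j},t)/\mathrm{d}t = 0$ for all $t$. Integrating from $0$ to $t$ gives the conservation of populations: $\sum_j \mathbf{x}(C_{i_j},t) = \sum_j \mathbf{x}(C_{i_j},0)$.

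The main obstacle, and the only non-routine part, is justifying the structural lemma cleanly. It hinges on the PEPA convention (enforced by the syntactic separation of $S$ and $P$ in the grammar) that a sequential component keeps its component type throughout its local evolution, so that the pre and post local derivatives of any labelled activity come in type-matched pairs. Once that is stated and appealed to, the rest is bookkeeping via the activity matrix $\mathbf{C} = \mathbf{C^{Post}} - \mathbf{C^{Pre}}$, whose columns, restricted to the rows indexed by any one component type, sum to zero.
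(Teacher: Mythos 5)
Your proof is correct and follows essentially the same route as the paper: both reduce the claim to the fact that, within the block of rows indexed by the local derivatives of a single component type, each column of the activity matrix has equal numbers of $+1$ and $-1$ entries (the paper states this as $\#\{j:l[C_{i_j}]=-1\}=\#\{j:l[C_{i_j}]=1\}$, citing the thesis, and packages the interchange of sums as an inner product with an indicator vector). The only difference is that you actually justify that combinatorial fact from the definition of labelled activities, which the paper takes as given.
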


\begin{proof} In the definition of activity matrix,
the numbers of $-1$ and $1$ appearing in the entries of any
transition vector $l$ (i.e. a column of the activity matrix), which
correspond to the component type $C_i$, are the
same~\cite{JieThesis}, i.e.
\begin{equation}\label{eq:ChFP-local-1}
\#\{j:l[C_{i_j}]=-1\}=\#\{j:l[C_{i_j}]=1\}.
\end{equation}
Let $\mathbf{y}$ be an indicator vector with the same dimension as
$l$ satisfying:
$$
    \mathbf{y}[{C}_{i_j}]=\left\{\begin{array}{cl}
                                   1, & \mathrm{if}\; l[{C}_{i_j}]=\pm1, \\
                                   0, & \mathrm{otherwise.}
                                 \end{array}\right.
$$
So $\mathbf{y}^Tl=0$ by (\ref{eq:ChFP-local-1}). Thus
\begin{equation*}
\mathbf{y}^T\frac{\mathrm{d}\mathbf{x}}{\mathrm{d}t}
=\mathbf{y}^T\sum_{l}lf(\mathbf{x},l)=\sum_{l}\mathbf{y}^Tlf(\mathbf{x},l)=0.
\end{equation*}
That is, $\displaystyle
\sum_j\frac{\mathrm{d}\mathbf{x}\left(C_{i_j},t\right)}{\mathrm{d}t}
=\mathbf{y}^T\frac{\mathrm{d}\mathbf{x}}{\mathrm{d}t}=0.$ So
$\sum_j\mathbf{x}\left(C_{i_j},t\right)$ is a constant and equal to
$\sum_j\mathbf{x}\left(C_{i_j},0\right)$, i.e. the number of the
copies of component type $C_i$ in the system initially.
\end{proof}

\par Proposition \ref{pro:ChFP-ODEsConservationLaw} means that the ODEs satisfy a
\emph{Conservation Law}, i.e. the number of each kind of component
remains constant at all times.

\subsection{Convergence and consistence of ODEs' solution: nonsynchronised models}
\label{section:ConvergWithoutSync}

\par Now we consider PEPA models without synchronisation. For this
special class of PEPA models, we will show that the solutions of the
derived ODEs have finite limits. Moreover, the limits coincide with
the steady-state probability distributions of the underlying CTMCs.

\subsubsection{Features of ODEs without synchronisations}

\par Suppose the PEPA model has no synchronisations.
Without loss of generality, we suppose that there is only one kind
of component $C$ in the system. In fact, if there are several types
of component in the system, the ODEs related to the different types
of component can be separated and  treated independently since there
are no interactions between them. Thus, we assume there is only one
kind of component $C$  and that $C$ has $k$ local derivatives: $C_1,
C_2, \cdots, C_k$. Then (\ref{eq:ChFP-DingDerivedODEs}) is
\begin{equation}\label{eq:ChFP-NonSyn1}
\begin{split}
  \frac{\mathrm{d}\left(\mathbf{x}(C_1,t),\cdots,\mathbf{x}(C_k,t)\right)^T}{\mathrm{d}t}=\sum_{l}lf(\mathbf{x},l).
\end{split}
\end{equation}
Since (\ref{eq:ChFP-NonSyn1}) are linear ODEs, we may rewrite
(\ref{eq:ChFP-NonSyn1}) as the following matrix form:
\begin{equation}\label{eq:ChFP-NonSynODEsMatrix}
 \frac{\mathrm{d}\left(\mathbf{x}(C_1,t),\cdots,\mathbf{x}(C_k,t)\right)^T}{\mathrm{d}t}
 =Q^T\left(\mathbf{x}(C_1,t),\cdots,\mathbf{x}(C_k,t)\right)^T,
\end{equation}
where $Q=\left(q_{ij}\right)$ is a $k\times k$ matrix.

\par $Q$ has many good properties.
\begin{proposition}\label{pro:ChFP-NonSynQmatrix}
$Q=\left(q_{ij}\right)_{k\times k}$ in
(\ref{eq:ChFP-NonSynODEsMatrix}) is an infinitesimal generator
matrix, that is, $\left(q_{ij}\right)_{k\times k}$ satisfies
\begin{enumerate}
  \item $0\leq -q_{ii}<\infinity$ for all $i$;
  \item $q_{ij}\geq 0$ for all $i\neq j$;
  \item $\sum_{j=1}^kq_{ij}=0$ for all $i$.
\end{enumerate}
\end{proposition}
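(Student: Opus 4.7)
The plan is to read off the entries of $Q$ explicitly from the local-derivative-centric ODE (\ref{eq:ChFP-ODEsDerivativeCentric}), and then verify each of the three generator conditions in turn. Because there are no synchronisations, every labelled activity appearing in the sum has pre- and post-sets consisting of a single local derivative, and its rate function is the linear expression $f(\mathbf{x}, \alpha^{C_i \to C_j}) = \mathbf{x}[C_i]\, r_\alpha^{C_i \to C_j}$ provided by (\ref{eq:Ch3-RateFunction1}). Substituting into (\ref{eq:ChFP-ODEsDerivativeCentric}) for $U = C_i$, matching coefficients of each $\mathbf{x}(C_j)$ against (\ref{eq:ChFP-NonSynODEsMatrix}) --- and recalling $(Q^T\mathbf{x})_i = \sum_j q_{ji}\mathbf{x}(C_j)$ --- yields $q_{ij} = \sum_\alpha r_\alpha^{C_i \to C_j}$ for $i \neq j$ and $q_{ii} = -\sum_{j \neq i}\sum_\alpha r_\alpha^{C_i \to C_j}$. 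Any self-transition $C_i \to C_i$ contributes identical amounts to the exit and entry sums and so cancels, consistent with the identity $\mathbf{C} = \mathbf{C}^{\mathrm{Post}} - \mathbf{C}^{\mathrm{Pre}}$.

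With these expressions in hand, conditions (1) and (2) are immediate: every rate $r_\alpha^{C_i \to C_j}$ is nonnegative and, since the hypothesis excludes passive rates, finite, so each off-diagonal $q_{ij}$ is a nonnegative finite sum and $-q_{ii}$ is a finite nonnegative sum. Condition (3) follows directly from the explicit diagonal formula, but I would prefer to derive it more structurally from the conservation law of Proposition~\ref{pro:ChFP-ODEsConservationLaw}: since there is only one component type, that proposition gives $\sum_i \frac{\mathrm{d}\mathbf{x}(C_i, t)}{\mathrm{d}t} = 0$ at every state $\mathbf{x}$; rewriting via (\ref{eq:ChFP-NonSynODEsMatrix}) produces $\sum_j \bigl(\sum_i q_{ji}\bigr)\mathbf{x}(C_j) = 0$ for all $\mathbf{x}$, and since the components of $\mathbf{x}$ can be chosen freely this forces $\sum_i q_{ji} = 0$ for each $j$, which on relabelling indices is exactly $\sum_j q_{ij} = 0$.

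There is no real technical obstacle here; the only care needed is bookkeeping, namely tracking the transposition between $Q$ and $Q^T$ in (\ref{eq:ChFP-NonSynODEsMatrix}) so that row sums of $Q$ are not confused with column sums of $Q^T$, and correctly handling multiple activities with the same source-target pair together with any self-transitions in the exit and entry sums. Once these are set up cleanly, all three generator conditions drop out of the explicit formulae.
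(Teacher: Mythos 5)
Your proposal is correct and follows essentially the same route as the paper's own proof: identify the entries of $Q$ by matching coefficients of the linear rate functions $f(\mathbf{x},\alpha^{C_i\to C_j})=\mathbf{x}[C_i]r_\alpha^{C_i\to C_j}$ in the derivative-centric ODE, read off conditions (1) and (2) from nonnegativity and finiteness of the rates, and obtain the zero row sums from the conservation law of Proposition~\ref{pro:ChFP-ODEsConservationLaw} together with the freedom to choose the state vector. The only cosmetic difference is that you dispose of self-transitions by a cancellation argument, whereas the paper simply excludes self-loops by assumption (see its footnote); either is acceptable here.
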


\begin{proof} According to
(\ref{eq:ChFP-NonSynODEsMatrix}), we have
\begin{equation}\label{eq:ChFP-NonSyn-combine}
\frac{\mathrm{d}\mathbf{x}(C_i,t)}{\mathrm{d}t}=\sum_{j=1}^k\mathbf{x}(C_j,t)q_{ji}.
\end{equation}
Notice by (\ref{eq:ChFP-ODEsDerivativeCentric}),
$$
\frac{\mathrm{d}\mathbf{x}(C_i,t)}{\mathrm{d}t}=-\sum_{\{l\mid
C_i\in\mathrm{pre}(l)\}}\!\!\!\!f(\mathbf{x},l)+\sum_{\{l\mid
C_i\in\mathrm{post}(l)\}}\hspace{-3mm}f(\mathbf{x},l).
$$
So
\begin{equation}\label{eq:ChFP-NonSyn-ProProof-1}
\sum_{j=1}^k\mathbf{x}(C_j,t)q_{ji} =-\sum_{\{l\mid
C_i\in\mathrm{pre}(l)\}}\!\!\!\!f(\mathbf{x},l)+\sum_{\{l\mid
C_i\in\mathrm{post}(l)\}}\hspace{-3mm}f(\mathbf{x},l).
\end{equation}

Since there is no synchronisation in the system, the transition
function $f(\mathbf{x},l)$ is linear with  respect to $\mathbf{x}$
and therefore there is no nonlinear term,``$\min$'', in it. In
particular, if $C_i\in\mathrm{pre}(l) $, then
$f(\mathbf{x},l)=r_{l}(C_i)\mathbf{x}[C_i]$,  which is the apparent
rate of $l$ in $C_i$ in state $\mathbf{x}$ defined in
Definition~\ref{def:DingApparentRate}. We should point out that
according to our semantics of mapping PEPA models to ODEs, the fluid
approximation-version of $f(\mathbf{x},l)$ also holds, i.e.\
$f(\mathbf{x}(t),l)=r_l(C_{i})\mathbf{x}(C_{i},t)$. So
(\ref{eq:ChFP-NonSyn-ProProof-1}) becomes
\begin{equation}\label{eq:ChFP-NonSyn-ProProof-2}
\mathbf{x}(C_i,t)q_{ii}+\sum_{j\neq i}\mathbf{x}(C_j,t)q_{ji}
=\mathbf{x}(C_i,t)\sum_{\{l\mid
C_i\in\mathrm{pre}(l)\}}\hspace{-3mm}\left(-r_l(C_i)\right)+\sum_{\{l\mid
C_i\in\mathrm{post}(l)\}}\hspace{-3mm}f(\mathbf{x},l).
\end{equation}
Moreover, as long as $f(\mathbf{x},l)=r_l(C_i)\mathbf{x}(C_i)$ for
some $l$ and some positive constants $r_l(C_i)$, which implies that
$l$ can be fired at $C_i$, we must have $C_i\in\mathrm{pre}(l)$.
That is to say, if $C_i\in\mathrm{post}(l)$ then $f(\mathbf{x},l)$
cannot be of the  form of $r\mathbf{x}(C_i,t)$ for any constant
$r>0$. Otherwise, we have $C_i\in \mathrm{pre}(l)$, which results a
contradiction\footnote{In this paper we do not allow a self-loop in
the considered model. That is, any PEPA definition like
``$C\rmdef(\alpha,r).C$'' which  results in
$C\in\mathrm{pre}(\alpha)$ and $C\in\mathrm{post}(\alpha)$
simultaneously, is not allowed.} to $C_i\in\mathrm{post}(l)$. So
according to (\ref{eq:ChFP-NonSyn-ProProof-2}), we have
\begin{equation}\label{eq:ChFP-NonSyn-ProProof-3}
\mathbf{x}(C_i,t)q_{ii}=\mathbf{x}(C_i,t)\sum_{\{l\mid
C_i\in\mathrm{pre}(l)\}}\hspace{-3mm}\left(-r_l(C_i)\right),
\end{equation}
\begin{equation}\label{eq:ChFP-NonSyn-ProProof-4}
\sum_{j\neq i}\mathbf{x}(C_j,t)q_{ji} =\sum_{\{l\mid
C_i\in\mathrm{post}(l)\}}\hspace{-3mm}f(\mathbf{x},l).
\end{equation}
Thus by (\ref{eq:ChFP-NonSyn-ProProof-3}), $q_{ii}=\sum_{\{l\mid
C_i\in\mathrm{pre}(l)\}}\left(-r_l(C_i)\right)$, and $0\leq
-q_{ii}<\infinity$ for all $i$. Item 1 is proved.

\par Similarly, for any $C_j,\; j\neq i$, if
$f(\mathbf{x},l)=r\mathbf{x}(C_j,t)$ for some $l$ and positive
constant $r$, then clearly $C_j$ is in the pre set of $l$. That is
$C_j\in\mathrm{pre}(l)$. So by (\ref{eq:ChFP-NonSyn-ProProof-4}),
\begin{equation}
\mathbf{x}(C_j,t)q_{ji}=\sum_{\{l\mid C_j\in\mathrm{pre}(l),
C_i\in\mathrm{post}(l)\}}\hspace{-3mm}
f(\mathbf{x},l)=\mathbf{x}(C_j,t)\sum_lr_l^{C_j\rightarrow C_i},
\end{equation}
which implies $q_{ji}=\sum_lr_l^{C_j\rightarrow C_i}\geq 0$ for all
$i\neq j$, i.e. item 2 holds.

\par  We now prove item 3. By
Proposition~\ref{pro:ChFP-ODEsConservationLaw},
\begin{equation}\label{eq:ChFP-SumDxDt=0}
\frac{\mathbf{x}(C_1,t)}{dt}+\frac{\mathbf{x}(C_2,t)}{dt}+\cdots+\frac{\mathbf{x}(C_k,t)}{dt}=0.
\end{equation}
Then by (\ref{eq:ChFP-NonSyn-combine}) and
(\ref{eq:ChFP-SumDxDt=0}), for all $t$,
\begin{equation*}
\begin{split}
   &\mathbf{x}(C_1,t)\sum_{j=1}^kq_{1j}+
   \mathbf{x}(C_2,t)\sum_{j=1}^kq_{2j}+\cdots+\mathbf{x}(C_k,t)\sum_{j=1}^kq_{kj}\\
   =&\sum_{i=1}^k\mathbf{x}(C_i,t)\sum_{j=1}^kq_{ij}=\sum_{j=1}^k\sum_{i=1}^k\mathbf{x}(C_i,t)q_{ij}
   =\sum_{j=1}^k\frac{\mathrm{d}\mathbf{x}(C_j,t)}{\mathrm{d}t}=0.
\end{split}
\end{equation*}
This implies $\sum_{j=1}^kq_{ij}=0$ for all $i$.
\end{proof}

\par In the proof of Proposition~\ref{pro:ChFP-NonSynQmatrix}, we
have shown the relationship between the coefficient matrix $Q$ and
the activity rates:
$$
   q_{ii}=-\sum_{\{l\mid C_i\in\mathrm{pre}(l)\}}r_l(C_i),\quad q_{ij}=\sum_{l}r_l^{C_i\rightarrow
   C_j}\;(i\neq j).
$$
We point out that this infinitesimal generator matrix $Q_{k\times
k}$ may not be the infinitesimal generator matrix of the CTMC
derived via the usual semantics of PEPA (we call it the ``original"
CTMC for convenience). In fact, the original CTMC has a state space
with $k^N$ states and the dimension of its infinitesimal generator
matrix is $k^N\times k^N$, where $N$ is the total number of
components in the system. However, this $Q_{k\times k}$ is the
infinitesimal generator matrix of a CTMC underlying the PEPA model
in which there is only one copy of the component, i.e.\ $N=1$. To
distinguish this from the original one, we refer to this CTMC as the
\emph{``singleton'' CTMC}.

\subsubsection{Convergence and consistency for the ODEs}

 Proposition~\ref{pro:ChFP-NonSynQmatrix} illustrates that the
 coefficient matrix of the derived ODEs is an
infinitesimal generator.  If
 there is only one component in the system, then equation
 (\ref{eq:ChFP-NonSynODEsMatrix}) captures the probability distribution
 evolution equations of the original CTMC\@. Based on this proposition, we can furthermore determine the
convergence of the solutions.
\begin{theorem}\label{thm:ChFP-NonSynConvergence}
Suppose $\mathbf{x}\left(C_{j},t\right)\;(j=1,2,\cdots,k)$ satisfy
(\ref{eq:ChFP-NonSyn1}), then for any given initial values
$\mathbf{x}\left(C_{j},0\right)\geq0\;(j=1,2,\cdots,k)$, there exist
constants $\mathbf{x}(C_j,\infinity)$, such that
\begin{equation}\label{}
  \lim_{t\rightarrow\infinity}\mathbf{x}(C_j,t)=\mathbf{x}(C_j,\infinity),
  \quad j=1,2,\cdots,k.
\end{equation}
\end{theorem}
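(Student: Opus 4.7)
The plan is to solve the linear system (\ref{eq:ChFP-NonSynODEsMatrix}) explicitly via the matrix exponential and then exploit the spectral structure of the infinitesimal generator $Q$ established in Proposition~\ref{pro:ChFP-NonSynQmatrix} to conclude that this exponential has a limit as $t\to\infty$. Since (\ref{eq:ChFP-NonSyn1}) is linear with constant coefficients, the unique solution through $\mathbf{x}(0)=(\mathbf{x}(C_1,0),\ldots,\mathbf{x}(C_k,0))^T$ is
$$
\mathbf{x}(t)=e^{Q^Tt}\mathbf{x}(0),
$$
so the convergence of each $\mathbf{x}(C_j,t)$ reduces to the convergence of $e^{Q^Tt}$.

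Next, I would use Proposition~\ref{pro:ChFP-NonSynQmatrix} to locate the spectrum of $Q$ (hence of $Q^T$). Applying Gershgorin's circle theorem to $Q$, every eigenvalue lies in some closed disk $\{\lambda:|\lambda-q_{ii}|\leq -q_{ii}\}$, and all such disks are contained in the closed left half-plane, touching the imaginary axis only at the origin. Therefore every eigenvalue $\lambda$ of $Q^T$ satisfies $\mathrm{Re}(\lambda)\leq 0$, and the only possible purely imaginary eigenvalue is $0$. Moreover, since row sums of $Q$ vanish, $0$ is always an eigenvalue with $\mathbf{1}\in\ker(Q)$.

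The third step is to show that the zero eigenvalue is semisimple, i.e.\ its algebraic and geometric multiplicities coincide. This is standard for a finite-state CTMC generator, but a self-contained justification comes from boundedness: $e^{Qt}$ is a stochastic matrix, so $\|e^{Qt}\|_\infty=1$, and the conservation law in Proposition~\ref{pro:ChFP-ODEsConservationLaw} likewise bounds $\|e^{Q^Tt}\mathbf{x}(0)\|$ uniformly in $t$. A non-trivial Jordan block at $0$ would produce a polynomially growing term $t^m/m!$ in $e^{Q^Tt}$, contradicting this boundedness. Consequently, writing $Q^T$ in Jordan canonical form and splitting accordingly gives
$$
e^{Q^Tt}=\Pi+R(t),
$$
where $\Pi$ is the spectral projector onto $\ker(Q^T)$ along the generalized eigenspaces of the nonzero eigenvalues, and $R(t)$ is a finite sum of terms of the form $t^{m}e^{\lambda t}$ with $\mathrm{Re}(\lambda)<0$. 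Thus $R(t)\to 0$ exponentially fast, and $e^{Q^Tt}\to\Pi$ as $t\to\infty$. Setting $(\mathbf{x}(C_1,\infty),\ldots,\mathbf{x}(C_k,\infty))^T:=\Pi\mathbf{x}(0)$ completes the proof.

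The main obstacle is the semisimplicity of the $0$ eigenvalue; everything else is routine linear ODE theory. The key observation that unlocks this step is that the conservation law already proved gives a clean uniform bound on $\|\mathbf{x}(t)\|$ for every initial condition, which rules out polynomial growth and hence rules out non-trivial Jordan blocks on the imaginary axis without having to invoke the full machinery of Markov chain decomposition into communicating classes.
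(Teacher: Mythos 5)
Your proof is correct, but it follows a genuinely different route from the paper's. The paper identifies $\mathbf{x}(C_j,t)/N$ with the transient probability distribution $\pi(C_j,t)$ of a ``singleton'' CTMC whose generator is the $Q$ of Proposition~\ref{pro:ChFP-NonSynQmatrix}, and then invokes the standard convergence of that distribution to its steady state --- explicitly assuming the singleton chain is irreducible and positive-recurrent. You instead argue spectrally: Gershgorin places all eigenvalues of $Q$ in the closed left half-plane meeting the imaginary axis only at $0$, the uniform bound $\|e^{Qt}\|_\infty=1$ rules out a non-trivial Jordan block at $0$, and the splitting $e^{Q^Tt}=\Pi+R(t)$ with $R(t)\to 0$ finishes the job. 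Your version is more self-contained and strictly more general: it needs no irreducibility hypothesis (and in fact no sign condition on the initial data), whereas the paper's proof quietly assumes irreducibility mid-argument. What the paper's route buys is the probabilistic identification $\lim_t\mathbf{x}(C_j,t)=N\pi(C_j,\infty)$, i.e.\ the consistency of the fluid limit with the singleton chain's stationary distribution, which is the point the surrounding discussion exploits; your $\Pi\mathbf{x}(0)$ is the same object without that interpretation. Amusingly, your argument is essentially the machinery the paper itself deploys for the synchronised cases (Corollary~\ref{Corollary D.2.1.} in the appendix: eigenvalues zero or with negative real part, plus boundedness, imply convergence). One small imprecision: the conservation law of Proposition~\ref{pro:ChFP-ODEsConservationLaw} alone does not bound $\|\mathbf{x}(t)\|$ --- it fixes the sum of the entries, which controls the norm only once nonnegativity of the solution is known; but your primary bound $\|e^{Q^Tt}\|_1=\|e^{Qt}\|_\infty=1$ is already sufficient, so nothing is lost.
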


\begin{proof}
By Proposition~\ref{pro:ChFP-NonSynQmatrix}, the matrix $Q$ in
(\ref{eq:ChFP-NonSynODEsMatrix}) is an infinitesimal generator
matrix. Consider a ``singleton'' CTMC which has the state space
$S=\{C_1, C_2, \cdots, C_k\}$, the infinitesimal generator matrix
$Q$ in (\ref{eq:ChFP-NonSynODEsMatrix}) and the initial probability
distribution
$\mathbf{\pi}(C_j,0)=\frac{\mathbf{x}(C_j,0)}{N}\;(j=1,2,\cdots,k)$.
Then according to Markov theory (\cite{QueueingNetworks}, page 52),
$\mathbf{\pi}(C_j,t)\;(j=1,2,\cdots,k)$, the probability
distribution of this new CTMC at time $t$, satisfies
\begin{equation}\label{NewCTMCDistrODES}
 \frac{\mathrm{d}\left(\mathbf{\pi}(C_1,t),\cdots,\mathbf{\pi}(C_k,t)\right)}{\mathrm{d}t}
 =\left(\mathbf{\pi}(C_1,t),\cdots,\mathbf{\pi}(C_k,t)\right)Q
\end{equation}
Since the singleton CTMC is assumed irreducible and
positive-recurrent, it has a steady-state probability distribution
$\{\mathbf{\pi}(C_j,\infinity)\}_{j=1}^k$, and
\begin{equation}\label{equ:sinlegton}
  \lim_{t\rightarrow\infinity}\mathbf{\pi}(C_j,t)=\mathbf{\pi}(C_j,\infinity),
  \quad j=1,2,\cdots,k.
\end{equation}
Note that $\frac{\mathbf{x}(C_{j},t)}{N}$ also satisfies
(\ref{NewCTMCDistrODES}) with the initial values
$\frac{\mathbf{x}(C_{j},0)}{N}$ equal to $\mathbf{\pi}(C_j,0)$,
where $N$ is the population of the components. By the uniqueness of
the solutions of (\ref{NewCTMCDistrODES}), we have
\begin{equation}\label{}
  \frac{\mathbf{x}(C_j,t)}{N}=\mathbf{\pi}(C_j,t),
  \quad j=1,2,\cdots,k,
\end{equation}
and hence by (\ref{equ:sinlegton}),
\begin{equation*}
  \lim_{t\rightarrow\infinity}\mathbf{x}(C_j,t)=\lim_{t\rightarrow\infinity}N\mathbf{\pi}(C_j,t)=N\mathbf{\pi}(C_j,\infinity),
  \quad j=1,2,\cdots,k.
\end{equation*}
\end{proof}

Clearly, if there are multiple types of component in the system,
then Theorem~\ref{thm:ChFP-NonSynConvergence} holds for each each
component type, since there is no cooperation between different
component types and they can be treated independently.


%

It is shown in~\cite{Gilmore2005} that for some special examples the
equilibrium solutions of the ODEs coincide with the steady state
probability distributions of the underlying original CTMC\@. This
theorem states that this holds for all for PEPA models without
synchronisations. Moreover, it is also exposed by this theorem that
the fluid approximation is consistent with the CTMC underlying the
same nonsynchronised PEPA model.

\section{Relating to density dependent CTMCs}


A general PEPA model may have synchronisations, which result in the
nonlinearity of the derived ODEs. However, it is difficult to rely
on pure analytical methods to explore the asymptotic behaviour of
the solution of the derived ODEs from an arbitrary PEPA model
(except for some special classes of models, see the next two
sections).

\par Fortunately,  Kurtz's theorem~\cite{Kurtz,KurtzBook} establishes the relationship
between a sequence of Markov chains and a corresponding set of ODEs:
the complete solution of some ODEs is the limit of a sequence of
Markov chains. In the context of PEPA, the derived ODEs can be
considered as the limit of pure jump Markov processes, as first
exposed in~\cite{Jane4} for a special case. Thus we  may investigate
the convergence of the ODEs' solutions by alternatively studying the
corresponding property of the Markov chains through this consistency
relationship. This approach leads  to  the result presented in the
next section: under a particular condition the solution will
converge and the limit is consistent with the limit steady-state
probability distribution of a family of CTMCs underlying the given
PEPA model. Let us first introduce the concept of density dependent
Markov chains underlying PEPA models.

\subsection{Density dependent Markov chains underlying PEPA
models}\label{section:ChFP-density-dependent-Markov}

\par In the numerical state vector representation scheme, each vector is a
single state and the rates of the transitions between states are
specified by the rate functions. For example, the transition from
state $\mathbf{x}$ to $\mathbf{x}+l$ can be written as
$$
   \mathbf{x}\mathop{\longrightarrow}\limits^{(l,f(\mathbf{x},l))}\mathbf{x}+l.
$$
Since all the transitions are only determined by the current state
rather than the previous ones, given any starting state a CTMC can
be obtained. More specifically, the state space of the CTMC is the
set of all reachable numerical state vectors $\mathbf{x}$. The
infinitesimal generator is determined by the transition rate
function,
\begin{equation}\label{equ:TransitionFunction}
           q_{\mathbf{x},\mathbf{x}+l}=f(\mathbf{x},l).
\end{equation}

\par Because the transition rate function is defined according to the
semantics of PEPA, the CTMC mentioned above is in fact the
aggregated CTMC underlying the given PEPA model. In other words, the
transition rate of the aggregated CTMC is specified by the
transition rate function in
Definition~\ref{def:TransitionRateFunction}.

\par It is obvious that the aggregated CTMC depends on the starting state
of the given PEPA model. By altering the population of components
presented in the model, which can be done by varying the initial
states, we may get a sequence of aggregated CTMCs. Moreover,
Proposition~\ref{proposition:Ch3-HomogeousAndLipschitz} indicates
that the transition rate function has the homogenous property:
$Hf(\mathbf{x}/H,l)=f(\mathbf{x},l),\linebreak \forall H>0$. This
property identifies the aggregated CTMC to be \emph{density
dependent}.

\begin{definition}~\cite{Kurtz}.
A family of CTMCs $\{X_n\}_n$ is called \emph{density dependent} if
and only if there exists a continuous function
$f(\mathbf{x},l),\;\mathbf{x}\in \mathbb{R}^d,\;l\in \mathbb{Z}^d$,
such that the infinitesimal generators of $X_n$ are given by:
$$
           q^{(n)}_{\mathbf{x},\mathbf{x}+l}=nf(\mathbf{x}/n,l),\quad l\neq 0,
$$
where $q^{(n)}_{\mathbf{x},\mathbf{x}+l}$ denotes an entry of the
infinitesimal generator of $X_n$, $\mathbf{x}$ a numerical state
vector and $l$ a transition vector.
\end{definition}

This allows us to conclude the following proposition.
\begin{proposition}\label{pro:DensityDependentMCfromPEPA}
Let $\{X_n\}$ be a sequence of aggregated CTMCs generated from a
given PEPA model (by scaling the initial state), then $\{X_n\}$ is
density dependent.
\end{proposition}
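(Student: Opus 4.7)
The plan is to verify that the two hypotheses in the definition of density dependence, namely continuity of $f(\mathbf{x},l)$ and the scaling identity $q^{(n)}_{\mathbf{x},\mathbf{x}+l}=nf(\mathbf{x}/n,l)$, both follow directly from Proposition~\ref{proposition:Ch3-HomogeousAndLipschitz} together with the generator identification $q_{\mathbf{x},\mathbf{x}+l}=f(\mathbf{x},l)$ recorded in (\ref{equ:TransitionFunction}). Conceptually the proof is a one-line unwinding of definitions; the only real work is tracking the scaling of the initial condition.

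First I would fix a reference PEPA model with initial numerical vector $\mathbf{x}_0$ and, for each $n\in\mathbb{N}$, define $X_n$ to be the aggregated CTMC obtained by taking the initial numerical vector to be $n\mathbf{x}_0$ (equivalently, by replacing each $P[N]$ by $P[nN]$). The state space of $X_n$ consists of all numerical vectors reachable from $n\mathbf{x}_0$, and by (\ref{equ:TransitionFunction}) the off-diagonal entries of its infinitesimal generator are $q^{(n)}_{\mathbf{x},\mathbf{x}+l}=f(\mathbf{x},l)$ for every labelled activity $l\in\mathcal{A}_{\mbox{\small label}}$ (with $l$ identified with its transition vector).

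Next I would apply the homogeneity property from item~1 of Proposition~\ref{proposition:Ch3-HomogeousAndLipschitz} with $H=n$, which yields
\begin{equation*}
f(\mathbf{x},l)=nf(\mathbf{x}/n,l).
\end{equation*}
Combining this with the generator identification gives $q^{(n)}_{\mathbf{x},\mathbf{x}+l}=nf(\mathbf{x}/n,l)$, exactly matching the definition of density dependence. Continuity of $f(\cdot,l)$ on $\mathbb{R}^d$, which is also required by the definition, is immediate from the global Lipschitz bound in item~2 of the same proposition.

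The main subtlety (rather than a genuine obstacle) is making sure that both hypotheses survive the synchronised case, where $f(\mathbf{x},l)$ involves the $\min$ operator and, via Remark~\ref{remark:Ch3-Top*0=0}, possibly passive rates; but this has already been absorbed into Proposition~\ref{proposition:Ch3-HomogeousAndLipschitz}, so no additional case analysis is needed here. A minor bookkeeping point is that the transition vectors $l$ used in the definition of density dependence live in $\mathbb{Z}^d$, which is compatible with our labelled activities since each column of the activity matrix $\mathbf{C}$ has entries in $\{-1,0,+1\}$; this ensures that the family of admissible increments is the same for every $X_n$, as required by the definition.
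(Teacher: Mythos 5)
Your proof is correct and follows essentially the same route as the paper's: identify the generator entries with the transition rate function via (\ref{equ:TransitionFunction}) and then invoke the homogeneity property $nf(\mathbf{x}/n,l)=f(\mathbf{x},l)$ from Proposition~\ref{proposition:Ch3-HomogeousAndLipschitz}. Your additional remarks on continuity (via the Lipschitz bound) and on the increments lying in $\mathbb{Z}^d$ are correct tidying-up of hypotheses the paper leaves implicit, not a different argument.
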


\begin{proof}
For any $n$, the transition between states is determined by
$$
        q^{(n)}_{\mathbf{x},\mathbf{x}+l}=f(\mathbf{x},l),
$$
where $\mathbf{x}, \mathbf{x}+l$ are state vectors, $l$ corresponds
to an activity, $f(\mathbf{x},l)$ is the rate of the transition from
state $\mathbf{x}$ to $\mathbf{x}+l$. By
Proposition~\ref{proposition:Ch3-HomogeousAndLipschitz},
$$nf(\mathbf{x}/n,l)=f(\mathbf{x},l).$$
So the infinitesimal generator of $X_n$ is given by:
$$
           q^{(n)}_{\mathbf{x},\mathbf{x}+l}=f(\mathbf{x},l)=nf(\mathbf{x}/n,l),\quad l\neq 0.
$$
Therefore, $\{X_n\}$ is a sequence of density dependent CTMCs.
\end{proof}

\par In particular, the family of density dependent CTMCs,
$\{X_n(t)\}$, derived from a given PEPA model
 with the starting condition $X_n(0)=n\mathbf{x}_0 \;(\forall n)$, is called the
\emph{density dependent CTMCs associated with $\mathbf{x}_0$}. The
CTMCs $\frac{X_n(t)}{n}$ are called the \emph{concentrated density
dependent} CTMCs. Here $n$ is called the \emph{concentration level},
indicating that the entries within the numerical vector states (of
$\frac{X_n(t)}{n}$) are incremented and decremented in steps of
$\frac1n$.

For example, Consider the following PEPA model, which is
Model~\ref{model:Uer-Provider} presented previously:
\begin{equation*}
\begin{split}
User_1 \rmdef &(task_1, a).User_2\\
User_2 \rmdef &(task_2, b).User_1\\
Provider_1 \rmdef &(task_1, a).Provider_2\\
Provider_2 \rmdef &(reset, d).Provider_1\\
 (User_1[M]) &
\sync{\{task1\}} (Provider_1[N]).
\end{split}
\end{equation*}
The activity matrix and transition rate functions have been
specified in Table~\ref{table:ChFP-ActMatrix-ModelUserProvider}. In
this table, $U_i,P_i\;(i=1,2)$ are the local derivatives
representing $User_i$ and $Provider_i$ respectively. For
convenience, the labelled activities or transition vectors
${task_1}^{(U_1\rightarrow U_2,P_1\rightarrow P_2)}$,
$task_2^{U_2\rightarrow U_1}$, ${reset}^{P_2\rightarrow P_1}$ will
subsequently be denoted by $l^{task_1},l^{task_2}, l^{reset}$
respectively.
\begin{table}[htbp]
\begin{center}
\caption{Activity matrix and transition rate function of
Model~\ref{model:Uer-Provider}}\label{table:ChFP-ActMatrix-ModelUserProvider}
\begin{tabular}{|c| c |c| c| }
  \hline
  $l$  &   ${task_1}^{(U_1\rightarrow U_2,P_1\rightarrow P_2)}$ &   $task_2^{U_2\rightarrow U_1}$
       &   ${reset}^{P_2\rightarrow P_1}$ \\\hline
  $U_1$& $-1$ & 1  & 0  \\
  $U_2$& 1  & $-1$  & 0    \\
  $P_1$ & $-1$  & 0  & 1    \\
  $P_2$ & 1  & 0  & $-1$    \\\hline\hline
  $f(\mathbf{x},l)$  & $a\min(\mathbf{x}[U_1],\mathbf{x}[P_1])$ &
  $b\mathbf{x}[U_2]$ & $d\mathbf{x}[P_2]$\\ \hline
  \end{tabular}
\end{center}
\end{table}

Suppose $\mathbf{x}_1=(M,0,N,0)^T=(1,0,1,0)^T$. Let $X_1(t)$ be the
aggregated CTMC underlying Model~\ref{model:Uer-Provider} with
initial state $\mathbf{x}_1$. Then the state space of $X_1(t)$,
denoted by $S_1$, is composed of
\begin{equation}\label{eq:ExampleS_1}
\begin{split}
    \begin{array}{ll}
       \mathbf{x}_1=(1,0,1,0)^T, & \mathbf{x}_2=(0,1,0,1)^T,\\
       \mathbf{x}_3=(1,0,0,1)^T, & \mathbf{x}_4=(0,1,1,0)^T. \\
     \end{array}
\end{split}
\end{equation}
According to the transition rate functions presented in
Table~\ref{table:ChFP-ActMatrix-ModelUserProvider}, we have, for
instance,
$$
    q^{(1)}_{\mathbf{x}_1,\mathbf{x}_2}=q_{\mathbf{x}_1,\mathbf{x}_1+l^{task_1}}
    =f(\mathbf{x}_1,l^{task_1})=a\min(\mathbf{x}_1[U_1],
    \mathbf{x}_1[P_1])=a.
$$
Varying the initial states we may get other aggregated CTMCs. For
example, let $X_2(t)$ be the aggregated CTMC corresponding to the
initial state $X_2(0)=2\mathbf{x}_0=(2,0,2,0)^T$. Then the state
space $S_2$ of $X_2(t)$ has the states
\begin{equation}\label{eq:ExampleS_2}
\begin{split}
    \begin{array}{lll}
       \mathbf{x}_1=(2,0,2,0)^T, & \mathbf{x}_2=(1,1,1,1)^T, & \mathbf{x}_3=(1,1,2,0)^T,\\
       \mathbf{x}_4=(1,1,0,2)^T, & \mathbf{x}_5=(0,2,1,1)^T, & \mathbf{x}_6=(2,0,1,1)^T,\\
       \mathbf{x}_7=(0,2,0,2)^T, & \mathbf{x}_8=(0,2,2,0)^T, & \mathbf{x}_9=(2,0,0,2)^T.
     \end{array}
\end{split}
\end{equation}
The rate of transition from $\mathbf{x}_1$ to $\mathbf{x}_2$ is
determined by
$$
    q^{(2)}_{\mathbf{x}_1,\mathbf{x}_2}=q_{\mathbf{x}_1,\mathbf{x}_1+l^{task_1}}
    =f(\mathbf{x}_1,l^{task_1})=2a=2f(\mathbf{x}_1/2,l^{task_1}).
$$
Similarly, let $X_n(t)$ be the aggregated CTMC corresponding to the
initial state $X_n(0)=n\mathbf{x}_0$. Then the transition from
$\mathbf{x}$ to $\mathbf{x}+l$ is determined by
\begin{equation*}
           q^{(n)}_{\mathbf{x},\mathbf{x}+l}=f(\mathbf{x},l)=nf(\mathbf{x}/n,l).
\end{equation*}
Thus a family of aggregated CTMCs, i.e. $\{X_n(t)\}$, has been
obtained from Model~\ref{model:Uer-Provider}. These derived
$\{X_n(t)\}$ are density dependent CTMCs associated with
$\mathbf{x}_0$. As illustrated by this example, the density
dependent CTMCs are obtained by scaling the starting state
$\mathbf{x}_0$. So the starting state of each CTMC is different,
because $X_n(0)=n\mathbf{x}_0$, i.e.\ $X_n(0)=n(M,0,N,0)^T$.

\subsection{Fluid approximation as the limit of the CTMCs}
\par As discussed above, a set of ODEs and a sequence of density
dependent Markov chains can be derived from the same PEPA model. The
former one is deterministic while the latter is stochastic. However,
both of them are determined by the same activity matrix and the same
rate functions that are uniquely generated from the given PEPA
model. Therefore, it is natural to believe that there is some kind
of consistency between them.

\par As we have mentioned, the complete solution of some ODEs can be the limit of a
sequence of Markov chains according to Kurtz's
theorem~\cite{Kurtz,KurtzBook}. Such consistency in the context of
PEPA has been previously illustrated for a particular PEPA
model~\cite{Jane4}. Here we give a modified version of this result
for general PEPA models, in which the convergence is in the sense of
almost surely rather than probabilistically as in~\cite{Jane4}.
\begin{theorem}\label{thm:ChFP-ODEmeetsAggCTMC}
Let $X(t)$ be the solution of the ODEs
(\ref{eq:ChFP-DingDerivedODEs}) derived from a given PEPA model with
initial condition $\mathbf{x}_0$, and let $\{X_n(t)\}$ be the
density dependent CTMCs associated with $\mathbf{x}_0$ underlying
the same PEPA model. Let $\hat{X}_n(t)=\frac{X_n(t)}{n}$,
 then for any $t>0$,
\begin{equation}
        \lim_{n\rightarrow\infinity}\sup_{u\leq t}\|\hat{X}_n(u)-X(u)\|=0\quad \quad a.s.
\end{equation}
\end{theorem}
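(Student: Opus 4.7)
The plan is to apply Kurtz's strong approximation scheme for density dependent Markov chains, exploiting the homogeneity and Lipschitz properties of the rate function in Proposition~\ref{proposition:Ch3-HomogeousAndLipschitz} together with the conservation law of Proposition~\ref{pro:ChFP-ODEsConservationLaw}. First, I would write each $X_n(t)$ via independent unit-rate Poisson processes $\{Y_l\}_{l\in\mathcal{A}_{\mathrm{label}}}$ in the Ethier--Kurtz form
\begin{equation*}
\hat{X}_n(t) = \mathbf{x}_0 + \sum_{l} \frac{l}{n}\, Y_l\!\left(n\int_0^t f(\hat{X}_n(s), l)\,ds\right),
\end{equation*}
where the homogeneity identity $nf(\mathbf{x}/n,l)=f(\mathbf{x},l)$ has been used. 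Subtracting the integral form $X(t)=\mathbf{x}_0+\int_0^t F(X(s))\,ds$ of the ODE and splitting $Y_l(u) = u + \tilde{Y}_l(u)$ into its compensator and its martingale part yields
\begin{equation*}
\hat{X}_n(t) - X(t) = \sum_l l \int_0^t\!\bigl[f(\hat{X}_n(s),l)-f(X(s),l)\bigr]ds + \sum_l \frac{l}{n}\, \tilde{Y}_l\!\left(n\int_0^t f(\hat{X}_n(s),l)\,ds\right).
\end{equation*}

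Second, I would control the martingale (noise) term. The structural fact underlying Proposition~\ref{pro:ChFP-ODEsConservationLaw} --- each transition vector has equally many $+1$ and $-1$ entries among the coordinates of any fixed component type --- holds at the CTMC level as well, so every coordinate of $\hat{X}_n(s)$ is bounded by a constant depending only on $\mathbf{x}_0$, uniformly in $n$ and $s$. By Proposition~\ref{proposition:Ch3-RateFunctionComparison} each rate $f(\hat{X}_n(s),l)$ is then bounded by some $K$, so for every fixed $t$ each Poisson process $Y_l$ is evaluated only on $[0,nKt]$. The strong law of large numbers for Poisson processes gives $\sup_{u\le U} n^{-1}|\tilde{Y}_l(nu)| \to 0$ almost surely for every finite $U$; applied with $U=Kt$, this shows that the entire noise term tends to $0$ almost surely, uniformly on $[0,t]$.

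Third, the drift term is bounded using the Lipschitz property of Proposition~\ref{proposition:Ch3-HomogeousAndLipschitz}: there is a constant $M'$ with
\begin{equation*}
\Bigl\|\sum_l l\int_0^t\!\bigl[f(\hat{X}_n(s),l)-f(X(s),l)\bigr]ds\Bigr\| \le M'\!\int_0^t \|\hat{X}_n(s)-X(s)\|\,ds.
\end{equation*}
Writing $\epsilon_n(t)=\sup_{u\le t}\|\hat{X}_n(u)-X(u)\|$ and letting $\delta_n(t)$ denote the sup-norm of the noise term on $[0,t]$, the two bounds combine into the pathwise inequality $\epsilon_n(t)\le \delta_n(t)+M'\!\int_0^t\epsilon_n(s)\,ds$. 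Gronwall's lemma then delivers $\epsilon_n(t)\le \delta_n(t)\, e^{M't}\to 0$ almost surely, which is the claim.

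The hard part is obtaining the \emph{almost sure} (rather than in-probability) uniform convergence of the compensated Poisson terms. This forces the use of the strong LLN for Poisson processes in place of the weaker $L^2$/Doob-type estimate that would suffice only for convergence in probability, and it is also why the a priori boundedness furnished by the conservation law is indispensable: without it the arguments $n\!\int_0^t f(\hat{X}_n(s),l)\,ds$ of the $Y_l$ could grow without control and no pathwise rate of noise decay would be available.
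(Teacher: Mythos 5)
Your proof is correct, but it takes a genuinely different route from the paper. The paper's proof is a two-line verification of the hypotheses of Kurtz's theorem as stated in the appendix: it checks that $F$ is Lipschitz on compact sets (already established for the existence/uniqueness result) and that $\sum_l\|l\|\sup_{\mathbf{x}\in K}f(\mathbf{x},l)<\infty$ (since $f$ is continuous and the entries of $l$ lie in $\{-1,0,1\}$), and then invokes the cited theorem as a black box. You instead unpack the proof of Kurtz's theorem itself: the random time-change representation of $X_n$ by independent unit-rate Poisson processes, the compensator/martingale splitting, the functional strong law of large numbers for Poisson processes to kill the noise term, and Gronwall's inequality for the drift term. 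Both arguments are sound. What your version buys is transparency on two points the paper leaves implicit: first, it makes visible exactly where the \emph{almost sure} (rather than in-probability) mode of convergence comes from, namely the pathwise SLLN $\sup_{u\leq U}n^{-1}|\tilde{Y}_l(nu)|\to 0$; second, it makes explicit the role of the conservation law in confining $\hat{X}_n$ to a fixed compact set uniformly in $n$, which is what lets the local conditions of Kurtz's theorem be applied without a separate localization or stopping-time argument. What the paper's version buys is brevity and a clean separation between the general limit theorem and the PEPA-specific verification that the transition rate functions satisfy its hypotheses.
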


\begin{proof}
According to Kurtz's theorem, which is listed
in~\ref{section:Appendix-Some-Theorems}, it is sufficient to prove:
for any compact set $K\subset \mathbb{R}^{N_{d}}$,
\begin{enumerate}
\item $\exists M_K>0$ such that $\|F(\mathbf{x})-F(\mathbf{y})\|\leq M_K\|\mathbf{x}-\mathbf{y}\|$;
\item $\sum_l\|l\|\sup_{\mathbf{x}\in K}f(\mathbf{x},l)<\infinity$.
\end{enumerate}
Clearly, above term 1 is satisfied. Since $f(\mathbf{x},l)$ is
continuous by
Proposition~\ref{proposition:Ch3-HomogeousAndLipschitz}, it is
bounded on any compact $K$. Notice that any entry of $l$ takes
values in $\{1,-1,0\}$, so $\|l\|$ is bounded. Thus term 2 is
satisfied, which completes the proof.
\end{proof}

\par Theorem~\ref{thm:ChFP-ODEmeetsAggCTMC} allows us to investigate the
properties of $X(t)$ through studying the characteristics of the
family of CTMCs $\hat{X}_n(t)=\frac{X_n(t)}{n}$. Notice that
$X_n(t)$ takes values in the state space which corresponds to the
starting state $n\mathbf{x}_0$. Clearly, each state of
$\hat{X}_n(t)$ is bounded and nonnegative (more precisely, each
entry in any numerical state vector is nonnegative, and bounded by
$\|\mathbf{x}_0\|$ according to the conservative law).  So the ODEs'
solution $X(t)$ inherits these characteristics since $X(t)$ is the
limit of $\hat{X}_n(t)$ as $n$ goes to infinity. That is, $X(t)$ is
bounded and nonnegative. The proof is trivial and omitted here.
Instead, a purely analytic proof of these properties will be given
in Section~\ref{section:Analytic-Proof-Boundedness}.

\subsection{Consistency between the fluid approximation and the CTMCs}

As shown in Theorem~\ref{thm:ChFP-ODEmeetsAggCTMC}, for a given PEPA
model with synchronisations, the derived ODEs can be taken as the
underlying density dependent CTMC with the concentration level
infinity. If the given model has no synchronisations, then by
Proposition~\ref{pro:ChFP-NonSynQmatrix} and the proof of
Theorem~\ref{thm:ChFP-NonSynConvergence}, the ODEs coincide the
probability distribution evolution equations of the CTMC with the
concentration level one, except for a scaling factor. These two
conclusions embody the consistency between the fluid approximation
and the CTMCs for PEPA models. Moreover, as we will see in the next
section, the limit of the ODEs' solution as time tends to infinity
(if it exists) is consistent with the limit of the expectations of
the corresponding CTMCs.

\par However, it is natural to ask such a question:  for a PEPA model with
synchronisations, since the ODEs corresponds the CTMC with the
concentration level infinity and our performance evaluation is based
on the usual underlying CTMC, i.e. the CTMC with the concentration
level one, what is the loss brought by using the fluid approximation
approach? Although in the sense of concentration level there is a
gap between one and infinity, but in practice for large scale
synchronised models (i.e. models with a large number of repetitive
components), the relative errors of performance measure such as
throughput and average response time between the concentration level
one and infinity are usually small enough to be ignored (i.e. less
than $5\%$)~\cite{JieThesis}. Therefore, it is safe to employ the
fluid approximation for large scale models.

\par This paper more likely concentrates on the long-time behaviour
the ODEs' solution. In the following sections, we focus on the
problem of whether the ODEs' solution converges as time goes to
problem will be presented in the next section, which is based on the
consistency relationship between the derived ODEs and the CTMCs
revealed in Theorem~\ref{thm:ChFP-ODEmeetsAggCTMC}.

\section{Convergence of ODEs' solution: a probabilistic approach}

\par Analogous to the steady-state probability distributions of the Markov
chains underlying PEPA models, upon which performance measures such
as throughput and utilisation can be derived, we expect the solution
of the generated ODEs to have similar equilibrium conditions. In
particular, if the solution has a limit as time goes to infinity we
will be able to similarly obtain the performance from the steady
state, i.e. the limit. Therefore, whether the solution of the
derived ODEs converges becomes an important problem.

\par We should point out that Kurtz's theorem cannot directly apply to
the problem of whether or not the solution the derived ODEs
converges. This is because  Kurtz's theorem only deals with the
approximation between the ODEs and Markov chains during any finite
time, rather than considering the asymptotic behaviour of the ODEs
as time goes to infinity. This section will present our
investigation and results about this problem.

\subsection{Convergence under a particular condition}

\par  We follow the assumptions in Theorem~\ref{thm:ChFP-ODEmeetsAggCTMC}.
 Denote the expectation of
$\hat{X}_n(t)$ as $\hat{M}_n(t)$, i.e.
$\hat{M}_n(t)=E[\hat{X}_n(t)]$. For any $t$, the stochastic
processes $\{\hat{X}_n(t)\}_n$ converge to the deterministic $X(t)$
when $n$ tends to infinity, as
Theorem~\ref{thm:ChFP-ODEmeetsAggCTMC} shows. It is not surprising
to see that $\{\hat{M}_n(t)\}_n$, the expectations of
$\{\hat{X}_n(t)\}_n$, also converge to $X(t)$ as $n\rightarrow
\infinity$:
\begin{lemma}\label{lem:Mn(t)=X(t)} For any $t$,
$$
          \lim_{n\rightarrow\infinity}\hat{M}_n(t)=X(t).
$$
\end{lemma}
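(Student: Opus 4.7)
The plan is to upgrade the almost sure convergence of $\hat{X}_n(t)$ to $X(t)$ supplied by Theorem~\ref{thm:ChFP-ODEmeetsAggCTMC} to convergence of expectations. Since a.s.\ convergence does not in general imply convergence in mean, the natural tool is the bounded convergence theorem, so the main task reduces to producing a uniform bound on $\hat{X}_n(t)$ that is independent of both $n$ and $\omega$.

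First I would exploit the conservation law to obtain this bound. The same computation that proves Proposition~\ref{pro:ChFP-ODEsConservationLaw} for the ODE carries over to the CTMC $X_n(t)$: for each component type $C_i$, letting $\mathbf{y}_i$ be the indicator vector of the local derivatives of $C_i$, every transition vector $l$ in the activity matrix satisfies $\mathbf{y}_i^T l = 0$, because a firing activity removes and creates the same number of instances of $C_i$. Consequently every jump of $X_n$ preserves $\mathbf{y}_i^T X_n(t)$ almost surely, so with initial condition $X_n(0) = n\mathbf{x}_0$ we get $\mathbf{y}_i^T X_n(t) = n\, \mathbf{y}_i^T \mathbf{x}_0$ for all $t \geq 0$. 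Since every entry of $X_n(t)$ is nonnegative, dividing by $n$ yields
\begin{equation*}
  0 \;\leq\; \hat{X}_n(t)[C_{i_j}] \;\leq\; \mathbf{y}_i^T \mathbf{x}_0 \;\leq\; \|\mathbf{x}_0\|_1
\end{equation*}
uniformly in $n$, $t$, and $\omega$.

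With this uniform bound in hand, fix $t > 0$. Theorem~\ref{thm:ChFP-ODEmeetsAggCTMC} gives $\hat{X}_n(t) \to X(t)$ almost surely. The bounded convergence theorem, applied componentwise, then yields
\begin{equation*}
  \hat{M}_n(t) \;=\; E[\hat{X}_n(t)] \;\longrightarrow\; E[X(t)] \;=\; X(t),
\end{equation*}
where the last equality holds because $X(t)$ is deterministic.

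The only real obstacle is the transfer of the conservation law from the ODE to the CTMC; once that is recognised as a direct consequence of the column structure of the activity matrix (each column corresponds to a legal firing, which by definition redistributes copies of a single component type), uniform boundedness and hence uniform integrability are immediate, and the rest of the argument is a one-line application of bounded convergence.
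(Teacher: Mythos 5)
Your proposal is correct and follows essentially the same route as the paper: almost sure convergence from Theorem~\ref{thm:ChFP-ODEmeetsAggCTMC}, uniform boundedness of $\hat{X}_n(t)$ via the conservation law, and dominated (bounded) convergence to pass to expectations, using $E[X(t)]=X(t)$ for the deterministic limit. The only cosmetic difference is that the paper bounds $\|\hat{M}_n(t)-X(t)\|$ through $E\|\hat{X}_n(t)-X(t)\|$ and Jensen's inequality, whereas you apply bounded convergence componentwise; you also spell out the CTMC-level conservation argument that the paper merely cites from its earlier discussion.
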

\begin{proof} Since $X(t)$ is deterministic, then $E[X(t)]=X(t)$.
By Theorem~\ref{thm:ChFP-ODEmeetsAggCTMC}, for all $t$,
$\hat{X}_n(t)$ converges to $X(t)$ almost surely as $n$ goes to
infinity. Notice that $\hat{X}_n(t)$ is bounded (see the discussion
in the previous section), then by Lebesgue's dominant convergence
theorem, we have
$$\lim_{n\rightarrow\infinity}E\|\hat{X}_n(t)-X(t)\|=0.$$
Since a norm $\|\cdot\|$ can be considered as a convex function, by
Jensen's inequality (Theorem~2.2
in~\cite{BasicStochasticProcesses}), we have $\|(E[\cdot])\|\leq
E[\|\cdot\|]$. Therefore,
\begin{equation*}
\begin{split}
        \lim_{n\rightarrow\infinity}\|\hat{M}_n(t)-X(t)\|
        &=\lim_{n\rightarrow\infinity}\|E[\hat{X}_n(t)]-E[X(t)]\|\\
        &\leq
        \lim_{n\rightarrow\infinity}E\|\hat{X}_n(t)-X(t)\|\\
        &=0.
\end{split}
\end{equation*}
\end{proof}


\par
Lemma~\ref{lem:Mn(t)=X(t)} states that the ODEs' solution $X(t)$ is
just the limit function of the sequence of the expectation functions
of the corresponding density dependent Markov chains. This provides
some clues: the characteristics of the limit $X(t)$ depend on the
properties of $\{\hat{M}_{n}(t)\}_n$. Therefore, we expect to be
 able to investigate $X(t)$ by studying $\{\hat{M}_{n}(t)\}_n$.

\par  Since $\hat{M}_{n}(t)$ is the expectation of the
Markov chain $\hat{X}_n(t)$, $\hat{M}_{n}(t)$ can be expressed by a
formula in which the transient probability distribution is involved.
That is,
$$
\hat{M}_n(t)=E[\hat{X}_n(t)]=\sum_{\mathbf{x}\in
\hat{S}_n}\mathbf{x}\hat{\mathbf{\pi}}^n_t(\mathbf{x}),
$$
where $\hat{S}_n$ is the state space,
$\hat{\mathbf{\pi}}^n_t(\cdot)$ is the probability distribution of
$\hat{X}_n$ at time $t$. Let ${S}_n$ and $\mathbf{\pi}^n_t(\cdot)$
be the state space and the probability distribution of $X_n(t)$
respectively\footnote{We should point out that the probability
distributions of $X_n(t)$ and $\hat{X}_n(t)$ are the same, i.e.\
$\mathbf{\pi}^n_t(\mathbf{x})=\hat{\mathbf{\pi}}^n_t(\mathbf{x}/n)$.}.
Then
\begin{equation*}
    \hat{M}_n(t)=E[\hat{X}_n(t)]=E\left[\frac{X_n(t)}{n}\right]
    =\sum_{\mathbf{x}\in S^n}\frac{\mathbf{x}}{n}\mathbf{\pi}^n_t(\mathbf{x}).
\end{equation*}
We have assumed the Markov chains underlying PEPA models to be
irreducible and positive-recurrent. Then the transient probability
distributions of these Markov chains will converge to the
corresponding steady-state probability distributions. We denote the
steady-state probability distributions of $X_n(t)$ and
$\hat{X}_n(t)$ as $\mathbf{\pi}^n_{\infinity}(\cdot)$ and
$\hat{\mathbf{\pi}}^n_{\infinity}(\cdot)$ respectively. Then, we
have a lemma.
\begin{lemma}\label{lem:ChFP-Mn(t)Converges-Tconverges}
For any $n$, there exists a $\hat{M}_n(\infinity)$, such that
$$\lim_{t\rightarrow \infinity}\hat{M}_n(t)=\hat{M}_n(\infinity).$$
\end{lemma}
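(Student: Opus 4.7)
The plan is to reduce this to the classical fact that on a finite irreducible positive-recurrent CTMC the transient distribution converges to the stationary distribution, and then push that convergence through the finite weighted sum that defines $\hat{M}_n(t)$.

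First I would verify that for each fixed $n$ the state space $S_n$ of $X_n(t)$ is finite. Applying the conservation law of Proposition~\ref{pro:ChFP-ODEsConservationLaw} at the level of the CTMC — every transition preserves the total count of each component type — shows that from the initial state $n\mathbf{x}_0$ only finitely many states in the nonnegative integer lattice are reachable. Consequently $\hat{S}_n = \{\mathbf{x}/n : \mathbf{x} \in S_n\}$ is also finite, and every entry of any state in $\hat{S}_n$ is bounded by $\|\mathbf{x}_0\|$.

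Next, combining this finiteness with the standing assumption that the underlying CTMC is irreducible and positive-recurrent, the standard Markov-chain convergence theorem (e.g.\ \cite{QueueingNetworks}) gives the pointwise convergence
\[
\lim_{t\to\infty}\pi_t^n(\mathbf{x}) = \pi_\infty^n(\mathbf{x}), \qquad \forall \mathbf{x}\in S_n.
\]
Writing
\[
\hat{M}_n(t) = \sum_{\mathbf{x}\in S_n}\frac{\mathbf{x}}{n}\,\pi_t^n(\mathbf{x})
\]
as a \emph{finite} sum with fixed coefficients $\mathbf{x}/n$, I can interchange limit and summation without any additional hypothesis and define
\[
\hat{M}_n(\infty) := \sum_{\mathbf{x}\in S_n}\frac{\mathbf{x}}{n}\,\pi_\infty^n(\mathbf{x}),
\]
so that $\lim_{t\to\infty}\hat{M}_n(t) = \hat{M}_n(\infty)$, as required.

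There is no real obstacle here: the only substantive ingredient beyond invoking the Markov convergence theorem is the observation that the conservation law forces $S_n$ to be finite, which is what turns the interchange of limit and sum into a triviality. This lemma is genuinely about a single fixed $n$, so the delicate issues — the behaviour of $\hat{M}_n(\infty)$ as $n\to\infty$, and whether that limit coincides with $\lim_{t\to\infty}X(t)$ — are deferred to subsequent results.
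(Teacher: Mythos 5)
Your proposal is correct and follows essentially the same route as the paper: both express $\hat{M}_n(t)=\sum_{\mathbf{x}\in S_n}\frac{\mathbf{x}}{n}\pi^n_t(\mathbf{x})$, invoke the standing irreducibility/positive-recurrence assumption to get $\pi^n_t\to\pi^n_\infty$, and interchange the limit with the sum. Your explicit justification of that interchange via finiteness of $S_n$ (from the conservation law) is a small refinement the paper leaves implicit, but it is not a different argument.
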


\begin{proof}
\begin{equation*}
\begin{split}
        &\lim_{t\rightarrow\infinity}\hat{M}_n(t)\\
        =&\lim_{t\rightarrow\infinity}\sum_{\mathbf{x}\in {S}_n}\frac{\mathbf{x}}{n}\mathbf{\pi}^n_t(\mathbf{x})
         =\sum_{\mathbf{x}\in {S}_n}\lim_{t\rightarrow\infinity}\frac{\mathbf{x}}{n}\mathbf{\pi}^n_t(\mathbf{x})
        =\sum_{\mathbf{x}\in {S}_n}\frac{\mathbf{x}}{n}\mathbf{\pi}^n_{\infinity}(\mathbf{x})
        \equiv \hat{M}_n(\infinity).
\end{split}
\end{equation*}
\end{proof}
Clearly, we also have $\hat{M}_n(\infinity)=\sum_{\mathbf{x}\in
\hat{S}_n}\mathbf{x}\hat{\mathbf{\pi}}^n_{\infinity}(\mathbf{x})$.

\begin{remark}\label{remark:M_{nk}limits}
Currently, we do not know whether the sequence
$\{\hat{M}_n(\infinity)\}_n$ converges as $n\rightarrow \infinity$.
But since $\{\hat{M}_n(\infinity)\}_n$ is bounded which is due to
the conservation law that PEPA models satisfy, there exists
$\{n'\}\subset\{n\}$ such that $\{\hat{M}_{n'}(\infinity)\}$
converges to a limit, namely $\hat{M}_{\infinity}(\infinity)$. That
is
$$
\lim_{n'\rightarrow
\infinity}\hat{M}_{n'}(\infinity)=\hat{M}_{\infinity}(\infinity).
$$
Thus,
\begin{equation}\label{eq:lim_nlim_t}
 \lim_{n'\rightarrow
\infinity}\lim_{t\rightarrow
\infinity}\hat{M}_{n'}(t)=\lim_{n'\rightarrow
\infinity}\hat{M}_{n'}(\infinity)=\hat{M}_{\infinity}(\infinity).
\end{equation}
\end{remark}

At the moment, there are two questions:
\begin{enumerate}
  \item  Whether $\lim_{t\rightarrow
\infinity}\lim_{n'\rightarrow \infinity}\hat{M}_{n'}(t)$ exists?
  \item  If $\lim_{t\rightarrow
\infinity}\lim_{n'\rightarrow \infinity}\hat{M}_{n'}(t)$ exists,
whether
$$ \lim_{t\rightarrow \infinity}\lim_{n'\rightarrow
\infinity}\hat{M}_{n'}(t)=\lim_{n'\rightarrow
\infinity}\lim_{t\rightarrow \infinity}\hat{M}_{n'}(t)?$$
\end{enumerate}
If the answer to the first question is yes, then the solution of the
ODEs converges, since by
Lemma~\ref{lem:ChFP-Mn(t)Converges-Tconverges},
$
\lim_{t\rightarrow \infinity}X(t)=\lim_{t\rightarrow
\infinity}\lim_{n'\rightarrow \infinity}\hat{M}_{n'}(t).
$
If the answer to the second question is yes, then the limit of
$X(t)$ is consistent with the stationary distributions of the Markov
chains since
$$
\lim_{t\rightarrow \infinity}X(t)=\lim_{t\rightarrow
\infinity}\lim_{n'\rightarrow
\infinity}\hat{M}_{n'}(t)=\lim_{n'\rightarrow
\infinity}\lim_{t\rightarrow
\infinity}\hat{M}_{n'}(t)=\hat{M}_{\infinity}(\infinity).
$$
In short, the positive answers to these two questions determine the
convergence and consistency for the ODEs's solution, see
Figure~\ref{fig:ChFP-InterchangeLimits}. Fortunately, the two
answers can be guaranteed by the condition
(\ref{eq:ParticularCondition2}) in the following
Proposition~\ref{lem:ParticularCondition}.
{\begin{figure}[htbp]\caption{Convergence and consistency diagram
for derived ODEs}
 \label{fig:ChFP-InterchangeLimits} \large\begin{equation*}
   \begin{CD}
            \hat{M}_{n'}(t) @> n'\rightarrow \infinity >\text{ \quad Lemma~\ref{lem:Mn(t)=X(t)} \quad } >
            X(t)(=\hat{M}_{\infinity}(t))\\
               @V t\rightarrow\infinity  V \text{Lemma~\ref{lem:ChFP-Mn(t)Converges-Tconverges}} V
              @V {\Huge\color{red}???} V t\rightarrow \infinity V \\
         \hat{M}_{n'}(\infinity)
         @>\text{\quad Remark~\ref{remark:M_{nk}limits} \quad }> n'\rightarrow\infinity >
         \hat{M}_{\infinity}(\infinity)
   \end{CD}
\end{equation*}
\end{figure}
}

\begin{proposition}\label{lem:ParticularCondition}(\textbf{A particular condition})
If there exist $A, B>0$, such that
 \begin{equation}\label{eq:ParticularCondition2}
            \sup_{n'}\left\|\hat{M}_{n'}(t)
             -\hat{M}_{n'}(\infinity)\right\|<Be^{-At},
\end{equation}
then $\lim_{t\rightarrow
\infinity}X(t)=\hat{M}_{\infinity}(\infinity)$.
\end{proposition}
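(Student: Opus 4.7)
The plan is to bound $\|X(t)-\hat{M}_{\infty}(\infty)\|$ by an exponentially decaying function of $t$, using the triangle inequality to split the error into three pieces, each of which can be controlled by a result already established in the excerpt.

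First I would write, for any $t>0$ and for any index $n'$ drawn from the subsequence identified in Remark~\ref{remark:M_{nk}limits},
\begin{equation*}
\|X(t)-\hat{M}_{\infty}(\infty)\|
\;\le\;
\|X(t)-\hat{M}_{n'}(t)\|
+\|\hat{M}_{n'}(t)-\hat{M}_{n'}(\infty)\|
+\|\hat{M}_{n'}(\infty)-\hat{M}_{\infty}(\infty)\|.
\end{equation*}
The middle term is the one where hypothesis (\ref{eq:ParticularCondition2}) does the real work: it is bounded by $Be^{-At}$ \emph{uniformly} in $n'$, and this uniform control is exactly what allows the inner limit in $n'$ to be taken without destroying the $t$-dependent bound.

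Next I would fix $t$ and send $n'\to\infty$ along the subsequence of Remark~\ref{remark:M_{nk}limits}. By Lemma~\ref{lem:Mn(t)=X(t)} the first term on the right vanishes, and by the defining property of the subsequence the third term vanishes; the second term is dominated by $Be^{-At}$ for every $n'$. Passing to the limit in $n'$ therefore yields
\begin{equation*}
\|X(t)-\hat{M}_{\infty}(\infty)\| \;\le\; Be^{-At},
\end{equation*}
which holds for every $t>0$. Finally, letting $t\to\infty$ gives $\lim_{t\to\infty}X(t)=\hat{M}_{\infty}(\infty)$, which is the claim.

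I do not expect any serious obstacle here; the argument is essentially the ``$\varepsilon/3$'' style interchange of limits depicted in Figure~\ref{fig:ChFP-InterchangeLimits}. The only point that requires care is conceptual rather than technical: one must notice that hypothesis (\ref{eq:ParticularCondition2}) is precisely a uniform-in-$n'$ rate of convergence in $t$, and it is this uniformity (not just the pointwise convergence $\hat{M}_{n'}(t)\to\hat{M}_{n'}(\infty)$ of Lemma~\ref{lem:ChFP-Mn(t)Converges-Tconverges}) that legitimizes swapping the order of the two limits and hence simultaneously answers both questions raised before the proposition.
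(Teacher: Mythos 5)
Your proposal is correct and is essentially the paper's own argument: the paper writes $X(t)-\hat{M}_{\infinity}(\infinity)$ as $\lim_{n'\rightarrow\infinity}[\hat{M}_{n'}(t)-\hat{M}_{n'}(\infinity)]$ (using Lemma~\ref{lem:Mn(t)=X(t)} and the subsequence of Remark~\ref{remark:M_{nk}limits}) and then bounds the norm of this limit by $\sup_{n'}\|\hat{M}_{n'}(t)-\hat{M}_{n'}(\infinity)\|<Be^{-At}$, which is exactly your three-term triangle-inequality decomposition with the first and third terms sent to zero. Your emphasis on the uniformity in $n'$ being the essential point is also the right reading of the hypothesis.
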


\begin{proof}
 \begin{eqnarray*}
           \left\|X(t)-\hat{M}_{\infinity}(\infinity)\right\|
          &=&\left\|\lim_{n'\rightarrow\infinity}[\hat{M}_{n'}(t)-\hat{M}_{n'}(\infinity)]\right\|\\
          &\leq&\limsup_{n'\rightarrow\infinity}\left\|\hat{M}_{n'}(t)-\hat{M}_{n'}(\infinity)\right\|\\
          &\leq&\limsup_{n'\rightarrow\infinity}\left[\sup_{n'}\left\|\hat{M}_{n'}(t)-\hat{M}_{n'}(\infinity)\right\|\right]\\
          &\leq&\limsup_{n'\rightarrow\infinity}Be^{-At}\\
          &=& Be^{-At}\longrightarrow 0, \mbox{as }
          t\longrightarrow \infinity.
  \end{eqnarray*}
  So $\lim_{t\rightarrow \infinity}X(t)=\hat{M}_{\infinity}(\infinity)$.
\end{proof}

Notice that $\displaystyle \hat{M}_n(t)=\sum_{\mathbf{x}\in
S^n}\frac{\mathbf{x}}{n}\mathbf{\pi}^n_t(\mathbf{x})$ and
$\displaystyle \hat{M}_n(\infinity)=\sum_{\mathbf{x}\in
S^n}\frac{\mathbf{x}}{n}\mathbf{\pi}^n_{\infinity}(\mathbf{x})$, so
in order to estimate
$\left\|\hat{M}_{n}(t)-\hat{M}_{n}(\infinity)\right\|$ in
(\ref{eq:ParticularCondition2}), we need first to estimate the
difference between $\mathbf{\pi}^n_t$ and
$\mathbf{\pi}^n_{\infinity}$.

\begin{lemma}\label{lem:ChFP-PartCond-HoldCond}
If there exists $A>0$ and $B_1>0$, such that for any $n'$ and all
$\mathbf{x}\in S^{n'}$,
\begin{equation}\label{eq:ParticularCondition3}
           |\mathbf{\pi}^{n'}_t(\mathbf{x})-\mathbf{\pi}^{n'}_{\infinity}(\mathbf{x})|\leq
          \mathbf{\pi}^{n'}_{\infinity}(\mathbf{x}) B_1e^{-At},
\end{equation}
then there exists $B>0$ such that $\sup_{n'}\left\|\hat{M}_{n'}(t)
             -\hat{M}_{n'}(\infinity)\right\|<Be^{-At}$ holds.
\end{lemma}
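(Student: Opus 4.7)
The plan is to unfold the definitions of $\hat{M}_{n'}(t)$ and $\hat{M}_{n'}(\infinity)$, apply the triangle inequality to pull the norm inside the sum, and then invoke the hypothesis (\ref{eq:ParticularCondition3}) to factor out $B_1 e^{-At}$. What remains is a weighted sum of probabilities whose weights, under the conservation law established in Proposition~\ref{pro:ChFP-ODEsConservationLaw}, are uniformly bounded in $n'$.

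Concretely, I would first write
\begin{equation*}
\hat{M}_{n'}(t)-\hat{M}_{n'}(\infinity)
=\sum_{\mathbf{x}\in S^{n'}}\frac{\mathbf{x}}{n'}\bigl(\mathbf{\pi}^{n'}_t(\mathbf{x})-\mathbf{\pi}^{n'}_{\infinity}(\mathbf{x})\bigr),
\end{equation*}
take norms, and apply (\ref{eq:ParticularCondition3}) entry-by-entry to obtain
\begin{equation*}
\|\hat{M}_{n'}(t)-\hat{M}_{n'}(\infinity)\|
\leq B_1 e^{-At}\sum_{\mathbf{x}\in S^{n'}}\frac{\|\mathbf{x}\|}{n'}\mathbf{\pi}^{n'}_{\infinity}(\mathbf{x}).
\end{equation*}

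Next, I would bound $\|\mathbf{x}\|/n'$ uniformly. Because $X_{n'}(0)=n'\mathbf{x}_0$, the conservation law in Proposition~\ref{pro:ChFP-ODEsConservationLaw} ensures that for every reachable $\mathbf{x}\in S^{n'}$ and every component type $C_i$, the coordinates of $\mathbf{x}$ indexed by the local derivatives of $C_i$ sum to $n'\sum_j\mathbf{x}_0[C_{i_j}]$. Hence each entry of $\mathbf{x}$ is nonnegative and at most $n'\|\mathbf{x}_0\|_1$, so $\|\mathbf{x}\|/n'\leq C$ for a constant $C$ depending only on $\mathbf{x}_0$ and the chosen norm (all norms on $\mathbb{R}^{N_d}$ being equivalent). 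Since $\sum_{\mathbf{x}\in S^{n'}}\mathbf{\pi}^{n'}_{\infinity}(\mathbf{x})=1$, the weighted sum is $\leq C$ uniformly in $n'$, giving
\begin{equation*}
\sup_{n'}\|\hat{M}_{n'}(t)-\hat{M}_{n'}(\infinity)\| \leq B_1 C\, e^{-At},
\end{equation*}
so setting $B=B_1 C$ yields the claim.

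The main obstacle, such as it is, is producing the uniform bound on $\|\mathbf{x}\|/n'$; without it the right-hand side could in principle grow with $n'$ and ruin the $\sup_{n'}$ estimate. Everything else is the triangle inequality applied to a finite convex combination. Since the conservation law is already in hand, this step reduces to a one-line appeal, and no further machinery is needed.
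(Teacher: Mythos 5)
Your proposal is correct and follows essentially the same route as the paper's proof: expand the difference as a sum over the state space, apply the triangle inequality and the hypothesis (\ref{eq:ParticularCondition3}), and use the conservation law to bound $\|\mathbf{x}\|/n'$ by a constant $C$ independent of $n'$, yielding $B=CB_1$. The only cosmetic difference is the order in which the uniform bound on $\|\mathbf{x}\|/n'$ and the hypothesis are applied, which does not affect the argument.
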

\begin{proof} We know that $\displaystyle\hat{X}_n(0)=\frac{{X}_n(0)}{n}=\mathbf{x}_0$ for
any $n$. By the conservation law, the population of each entity in
any state is determined by the starting state. So for any $n'$ and
all $\mathbf{x}\in S^{n'}$, $||\mathbf{x}/{n'}||\leq C_1\sum_{P\in
\mathcal{D}}\mathbf{x}_0[P]<\infinity$, where $\mathcal{D}$ is the
set of all local derivatives and $C_1$ is a constant independent of
$n'$. Let $C=\sup_{n'}\max_{\mathbf{x}\in
S^{n'}}||\mathbf{x}/{n'}||$, then $C<\infinity$.
 \begin{eqnarray*}
          \left\|\hat{M}_{n'}(t)-\hat{M}_{n'}(\infinity)\right\|
          &=&\left\|\sum_{\mathbf{x}\in S^{n'}}\frac{\mathbf{x}}{n'}\mathbf{\pi}^{n'}_t(\mathbf{x})
          -\sum_{\mathbf{x}\in S^{n'}}\frac{\mathbf{x}}{n'}\mathbf{\pi}^{n'}_{\infinity}(\mathbf{x})\right\|\\
          &=&\left\|\sum_{\mathbf{x}\in S^{n'}}\frac{\mathbf{x}}{n'}(\mathbf{\pi}^{n'}_t(\mathbf{x})
          -\mathbf{\pi}^{n'}_{\infinity}(\mathbf{x}))\right\|\\
          &\leq&\sup_{n'}\max_{\mathbf{x}\in S^{n'}}\left\|\frac{\mathbf{x}}{n'}\right\|
          \sum_{\mathbf{x}\in S^{n'}}|\mathbf{\pi}^{n'}_t(\mathbf{x})-\mathbf{\pi}^{n'}_{\infinity}(\mathbf{x})|\\
          &=&C\sum_{\mathbf{x}\in S^{n'}}|\mathbf{\pi}^{n'}_t(\mathbf{x})-\pi^{n'}_{\infinity}(\mathbf{x})|\\
          &\leq& C\sum_{\mathbf{x}\in S^{n'}}\mathbf{\pi}^{n'}_{\infinity}(\mathbf{x})B_1e^{-tA}\\
          &=& CB_1e^{-tA}.
  \end{eqnarray*}
Let $B=CB_1$. Then $\sup_{n'}\left\|\hat{M}_{n'}(t)
             -\hat{M}_{n'}(\infinity)\right\|<Be^{-At}$.
\end{proof}

\subsection{Investigation of the particular condition} This section
will present the study of the particular condition
(\ref{eq:ParticularCondition2}). We will expose that the condition
is related to well-known constants of Markov chains such as the
spectral gap and the Log-Sobolev constant. The methods and results
developed in the field of functional analysis of Markov chains are
utilised to investigate the condition.

\subsubsection{Important estimation for Markov
kernel} We first give an estimation for the Markov kernel which is
defined below. Let $Q$ be the infinitesimal generator of a Markov
chain $X$ on a finite state $S$. Let
$$
   K_{ij}=\left\{
                  \begin{array}{cc}
                    \frac{Q_{ij}}{m}, & i\neq j \\
                    1+ \frac{Q_{ii}}{m}, & i=j \\
                  \end{array}
                \right.\quad \mbox{where } m=\sup_i(-Q_{ii}).
$$
$K$ is a transition probability matrix, satisfying
$$ K(\mathbf{x},\mathbf{y})\geq 0,\quad\quad \sum_\mathbf{y}K(\mathbf{x},\mathbf{y})=1.
$$
$K$ is called an Markov \emph{kernel}  ($K$ is also called the
\emph{uniformisation} of the CTMC in some literature). A Markov
chain on a finite state space $S$ can be described through its
\emph{kernel} $K$. The continuous time semigroup associated with $K$
is defined by
$$
H_t=\exp(-t(I-K)).
$$
Let $\mathbf{\pi}$ be the unique stationary measure of the Markov
chain. Then $H_t(\mathbf{x},\mathbf{y})\rightarrow
\mathbf{\pi}(\mathbf{y})$ as $t$ tends to infinity. Following the
convention in the literature we will also use $(K,\mathbf{\pi})$ to
represent a Markov chain.

\par Notice
$$Q=m(K-I),\quad \quad K=\frac{Q}{m}+I.$$
Clearly,
$$
  P_t=e^{tQ}=e^{mt(K-I)}=e^{-mt(I-K)}=H_{mt},
$$
and thus $
  H_{t}=P_{\frac tm}
$, where $P_t$ is  called the semigroup associated with the
infinitesimal generator $Q$. An estimation of $H_t$ is given below.
\begin{lemma}(Corollary 2.2.6, \cite{SCoste-FiniteMarkovChains}) Let $(K,\mathbf{\pi})$ be a
finite Markov chain, and
\linebreak$\displaystyle\mathbf{\pi}(*)=\min_{\mathbf{x}\in
S}{\mathbf{\pi}(\mathbf{x})}$. Then
\begin{equation}\label{eq:Corollary226}
\sup_{\mathbf{x},\mathbf{y}}\left|\frac{H_t(\mathbf{x},\mathbf{y})}{\mathbf{\pi}(\mathbf{y})}-1\right|\leq
e^{2-c}\;\quad \mbox{for
}\;t=\frac{1}{\alpha}\log\log\frac{1}{\mathbf{\pi}(*)}+\frac{c}{\lambda},
\end{equation}
where $\lambda>0,\alpha>0$ are the spectral gap and the Log-Sobolev
constant respectively, which are defined  and interpreted
in~\ref{section:Appendix-spectral-gap-log-sobolev}.
\end{lemma}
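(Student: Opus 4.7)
The plan is to use the classical two-stage argument combining hypercontractivity (coming from the Log-Sobolev inequality with constant $\alpha$) with exponential $L^2$-decay (coming from the spectral gap $\lambda$). First I would introduce the density $h^{\mathbf{x}}_t(\mathbf{y}) := H_t(\mathbf{x},\mathbf{y})/\mathbf{\pi}(\mathbf{y})$, so that the quantity to be bounded is $\sup_{\mathbf{x},\mathbf{y}}|h^{\mathbf{x}}_t(\mathbf{y}) - 1|$, an $L^\infty$-type distance between the semigroup kernel and the constant function $\mathbf{1}$. The initial density $h^{\mathbf{x}}_0 = \mathbf{\pi}(\mathbf{x})^{-1}\mathbf{1}_{\{\mathbf{x}\}}$ satisfies $\|h^{\mathbf{x}}_0\|_{L^2(\mathbf{\pi})}^2 = 1/\mathbf{\pi}(\mathbf{x}) - 1$, which can be as large as $1/\mathbf{\pi}(*) - 1$; so pure spectral-gap decay would only produce a useful bound after a very long time, and the Log-Sobolev constant is invoked precisely to short-circuit this initial singularity.

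In the first stage I would appeal to Gross's hypercontractivity theorem: for a reversible chain, the LSI with constant $\alpha$ is equivalent to $\|H_t f\|_{q(t)} \leq \|f\|_{p}$ with $q(t) - 1 = (p-1)e^{4\alpha t}$. Running the semigroup for time $t_1 \asymp \frac{1}{\alpha}\log\log(1/\mathbf{\pi}(*))$ pushes the initially singular $h^{\mathbf{x}}$ into an $L^q$-ball of $\mathcal{O}(1)$ radius, with $q$ large enough that subsequent $L^2$ estimates will translate cleanly into pointwise ones.

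In the second stage I would apply the spectral-gap bound $\|H_s f - \mathbf{\pi}(f)\|_{L^2(\mathbf{\pi})} \leq e^{-\lambda s}\|f - \mathbf{\pi}(f)\|_{L^2(\mathbf{\pi})}$ for an additional time $s = c/\lambda$, contributing the factor $e^{-c}$. Finally, a duality/symmetry step exploiting the reversible-chain identity $h^{\mathbf{x}}_{2t}(\mathbf{y}) = \langle h^{\mathbf{x}}_t, h^{\mathbf{y}}_t\rangle_{L^2(\mathbf{\pi})}$ (valid because $K$ is self-adjoint on $L^2(\mathbf{\pi})$) upgrades the $L^2$-bound to the required pointwise $L^\infty$-bound, yielding $\sup_{\mathbf{x},\mathbf{y}}|h^{\mathbf{x}}_{t_1+s}(\mathbf{y}) - 1| \leq e^{2-c}$ after matching constants.

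The main obstacle, I expect, will be the passage from $L^2$ to $L^\infty$: hypercontractivity by itself never reaches $L^\infty$ in finite time, so the time budget must be split carefully between raising the Lebesgue exponent in stage one and shrinking the norm via the spectral gap in stage two. The inequalities must then be composed so that the residual multiplicative constant is exactly $e^{2-c}$ and the total time is exactly $\frac{1}{\alpha}\log\log(1/\mathbf{\pi}(*)) + c/\lambda$; the extra $e^{2}$ is the book-keeping cost of the hypercontractivity stage and of the $L^2\!\to\!L^\infty$ upgrade.
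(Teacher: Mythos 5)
The paper does not prove this lemma at all: it is imported verbatim as Corollary~2.2.6 of \cite{SCoste-FiniteMarkovChains}, so there is no in-paper proof to compare against. Your outline is, in structure, exactly the argument used in that reference: (i) the initial density $h^{\mathbf{x}}_0$ has $L^2(\mathbf{\pi})$-norm of order $\mathbf{\pi}(\mathbf{x})^{-1/2}$; (ii) hypercontractivity derived from the Log-Sobolev inequality absorbs this singularity in time of order $\frac{1}{\alpha}\log\log\frac{1}{\mathbf{\pi}(*)}$; (iii) the spectral gap contributes the factor $e^{-c}$ over the remaining time $c/\lambda$; and (iv) the identity $h^{\mathbf{x}}_{s+t}(\mathbf{y})-1=\langle h^{\mathbf{x}}_s-1,\,h^{*,\mathbf{y}}_t-1\rangle_{L^2(\mathbf{\pi})}$ together with Cauchy--Schwarz upgrades the $L^2$ bound to the uniform bound over $\mathbf{x},\mathbf{y}$. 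So the strategy is sound and is the intended one.

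Two caveats. First, you invoke reversibility twice (for the Gross equivalence and for the identity $h^{\mathbf{x}}_{2t}(\mathbf{y})=\langle h^{\mathbf{x}}_t,h^{\mathbf{y}}_t\rangle$), but the lemma is stated for an arbitrary finite chain, and in this paper it is applied to the density dependent CTMCs $X_n(t)$, which need not be reversible. The non-reversible version goes through because $\lambda$ and $\alpha$ are defined through the Dirichlet form $\mathcal{E}(f,f)=\langle(I-K)f,f\rangle=\left\langle\left(I-\tfrac{K+K^{*}}{2}\right)f,f\right\rangle$, which only sees the additive symmetrization; the hypercontractivity differential inequality and the $L^2$ decay then hold for both $H_t$ and the adjoint semigroup with the same constants, and the duality step must be written with the adjoint density $h^{*,\mathbf{y}}_t$ rather than $h^{\mathbf{y}}_t$. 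Second, your write-up is a plan rather than a proof: the splitting of the time budget and the emergence of the exact constant $e^{2-c}$ at exactly $t=\frac{1}{\alpha}\log\log\frac{1}{\mathbf{\pi}(*)}+\frac{c}{\lambda}$ is the entire computational content of the corollary and is deferred to ``matching constants''. As a reconstruction of the cited result the outline is correct; as a standalone proof it is incomplete.
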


It can be implied from (\ref{eq:Corollary226}) that $\forall
\mathbf{x},\mathbf{y}\in S$,
\begin{equation}\label{eq:Corollary226-1}
\left|{H_t(\mathbf{x},\mathbf{y})}-{\mathbf{\pi}(\mathbf{y})}\right|\leq
\mathbf{\pi}(\mathbf{y})\left[e^{2+\frac{\lambda}{\alpha}\log\log\frac{1}{\mathbf{\pi}(*)}}\right]e^{-\lambda
t}.
\end{equation}
Since $H_{t}=P_{\frac tm}$, so
\begin{equation}\label{eq:Corollary226-2}
\left|{P_{\frac{t}{m}}(\mathbf{x},\mathbf{y})}-{\mathbf{\pi}(\mathbf{y})}\right|\leq
\mathbf{\pi}(\mathbf{y})\left[e^{2+\frac{\lambda}{\alpha}\log\log\frac{1}{\mathbf{\pi}(*)}}\right]e^{-\lambda
t},
\end{equation}
and thus (replacing ``$t$'' by ``$mt$'' on the both sides of
(\ref{eq:Corollary226-2})),
\begin{equation}\label{eq:Corollary226-3}
\left|{P_{t}(\mathbf{x},\mathbf{y})}-{\mathbf{\pi}(\mathbf{y})}\right|\leq
\mathbf{\pi}(\mathbf{y})\left[e^{2+\frac{\lambda}{\alpha}\log\log\frac{1}{\mathbf{\pi}(*)}}\right]e^{-m\lambda
t}.
\end{equation}
The investigation and utilisation of the formulae
(\ref{eq:Corollary226-3}) will be presented in the next subsection.

\subsubsection{Study of the particular
condition}\label{subsection:ChFP-OpenProblem} For each $n$, let
$Q^n$ be the infinitesimal generator of the density dependent Markov
chain $X_n(t)$ underlying a given PEPA model and thus the transition
probability matrix is $P^n_t=e^{tQ^n}$. For each $X_n(t)$, the
initial state of the corresponding system is $n\mathbf{x}_0$, so the
initial probability distribution of $X_n(t)$ is
$\mathbf{\pi}^n_0(n\mathbf{x}_0)=1$ and
$\mathbf{\pi}^n_0(\mathbf{x})=0,\;\forall \mathbf{x}\in {S^n},
\mathbf{x}\neq n\mathbf{x}_0$. So $\forall \mathbf{y}\in {S^n}$,
\begin{equation}\label{eq:Pt=pi}
\begin{split}
        \mathbf{\pi}^n_t(\mathbf{y})=\sum_{\mathbf{x}\in
        S^n}\mathbf{\pi}^n_0(\mathbf{x})P^n_t(\mathbf{x},\mathbf{y})
        =\mathbf{\pi}^n_0(n\mathbf{x}_0)P^n_t(n\mathbf{x}_0,\mathbf{y})=P^n_t(n\mathbf{x}_0,\mathbf{y}).
\end{split}
\end{equation}

\par The formula (\ref{eq:Corollary226-3}) in the context of
$X_n(t)$ is
\begin{equation}\label{eq:1}
\left|P^n_{t}(\mathbf{x},\mathbf{y})-\mathbf{\pi}^n_{\infinity}(\mathbf{y})\right|\leq
\mathbf{\pi}^n_{\infinity}(\mathbf{y})\left[e^{2+
\frac{\lambda_n}{\alpha_n}\log\log\frac{1}{\mathbf{\pi}^n_{\infinity}(*)}}\right]e^{-m_n\lambda_n
t},
\end{equation}
where $\lambda_n, \lambda_n, m_n, \mathbf{\pi}^n_{\infinity}(*)$ are
the respective parameters associated with $X_n(t)$.

\par Let $\mathbf{x}=n\mathbf{x}_0$ in (\ref{eq:1}), and notice
$P^n_t(n\mathbf{x}_0,\mathbf{y})=\mathbf{\pi}^n_t(\mathbf{y})$ by
(\ref{eq:Pt=pi}), then we have
\begin{equation}\label{eq:Corollary226-4}
\left|{\mathbf{\pi}^n_{t}(\mathbf{y})}-{\mathbf{\pi}^n_{\infinity}(\mathbf{y})}\right|\leq
\mathbf{\pi}^n_{\infinity}(\mathbf{y})\left[e^{2
+\frac{\lambda_n}{\alpha_n}\log\log\frac{1}{\mathbf{\pi}^n_{\infinity}(*)}}\right]e^{-m_n\lambda_n
t},\quad\quad \forall \mathbf{y}\in S^n.
\end{equation}

From the comparison of (\ref{eq:Corollary226-4}) and
(\ref{eq:ParticularCondition3}), we know that if there are some
conditions imposed on
$\frac{\lambda_n}{\alpha_n}\log\log\frac{1}{\mathbf{\pi}^n_{\infinity}(*)}$
and $-m_n\lambda_n t$, then (\ref{eq:ParticularCondition3}) can be
induced from (\ref{eq:Corollary226-4}). See the following Lemma.
\begin{lemma}\label{lem:TwoConditions}
If there exists $T>0, B_2>0, A>0$ such that
\begin{equation}\label{eq:Condition1}
\sup_{n}\left\{-m_n\lambda_nT+\frac{\lambda_n}{\alpha_n}\log\log\frac{1}{\mathbf{\pi}^{n}_{\infinity}(*)}\right\}\leq
B_2,
\end{equation}
and
\begin{equation}\label{eq:Condition2}
\inf_{n}\{m_n\lambda_n\}\geq A,
\end{equation}
then
\begin{equation}\label{eq:ParticularCondition3-1}
           |\mathbf{\pi}^{n}_t(\mathbf{x})-\mathbf{\pi}^{n}_{\infinity}(\mathbf{x})|\leq
          \mathbf{\pi}^{n}_{\infinity}(\mathbf{x}) B_1e^{-At},\quad\quad \forall
          \mathbf{x}\in S^{n},
\end{equation}
where $B_1=e^{A_1T+B_2+2}$, and the ``particular condition''
(\ref{eq:ParticularCondition2}) holds.
\end{lemma}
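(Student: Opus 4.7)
The plan is to start from the $n$-indexed Markov-kernel estimate~(\ref{eq:Corollary226-4}) derived in the previous subsection and use the two hypotheses~(\ref{eq:Condition1}) and~(\ref{eq:Condition2}) to eliminate all $n$-dependence in the exponential. Once the pointwise bound~(\ref{eq:ParticularCondition3-1}) is established, the claim that the particular condition~(\ref{eq:ParticularCondition2}) holds is immediate from Lemma~\ref{lem:ChFP-PartCond-HoldCond}, so the whole task reduces to the inequality on $\mathbf{\pi}^n_t - \mathbf{\pi}^n_\infinity$.

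The key algebraic step is to split the exponent in~(\ref{eq:Corollary226-4}) around the threshold time $T$ by adding and subtracting $m_n\lambda_n T$:
\begin{equation*}
2 + \frac{\lambda_n}{\alpha_n}\log\log\frac{1}{\mathbf{\pi}^n_{\infinity}(*)} - m_n\lambda_n t
= 2 + \Bigl[\frac{\lambda_n}{\alpha_n}\log\log\frac{1}{\mathbf{\pi}^n_{\infinity}(*)} - m_n\lambda_n T\Bigr] - m_n\lambda_n(t-T).
\end{equation*}
By~(\ref{eq:Condition1}) the bracketed term is at most $B_2$ uniformly in $n$, and for $t\ge T$ the remainder $-m_n\lambda_n(t-T)$ is at most $-A(t-T)$ by~(\ref{eq:Condition2}). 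Substituting back into~(\ref{eq:Corollary226-4}) and exponentiating yields, for every $n$ and every $t\ge T$,
\begin{equation*}
|\mathbf{\pi}^n_t(\mathbf{x}) - \mathbf{\pi}^n_{\infinity}(\mathbf{x})| \leq \mathbf{\pi}^n_{\infinity}(\mathbf{x})\exp(2+B_2+AT-At) = \mathbf{\pi}^n_{\infinity}(\mathbf{x})\,B_1 e^{-At},
\end{equation*}
with $B_1 = e^{AT+B_2+2}$, matching the stated constant under the natural reading $A_1=A$. At this point~(\ref{eq:ParticularCondition3-1}) is proved on $[T,\infinity)$ and Lemma~\ref{lem:ChFP-PartCond-HoldCond} delivers~(\ref{eq:ParticularCondition2}) with $B=CB_1$ on the same range.

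The delicate point I anticipate is the short interval $t\in[0,T)$: here $-m_n\lambda_n(t-T)$ is positive, and the raw quantity $\frac{\lambda_n}{\alpha_n}\log\log(1/\mathbf{\pi}^n_{\infinity}(*))$ need not be bounded uniformly in $n$, so the decomposition above does not collapse to a uniform constant and the estimate~(\ref{eq:Corollary226-4}) itself is only genuinely informative once $m_n\lambda_n t$ dominates $\frac{\lambda_n}{\alpha_n}\log\log(1/\mathbf{\pi}^n_{\infinity}(*))$, which by~(\ref{eq:Condition1}) is precisely guaranteed from $t=T$ onward. To close this gap I would use the fact that only the aggregate quantity $\|\hat{M}_{n'}(t)-\hat{M}_{n'}(\infinity)\|$ is actually needed for~(\ref{eq:ParticularCondition2}): on the compact interval $[0,T]$ that quantity is uniformly bounded in $n$ by the conservation law of Proposition~\ref{pro:ChFP-ODEsConservationLaw} (the rescaled state vectors $\mathbf{x}/n$ lie in a common compact set of diameter $C<\infinity$), so a further enlargement of $B$ by a factor proportional to $e^{AT}$ absorbs the short-time contribution. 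Gluing the two ranges together yields~(\ref{eq:ParticularCondition2}) on $[0,\infinity)$, and invoking Lemma~\ref{lem:ChFP-PartCond-HoldCond} concludes the proof.
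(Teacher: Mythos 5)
Your proposal is correct and uses essentially the same argument as the paper: split the exponent in the kernel estimate by adding and subtracting $m_n\lambda_n T$, bound the bracketed term by $B_2$ via the first hypothesis, and bound $-m_n\lambda_n(t-T)$ by $-A(t-T)$ via the second, giving $B_1=e^{AT+B_2+2}$. You are in fact more careful than the paper, whose proof silently applies the inequality $e^{-m_n\lambda_n(t-T)}\le e^{-A(t-T)}$ for all $t$ even though it reverses when $t<T$; your observation that the final "particular condition" only concerns the aggregate quantity $\left\|\hat{M}_{n'}(t)-\hat{M}_{n'}(\infty)\right\|$, which is uniformly bounded on $[0,T]$ by the conservation law and hence absorbed by enlarging $B$ by a factor of order $e^{AT}$, correctly closes that small gap.
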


\begin{proof}
\begin{equation*}
\begin{split}
\left[e^{2+\frac{\lambda_n}{\alpha_n}\log\log\frac{1}{\mathbf{\pi}^n_{\infinity}(*)}}\right]e^{-m_n\lambda_nt
}&=\left[e^{-m_n\alpha_nT+2+\frac{\lambda_n}{\alpha_n}\log\log\frac{1}{\mathbf{\pi}^n_{\infinity}(*)}}\right]e^{-m_n\lambda_n(t-T)
}\\
&\leq e^{B_2+2}e^{-A(t-T)}\\
&=B_1e^{-At}.
\end{split}
\end{equation*}
Then by (\ref{eq:Corollary226-4}), (\ref{eq:ParticularCondition3-1})
holds. Thus by Lemma~\ref{lem:ChFP-PartCond-HoldCond},
(\ref{eq:ParticularCondition2}) holds.
\end{proof}

\begin{remark}\label{remark:Tmlambda>loglog}
When $n$ tends to infinity, the discrete space is approximating a
continuous space and thus the probability of any single point is
tending to $0$, that is, $\mathbf{\pi}^n_{\infinity}(*)$ tends to
$0$. So $\log\log\frac{1}{\mathbf{\pi}^n_{\infinity}(*)}\rightarrow
\infinity$ as
$n$ goes to infinity. Notice that by Lemma~\ref{Lemma C.2.1} 
in~\ref{section:Appendix-spectral-gap-log-sobolev},
$$
  \frac{\lambda_n}{\alpha_n}\leq \log(1/\mathbf{\pi}^n_{\infinity}(*)).
 $$
Therefore, in order to have
$$-m_n\lambda_nT+\frac{\lambda_n}{\alpha_n}\log\log\frac{1}{\mathbf{\pi}^{n}_{\infinity}(*)}\leq
B_2,$$ it is sufficient to let
\begin{equation}\label{eq:Condition4}
 Tm_n\lambda_n\geq O([\log(1/\mathbf{\pi}^n_{\infinity}(*))]^2).
\end{equation}
Moreover, (\ref{eq:Condition4}) can imply  both
(\ref{eq:Condition1}) and (\ref{eq:Condition2}).
\end{remark}

\par According to the above analysis, our problem is simplified to
checking that whether (\ref{eq:Condition4}) is satisfied by the
density dependent Markov chains $\{X_n(t)\}$.

\par By Remark~\ref{Remark C.2.2} in~\ref{section:Appendix-spectral-gap-log-sobolev},  
$\lambda_n$ is the smallest non-zero eigenvalue of
$I-\frac{K^n+K^{*,n}}{2}$, where
$$
   K^n=\frac{Q^n}{m_n}+I
$$ and $K^{*,n}$ is adjoint to $K^n$. A matrix $Q^{*,n}$ is said to
be adjoint to the generator $Q^{n}$, if
$Q^{n}(\mathbf{x},\mathbf{y})\mathbf{\pi}^n_{\infinity}(\mathbf{x})$
equals
$Q^{*,n}(\mathbf{y},\mathbf{x})\mathbf{\pi}^n_{\infinity}(\mathbf{y})$.
Clearly, $Q^{*,n}=m_n(K^{*,n}-I)$, or equivalently,
$$
  K^{*,n}=\frac{Q^{*,n}}{m_n}+I.
$$
So
\begin{equation}\label{eq:T2}
   m_n\left(I-\frac{K^n+K^{*,n}}{2}\right)=-\frac{Q^n+Q^{*,n}}{2}.
\end{equation}
Denote the smallest non-zero eigenvalue of $-\frac{Q^n+Q^{*,n}}{2}$
by $\sigma_n$. Then by (\ref{eq:T2}),
\begin{equation}\label{eq:mlambda=sigma}
   m_n\lambda_n=\sigma_n.
\end{equation}

Now we state our main result in this section.
\begin{theorem}\label{thm:ChFP-ConvergenceMildCondition}
Let $\{X_n(t)\}$ be the density dependent Markov chain derived from
a given PEPA model. For each $n\in \mathbb{N}$, let $S^n$ and
$\mathbf{\pi}^{n}_{\infinity}$ be the state space and steady-state
probability distribution of $X_n(t)$ respectively. $Q^n$ is the
infinitesimal generator of $X_n(t)$ and $\sigma_n$ is the smallest
non-zero eigenvalue of $-\frac{Q^n+Q^{*,n}}{2}$, where $Q^{*,n}$ is
adjoint to $Q^n$ in terms of $\mathbf{\pi}^n_{\infinity}$. If
\begin{equation}\label{eq:ChFP-ThmMildCondition}
     \mathbf{\pi}^{n}_{\infinity}(*)\rmdef\min_{\mathbf{x}\in S^n}\mathbf{\pi}^{n}_{\infinity}(\mathbf{x})
     \geq \frac{1}{\exp({O(\sqrt{\sigma_n}))}}
\end{equation}
for sufficiently large $n$, then $X(t)$ has a finite limit as time
tends to infinity, where $X(t)$ is the solution of the corresponding
derived ODEs from the same PEPA model.
\end{theorem}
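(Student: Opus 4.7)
My plan is to chain the lemmas already established and reduce the theorem to a single arithmetic check on the hypothesis. The target is Proposition~\ref{lem:ParticularCondition}, whose conclusion is exactly $\lim_{t\to\infty}X(t)=\hat{M}_{\infinity}(\infinity)$, and whose only hypothesis is the ``particular condition'' \eqref{eq:ParticularCondition2}. That condition is, by Lemma~\ref{lem:ChFP-PartCond-HoldCond}, implied by the pointwise transient-distribution estimate \eqref{eq:ParticularCondition3}, which by Lemma~\ref{lem:TwoConditions} and Remark~\ref{remark:Tmlambda>loglog} reduces in turn to the single inequality \eqref{eq:Condition4}, namely
\[
T\, m_n \lambda_n \;\geq\; O\bigl([\log(1/\mathbf{\pi}^n_{\infinity}(*))]^2\bigr)
\]
for some fixed $T>0$ and all sufficiently large~$n$. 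So my whole task is to extract this inequality from the hypothesis~\eqref{eq:ChFP-ThmMildCondition}.

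The arithmetic is short. Taking logarithms of the hypothesis gives $\log(1/\mathbf{\pi}^n_{\infinity}(*)) \leq O(\sqrt{\sigma_n})$, and squaring yields $[\log(1/\mathbf{\pi}^n_{\infinity}(*))]^2 \leq O(\sigma_n)$. I then invoke the identity~\eqref{eq:mlambda=sigma}, $m_n\lambda_n=\sigma_n$, obtained earlier from the adjoint/kernel relation~\eqref{eq:T2}. Absorbing any fixed $T$ (for concreteness $T=1$) into the implicit $O$-constant produces \eqref{eq:Condition4}. Unwinding the reductions in reverse -- Remark~\ref{remark:Tmlambda>loglog} to obtain \eqref{eq:Condition1} and \eqref{eq:Condition2}, then Lemma~\ref{lem:TwoConditions} to get the pointwise bound, then Lemma~\ref{lem:ChFP-PartCond-HoldCond} to lift to the particular condition, then Proposition~\ref{lem:ParticularCondition} -- closes the argument. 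The resulting limit $\hat{M}_{\infinity}(\infinity)$ is automatically finite since each $\hat{M}_n(\infinity)$ is bounded by the conservation law inherited from the PEPA starting state.

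The only subtlety I anticipate is uniformity in~$n$. The condition~\eqref{eq:ParticularCondition2} takes a supremum over all indices in the chosen subsequence, whereas \eqref{eq:ChFP-ThmMildCondition} is only posited for sufficiently large~$n$. For the finitely many exceptional indices, each $\hat{M}_n(t)$ still decays exponentially to $\hat{M}_n(\infinity)$ by irreducibility and positive recurrence on a finite state space, so the constants $A$ and $B$ can be enlarged to dominate those finitely many rates without loss. Aside from that bookkeeping, the heavy lifting -- the functional-analytic estimate~\eqref{eq:Corollary226-3} and the Kurtz-style consistency underlying Lemma~\ref{lem:Mn(t)=X(t)} -- has already been carried out, so the theorem really is a verification that the stated spectral condition is tight enough to feed the earlier machinery.
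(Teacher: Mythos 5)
Your proposal is correct and follows essentially the same route as the paper's own proof: take logarithms of the hypothesis, square, invoke $m_n\lambda_n=\sigma_n$ from \eqref{eq:mlambda=sigma}, and feed the resulting inequality \eqref{eq:Condition4} back through Remark~\ref{remark:Tmlambda>loglog}, Lemma~\ref{lem:TwoConditions}, Lemma~\ref{lem:ChFP-PartCond-HoldCond} and Proposition~\ref{lem:ParticularCondition}. Your added remark on absorbing the finitely many exceptional indices into the constants $A,B$ is a small tidying-up that the paper leaves implicit, but it does not change the argument.
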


\begin{proof} By the given condition of $
     \mathbf{\pi}^{n}_{\infinity}(*)\geq \frac{1}{\exp(O(\sqrt{\sigma_n}))},
$
$$
    \log\left[\frac{1}{\mathbf{\pi}^{n}_{\infinity}(*)}\right]\leq
    \log[\exp(O(\sqrt{\sigma_n}))]=O(\sigma_n^{1/2}).
$$
Thus
$$\left(\log\left[\frac{1}{\mathbf{\pi}^{n}_{\infinity}(*)}\right]\right)^2\leq O(\sigma_n).$$
Choose a large $T$ such that
$$
  Tm_n\lambda_n=T\sigma_n\geq O(\sigma_n)\geq\left(\log\left[\frac{1}{\mathbf{\pi}^{n}_{\infinity}(*)}\right]\right)^2.
$$
By Remark~\ref{remark:Tmlambda>loglog} and
Lemma~\ref{lem:TwoConditions}, the particular condition holds.
Therefore
$$\lim_{t\rightarrow\infinity}X(t)=\hat{M}_{\infinity}(\infinity).$$
\end{proof}

According to the above theorem, our problem is simplified to
checking that whether (\ref{eq:ChFP-ThmMildCondition}) is satisfied
by the density dependent Markov chains $\{X_n(t)\}$. In
(\ref{eq:ChFP-ThmMildCondition}) both the spectral gap
$\lambda_n\;(=\frac{\sigma_n}{m_n})$  and
$\mathbf{\pi}^{n}_{\infinity}(*)$ are unknown. In fact, due to the
state space explosion problem, $\mathbf{\pi}^n_{\infinity}$ cannot
be easily solved from
 $\mathbf{\pi}^n_{\infinity}Q^n=0$ or equivalently
$\mathbf{\pi}^n_{\infinity}K^n=\mathbf{\pi}^n_{\infinity}$.
Moreover, the estimation of the spectral gap for a given Markov
chain in current literature (e.g.\ \cite{SCoste-FiniteMarkovChains})
is heavily based on the known stationary distribution. Thus, the
current results cannot provide a practical check for
(\ref{eq:ChFP-ThmMildCondition}).

\par The convergence and consistency are supported by many numerical
experiments, so we believe that (\ref{eq:ChFP-ThmMildCondition}) is
unnecessary. In other words, we believe that
(\ref{eq:ChFP-ThmMildCondition}) always holds in the context of
PEPA, although at this moment we cannot prove it. Before concluding
this section, we leave an open problem:
\begin{conjecture}
The formula (\ref{eq:ChFP-ThmMildCondition}) in
Theorem~\ref{thm:ChFP-ConvergenceMildCondition} holds for any PEPA
model.
\end{conjecture}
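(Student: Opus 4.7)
The plan is to estimate, as functions of the concentration level $n$, both $-\log \mathbf{\pi}^{n}_{\infinity}(*)$ from above and $\sqrt{\sigma_n}$ from below, using the structural features of the density dependent CTMCs $\{X_n(t)\}$ underlying a PEPA model, and then to compare the two. The state space $S^n$ has only polynomial size $O(n^d)$ (where $d$ is fixed by the number of local derivatives), while every non-zero transition rate has the form $q^{(n)}_{\mathbf{x},\mathbf{x}+l}=nf(\mathbf{x}/n,l)$ with $f$ of order $\Theta(1)$ on the reachable bulk, so both quantities are in principle tractable.

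\textbf{Step 1 (Lower bound on $\mathbf{\pi}^n_{\infinity}(*)$).} I would invoke the Markov-chain tree theorem, which expresses $\mathbf{\pi}^n_{\infinity}(\mathbf{x})$ as a ratio of weighted sums over rooted spanning arborescences of the transition graph. Combined with the polynomial size of $S^n$ and the fact that each in-tree has $|S^n|-1 = O(n^d)$ edges of weight $\Theta(n)$, this should yield a crude bound $\mathbf{\pi}^n_{\infinity}(*) \geq \exp(-O(n\log n))$. A sharper linear-in-$n$ bound should follow from a Freidlin--Wentzell large-deviations analysis applied to the density dependent family, giving $-\log\mathbf{\pi}^n_{\infinity}(*) \leq cn$ with $c$ determined by a quasi-potential associated with the rate function $f$.

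\textbf{Step 2 (Lower bound on $\sigma_n$).} The natural tool is a canonical-paths (Diaconis--Stroock--Sinclair) argument adapted to the activity-matrix structure of PEPA: for each ordered pair of reachable states one exhibits a path through the transition graph that changes one local derivative at a time, and the resulting congestion inequality lower-bounds the smallest non-zero eigenvalue of $-(Q^n + Q^{*,n})/2$. For models whose fluid ODE has a globally attracting equilibrium, one should be able to combine this with the exponential contraction of the linearised ODE at that equilibrium, yielding a polynomial lower bound $\sigma_n \geq \Omega(n^{\beta})$ for some $\beta>0$ determined by the model.

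\textbf{Step 3 (Comparison -- the main obstacle).} The hard part will be closing the gap between the two estimates. On natural examples such as the Ehrenfest-type PEPA model --- $n$ parallel copies of a two-state sequential component with no synchronisation --- one already has $\sigma_n = \Theta(1)$ while $-\log \mathbf{\pi}^n_{\infinity}(*) = \Theta(n)$, so inequality (\ref{eq:ChFP-ThmMildCondition}) plainly fails in its stated form. A successful proof would therefore have to either (i) replace the generic Markov-kernel estimate (\ref{eq:Corollary226}) by a sharper one that eliminates the $(\lambda/\alpha)\log\log(1/\mathbf{\pi}(*))$ factor, possibly by exploiting a product-form or approximate product-form structure of $\mathbf{\pi}^n_{\infinity}$ that holds for many PEPA models, or (ii) restrict the conjecture to a sub-class of PEPA models in which $\mathbf{\pi}^n_{\infinity}$ is concentrated on an $O(\sqrt{n})$-neighbourhood of the fluid fixed point, so that $-\log\mathbf{\pi}^n_{\infinity}$ on the effective support is only $O(\log n)$. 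In either case, one expects that convergence of $X(t)$ itself is best secured via the analytic route taken in later sections, with (\ref{eq:ChFP-ThmMildCondition}) serving as a sufficient but far from necessary probabilistic condition.
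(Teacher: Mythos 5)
There is no proof in the paper to compare against: the statement is explicitly left as an open problem, and the authors say they cannot prove it. Your proposal does not close that gap either --- and, more importantly, your own Step~3 shows that no proof can exist, because the statement as written is false. Take the simplest nonsynchronised model, a two-state sequential component $C_1\rmdef(\alpha,a).C_2$, $C_2\rmdef(\beta,b).C_1$, replicated $n$ times from $\mathbf{x}_0=(1,0)^T$. The aggregated chain $X_n$ is the birth--death chain counting how many of $n$ independent two-state components sit in $C_1$; it is reversible, so $\sigma_n$ is the smallest non-zero eigenvalue of $-Q^n$, and the spectrum of $-Q^n$ is $\{k(a+b):0\leq k\leq n\}$, giving $\sigma_n=a+b=\Theta(1)$. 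Meanwhile $\mathbf{\pi}^{n}_{\infinity}$ is binomial and $\mathbf{\pi}^{n}_{\infinity}(*)=\bigl(\min(a,b)/(a+b)\bigr)^n=e^{-\Theta(n)}$. Hence (\ref{eq:ChFP-ThmMildCondition}) would require $e^{-\Theta(n)}\geq e^{-O(1)}$, which fails for all large $n$, even though the conclusion of Theorem~\ref{thm:ChFP-ConvergenceMildCondition} does hold for this model by Theorem~\ref{thm:ChFP-NonSynConvergence}. What you have produced is therefore a refutation of the conjecture, not a proof of it: condition (\ref{eq:ChFP-ThmMildCondition}) is sufficient but demonstrably not satisfied by every PEPA model.

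On Steps~1 and~2 themselves: the Freidlin--Wentzell scaling $-\log\mathbf{\pi}^{n}_{\infinity}(*)=\Theta(n)$ is the right order for density dependent families (though your cruder tree-theorem bound should read $\exp(-O(n^d\log n))$ rather than $\exp(-O(n\log n))$, since an in-tree has $O(n^d)$ edges), and canonical-path lower bounds on $\sigma_n$ are plausible; but the two estimates point in opposite directions, since generically $\sqrt{\sigma_n}$ grows strictly slower than $\log(1/\mathbf{\pi}^{n}_{\infinity}(*))=\Theta(n)$, so no sharpening of either side can rescue the inequality. The constructive way forward is the one you sketch under (i)/(ii): the real defect is the worst-case kernel estimate (\ref{eq:Corollary226}), which controls relative error uniformly over all states, including states of exponentially small mass, whereas the quantity actually needed, $\|\hat{M}_{n}(t)-\hat{M}_{n}(\infinity)\|$, only requires control of a first moment. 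Replacing (\ref{eq:Corollary226}) by a total-variation or Wasserstein mixing estimate would remove the $\log\log(1/\mathbf{\pi}(*))$ burn-in term and lead to a corrected sufficient condition --- but that is a different statement from the one conjectured here.
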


\section{Fluid analysis: an analytic approach (I)}

The previous sections have demonstrated the fluid approximation and
relevant analysis for PEPA.  Some fundamental results about the
derived ODEs such as the boundedness, nonnegativeness and
convergence of the solutions, have been established through a
probabilistic approach. In this section we will discuss the
boundedness and nonnegativeness again, and prove them by a purely
analytical argument. The convergence presented in the previous
section is proved under a particular condition that cannot currently
be easily checked. This section will present alternative approaches
to deal with the convergence problem. In particular, for an
interesting model with two synchronisations, its structural
invariance as revealed in~\cite{JieThesis}, will be shown to play an
important role in the proof of the convergence. Moreover, for a
class of PEPA models which have two component types and one
synchronisation, an analytical proof of the convergence under some
mild conditions on the populations will be presented. These
discussions and investigations will provide a new insight into the
fluid approximation of PEPA.

\subsection{Analytical proof of boundedness and
nonnegativeness}\label{section:Analytic-Proof-Boundedness}

Recall that the set of derived ODEs from a general PEPA model is
\begin{equation}\label{eq:ChFA-ODEsDerivativeCentric}
\begin{split}
\frac{\mbox{d}\mathbf{x}(U,t)}{\mbox{d}t} =-\sum_{\{l\mid
U\in\mbox{pre}(l)\}}\!\!\!\!f(\mathbf{x},l)+\sum_{\{l\mid
U\in\mbox{post}(l)\}}\hspace{-3mm}f(\mathbf{x},l).
\end{split}
\end{equation}
As mentioned, in this formula the term $\sum_{\{l\mid
U\in\mbox{pre}(l)\}}\hspace{-0mm}f(\mathbf{x},l)$ represents the
exit rates in the local derivative $U$. An important fact to note
is: the exit rates in a local derivative $C_{i_j}$ in state
$\mathbf{x}$ are bounded by all the apparent rates in this local
derivative. In fact, according to
Proposition~\ref{proposition:Ch3-RateFunctionComparison} in
Section~\ref{subsec:Background-NumericalRepresentation}, if
$C_{i_j}\in \mbox{pre}(l)$ where $l$ is a labelled activity, then
the transition rate function $f(\mathbf{x},l)$ is bounded by
$r_l(\mathbf{x},C_{i_j})$, the apparent rates of $l$ in $C_{i_j}$ in
state $\mathbf{x}$. That is,
\begin{equation}\label{eq:ChFA-local-section1-1}
f(\mathbf{x},l)\leq r_l(\mathbf{x},C_{i_j})=
\mathbf{x}[C_{i_j}]r_l(C_{i_j}).
\end{equation}
We should point out that (\ref{eq:ChFA-local-section1-1}) is based
on the discrete state space underlying the given model. According to
our semantics of mapping PEPA model to ODEs, the fluid
approximation-version of (\ref{eq:ChFA-local-section1-1}) also
holds, i.e. $f(\mathbf{x}(t),l)\leq
\mathbf{x}(C_{i_j},t)r_l(C_{i_j})$. Hereafter the notation
$\mathbf{x}[\cdot]$ indicates a discrete state $\mathbf{x}$, while
$\mathbf{x}(\cdot)$ or $\mathbf{x}(\cdot,t)$ reflects a continuous
state $\mathbf{x}$ at time $t$. Therefore, we have the following

\begin{proposition}\label{pro:ChFA-f(x,l)-Bounded} For any local derivative $C_{i_j}$,
\begin{equation}
    \sum_{\{l\mid C_{i_j}\in\mbox{pre}(l)\}}\hspace{-3mm}f(\mathbf{x}(t),l)\leq
    \mathbf{x}(C_{i_j},t)\sum_{\{l\mid C_{i_j}\in\mbox{pre}(l)\}}\hspace{-3mm}
    r_l(C_{i_j}),
\end{equation}
where $r_l(C_{i_j})$ is the apparent rate of $l$ in $C_{i_j}$ for a
single instance of $C_{i_j}$ defined in
Definition~\ref{def:DingApparentRate}.
\end{proposition}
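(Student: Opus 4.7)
The plan is to derive the inequality directly from Proposition~\ref{proposition:Ch3-RateFunctionComparison}, which already supplies the pointwise bound $f(\mathbf{x},l)\le \mathbf{x}[P]r_l(P)$ whenever $P\in\mathrm{pre}(l)$, and then to sum those pointwise bounds over the relevant activities.

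First, I would fix a local derivative $C_{i_j}$ and consider an arbitrary labelled activity $l$ with $C_{i_j}\in\mathrm{pre}(l)$. Invoking Proposition~\ref{proposition:Ch3-RateFunctionComparison} on the discrete-state transition rate function, one obtains $f(\mathbf{x},l)\le \mathbf{x}[C_{i_j}]\,r_l(C_{i_j})$, where $r_l(C_{i_j})$ is the single-instance apparent rate defined in Definition~\ref{def:DingApparentRate}. I would then explicitly lift this to the continuous-state, fluid-approximation setting, appealing to the remark already made in the excerpt that under the mapping semantics of PEPA to ODEs the same inequality $f(\mathbf{x}(t),l)\le \mathbf{x}(C_{i_j},t)\,r_l(C_{i_j})$ holds with continuous state $\mathbf{x}(t)$ in place of the discrete state $\mathbf{x}$.

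Next, I would sum the resulting bound over all $l$ for which $C_{i_j}\in\mathrm{pre}(l)$. Since $\mathbf{x}(C_{i_j},t)$ does not depend on $l$, it factors out of the sum, yielding
\begin{equation*}
\sum_{\{l\mid C_{i_j}\in\mathrm{pre}(l)\}} f(\mathbf{x}(t),l)\;\le\;\sum_{\{l\mid C_{i_j}\in\mathrm{pre}(l)\}} \mathbf{x}(C_{i_j},t)\,r_l(C_{i_j}) \;=\; \mathbf{x}(C_{i_j},t)\sum_{\{l\mid C_{i_j}\in\mathrm{pre}(l)\}} r_l(C_{i_j}),
\end{equation*}
which is exactly the desired inequality.

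There is no serious obstacle here; the main point requiring care is to justify the continuous-state analogue of Proposition~\ref{proposition:Ch3-RateFunctionComparison}, i.e.\ that the bound on $f$ remains valid when the discrete state $\mathbf{x}[C_{i_j}]$ is replaced by the real-valued $\mathbf{x}(C_{i_j},t)$. This is immediate for individual activities, where $f(\mathbf{x},l)=\mathbf{x}[P]r_l^{P\to Q}$ is already linear in $\mathbf{x}[P]$, and for synchronised activities it follows from the $\min$-formula in Definition~\ref{def:TransitionRateFunction} because each term in that minimum involves the apparent rate at some pre local derivative. Once this observation is recorded, the proof is a one-line summation.
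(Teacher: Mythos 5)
Your proposal is correct and follows essentially the same route as the paper: the paper likewise derives the pointwise bound $f(\mathbf{x},l)\leq \mathbf{x}[C_{i_j}]r_l(C_{i_j})$ from Proposition~\ref{proposition:Ch3-RateFunctionComparison}, asserts its fluid-approximation analogue $f(\mathbf{x}(t),l)\leq \mathbf{x}(C_{i_j},t)r_l(C_{i_j})$, and obtains the proposition by summing over the activities with $C_{i_j}$ in their pre set. Your extra remark justifying the continuous-state lift for individual versus synchronised activities is a welcome bit of care that the paper only states informally.
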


\par Proposition~\ref{pro:ChFA-f(x,l)-Bounded} and
Propositions~\ref{pro:ChFP-ODEsConservationLaw} can guarantee the
boundedness and nonnegativeness of the solutions.
 In the following,
we will present an analytical proof of these properties, based on
the two propositions. Suppose the initial values
$\mathbf{x}\left(C_{i_j},0\right)$ are given, and we denote
$\sum_j\mathbf{x}\left(C_{i_j},0\right)$ by $N_{C_i}$. We have a
theorem:
\begin{theorem}\label{thm:ChFA-AnalyticalProof-BoundedSolution}
 If $\mathbf{x}\left(C_{i_j},t\right)$ satisfies
(\ref{eq:ChFA-ODEsDerivativeCentric}) with nonnegative initial
values, then
\begin{equation}
         0\leq \mathbf{x}\left(C_{i_j},t\right)\leq N_{C_i}, \quad \mbox{for any }t\geq 0.
\end{equation}
Moreover, if the initial values are positive, then the solutions are
always positive, i.e.,
\begin{equation}
         0< \mathbf{x}\left(C_{i_j},t\right)< N_{C_i}, \quad  \mbox{for any }t\geq 0.
\end{equation}

\end{theorem}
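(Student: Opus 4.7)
The plan is to split the result into two reductions and one invariance step. First, the conservation law of Proposition~\ref{pro:ChFP-ODEsConservationLaw} gives $\sum_j\mathbf{x}(C_{i_j},t)=N_{C_i}$ for every $t\ge 0$, so once each coordinate is known to be nonnegative, the upper bound $\mathbf{x}(C_{i_j},t)\le N_{C_i}$ is automatic. The whole difficulty therefore reduces to showing that the nonnegative orthant is positively invariant under the flow of (\ref{eq:ChFA-ODEsDerivativeCentric}).

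To prove this invariance I would argue in Nagumo style at the boundary. Consider a time $t^{\ast}$ at which a single coordinate $\mathbf{x}(C_{i^{\ast}_{j^{\ast}}},t^{\ast})=0$ while all other coordinates remain nonnegative. Proposition~\ref{pro:ChFA-f(x,l)-Bounded}, applied in its fluid form as mentioned in the paragraph following (\ref{eq:ChFA-local-section1-1}), gives
\[
0\;\le\;f(\mathbf{x}(t^{\ast}),l)\;\le\;\mathbf{x}(C_{i^{\ast}_{j^{\ast}}},t^{\ast})\,r_l(C_{i^{\ast}_{j^{\ast}}})\;=\;0
\]
for every labelled activity $l$ with $C_{i^{\ast}_{j^{\ast}}}\in\mathrm{pre}(l)$; the lower bound comes from the product/min form of Definition~\ref{def:TransitionRateFunction} once all coordinates are nonnegative. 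Hence every exit term in (\ref{eq:ChFA-ODEsDerivativeCentric}) vanishes at $t^{\ast}$ and every entry term is nonnegative, so $\tfrac{d}{dt}\mathbf{x}(C_{i^{\ast}_{j^{\ast}}},t^{\ast})\ge 0$. Because the vector field $F$ is globally Lipschitz by Proposition~\ref{proposition:Ch3-HomogeousAndLipschitz}, this subtangential condition at every boundary point is precisely Nagumo's invariance criterion, and the nonnegative orthant is positively invariant: $\mathbf{x}(C_{i_j},t)\ge 0$ for all $t\ge 0$.

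Once nonnegativity is secured, Proposition~\ref{pro:ChFA-f(x,l)-Bounded} yields the one-sided differential inequality
\[
\frac{d\mathbf{x}(C_{i_j},t)}{dt}\;\ge\;-\mathbf{x}(C_{i_j},t)\,R_{i_j},\qquad R_{i_j}\;=\;\sum_{\{l\mid C_{i_j}\in\mathrm{pre}(l)\}}r_l(C_{i_j}),
\]
valid for all $t\ge 0$. Grönwall's lemma then gives $\mathbf{x}(C_{i_j},t)\ge \mathbf{x}(C_{i_j},0)\,e^{-R_{i_j}t}$, producing strict positivity whenever the initial datum is strictly positive. The strict upper bound $\mathbf{x}(C_{i_j},t)<N_{C_i}$ then follows by applying the conservation law to the remaining coordinates, each of which is strictly positive by the same Grönwall estimate.

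The main obstacle is the boundary step: converting the pointwise inward-pointing condition into genuine positive invariance of the full trajectory, rather than obtaining $\dot{\mathbf{x}}\ge 0$ only at an isolated instant. This relies on the Nagumo/Bony tangent-cone theorem for Lipschitz vector fields, whose hypotheses are indeed supplied by Proposition~\ref{proposition:Ch3-HomogeousAndLipschitz}. A secondary subtlety is that the bound of Proposition~\ref{pro:ChFA-f(x,l)-Bounded} was originally stated on discrete states, so the fluid-approximation version must be invoked explicitly; once that is in place, the remaining Grönwall and conservation arguments are routine.
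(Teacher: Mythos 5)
Your proof is correct, but it routes the key invariance step through a genuinely different tool than the paper. The paper never invokes Nagumo's theorem: for strictly positive initial data it argues by contradiction at the first time $t^{*}$ at which some coordinate vanishes, derives on $[0,t^{*}]$ the same one-sided inequality $\frac{\mathrm{d}}{\mathrm{d}t}\mathbf{x}(C_{1_1},t)\ge -R\,\mathbf{x}(C_{1_1},t)$ from Proposition~\ref{pro:ChFA-f(x,l)-Bounded}, and concludes $\mathbf{x}(C_{1_1},t^{*})\ge \mathbf{x}(C_{1_1},0)e^{-Rt^{*}}>0$ via Lemma~\ref{Lemma:Differential-Inequality}, a contradiction; it then treats merely nonnegative initial data by perturbing them to $\mathbf{x}(C_{i_j},0)+\delta$, applying the positive case, and letting $\delta\downarrow 0$ using the Fundamental Inequality (Theorem~\ref{Theorem D.1.1.}, continuous dependence on initial conditions). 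Your single appeal to the Nagumo/Bony subtangentiality criterion replaces both cases at once, and you correctly identify the required boundary computation (exit terms vanish because $0\le f(\mathbf{x},l)\le \mathbf{x}(C_{i^{*}_{j^{*}}},t^{*})r_l(C_{i^{*}_{j^{*}}})=0$, entry terms are nonnegative on the orthant) together with the Lipschitz hypothesis from Proposition~\ref{proposition:Ch3-HomogeousAndLipschitz}. What your route buys is the elimination of the $\delta$-perturbation limit; what it costs is reliance on an external invariance theorem that the paper keeps out of its self-contained toolkit of two appendix lemmas. One point to tighten: the tangent-cone condition must be verified at \emph{every} boundary point of the orthant, including points where several coordinates vanish simultaneously, not only where a single coordinate does --- your computation extends verbatim, but the statement should say so. Finally, both your argument and the paper's share the same harmless caveat that the strict upper bound $\mathbf{x}(C_{i_j},t)<N_{C_i}$ implicitly assumes the component type has at least two local derivatives.
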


\begin{proof}
By Proposition~\ref{pro:ChFP-ODEsConservationLaw},
$\sum_j\mathbf{x}\left(C_{i_j},t\right)=N_{C_i}$ for all $t$. All
that is left to do is to prove that $\mathbf{x}(C_{i_j},t)$ is
positive or nonnegative. The proof is divided into two cases.

\par Case 1: Suppose all the initial values are positive,
 i.e. $\mathop{\min}_{i_j}\left\{\mathbf{x}(C_{i_j},0)\right\}>0$.
 We will show that $\mathop{\min}_{i_j}\left\{\mathbf{x}(C_{i_j},t)\right\}>0$ for all $t\geq 0$.
Otherwise, if there exists a $t>0$ such that
$\mathop{\min}_{i_j}\left\{\mathbf{x}(C_{i_j},t)\right\}\leq0$, then
there  exists a point $t'>0$ such that
 $\mathop{\min}_{i_j}\left\{\mathbf{x}(C_{i_j},t')\right\}=0$.
Let $t^{*}$ be the first such point, i.e.
$$
   t^{*}=\inf\left\{t>0\mid
         \mathop{\min}_{i_j}\left\{\mathbf{x}(C_{i_j},t)\right\}=0
   \right\},
$$
then $0<t^{*}<\infinity$. Without loss of generality, we assume
$\mathbf{x}(C_{1_1},t)$ reaches zero at $t^{*}$, i.e.,
$$\mathbf{x}(C_{1_1},t^{*})=0,\quad
\mathbf{x}(C_{i_j},t^{*})\geq0 \;\;(i\neq 1 \vee j\neq 1)$$ and
$$
\mathbf{x}(C_{i_j},t)>0,\quad t\in [0,t^{*}),\quad \forall i,j.
$$
Thus, for $t\in [0,t^{*}]$, by
Proposition~\ref{pro:ChFA-f(x,l)-Bounded},
\begin{eqnarray*}
 \frac{\mathrm{d}\mathbf{x}(C_{1_1},t)}{\mathrm{d}t}&=&
 -\sum_{\{l\mid C_{1_1}\in\mbox{pre}(l)\}}\hspace{-3mm}f(\mathbf{x},l)
 +\sum_{\{l\mid C_{1_1}\in\mbox{post}(l)\}}\hspace{-3mm}f(\mathbf{x},l)\\
  &\geq&-\sum_{\{l\mid C_{1_1}\in\mbox{pre}(l)\}}\hspace{-3mm}f(\mathbf{x},l)\\
  &\geq&-\mathbf{x}(C_{1_1},t)\sum_{\{l\mid C_{1_1}\in\mbox{pre}(l)\}}\hspace{-3mm}
    r_l(C_{1_1}).
\end{eqnarray*}
Set $R=\sum_{\{l\mid C_{1_1}\in\mbox{pre}(l)\}}r_l(C_{1_1})$, then
 \begin{equation}\label{equ:local_1}
     \frac{\mathrm{d}\mathbf{x}(C_{1_1},t)}{\mathrm{d}t}\geq-R\mathbf{x}(C_{1_1},t).
 \end{equation}
By Lemma~\ref{Lemma:Differential-Inequality}
in~\ref{section:Appendix-Some-Theorems}, (\ref{equ:local_1}) implies
$$
     \mathbf{x}(C_{1_1},t^{*})\geq \mathbf{x}(C_{1_1},0)e^{-Rt^{*}}>0.
$$
This is a contradiction to $\mathbf{x}(C_{1_1},t^{*})=0$. Therefore
$0<\mathbf{x}\left(C_{i_j},t\right)$, and thus by
Proposition~\ref{pro:ChFP-ODEsConservationLaw},
$$0< \mathbf{x}\left(C_{i_j},t\right)< N_{C_0}, \quad \forall t.$$

\par Case 2: Suppose
$\mathop{\min}_{i_j}\left\{\mathbf{x}(C_{i_j},0)\right\}=0$. Let
$u_{\delta}(i_j,0)=\mathbf{x}(C_{i_j},0)+\delta$ where $\delta>0$.
Let $u_{\delta}(i_j,t)$ be the solution of
(\ref{eq:ChFA-ODEsDerivativeCentric}), given the initial value
$u_{\delta}(i_j,0)$. By the proof of Case 1,
$u_{\delta}(i_j,t)>0\;(\forall t\geq0)$. Noticing $\min(\cdot)$ is a
Lipschitz function, by the Fundamental Inequality
in~\ref{section:Appendix-Some-Theorems}, we have
\begin{equation}
     |u_{\delta}(i_j,t)-\mathbf{x}(C_{i_j},t)|\leq \delta e^{Kt},
\end{equation}
where $K$ is a constant. Thus, for any given $t\geq0$,
\begin{equation}\label{LocalEquation_1}
\mathbf{x}(C_{i_j},t)\geq u_{\delta}(i_j,t)-\delta e^{Kt}>-\delta
e^{Kt}.
\end{equation}
Let $\delta\downarrow0$ in (\ref{LocalEquation_1}), then we have
$\mathbf{x}(C_{i_j},t)\geq 0$. The proof is completed.
\end{proof}

\subsection{A case study on convergence with two synchronisations}

If a model has synchronisations, then the derived ODEs are
nonlinear. The nonlinearity results in the complexity of the dynamic
behaviour of fluid approximations. However, for some special models,
we can still determine the convergence of the solutions. What
follows is a case study for an interesting PEPA model, in which the
structural property of invariance will be shown to play an important
role in the proof of the convergence.

\subsubsection{Theoretical study of convergence}
\label{subsec:ChFA-Analytica-proof-InterestingModel}

\par The model considered here is given below,
\begin{model}\label{model:X-Y}
\begin{displaymath}
    \begin{array}{rcl}
    \mathit{X_1} & \rmdef & (\mathit{action1},\mathit{a_1}).\mathit{X_2}\\
        \mathit{X_2} & \rmdef & (\mathit{action2},\mathit{a_2}).\mathit{X_1}\\
        \mathit{Y_1} & \rmdef & (\mathit{action1},\mathit{a_1}).\mathit{Y_3}+(\mathit{job1},\mathit{c_1}).\mathit{Y_2}\\
        \mathit{Y_2} & \rmdef & (\mathit{job2},\mathit{c_2}).\mathit{Y_1}\\
        \mathit{Y_3} & \rmdef & (\mathit{job3},\mathit{c_3}).\mathit{Y_4}\\
        \mathit{Y_4} & \rmdef & (\mathit{action2},\mathit{a_2}).\mathit{Y_2}+(\mathit{job4},\mathit{c_4}).\mathit{Y_3}\\
\mathit{\left(X_1[M_1]||X_2[M_2]\right)}&\sync{action1,action2}&
\mathit{\left(Y_1[N_1]||Y_2[N_2]||Y_3[N_3]||Y_4[N_4]\right).}
\end{array}
\end{displaymath}
\end{model}

\begin{figure}[htbp]
 \begin{center} \includegraphics[width=10cm]{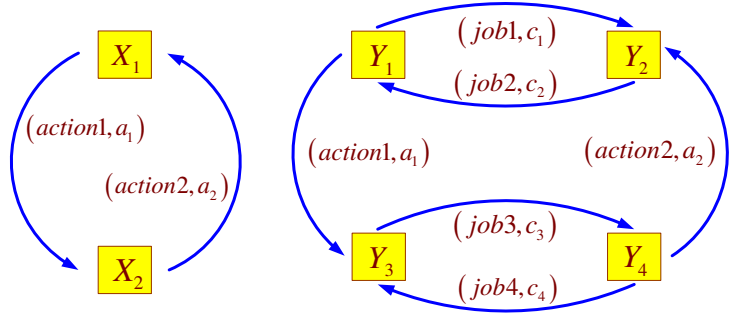}\\
 \end{center} \caption{Transition systems of the components of Model~\ref{model:X-Y}}\label{figure:ChFA-FlowChar-X-Y}
\end{figure}

The operations of $X$ and $Y$ are illustrated in
Figure~\ref{figure:ChFA-FlowChar-X-Y}. According to the mapping
semantics, the derived ODEs from this model are

\begin{equation}\label{eq:ChFA-Model-XY-ODEs}
\left\{
\begin{split}
\frac{\mathrm{d}x_1}{\mathrm{d}t}&=-a_1\min\{x_1,y_1\}+a_2\min\{x_2,y_4\}\\
\frac{\mathrm{d}x_2}{\mathrm{d}t}&=a_1\min\{x_1,y_1\}-a_2\min\{x_2,y_4\}\\
\frac{\mathrm{d}y_1}{\mathrm{d}t}&=-a_1\min\{x_1,y_1\}-c_1y_1+c_2y_2\\
\frac{\mathrm{d}y_2}{\mathrm{d}t}&=-c_2y_2+a_2\min\{x_2,y_4\}+c_1y_1\\
\frac{\mathrm{d}y_3}{\mathrm{d}t}&=-c_3y_3+c_4y_4+a_1\min\{x_1,y_1\}\\
\frac{\mathrm{d}y_4}{\mathrm{d}t}&=-a_2\min\{x_2,y_4\}-c_4y_4+c_3y_3
\end{split}\right.,
\end{equation}
where $x_i,y_j\;(i=1,2,\;j=1,2,\cdots,4)$ denote the populations of
$X$ and $Y$ in the local derivatives $X_i, Y_j$ respectively. Now we
state an interesting \textbf{assertion} for the specific PEPA model:
the difference between the number of $Y$ in their local derivatives
$Y_3$ and $Y_4$, and the number of $X$ in the local derivative
$X_2$, i.e. $y_3+y_4-x_2$, is a constant in any state. This fact can
be explained as follows. Notice that there is only one way to
increase $y_3+y_4$, i.e. enabling the activity $action1$. As long as
$action1$ is activated, then there is a copy of $Y$ entering the
$Y_3$ from $Y_1$. Meanwhile, since $action1$ is shared by $X$, a
corresponding copy of $X$ will go to $X_2$ from $X_1$. In other
words, $y_3+y_4$ and $x_2$ increase equally and simultaneously. On
the other hand, there is also only one way to decrease $y_3+y_4$ and
$x_2$, i.e. enabling the cooperated activity $action2$. This also
allows $y_3+y_4$ and $x_2$ to decrease both equally and
simultaneously. So, the difference $y_3+y_4-x_2$ will remain
constant in any state and thus at any time. The assertion indicates
that each state and therefore the whole state space of underlying
CTMC may have some interesting structure properties, such as
invariants. The techniques and applications of structural analysis
of PEPA models have been developed in~\cite{JieThesis}.

Throughout this section, let $M$ and $N$ be the total populations of
the $X$ and $Y$ respectively, i.e. $M=M_1+M_2$ and
$N=N_1+N_2+N_3+N_4$. Notice $y_1+y_2+y_3+y_4=N$ and $x_1+x_2=M$ by
the conservation law, so $y_1+y_2-x_1=N-M-(y_3+y_4-x_2)$ is another
invariant because $y_3+y_4-x_2$ is a constant.
 The fluid-approximation
version of these two invariants also holds, which is illustrated by
the following Lemma~\ref{lem:ChFA-ModelXY-Invariant}.
\begin{lemma}\label{lem:ChFA-ModelXY-Invariant} For any $t\geq0$,
\begin{equation*}
\begin{split}
     y_1(t)+y_2(t)-x_1(t)&=y_1(0)+y_2(0)-x_1(0),\\
   y_3(t)+y_4(t)-x_2(t)&=y_3(0)+y_4(0)-x_2(0).
\end{split}
\end{equation*}
\end{lemma}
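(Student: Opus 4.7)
The plan is to verify both invariants by showing that the time derivatives of $y_1+y_2-x_1$ and $y_3+y_4-x_2$ vanish identically along any solution of the system (\ref{eq:ChFA-Model-XY-ODEs}). Since the right-hand sides of these ODEs are globally Lipschitz (by Proposition~\ref{proposition:Ch3-HomogeousAndLipschitz}), solutions exist uniquely and are differentiable, so verifying the vanishing of the derivative is sufficient to conclude that the quantity is constant in time.

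First I would compute $\frac{\mathrm{d}}{\mathrm{d}t}(y_1+y_2-x_1)$ by directly adding the three relevant equations from (\ref{eq:ChFA-Model-XY-ODEs}). The two nonlinear terms $a_1\min\{x_1,y_1\}$ and $a_2\min\{x_2,y_4\}$ should each appear twice with opposite signs (once from the equation for $y_1$ or $y_2$, once from the equation for $x_1$), and the two linear terms $c_1 y_1$ and $c_2 y_2$ should cancel between the equations for $y_1$ and $y_2$. So the sum reduces to $0$. The same procedure applied to the equations for $y_3$, $y_4$, and $x_2$ gives $\frac{\mathrm{d}}{\mathrm{d}t}(y_3+y_4-x_2) = 0$: the terms $c_3 y_3$ and $c_4 y_4$ cancel between the $y_3$ and $y_4$ equations, while the synchronisation terms $a_1\min\{x_1,y_1\}$ and $a_2\min\{x_2,y_4\}$ cancel against those in the equation for $x_2$.

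Having shown both derivatives are identically zero, I would conclude by the mean value theorem (or trivially by integration) that each quantity equals its value at $t=0$ for all $t \geq 0$, which is exactly the desired pair of identities.

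There is essentially no obstacle here: the proof is a direct algebraic cancellation and does not require any estimate, positivity argument, or use of $\min$-specific properties beyond the fact that the same $\min$ expression appears on both sides of the bookkeeping. The only thing worth emphasising in the write-up is that the cancellation of the $\min$ terms is precisely the ODE-level manifestation of the syntactic observation made before the lemma, namely that the shared activities $action_1$ and $action_2$ increment and decrement $y_3+y_4$ and $x_2$ (respectively $y_1+y_2$ and $x_1$) in lockstep.
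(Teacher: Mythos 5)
Your proof is correct and is essentially identical to the paper's own argument: the paper likewise differentiates $y_1+y_2-x_1$ and $y_3+y_4-x_2$, observes that the right-hand sides of (\ref{eq:ChFA-Model-XY-ODEs}) cancel termwise (including the $\min$ terms, which enter with opposite net signs), and concludes the quantities are constant. The additional remarks about Lipschitz continuity and differentiability are harmless but not needed beyond what the paper already assumes.
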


\begin{proof} According to (\ref{eq:ChFA-Model-XY-ODEs}), for any $t\geq0$,
   $$\frac{\mathrm{d}(y_1(t)+y_2(t)-x_1(t))}{\mathrm{d}t}
   =\frac{\mathrm{d}y_1}{\mathrm{d}t}+\frac{\mathrm{d}y_2}{\mathrm{d}t}-\frac{\mathrm{d}x_1}{\mathrm{d}t}=0.$$
So $y_1(t)+y_2(t)-x_1(t)=y_1(0)+y_2(0)-x_1(0)$, $\forall t\geq0$. By
a similar argument, we also have
$y_3(t)+y_4(t)-x_2(t)=y_3(0)+y_4(0)-x_2(0)$, $\forall t\geq0$.
\end{proof}

\par In the following we show  how to use this kind of invariance to
prove the convergence of the solution of
(\ref{eq:ChFA-Model-XY-ODEs}) as time goes to infinity. Before
presenting the results and the proof, we first rewrite
(\ref{eq:ChFA-Model-XY-ODEs}) as follows:

\begin{equation}\label{eq:ChFA-Model-XY-ODEQ1Q2Q3Q4}
\begin{split}
\left(
          \begin{array}{c}
            \frac{\mathrm{d}x_1}{\mathrm{d}t} \\
            \frac{\mathrm{d}x_2}{\mathrm{d}t} \\
            \frac{\mathrm{d}y_1}{\mathrm{d}t} \\
            \frac{\mathrm{d}y_2}{\mathrm{d}dt} \\
            \frac{\mathrm{d}y_3}{\mathrm{d}t} \\
            \frac{\mathrm{d}y_4}{\mathrm{d}t}
          \end{array}
 \right)=&I_{\{ x_1<y_1,
                x_2< y_4
           \}}Q_1
 \left(\begin{array}{c}
            x_1 \\
            x_2 \\
            y_1\\
            y_2\\
            y_3\\
            y_4
          \end{array}\right)
 +I_{\{x_1<y_1, x_2\geq y_4\}}Q_2 \left(\begin{array}{c}
            x_1 \\
            x_2 \\
            y_1\\
            y_2\\
            y_3\\
            y_4
          \end{array}\right)\\
&+I_{\{ x_1\geq y_1,
                x_2<y_4
           \}}Q_3
 \left(\begin{array}{c}
            x_1 \\
            x_2 \\
            y_1\\
            y_2\\
            y_3\\
            y_4
          \end{array}\right)
 +I_{\{x_1\geq y_1, x_2\geq y_4\}}Q_4 \left(\begin{array}{c}
            x_1 \\
            x_2 \\
            y_1\\
            y_2\\
            y_3\\
            y_4
          \end{array}\right),
\end{split}
\end{equation}
where the matrices $Q_i\;(i=1,2,3,4)$ are given as below:
\begin{eqnarray*}
Q_1=\left(
  \begin{array}{cc|cccc}
   -a_1 & a_2 &  &  &  &  \\
    a_1 &  -a_2  &  &  &  \\\hline
    -a_1 &  & -c_1 & c_2 &  &  \\
     & a_2 &  c_1 & -c_2 &  &  \\
    a_1 &  &  &  & -c_3 & c_4 \\
     & -a_2 &  &  & c_3 & -c_4
  \end{array}
\right),
\end{eqnarray*}
\begin{eqnarray*}
Q_2=\left(
  \begin{array}{cc|cccc}
   -a_1 & 0 &  &  &  & a_2 \\
    a_1 & 0 &  &  &  & -a_2 \\\hline
    -a_1 & 0 & -c_1 & c_2 &  &  \\
     & 0 &  c_1 & -c_2 &  & a_2 \\
    a_1 & 0 &  &  & -c_3 & c_4 \\
     & 0 &  &  & c_3 & -(c_4+a_2)
  \end{array}
\right),
\end{eqnarray*}
\begin{eqnarray*}
Q_3=\left(
  \begin{array}{cc|cccc}
   0& a_2 &  -a_1 &  &  &  \\
   0 &  -a_2 & a_1  &  &  &  \\\hline
   0  &  & -a_1-c_1 & c_2 &  &  \\
   0  & a_2 &  c_1 & -c_2 &  &  \\
   0  &  & a_1 &  & -c_3 & c_4 \\
   0  & -a_2 &  &  & c_3 & -c_4
  \end{array}
\right),
\end{eqnarray*}
\begin{eqnarray*}
Q_4=\left(
  \begin{array}{cc|cccc}
   0 & 0 & -a_1 &  &  & a_2 \\
   0 &  0 & a_1 &  &  &-a_2  \\\hline
    0 & 0 & -a_1-c_1 & c_2 &  &  \\
    0 & 0 &  c_1 & -c_2 &  & a_2 \\
    0 & 0 &  a_1  && -c_3 & c_4 \\
    0 &0 &  &  & c_3 & -a_2-c_4
  \end{array}
\right).
\end{eqnarray*}

\par As (\ref{eq:ChFA-Model-XY-ODEQ1Q2Q3Q4}) illustrates, the derived ODEs are
piecewise linear and they may be dominated by $Q_i\;(i=1,2,3,4)$
alternately. If the system is always dominated by only one specific
matrix after a time, then the ODEs become linear after this time.
For linear ODEs, as long as the eigenvalues of their coefficient
matrices are either zeros or have negative real parts, then bounded
solutions will converge as time tends to infinity, see
Corollary~\ref{Corollary D.2.1.}
in~\ref{section:Appendix-Jordan-Form}.  Fortunately, here the
eigenvalues of the matrices $Q_i\;(i=1,2,3,4)$ in
Model~\ref{model:X-Y} satisfy this property, the proof of which is
shown in~\ref{section:Appendix-Eigenvalues-Matrices}. In addition,
the solution of the derived ODEs from any PEPA model is bounded, as
Theorem~\ref{thm:ChFA-AnalyticalProof-BoundedSolution} illustrated.
Therefore, if we can guarantee that after a time the ODEs
(\ref{eq:ChFA-Model-XY-ODEQ1Q2Q3Q4}) become linear, which means that
one of the four matrices $Q_i\;(i=1,2,3,4)$ will be the coefficient
matrix of the linear ODEs, then by Corollary~\ref{Corollary D.2.1.}
the solution will converge. So the convergence problem is reduced to
determining whether the linearity can be finally guaranteed.

\par It is easy to see that the comparisons between $x_1$ and $y_1$,
$x_2$ and $y_4$ determine the linearity. For instance, if after a
time $T$, we always have $x_1>y_1$ and $x_2>y_4$, then the matrix
$Q_4$ will dominate the system. Fortunately, the invariance in the
model, as Lemma~\ref{lem:ChFA-ModelXY-Invariant} reveals, can
determine the comparisons in some circumstances. This is because
 this invariance reflects the relationship between different
component types that are connected through synchronisations. This
leads to several conclusions as follows.

\begin{proposition}\label{pro:ChFA-Model-XY-Convergence-theorem1}
If $y_1(0)+y_2(0)\leq x_1(0)$ and $y_3(0)+y_4(0)\leq x_2(0)$, then
the solution of (\ref{eq:ChFA-Model-XY-ODEQ1Q2Q3Q4}) converges.
\end{proposition}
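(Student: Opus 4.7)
The plan is to leverage the invariance established in Lemma~\ref{lem:ChFA-ModelXY-Invariant} to show that, under the given initial conditions, the system is governed by a \emph{single} linear regime (namely, the one with coefficient matrix $Q_4$) for all $t \geq 0$, and then to invoke Corollary~\ref{Corollary D.2.1.} together with boundedness.

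First, I would apply Lemma~\ref{lem:ChFA-ModelXY-Invariant}: the two invariants give, for every $t \geq 0$,
\begin{equation*}
   y_1(t)+y_2(t)-x_1(t) = y_1(0)+y_2(0)-x_1(0) \leq 0,
\end{equation*}
\begin{equation*}
   y_3(t)+y_4(t)-x_2(t) = y_3(0)+y_4(0)-x_2(0) \leq 0.
\end{equation*}
By Theorem~\ref{thm:ChFA-AnalyticalProof-BoundedSolution}, all components of the solution remain nonnegative, so $y_2(t) \geq 0$ and $y_3(t) \geq 0$ for all $t$. Combining these facts, I obtain $y_1(t) \leq x_1(t)$ and $y_4(t) \leq x_2(t)$ throughout $[0,\infty)$.

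Second, I would use these inequalities to collapse the piecewise-linear system (\ref{eq:ChFA-Model-XY-ODEQ1Q2Q3Q4}) to a single linear one. Indeed, the inequalities force $\min\{x_1,y_1\} = y_1$ and $\min\{x_2,y_4\} = y_4$ at every time, so the indicator of $\{x_1 \geq y_1,\, x_2 \geq y_4\}$ is identically one and the ODEs reduce to the linear system with coefficient matrix $Q_4$.

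Third, I would invoke the spectral property of $Q_4$ (proved in the referenced appendix on eigenvalues of the matrices $Q_i$) stating that every eigenvalue of $Q_4$ is either zero or has strictly negative real part. Combined with the boundedness of $(x_1,x_2,y_1,\ldots,y_4)(t)$ from Theorem~\ref{thm:ChFA-AnalyticalProof-BoundedSolution}, Corollary~\ref{Corollary D.2.1.} yields that the bounded solution of a linear system with such a spectrum converges as $t \to \infty$, which is the desired conclusion.

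The main obstacle in this argument is very minor: it lies entirely in the first step, namely showing that the two structural invariants of the PEPA model are also preserved by the fluid (continuous) solution; but this has already been handled in Lemma~\ref{lem:ChFA-ModelXY-Invariant}. Once the invariants are available, the ``hard'' case-analysis over the four regimes $Q_1,\ldots,Q_4$ is entirely bypassed, because the hypotheses $y_1(0)+y_2(0) \leq x_1(0)$ and $y_3(0)+y_4(0) \leq x_2(0)$ are exactly what is needed to trap the trajectory inside the single region on which $Q_4$ governs the dynamics. The remainder is then a standard linear-ODE asymptotic argument.
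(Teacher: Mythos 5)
Your proof is correct and follows essentially the same route as the paper: invariants from Lemma~\ref{lem:ChFA-ModelXY-Invariant} plus nonnegativity force $y_1(t)\leq x_1(t)$ and $y_4(t)\leq x_2(t)$ for all $t$, so $Q_4$ governs the dynamics throughout, and the spectral property of $Q_4$ together with boundedness and Corollary~\ref{Corollary D.2.1.} gives convergence. (You even state the nonnegativity step more precisely than the paper, which invokes $y_2$ and $y_4$ where $y_2$ and $y_3$ are what is actually needed.)
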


\begin{proof} By Lemma~\ref{lem:ChFA-ModelXY-Invariant}, $y_1(t)+y_2(t)\leq x_1(t)$
and $y_3(t)+y_4(t)\leq x_2(t)$ for all time $t$. Since both $y_2(t)$
and $y_4(t)$ are nonnegative by
Theorem~\ref{thm:ChFA-AnalyticalProof-BoundedSolution}, we have
$y_1(t)\leq x_1(t)$ and $y_4(t)\leq x_2(t)$ for any $t$. Thus,
(\ref{eq:ChFA-Model-XY-ODEQ1Q2Q3Q4}) becomes
\begin{equation}\label{eq:ChFA-Model-XY-Q4dominates}
\left(   \frac{\mathrm{d}x_1}{\mathrm{d}t},
            \frac{\mathrm{d}x_2}{\mathrm{d}t},
            \frac{\mathrm{d}y_1}{\mathrm{d}t},
            \frac{\mathrm{d}y_2}{\mathrm{d}t},
            \frac{\mathrm{d}y_3}{\mathrm{d}t},
            \frac{\mathrm{d}y_4}{\mathrm{d}t}
          \right)^T=Q_4
 \left(     x_1,
            x_2,
            y_1,
            y_2,
            y_3,
            y_4\right)^T.
\end{equation}
Notice that (\ref{eq:ChFA-Model-XY-Q4dominates}) is linear, and all
eigenvalues of $Q_4$ other than zeros have negative real parts, then
according to Corollary~\ref{Corollary D.2.1.}, the solution of
(\ref{eq:ChFA-Model-XY-Q4dominates}) converges as time goes to
infinity.
\end{proof}

\begin{proposition}\label{pro:ChFA-Model-XY-Convergence-theorem2}
 Suppose $y_1(0)+y_2(0)> x_1(0)$ and $y_3(0)+y_4(0)\leq
x_2(0)$. If either (I). $\displaystyle
N>\left(2+\frac{a_1+c_1}{c_2}\right)M$, or (II). $\displaystyle N>
\frac{2(c_1+c_2)+a_2}{c_2}M$, where $N>M>0$ are the populations of
$Y$ and $X$ respectively, then the solution of
(\ref{eq:ChFA-Model-XY-ODEQ1Q2Q3Q4}) converges.
\end{proposition}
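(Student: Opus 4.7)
My plan is to mirror the strategy of Proposition~\ref{pro:ChFA-Model-XY-Convergence-theorem1} by showing the trajectory enters and then remains in the regime where the matrix $Q_2$ governs the system, after which convergence follows from the boundedness of the solution (Theorem~\ref{thm:ChFA-AnalyticalProof-BoundedSolution}), the eigenvalue property of $Q_2$ established in the appendix, and the linear--ODE convergence corollary. First, Lemma~\ref{lem:ChFA-ModelXY-Invariant} applied with the hypothesis $y_3(0)+y_4(0)\leq x_2(0)$ gives $y_3(t)+y_4(t)\leq x_2(t)$ for all $t$; since $y_3(t)\geq 0$ by Theorem~\ref{thm:ChFA-AnalyticalProof-BoundedSolution}, this forces $y_4(t)\leq x_2(t)$ throughout. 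Hence $\min\{x_2,y_4\}\equiv y_4$ and the system visits only the regimes of $Q_2$ (when $x_1<y_1$) and $Q_4$ (when $x_1\geq y_1$). Since $x_1(t)\leq M$ by conservation, it suffices to prove $y_1(t)>M$ for all sufficiently large $t$.

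Both cases rest on the global estimate $y_1+y_2\geq N-M$, which comes from $y_3+y_4\leq x_2\leq M$ combined with $\sum_i y_i=N$. In Case~(I), I would substitute $\min\{x_1,y_1\}\leq y_1$ and $y_2\geq(N-M)-y_1$ into the $y_1$--equation to obtain
\[
\frac{dy_1}{dt}\;\geq\;-(a_1+c_1+c_2)\,y_1+c_2(N-M).
\]
At $y_1=M$ the right-hand side equals $c_2 N-(a_1+c_1+2c_2)M$, which is strictly positive precisely when hypothesis~(I) holds. Hence $y_1$ grows at a uniform positive rate throughout $\{y_1\leq M\}$ and crosses $M$ in finite time; any subsequent return to $y_1=M$ would require $\dot y_1\leq 0$ there, contradicting this uniform positive bound.

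In Case~(II), I instead control $y_2$ from above: using $y_4\leq M$ and $y_1\leq N-y_2$ in the $y_2$--equation gives
\[
\frac{dy_2}{dt}\;\leq\;-(c_1+c_2)\,y_2+a_2 M+c_1 N,
\]
and a standard differential-inequality comparison produces $\limsup_{t\to\infty} y_2(t)\leq (a_2 M+c_1 N)/(c_1+c_2)$. Combined with $y_1+y_2\geq N-M$, this yields $\liminf_{t\to\infty} y_1(t)\geq (N-M)-(a_2 M+c_1 N)/(c_1+c_2)$, and elementary algebra shows the right-hand side exceeds $M$ precisely when hypothesis~(II) holds; so $y_1(t)>M\geq x_1(t)$ for all large~$t$.

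The main obstacle I anticipate is ensuring persistence of the inequality $y_1>x_1$ once first achieved: in Case~(I) this hinges on the uniform positivity of $\dot y_1$ at the critical level $y_1=M$, which is exactly what hypothesis~(I) delivers, while in Case~(II) persistence is automatic from the asymptotic lower bound. Once $y_1(t)>x_1(t)$ and $y_4(t)\leq x_2(t)$ hold simultaneously from some time $T$ onward, the ODE reduces to the linear system $\dot w=Q_2 w$; boundedness of the solution together with the eigenvalue property of $Q_2$ and the cited linear--ODE corollary then deliver the desired convergence.
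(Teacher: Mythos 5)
Your proposal is correct and follows essentially the same route as the paper's proof: the same invariant-based bounds $y_3(t)+y_4(t)\leq x_2(t)\leq M$ and $y_1+y_2\geq N-M$, the same two differential inequalities (a lower bound on $\dot y_1$ for case (I), an upper bound on $\dot y_2$ for case (II)), and the same final reduction to the linear system governed by $Q_2$ with its eigenvalue property. The only cosmetic difference is that in case (I) you argue via uniform positivity of $\dot y_1$ on the set $\{y_1\leq M\}$ rather than explicitly integrating the differential inequality as the paper does.
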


\begin{proof} Suppose (I) holds. According to the conservation law,
$\sum_{i=1}^4y_i(t)=N$. By the boundedness of the solution, we have
$x_2(t)\leq M$. Then by Lemma~\ref{lem:ChFA-ModelXY-Invariant},
$y_3(t)+y_4(t)\leq x_2(t)\leq M$. Therefore,
\begin{equation}\label{eq:ChFA-local-300}
y_2(t)=N-(y_3(t)+y_4(t))-y_1(t)\geq N-M-y_1(t).
\end{equation}
Since $\min\{x_1,y_1\}\leq y_1$, so $-a\min\{x_1,y_1\}\geq -a_1y_1$.
 Thus
\begin{equation}
\begin{split}
     \frac{\mathrm{d}y_1}{\mathrm{d}t}&=-a_1\min\{x_1,y_1\}-c_1y_1+c_2y_2\\
                    &\geq-a_1y_1-c_1y_1+c_2y_2\\
                    &\geq -(a_1+c_1)y_1+c_2(N-M-y_1)\\
                    &=-(a_1+c_1+c_2)y_1+c_2(N-M).
\end{split}
\end{equation}
That is
$$
   \frac{\mathrm{d}y_1}{\mathrm{d}t}\geq -(a_1+c_1+c_2)y_1+c_2(N-M).
$$
Applying Lemma~\ref{Lemma:Differential-Inequality}
in~\ref{section:Appendix-Some-Theorems} to this formula, we have
\begin{equation}\label{equ: local22}
    y_1(t)\geq
    \left(y_1(0)-\frac{c_2(N-M)}{a_1+c_1+c_2}\right)e^{-(a_1+c_1+c_2)t}+\frac{c_2(N-M)}{a_1+c_1+c_2}.
\end{equation}
Since the first term of the right side of (\ref{equ: local22})
converges to zero as time goes to infinity, i.e.
$\displaystyle\lim_{t\rightarrow\infinity}\left(y_1(0)-\frac{c_2(N-M)}{a_1+c_1+c_2}\right)e^{-(a_1+c_1+c_2)t}=0$,
and the second term $\displaystyle\frac{c_2(N-M)}{a_1+c_1+c_2}>M$
which results from the condition $\displaystyle
N>\left(2+\frac{a_1+c_1}{c_2}\right)M$, then there exists $T>0$ such
that for any $t>T$, $y_1(t)>M\geq x_1(t)$. Then after time $T$,
(\ref{eq:ChFA-Model-XY-ODEQ1Q2Q3Q4}) becomes linear, and is
dominated by $Q_2$. Because all eigenvalues of $Q_2$ are either
zeros or have negative real parts, the solution converges as time
goes to infinity.

\par Now we assume $(II)$ holds. Similarly, since $\min\{x_2,y_4\}\leq x_2\leq
M$, and $y_1\leq N-y_2$ which is due to $y_1+y_2\leq N$, we have
\begin{equation}
\begin{split}
     \frac{\mathrm{d}y_2}{\mathrm{d}t} &=-c_2y_2+a_2\min\{x_2,y_4\}+c_1y_1\\
                     &\leq-c_2y_2+a_2M+c_1y_1\\
                     &\leq-c_2y_2+a_2M+c_1(N-y_2)\\
                     &=-(c_1+c_2)y_2+a_2M+c_1N.
             \end{split}
\end{equation}
By Lemma~\ref{Lemma:Differential-Inequality}
in~\ref{section:Appendix-Some-Theorems},
\begin{equation}
    y_2\leq
    \left(y_2(0)-\frac{a_2M+c_1N}{c_1+c_2}\right)e^{-(c_1+c_2)t}
    +\frac{a_2M+c_1N}{c_1+c_2}.
\end{equation}
Therefore, since $e^{-(c_1+c_2)t}$ in above formula converges to
zero as time tends to infinity, then for any $\epsilon>0$, there
exists $T>0$ such that for any time $t>T$,
\begin{equation}
    y_2\leq\frac{a_2M+c_1N}{c_1+c_2}+\epsilon.
\end{equation}
Notice that the condition
 $\displaystyle N>\frac{2(c_1+c_2)+a_2}{c_2}M$ implies
$$\displaystyle\frac{c_2N-a_2M-(c_1+c_2)M}{c_1+c_2}>M,$$ and let
$\epsilon$  be small enough that
 $\displaystyle
          \frac{c_2N-a_2M-(c_1+c_2)M}{c_1+c_2}-\epsilon>M.
 $
Then by (\ref{eq:ChFA-local-300}),  $y_1\geq (N-M)-y_2$. Therefore,
\begin{equation}
\begin{split}
     y_1&\geq
   (N-M)-y_2\\
   &\geq(N-M)-\frac{a_2M+c_1N}{c_1+c_2}-\epsilon\\
   &=\frac{c_2N-a_2M-(c_1+c_2)M}{c_1+c_2}-\epsilon\\
   &>M\geq x_1.
\end{split}
\end{equation}
So $y_1(t)>x_1(t)$, $y_4(t)\leq x_2(t)$, for any $t>T$, then by a
similar argument the solution of
(\ref{eq:ChFA-Model-XY-ODEQ1Q2Q3Q4}) converges.
\end{proof}

Both condition (I) and (II) in
Proposition~\ref{pro:ChFA-Model-XY-Convergence-theorem2} require $N$
to be larger enough than $M$, to guarantee that $y_1$ is larger than
$x_1$. Since our PEPA model is symmetric,
Proposition~\ref{pro:ChFA-Model-XY-Convergence-theorem2} has a
corresponding symmetric version.

\begin{proposition}\label{pro:ChFA-Model-XY-Convergence-theorem3}
 Suppose $y_1(0)+y_2(0)\leq x_1(0)$ and $y_3(0)+y_4(0)>x_2(0)$.
 If either (I). $\displaystyle N>\left(2+\frac{a_2+c_3}{c_4}\right)M$, or
(II). $\displaystyle N> \frac{2(c_3+c_4)+a_1}{c_1}M$, where $N>M>0$
are the populations of $Y$ and $X$ respectively, then the solution
of (\ref{eq:ChFA-Model-XY-ODEQ1Q2Q3Q4}) converges.
\end{proposition}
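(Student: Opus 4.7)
The plan is to mirror the proof of Proposition~\ref{pro:ChFA-Model-XY-Convergence-theorem2} by exploiting the symmetry of Model~\ref{model:X-Y} under the exchange $X_1\leftrightarrow X_2$, $Y_1\leftrightarrow Y_4$, $Y_2\leftrightarrow Y_3$, $a_1\leftrightarrow a_2$, $c_1\leftrightarrow c_4$, $c_2\leftrightarrow c_3$, which leaves the ODE system~(\ref{eq:ChFA-Model-XY-ODEs}) invariant and swaps the two hypotheses $y_1(0)+y_2(0)\leq x_1(0)$ and $y_3(0)+y_4(0)\leq x_2(0)$. Thus the task reduces to establishing, under the new hypothesis, that the system eventually enters the linear regime where $Q_3$ dominates, and then appealing to Corollary~\ref{Corollary D.2.1.}.

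First I would apply Lemma~\ref{lem:ChFA-ModelXY-Invariant} to the invariant $y_1+y_2-x_1$: the assumption $y_1(0)+y_2(0)\leq x_1(0)$ propagates to $y_1(t)+y_2(t)\leq x_1(t)$ for all $t\geq 0$, and since $y_2(t)\geq 0$ by Theorem~\ref{thm:ChFA-AnalyticalProof-BoundedSolution}, this gives $y_1(t)\leq x_1(t)$ for every $t$. Consequently $\min\{x_1,y_1\}=y_1$ throughout, and~(\ref{eq:ChFA-Model-XY-ODEQ1Q2Q3Q4}) is restricted to the indicator regions of $Q_3$ or $Q_4$. Coupling this with the conservation law $y_1+y_2+y_3+y_4=N$ and $y_1+y_2\leq x_1\leq M$ yields the uniform lower bound $y_3(t)+y_4(t)\geq N-M$, in particular $y_3(t)\geq N-M-y_4(t)$.

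The remaining work is to force $y_4>x_2$ for all sufficiently large $t$. Under hypothesis (I), I would substitute the lower bound on $y_3$ into $\frac{dy_4}{dt}=-a_2\min\{x_2,y_4\}-c_4 y_4+c_3 y_3$, bound $\min\{x_2,y_4\}$ from above by $y_4$ (or alternatively by $x_2\leq M$), and obtain a linear differential inequality of the form $\frac{dy_4}{dt}\geq -Ky_4+L$ with strict inequality $L/K>M$; Lemma~\ref{Lemma:Differential-Inequality} from~\ref{section:Appendix-Some-Theorems} then yields $y_4(t)>M\geq x_2(t)$ for all $t>T$ and some finite $T$. Under hypothesis (II), I would instead derive an upper bound on $y_3$ by a linear differential inequality on $\frac{dy_3}{dt}=-c_3 y_3+c_4 y_4+a_1 y_1$, using $y_4\leq N-y_3$ and $y_1\leq M$, and then combine this upper bound on $y_3$ with the strict invariant $y_3(t)+y_4(t)=x_2(t)+\bigl(y_3(0)+y_4(0)-x_2(0)\bigr)$ and an $\epsilon$-argument as in Proposition~\ref{pro:ChFA-Model-XY-Convergence-theorem2} to conclude $y_4(t)>x_2(t)$ for $t>T$.

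Once $y_1\leq x_1$ globally and $y_4>x_2$ on $[T,\infty)$ are in place, the system~(\ref{eq:ChFA-Model-XY-ODEQ1Q2Q3Q4}) reduces on this half-line to a linear ODE with coefficient matrix $Q_3$. Since Theorem~\ref{thm:ChFA-AnalyticalProof-BoundedSolution} supplies boundedness and the eigenvalues of $Q_3$ are either zero or have negative real parts (verified in~\ref{section:Appendix-Eigenvalues-Matrices}), Corollary~\ref{Corollary D.2.1.} yields convergence as $t\to\infty$. The main obstacle I anticipate lies in the bookkeeping for case (II): the couplings between $y_3$ and $y_4$ on the ``$Y_4$-side'' involve two different job rates ($c_3$ and $c_4$), and matching the precise algebraic threshold stated in the hypothesis requires carefully tracking which coefficient appears in the denominator of the asymptotic bound, together with a delicate choice of $\epsilon$ so that the strict inequality $y_4>x_2$ persists past a finite time.
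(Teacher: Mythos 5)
Your approach is exactly the one the paper intends: the paper omits the proof of Proposition~\ref{pro:ChFA-Model-XY-Convergence-theorem3}, remarking only that it is the symmetric counterpart of Proposition~\ref{pro:ChFA-Model-XY-Convergence-theorem2}, and your mirrored argument (the invariant of Lemma~\ref{lem:ChFA-ModelXY-Invariant} gives $y_1(t)\le x_1(t)$ for all $t$, the conservation law gives $y_3(t)+y_4(t)\ge N-M$, a linear differential inequality on $y_4$ resp.\ $y_3$ via Lemma~\ref{Lemma:Differential-Inequality} forces $y_4>x_2$ after a finite time, and then $Q_3$ dominates so Corollary~\ref{Corollary D.2.1.} applies) is sound.

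One concrete remark on the issue you flagged about ``which coefficient appears in the denominator'': carrying out your case (I) literally, $\frac{\mathrm{d}y_4}{\mathrm{d}t}\ge -(a_2+c_4)y_4+c_3(N-M-y_4)$ gives the asymptotic lower bound $\frac{c_3(N-M)}{a_2+c_3+c_4}$, so the threshold that actually emerges is $N>\bigl(2+\frac{a_2+c_4}{c_3}\bigr)M$; similarly case (II) yields $N>\frac{2(c_3+c_4)+a_1}{c_3}M$. These are precisely the images of the conditions of Proposition~\ref{pro:ChFA-Model-XY-Convergence-theorem2} under the symmetry $a_1\leftrightarrow a_2$, $c_1\leftrightarrow c_4$, $c_2\leftrightarrow c_3$, and they differ from the thresholds printed in the statement, which place $c_3$ and $c_4$ (resp.\ $c_1$) in transposed positions. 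This discrepancy is a typo in the paper's statement rather than a gap in your argument; your proof establishes the proposition with the corrected constants, and no choice of $\epsilon$ or rearrangement of the estimates will recover the printed thresholds from the symmetric argument.
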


\par
The proof of
Proposition~\ref{pro:ChFA-Model-XY-Convergence-theorem3} is omitted
here. We should point out that in our model the shared activity
$action1$ (respectively, $action2$) has the same local rate $a_1$
(respectively, $a_2$). We have taken the advantage of this in the
above proofs. If the local rates of shared activities are not set to
be the same, analogous  conclusions can still hold but the
discussion will be more complicated. However, the structural
property invariance can still play an important role.

\par The above three propositions have illustrated the convergence for
all situations in terms of the starting state, except for the case
of $y_1(0)+y_2(0)> x_1(0), y_3(0)+y_4(0)> x_2(0).$ See a summary in
Table~\ref{table:ChFA-SummaryTable-Convergence-Model-X-Y}. If
$y_1(0)+y_2(0)> x_1(0)$, $y_3(0)+y_4(0)> x_2(0)$, then the dynamic
behaviour of the system is rather complex. A numerical study for
this case will be presented in the next subsection.

%
%
%

\begin{table}[htbp]\caption{A summary for the convergence of Model~\ref{model:X-Y}}
\label{table:ChFA-SummaryTable-Convergence-Model-X-Y}
\begin{center}
\begin{tabular}{|c|c|c|}
   \hline\hline
Starting state condition & Additional condition & Conclusion
\\\hline
$y_1(0)+y_2(0)\leq x_1(0)$, & & \\
 $y_3(0)+y_4(0)\leq x_2(0)$ & No
& Proposition~\ref{pro:ChFA-Model-XY-Convergence-theorem1}\\\hline

$y_1(0)+y_2(0)>x_1(0)$, & & \\
$y_3(0)+y_4(0)\leq x_2(0)$ & $N>k_1M$ &
Proposition~\ref{pro:ChFA-Model-XY-Convergence-theorem2}\\\hline

$y_1(0)+y_2(0)\leq x_1(0)$, & & \\
$y_3(0)+y_4(0)> x_2(0)$ & $N>k_2M$ &
Proposition~\ref{pro:ChFA-Model-XY-Convergence-theorem3}\\\hline

$y_1(0)+y_2(0)> x_1(0)$, & & \\
 $y_3(0)+y_4(0)> x_2(0)$ & None identified &
Explored numerically\\\hline
\end{tabular}
\end{center}
\end{table}

\begin{figure}[htbp]
 \begin{center}
 \includegraphics[width=9cm]{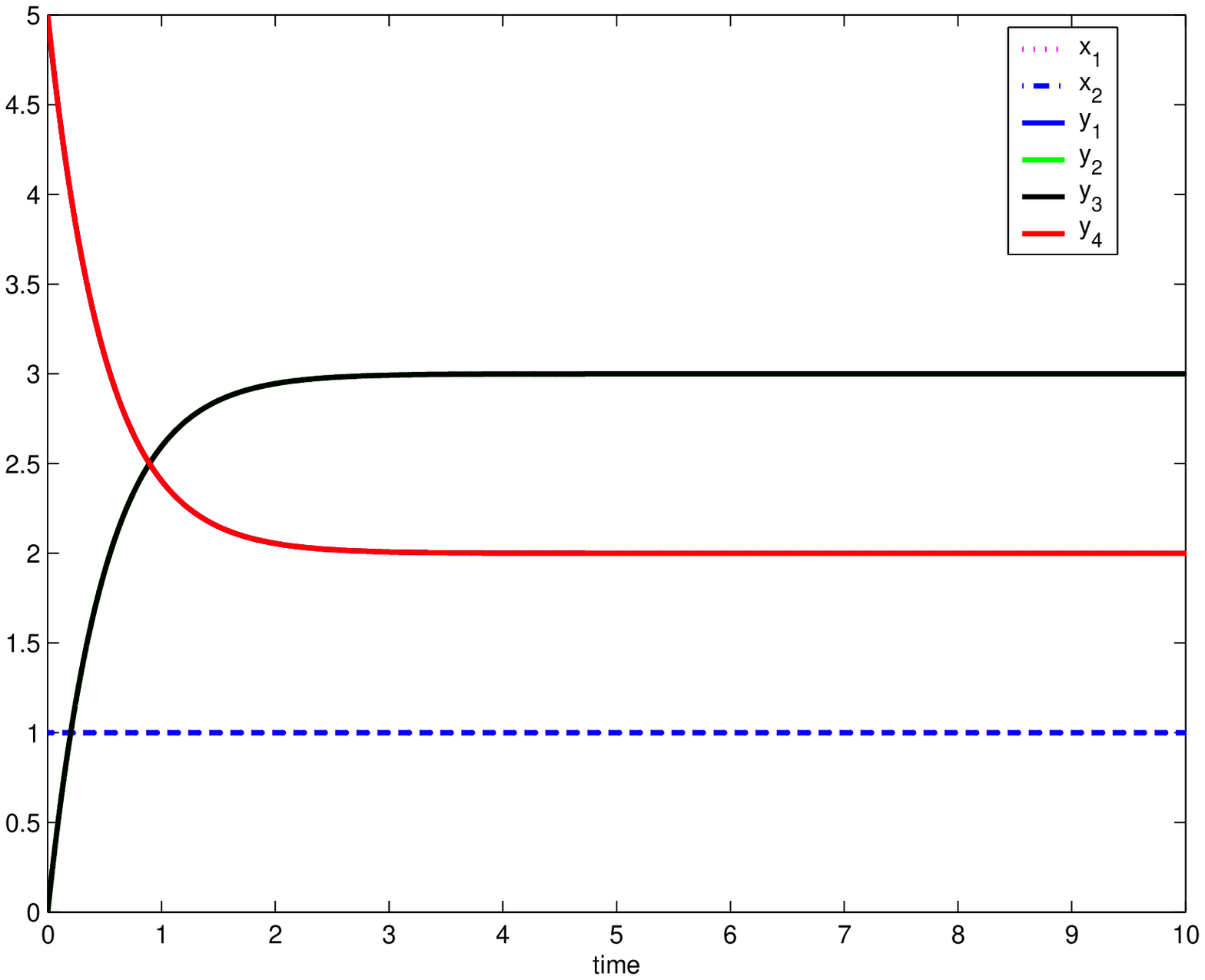}\\
 \end{center} \caption[Numerical study for Model~\ref{model:X-Y}: rates $(1,1,1,1,1,1)$]{
 Numerical study for Model~\ref{model:X-Y}:
 rates $(1,1,1,1,1,1)$; equilibrium point $(1,1,2,3,3,2)$.
  (Note: the curves of $x_1$ and $x_2$, the curves of $y_1$ and $y_4$, as
well as those of $y_2$ and $y_3$ respectively, completely overlap.)
 }\label{fig:Q1-115005}
\end{figure}

\begin{figure}[htbp]
 \begin{center}
 \includegraphics[width=9cm]{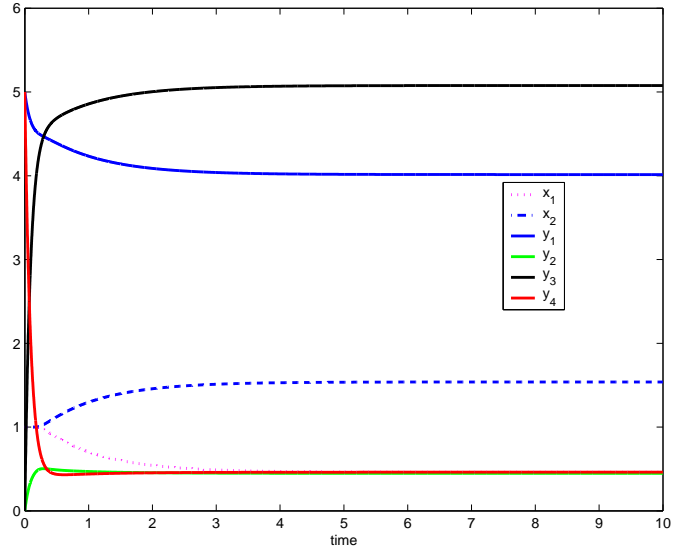}\\
 \end{center}
 \caption[Numerical study for Model~\ref{model:X-Y}: rates $(1,1,1,10,1,10)$]{
 Numerical study for Model~\ref{model:X-Y}:
 rates $(1,1,1,10,1,10)$; equilibrium point $(0.4616,1.5384,4.0140,0.4476,5.0769, 0.4615)$
 }\label{fig:Q2-RX1_1Y1_10_1_10}
\end{figure}

\begin{figure}[htbp]
 \begin{center}
 \includegraphics[width=9cm]{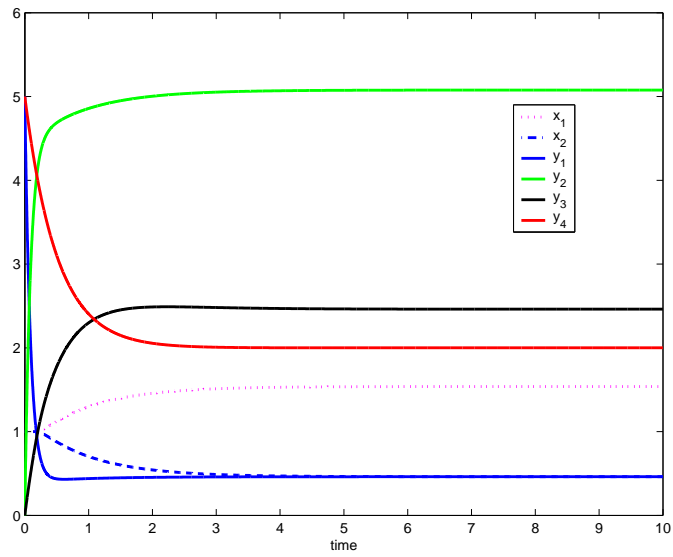}\\
 \end{center} \caption[Numerical study for Model~\ref{model:X-Y}: rates $(1,1,10,1,1,1)$]{
 Numerical study for Model~\ref{model:X-Y}:
 rates $(1,1,10,1,1,1)$;
 equilibrium point $(1.5384,0.4616,0.4615,5.0769,2.4616,2.0000)$
 }\label{fig:Q3-RX1_1Y10_1_1_1}
\end{figure}

\begin{figure}[htbp]
 \begin{center}
 \includegraphics[width=9cm]{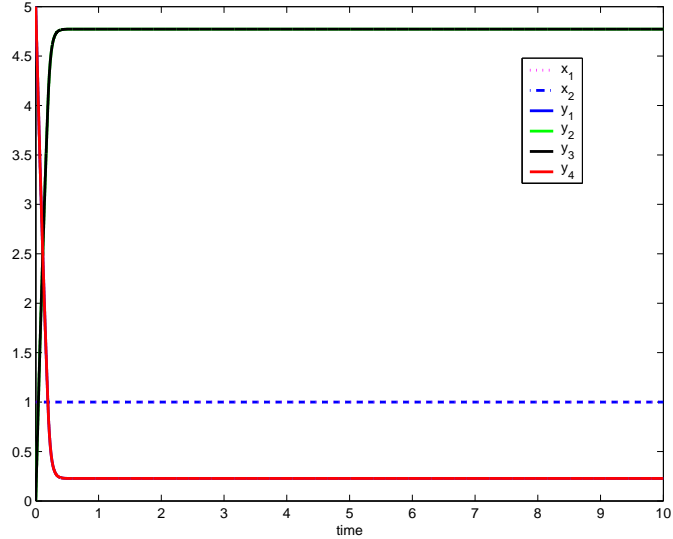}\\
 \end{center} \caption[Numerical study for Model~\ref{model:X-Y}: rates $(20,20,1,1,1,1)$]
 {Numerical study for Model~\ref{model:X-Y}:
 rates $(20,20,1,1,1,1)$;
  equilibrium point $(1,1,0.2273,4.7727,4.7727,0.2273)$.
  (Note: the curves of $x_1$ and $x_2$, the curves of $y_1$ and $y_4$, as
well as those of $y_2$ and $y_3$ respectively, completely overlap.)
 }\label{fig:Q4-RX20_20Y1_1_1_1}
\end{figure}

\subsubsection{Experimental study of convergence}

This subsection presents a numerical study at different action rate
conditions. The  starting state in this subsection is always assumed
as $(1,1,5,0,0,5)$, which satisfies the condition of
$y_1(0)+y_2(0)>x_1(0)$ and $y_3(0)+y_4(0)>x_2(0)$.

\par  If all the action rates in the model are set to one, i.e.
$(a_1,a_2,c_1,c_2,c_3,c_4)=(1,1,1,1,1,1)$, then the equilibrium
point of the ODEs is
$(x_1^{*},x_2^{*},y_1^{*},y_2^{*},y_3^{*},y_4^{*})=(1,1,2,3,3,2)$,
as the numerical solution of the ODEs illustrates. In this case, the
matrix $Q_1$ finally dominates the system. See
Figure~\ref{fig:Q1-115005}. Notice that in this figure, the curves
of $x_1$ and $x_2$ completely overlap, as well as the curves of
$y_1$ and $y_4$, and those of $y_2$ and $y_3$.

\par In other
situations, for example, if
$(a_1,a_2,c_1,c_2,c_3,c_4)=(1,1,1,10,1,10)$ then the equilibrium
point is $(0.4616, 1.5384, 4.0140,    0.4476, 5.0769, 0.4615)$, and
the matrix $Q_2$ eventually dominates the system. See
Figure~\ref{fig:Q2-RX1_1Y1_10_1_10}. Moreover, the matrices $Q_3$
and $Q_4$ can also finally dominate the system as long as the action
rates are appropriately specified. See
Figure~\ref{fig:Q3-RX1_1Y10_1_1_1} and
Figure~\ref{fig:Q4-RX20_20Y1_1_1_1}. We should point out that
similarly to Figure~\ref{fig:Q1-115005}, in
Figure~\ref{fig:Q4-RX20_20Y1_1_1_1} the curves of $x_1$ and $x_2$,
the curves of $y_1$ and $y_4$, as well as those of $y_2$ and $y_3$
respectively, completely overlap.

\par In short, the system dynamics is rather complex in the situation of
$y_1(0)+y_2(0)>x_1(0)$ and $ y_3(0)+y_4(0)>x_2(0)$. A summary of
these numerical studies is organised in
Table~\ref{table:ChFA-Complex-Q1Q2Q3Q4-XYmodel}.

\begin{table}[htbp]\caption{Complex dynamical behaviour of
Model~\ref{model:X-Y}: starting state
(1,1,5,0,0,5)}\label{table:ChFA-Complex-Q1Q2Q3Q4-XYmodel}
\begin{minipage}{13.8cm}
\begin{center}
\small
\begin{tabular}{|c|c|c|c|}
   \hline\hline
Rates: & Equilibrium points: & Dominator & Figure\\
$(a_1,a_2,c_1,c_2,c_3,c_4)$ &
$(x_1^{*},x_2^{*},y_1^{*},y_2^{*},y_3^{*},y_4^{*})$&  &
\\\hline

$(1,1,1,1,1,1)$ & $(1,1,2,3,3,2)$ & $Q_1$ &
Figure~\ref{fig:Q1-115005}\\\hline

$(1,1,1,10,1,10)$ & $v_1$\footnote{$v_1=(0.4616,    1.5384, 4.0140,
0.4476, 5.0769, 0.4615)$} & $Q_2$ &
Figure~\ref{fig:Q2-RX1_1Y1_10_1_10}\\\hline

$(1,1,10,1,1,1)$ &
$v_2$\footnote{$v_2=(1.5384,0.4616,0.4615,5.0769,2.4616,2.0000)$} &
$Q_3$ & Figure~\ref{fig:Q3-RX1_1Y10_1_1_1}\\\hline

 $(20,20,1,1,1,1)$ &
$(1,1,0.2273,4.7727,4.7727,0.2273)$ & $Q_4$ &
Figure~\ref{fig:Q4-RX20_20Y1_1_1_1}\\\hline

\end{tabular}
\end{center}
\end{minipage}
\end{table}

%
%
%
%
%

\section{Fluid analysis: an analytic approach (II)}

\subsection{Convergence for two component types and one
synchronisation (I): a special case}

The problem of convergence for more general models without strict
conditions, is rather complex and has not been completely solved.
But for a particular class of PEPA model --- a model  composed of
two types of component with one synchronisation between them, we can
determine the convergence of the solutions of the derived ODEs.

\par As discussed in the previous section, the  ODEs derived from
PEPA are piecewise linear and may be dominated by different
coefficient matrices alternately. For any PEPA model which has two
component types and one synchronisation, the two corresponding
coefficient matrices can be proved to have a good property: their
eigenvalues are either zeros or have negative real parts. The
remaining issue for convergence is to ascertain that these two
matrices will not always alternately dominate the system. In fact,
we will prove that under some mild conditions, there exists a time
after which there is only one coefficient matrix dominating the
system. This means the ODEs become linear after that time. Since the
coefficient matrix of the linear ODEs satisfies the good eigenvalue
property, then by Corollary~\ref{Corollary D.2.1.}, the bounded
solution will converge as time goes to infinity.

\par We first utilise  an example in this section to show our approach to dealing with the
convergence problem for this class of PEPA models. The proof for a
general case in this class is presented in the next subsection.

\subsubsection{A previous model and the fluid approximation}
\label{Subsection_Features of Matrix}

Let us look at the following PEPA model, which is
Model~\ref{model:Uer-Provider} presented previously:
\begin{equation*}
\begin{split}
User_1 \rmdef &(task_1, a).User_2\\
User_2 \rmdef &(task_2, b).User_1\\
Provider_1 \rmdef &(task_1, a).Provider_2\\
Provider_2 \rmdef &(reset, d).Provider_1\\
 (User_1[M]) &
\sync{\{task1\}} (Provider_1[N]).
\end{split}
\end{equation*}
The derived ODEs are as follows:
\begin{equation}\label{eq:ChFA-ODEs-UserProvider1}
\left\{
\begin{split}
\frac{\mathrm{d}x_1}{\mathrm{d}t}&=-a\min\{x_1,y_1\}+bx_2\\
\frac{\mathrm{d}x_2}{\mathrm{d}t}&=a\min\{x_1,y_1\}-bx_2\\
\frac{\mathrm{d}y_1}{\mathrm{d}t}&=-a\min\{x_1,y_1\}+dy_2\\
\frac{\mathrm{d}y_2}{\mathrm{d}t}&=a\min\{x_1,y_1\}-dy_2\\
\end{split}\right.
\end{equation}
where $x_i$ and $y_i$ represent the populations of $User_i$ and
$Provider_i$ respectively, $i=1,2$. Clearly,
(\ref{eq:ChFA-ODEs-UserProvider1}) is equivalent to
\begin{equation}\label{eq:ChFA-ODEs-UserProvider2}
\left(
       \begin{array}{c}
       \frac{\mathrm{d}x_1}{\mathrm{d}t}\\
       \frac{\mathrm{d}x_2}{\mathrm{d}t}\\
       \frac{\mathrm{d}y_1}{\mathrm{d}t}\\
       \frac{\mathrm{d}y_2}{\mathrm{d}t}\\
       \end{array}
\right) =I_{\{x_1\leq y_1\}}Q_1\left(
       \begin{array}{c}
       x_1\\
       x_2\\
       y_1\\
       y_2\\
       \end{array}
     \right)+I_{\{x_1>y_1\}}Q_2\left(
       \begin{array}{c}
       x_1\\
       x_2\\
       y_1\\
       y_2\\
       \end{array}
     \right),
\end{equation}
where
\begin{equation}
Q_1=\left(
       \begin{array}{cc|cc}
        -a & b & 0 & 0 \\
         a & -b & 0 & 0 \\\hline
         -a & 0 & 0 & d \\
         a & 0 & 0 & -d
       \end{array}
     \right),\quad
Q_2=\left(
       \begin{array}{cc|cc}
         0 & b & -a & 0 \\
         0 & -b & a & 0 \\\hline
         0 & 0 & -a & d \\
         0 & 0 & a & -d
       \end{array}
     \right).
\end{equation}

\par Our interest is to
see if the solution of (\ref{eq:ChFA-ODEs-UserProvider2}) will
converge as time goes to infinity. As we mentioned, this convergence
problem can be divided into two subproblems, i.e. whether the
nonlinear equations can finally become linear and whether the
eigenvalues of the coefficient matrix are either zeros or have
negative real parts. If answers to these two subproblems are both
positive, then the convergence will hold.

\par The second subproblem can be easily dealt with.
By calculations, the matrix $Q_1$ has eigenvalues $0$ (two folds),
$-d$, and $-(a+b)$. Similarly, $Q_2$ has eigenvalues $0$ (two
folds), $-b, -(a+d)$. Therefore, the eigenvalues of $Q_1$ and $Q_2$
other than zeros are negative. Moreover, for a general PEPA model
which has two component types and one synchronisation,
Theorem~\ref{thm:ChFA-Q1Q2-Eigen-NegativeRealParts} in the next
section shows that the corresponding coefficient matrices always
have this property.

\par The remaining work to determine the convergence of  the ODE solution,
is to solve the first subproblem, i.e. to ascertain that after a
time it is always the case that $x_1>y_1$ or $x_1\leq y_1$. In this
model, there is no invariance relating the two different component
types, so we cannot rely on invariants to investigate this
subproblem. However, we have a new way to cope with this problem.

\subsubsection{Proof outline of convergence}\label{subsec:ChFA-RoutePEPA-Introduction}

 Notice that $y_1(t)\leq N$ by the boundedness of
solutions.  If we can prove that after time $T$, $x_1(t)\geq cM$,
where $c>0$ is independent of $M$, we will get, provided $cM> N$,
$$
x_1(t)\geq cM> N\geq y_1(t),\; t\geq T.
$$
Therefore, the ODEs (\ref{eq:ChFA-ODEs-UserProvider1}) will become
linear after time $T$. In the following, we identify that
$x_1(t)\geq cM$.

\par Let
\begin{equation*}
\alpha(t)=\left\{\begin{array}{cc}
            \frac{\min\{x_1(t),y_1(t)\}}{x_1(t)}, & x_1(t)\neq0, \\
            1, & x_1(t)=0,
          \end{array}\right.
\end{equation*}
then $0\leq \alpha(t)\leq 1$ by the nonnegativity of $x_1(t)$ and
$y_1(t)$. The ODEs associated with component type $X$ can be
rewritten as
\begin{equation}\label{eq:ChFA-ODEs-X-UserProvider}
\left\{
\begin{split}
\frac{\mathrm{d}x_1}{\mathrm{d}t}&=-a\alpha(t)x_1+bx_2,\\
\frac{\mathrm{d}x_2}{\mathrm{d}t}&=a\alpha(t)x_1-bx_2.
\end{split}\right.
\end{equation}
Let
 \begin{equation*}
    A(t)=\left(
           \begin{array}{cc}
             -a\alpha(t) & b \\
             a\alpha(t) & -b \\
           \end{array}
         \right),
\quad X(t)=\left(\begin{array}{c}
         x_1(t) \\
         x_2(t)
       \end{array}\right).
\end{equation*}
Then (\ref{eq:ChFA-ODEs-X-UserProvider}) can be written as
\begin{equation}\label{eq:dX/dt=AX}
  \frac{dX(t)}{dt}=A(t)X(t).
\end{equation}
The solution of (\ref{eq:dX/dt=AX}) is
\begin{equation}
 \displaystyle{X(t)=e^{\int_0^tA(s)ds}X(0)}.
\end{equation}


\par Let $\displaystyle B(t)=\frac1t\int_0^tA(s)ds$, then
\begin{equation}
X(t)=e^{\int_0^tA(s)ds}X(0)=e^{tB(t)}X(0),
\end{equation}
and
\begin{equation}\label{eq:ChFA-locla-B(t)}
\begin{split}
  B(t)&=\frac1t\left(
         \begin{array}{cc}
           -a\int_0^t\alpha(s)ds & bt \\
           a\int_0^t\alpha(s)ds & -bt \\
         \end{array}
       \right)=\left(
         \begin{array}{cc}
           -a\beta(t) & b \\
           a\beta(t) & -b \\
         \end{array}
       \right),
\end{split}
\end{equation}
where $\displaystyle \beta(t)=\frac{\int_0^t\alpha(s)ds}{t}$.
Obviously, $0\leq\beta(t)\leq 1$ because $0\leq \alpha(s)\leq 1$.
 Notice that the matrix $B(t)$ can be diagonalised as
\begin{equation}\label{eq:ChFA-Model-UserProvider-Diagonal-B(t)}
\begin{split}
   B(t)&=\left(
         \begin{array}{cc}
           -a\beta(t) & b \\
           a\beta(t) & -b \\
         \end{array}
       \right)\\
   &=\left(
     \begin{array}{cc}
       \frac{b}{a\beta(t)+b} & 1 \\
       \frac{a\beta(t)}{a\beta(t)+b} & -1 \\
     \end{array}
   \right)\left(
              \begin{array}{cc}
                0 & 0 \\
                0 & -(a\beta(t)+b) \\
              \end{array}
            \right)
   \left(
     \begin{array}{cc}
       1 & 1 \\
       \frac{a\beta(t)}{a\beta(t)+b} & -\frac{b}{a\beta(t)+b} \\
     \end{array}
   \right)\\
 &=U(t)\left(
              \begin{array}{cc}
                0 & 0 \\
                0 & -(a\beta(t)+b) \\
              \end{array}
            \right)U^{-1}(t),
\end{split}
\end{equation}
where
\begin{equation}\label{eq:ChFA-Model-UserProvider-Diagonal-B(t)-U}
U(t)=\left(\begin{array}{cc}
       \frac{b}{a\beta(t)+b} & 1 \\
       \frac{a\beta(t)}{a\beta(t)+b} & -1 \\
\end{array}\right),\quad
U^{-1}(t)= \left(
     \begin{array}{cc}
       1 & 1 \\
       \frac{a\beta(t)}{a\beta(t)+b} & -\frac{b}{a\beta(t)+b} \\
     \end{array}
   \right),
\end{equation}
so
\begin{equation}
X(t)=e^{tB(t)}X(0)=U(t)\left(
              \begin{array}{cc}
                1 & 0 \\
                0 & e^{-t(a\beta(t)+b)} \\
              \end{array}
            \right)U^{-1}(t)X(0).
\end{equation}

\par  In the  formula
(\ref{eq:ChFA-Model-UserProvider-Diagonal-B(t)}), both $0$ and
$-(a\beta(t)+b)$ are $B(t)$'s eigenvalues, while the columns of
$U(t)$ are the corresponding eigenvectors. In particular,
$\displaystyle\left(
\frac{b}{a\beta(t)+b},\frac{a\beta(t)}{a\beta(t)+b} \right)^T$, i.e.
the first column of $U(t)$, is the eigenvector corresponding to the
eigenvalue zero.

We define a function $\hat{X}(t)$ by
\begin{equation}\label{eq:ChFA-Model-UserProvider-hat(X)(t)}
\hat{X}(t)=U(t)\left(
\begin{array}{cc}
                1 & 0 \\
                0 & 0 \\
              \end{array}
            \right)U^{-1}(t)X(0).
\end{equation}
By simple calculation,
\begin{equation}\label{eq:ChFA-Model-UserProvider-hat(X)(t)-1}
\displaystyle\hat{X}(t)=(\hat{x}_1(t),\hat{x}_2(t))^T=\left(\frac{bM}{a\beta(t)+b},
\frac{a\beta(t)M}{a\beta(t)+b}\right)^T.
\end{equation}
Clearly, $\hat{X}(t)$ is the normalised eigenvector corresponding
the zero eigenvalue (at the time $t$). Here the normalisaton is in
terms of the total population $M$ of the component type $X$.
Moreover, $\hat{X}(t)$ embodies some equilibrium meaning. In fact,
we have a conclusion:
\begin{equation}\label{eq:ChFA-Model-UserProvider-X(t)Approximate-hat(X)(t)}
\lim_{t\rightarrow\infinity}\|X(t)-\hat{X}(t)\|=0.
\end{equation}
Since the explicit expression of $U(t)$ is available, the proof of
(\ref{eq:ChFA-Model-UserProvider-X(t)Approximate-hat(X)(t)}) is easy
and thus omitted. This conclusion is also included in
Proposition~\ref{proposition:ChFA-X(t)-hatX(t)-UserProvider-NotExpl},
the proof of which does not rely on the explicit information of
$U(t)$ and will be introduced later.

\par Now we discuss the benefit brought by this formula. By
(\ref{eq:ChFA-Model-UserProvider-X(t)Approximate-hat(X)(t)}) the
first entry of $X(t)$ approximates the first entry of $\hat{X}(t)$,
i.e. $x_1(t)$ approximates
$\displaystyle\hat{x}_1(t)=\frac{bM}{a\beta(t)+b}$. Thus, for any
$\epsilon>0$, there exists $T>0$ such that for any $t\geq T$,
$$
   x_1(t)>\hat{x}_1(t)-\epsilon=\frac{bM}{a\beta(t)+b}-\epsilon.
$$
Since $\displaystyle\frac{bM}{a+b}\leq\frac{bM}{a\beta(t)+b}\leq M$
because $0\leq \beta(t)\leq 1$, so $\displaystyle x_1(t)>
\frac{bM}{a+b}-\epsilon$. Therefore, if $\displaystyle
\frac{bM}{a+b}>N$, then by the boundedness of $y_1(t)$, i.e.
$y_1(t)\leq N$, we have
$$x_1(t)> \frac{bM}{a+b}-\epsilon>N\geq y_1(t),
$$
as long as $\epsilon$ is small enough. This  means that $Q_2$ will
dominate the system after time $T$. So we have

\begin{proposition}\label{proposition:ChFA-ModelXY-Convergence}
If $\displaystyle\frac{bM}{a+b}>N$, then the solution of the ODEs
(\ref{eq:ChFA-ODEs-UserProvider2}) converges as time tends to
infinity.
\end{proposition}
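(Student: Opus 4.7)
The plan is to show that under the hypothesis $bM/(a+b)>N$, the component $x_1(t)$ is eventually bounded below by a constant exceeding $N$; combined with the a priori bound $y_1(t)\le N$, this reduces the piecewise-linear system (\ref{eq:ChFA-ODEs-UserProvider2}) to the constant-coefficient linear system $\dot Z=Q_2 Z$ on $[T,\infty)$, and since $\sigma(Q_2)\subset\{0\}\cup\{\lambda:\operatorname{Re}\lambda<0\}$, Corollary~\ref{Corollary D.2.1.} then delivers convergence of the bounded trajectory.

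For the crucial lower bound on $x_1$, I would use the conservation law $x_1(t)+x_2(t)=M$ from Proposition~\ref{pro:ChFP-ODEsConservationLaw} to collapse (\ref{eq:ChFA-ODEs-X-UserProvider}) into the scalar linear equation
\[
\dot{x}_1(t)=-(a\alpha(t)+b)\,x_1(t)+bM,
\]
with $\alpha(t)\in[0,1]$. Setting $\phi(t)=\int_0^t(a\alpha(u)+b)\,du$, variation of constants yields
\[
x_1(t)=e^{-\phi(t)}x_1(0)+bM\int_0^t e^{-(\phi(t)-\phi(s))}\,ds.
\]
Because $\alpha\le 1$ pointwise, $\phi(t)-\phi(s)\le(a+b)(t-s)$ for $s\le t$, so $e^{-(\phi(t)-\phi(s))}\ge e^{-(a+b)(t-s)}$ and
\[
x_1(t)\;\ge\;bM\int_0^t e^{-(a+b)(t-s)}\,ds\;=\;\frac{bM}{a+b}\bigl(1-e^{-(a+b)t}\bigr)\;\longrightarrow\;\frac{bM}{a+b}.
\]
This recovers precisely the lower estimate encoded by $\hat{X}(t)$ in the paper's spectral approach, but obtained directly and without appealing to the approximation $\|X(t)-\hat{X}(t)\|\to 0$.

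Given $bM/(a+b)>N$, I would pick $\epsilon$ with $0<\epsilon<bM/(a+b)-N$ and then $T>0$ so that $x_1(t)>bM/(a+b)-\epsilon>N\ge y_1(t)$ for all $t\ge T$; the bound $y_1(t)\le N$ follows from Theorem~\ref{thm:ChFA-AnalyticalProof-BoundedSolution} applied to the $Y$-component together with the conservation law. On $[T,\infty)$ the indicator $I_{\{x_1\le y_1\}}$ is identically zero, so (\ref{eq:ChFA-ODEs-UserProvider2}) reduces to $\dot Z=Q_2 Z$ with $Z=(x_1,x_2,y_1,y_2)^T$. The eigenvalue computation already recorded gives $\sigma(Q_2)=\{0,0,-b,-(a+d)\}$, and $Z(t)$ is bounded by Theorem~\ref{thm:ChFA-AnalyticalProof-BoundedSolution}, so Corollary~\ref{Corollary D.2.1.} yields $\lim_{t\to\infty}Z(t)$, completing the proof. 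I expect the main obstacle in the paper's chosen route to be the justification of $\|X(t)-\hat{X}(t)\|\to 0$, since the matrices $A(s_1)$ and $A(s_2)$ do not commute for $s_1\ne s_2$ and hence the identity $X(t)=\exp\bigl(\int_0^t A(s)\,ds\bigr)X(0)$ is not automatic; the scalar reduction via the conservation law above bypasses this difficulty altogether.
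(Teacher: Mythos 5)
Your proof is correct, and the endgame (eventual dominance of $Q_2$, its spectrum $\{0,0,-b,-(a+d)\}$, boundedness, Corollary~\ref{Corollary D.2.1.}) coincides with the paper's. Where you genuinely diverge is in establishing the lower bound $x_1(t)\gtrsim bM/(a+b)$: the paper keeps the two-dimensional system, writes $X(t)=e^{\int_0^tA(s)\,ds}X(0)=e^{tB(t)}X(0)$, diagonalises $B(t)$, and shows $X(t)$ tracks the projection $\hat X(t)$ onto the kernel of $B(t)$, whereas you collapse the $X$-subsystem to the scalar inhomogeneous equation $\dot x_1=-(a\alpha(t)+b)x_1+bM$ via the conservation law $x_1+x_2=M$ and read off the bound from variation of constants and $a\alpha(u)+b\le a+b$. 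Your route is shorter, entirely elementary, and — importantly — rigorous where the paper's is not: the identity $X(t)=\exp\bigl(\int_0^tA(s)\,ds\bigr)X(0)$ requires the matrices $A(s_1),A(s_2)$ to commute, and here $[A(s_1),A(s_2)]=(\alpha(s_1)-\alpha(s_2))[P,R]\neq0$ with $P=\bigl(\begin{smallmatrix}-a&0\\a&0\end{smallmatrix}\bigr)$, $R=\bigl(\begin{smallmatrix}0&b\\0&-b\end{smallmatrix}\bigr)$, so the paper's representation of the solution is not merely unjustified but generically false (a piecewise-constant $\alpha$ already produces a numerical discrepancy). Your observation thus identifies a real gap in the published argument, and your scalar reduction repairs it for this proposition. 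What the paper's heavier spectral framework buys is generality: in Section~7, where the component has $m>2$ local derivatives, there is no one-dimensional reduction, and some substitute for the $\hat X(t)$-approximation (e.g.\ a genuine time-ordered-product or Lyapunov argument) is still needed there; your method does not export to that setting, but for the two-derivative case at hand it is the better proof.
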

Since the model is symmetric, this proposition has a symmetric
version: if $\displaystyle\frac{dN}{a+d}>M$, then the solution of
the ODEs also converges.

\par  As we discussed, there are two key steps in the proof
of Proposition~\ref{proposition:ChFA-ModelXY-Convergence}. The first
step is to establish the approximation of $x_1(t)$ to
$\hat{x}_1(t)$, i.e. $x_1(t)\approx\hat{x}_1(t)$. The second one is
to give an estimation $\hat{x}_1(t)\geq cM$. According to these two
conclusions, we have $x_1(t)\geq c'M$ where $c'<c$, and therefore
can conclude that $x_1(t)\geq c'M>N>y_1(t)$ provided  the condition
$c'M>N$. This is the main philosophy behind our proof for
$x_1(t)>y_1(t)$.

\par For the sake of generality, the proofs of these two conclusions
should not rely on the explicit expressions of the eigenvalues and
eigenvectors of $B(t)$. This is because  for general PEPA models
with two component types and one synchronisation, these explicit
expressions are not always available. The following subsection will
present our discussions about these steps, and the proofs for the
conclusions which do not rely on these explicit expressions.

\subsubsection{Proof not relying on explicit
expressions}\label{section:Proof-Not-Rely-On-Explicit-Expressions}


This subsection will divide into two parts. In the first part, we
will give a lower bound for the eigenvalues of the coefficient
matrix $B(t)$, based on which a proof of the approximation of $X(t)$
to $\hat{X}(t)$ is given. The second part will establish the
estimation $\hat{x}_1(t)\geq cM$. All proofs in this subsection do
not require knowledge of the explicit expressions of the eigenvalues
and eigenvectors of $B(t)$.

\par For convenience, in this subsection we define
\begin{equation}\label{eq:ChFA-local-f(beta)}
 f(\beta)=\left(
         \begin{array}{cc}
           -a\beta & b \\
           a\beta & -b \\
         \end{array}
       \right),
\end{equation}
where $f$ is a matrix-valued function defined on $\mathbb{R}$. Then
the matrix
\begin{equation*}\label{eq:ChFA-locla-B(t)-0}
\begin{split}
  B(t)=\left(
         \begin{array}{cc}
           -a\beta(t) & b \\
           a\beta(t) & -b \\
         \end{array}
       \right)
\end{split}
\end{equation*}
can be written as
 $B(t)=f(\beta(t))$. The diagonalisation of $f(\beta)$ is
\begin{equation*}\label{eq:ChFA-locla-B(t)-0}
\begin{split}
  f(\beta)=g(\beta)\left(
         \begin{array}{cc}
           0 & 0 \\
           0 &\lambda(\beta) \\
         \end{array}
       \right)g^{-1}(\beta),
\end{split}
\end{equation*}
where $\lambda(\beta)$ is $f(\beta)$'s nonzero eigenvalue, and
$g(\beta)$ is a matrix whose columns are the eigenvectors of
$f(\beta)$. Here $g^{-1}(\beta)$ is the inverse of the matrix
$g(\beta)$.  Notice that $\lambda(\beta)$ is real, because if
$\lambda(\beta)$ is complex then its conjugation must be an
eigenvalue, which is contradicted by the fact that $f(\beta)$ only
has two eigenvalues, $0$ and $\lambda(\beta)$. The following
discussions in this subsection do not rely on the explicit
expressions of $\lambda(\beta)$, $g(\beta)$ and $g^{-1}(\beta)$,
although it is easy to see that $\lambda(\beta)=-a\beta+b$ and
\begin{equation*}
g(\beta)=\left(\begin{array}{cc}
       \frac{b}{a\beta+b} & 1 \\
       \frac{a\beta}{a\beta+b} & -1 \\
\end{array}\right), \quad g^{-1}(\beta)= \left(
     \begin{array}{cc}
       1 & 1 \\
       \frac{a\beta}{a\beta+b} & -\frac{b}{a\beta+b} \\
     \end{array}
   \right).
\end{equation*}

{\textbf{1. $X(t)$ approximates $\hat{X}(t)$}}\\

\par In the following, we will give a lower bound for the nonzero
eigenvalue of $f(\beta)$, i.e. $\lambda(\beta)$, and based on this
prove the approximation of $A(t)$ to $\hat{X}(t)$ as time tends to
infinity.

\par If $\beta>0$, then the transpose of $f(\beta)$, i.e. $f(\beta)^T$,
is an infinitesimal generator, and thus the nonzero eigenvalue
$\lambda(\beta)$ has negative real part, i.e.
$\Re(\lambda(\beta))<0$. If $\beta=0$, then $f(\beta)$ is
independent of $\beta$ and becomes a nonnegative matrix, i.e. each
entry of it is nonnegative. Based on the Perron-Frobenious theorem
which is presented in the next subsection, we can still have
$\Re(\lambda(0))<0$. Therefore, for any $\beta$, $f(\beta)$'s
eigenvalue other than zero has negative real part. This conclusion
is stated in the following lemma.

\begin{lemma}\label{lemma:ChFA-lambda(beta)geq0}
For any $\beta\in [0,1]$, $\Re(\lambda(\beta))<0$, where
$\lambda(\beta)$ is a nonzero eigenvalue of $f(\beta)$.
\end{lemma}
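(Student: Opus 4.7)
The plan is to split the interval $[0,1]$ into the two subcases $\beta \in (0,1]$ and $\beta = 0$, since these regimes require different spectral arguments despite giving the same qualitative conclusion.

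For the first case I would transpose and observe that $f(\beta)^T$ has strictly positive off-diagonal entries $a\beta$ and $b$, non-positive diagonal entries $-a\beta$ and $-b$, and rows summing to zero; hence it is the infinitesimal generator of an irreducible two-state CTMC. By classical spectral theory for such generators (equivalently, by Perron-Frobenius applied to the associated semigroup $e^{t f(\beta)^T}$), $0$ is a simple eigenvalue and every other eigenvalue lies strictly in the open left half-plane. Since transposition preserves the spectrum, $\Re(\lambda(\beta)) < 0$ for all $\beta \in (0,1]$.

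The case $\beta = 0$ demands a separate argument because the associated two-state chain degenerates to a reducible one (the rate $a\beta$ collapses to $0$). The cleanest route, in line with the hint given just before the lemma, is to shift by a multiple of the identity: $f(0) + bI = \begin{pmatrix} b & b \\ 0 & 0 \end{pmatrix}$ is a nonnegative matrix, so by the Perron-Frobenius theorem its spectral radius is attained as a real eigenvalue and is bounded above by the maximum row sum $b$. The eigenvalues of the shifted matrix are therefore $b$ and $0$, and consequently those of $f(0)$ are $0$ and $-b$, giving $\lambda(0) = -b < 0$.

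The computation is elementary because $f(\beta)$ is only $2\times 2$, but the reason I prefer this route over directly writing $\lambda(\beta) = -(a\beta+b)$ is generalisability. The subsequent results in this subsection target a general PEPA model with two component types and one synchronisation, where the coefficient matrices are higher-dimensional and no explicit diagonalisation is available; the same pair of tools (generator theory for $\beta>0$, Perron-Frobenius after a shift for $\beta=0$) is exactly what will carry through. The only difficulty I anticipate is handling the degenerate boundary case $\beta=0$ uniformly in the general version, since reducibility means $0$ may fail to be a simple eigenvalue and one has to track the full structure of the zero eigenspace rather than merely the dominant Perron root.
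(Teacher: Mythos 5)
Your argument is correct, and it essentially reproduces the two-case discussion that the paper itself gives in Section~\ref{section:Proof-Not-Rely-On-Explicit-Expressions} as motivation just before stating the lemma. The paper's formal proof in the appendix is organised differently: instead of splitting on $\beta>0$ versus $\beta=0$, it performs a single shift by $\max\{a\beta,b\}I$, obtaining a nonnegative matrix whose column sums all equal $\max\{a\beta,b\}$, and then invokes the Perron--Frobenius bound $|\lambda+r|\leq r$ exactly as in the proof of Theorem~\ref{thm:ChFA-Q1Q2-Eigen-NegativeRealParts}. That uniform shift handles $\beta=0$ and $\beta>0$ at once, and it is the form of the argument that carries over verbatim to the higher-dimensional coefficient matrices $Q_1$, $Q_2$ of the general two-component-type model --- so it arguably serves your stated goal of generalisability better than the case split, whose $\beta=0$ branch you yourself identify as the awkward part (reducibility, possibly non-simple zero eigenvalue). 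One small slip in your $\beta=0$ case: for $f(0)+bI=\bigl(\begin{smallmatrix} b & b \\ 0 & 0 \end{smallmatrix}\bigr)$ the maximum \emph{row} sum is $2b$, not $b$; it is the two \emph{column} sums that both equal $b$, which is the transposed form of Perron--Frobenius recorded in Remark~\ref{remark:ChFA-PF-theorem} (alternatively, this shifted matrix is upper triangular, so its eigenvalues $b$ and $0$ can simply be read off the diagonal). The conclusion $\lambda(0)=-b<0$ is unaffected.
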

The proof of Lemma~\ref{lemma:ChFA-lambda(beta)geq0} is presented
in~\ref{section:Appendix-Proof-Lemma}.
Lemma~\ref{lemma:ChFA-lambda(beta)geq0} can further lead to the
following
\begin{lemma}\label{lemma:ChFA-lambda(beta)-UniformLowerBound}
Let
\begin{equation}\label{eq:ChFA-Local-202}
\Lambda_1=\inf_{
\beta\in[0,1]}\{-\Re(\lambda(\beta))\mid\lambda(\beta) \;\mbox{is
$f(\beta)$'s non-zero
   eigenvalue}\},
\end{equation}
then $\Lambda_1>0$.
\end{lemma}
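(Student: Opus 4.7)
The plan is to deduce $\Lambda_1 > 0$ from the pointwise strict inequality $\Re(\lambda(\beta)) < 0$ already provided by Lemma~\ref{lemma:ChFA-lambda(beta)geq0} via a standard compactness argument. Specifically, the first thing I would do is argue that the map $\beta \mapsto \lambda(\beta)$ is continuous on the compact interval $[0,1]$; then the extreme value theorem forces the continuous function $-\Re(\lambda(\cdot))$ to attain its infimum at some $\beta^* \in [0,1]$, and that attained value is strictly positive by the previous lemma, giving $\Lambda_1 = -\Re(\lambda(\beta^*)) > 0$.

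The key step is establishing continuity of the non-zero eigenvalue in $\beta$ without invoking the explicit expression $\lambda(\beta) = -a\beta - b$, in keeping with the stated aim of Section~\ref{section:Proof-Not-Rely-On-Explicit-Expressions}. My approach here is to observe that the entries of $f(\beta)$ defined in~(\ref{eq:ChFA-local-f(beta)}) are affine in $\beta$, so the characteristic polynomial $p_\beta(z) = \det(zI - f(\beta))$ has coefficients that depend continuously on $\beta$. Because the columns of $f(\beta)$ sum to zero for every $\beta$, the row vector $(1,1)$ is a persistent left null-vector and $0$ is always a root of $p_\beta$. Dividing out this permanent factor of $z$ leaves a polynomial of degree one in $z$ whose coefficients remain continuous in $\beta$; its unique root is $\lambda(\beta)$, and continuity of $\lambda(\cdot)$, and hence of $-\Re(\lambda(\cdot))$, follows without any appeal to general perturbation theory.

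I expect the main---and indeed only---delicate point to be justifying that one can single out the non-zero eigenvalue as a continuous selection globally on $[0,1]$: for a general matrix-valued function the two eigenvalues can collide, and only local holomorphic branches are available. Here, however, the identically zero eigenvalue is cleanly separated from $\lambda(\beta)$ whenever $\lambda(\beta) \neq 0$, which is exactly what Lemma~\ref{lemma:ChFA-lambda(beta)geq0} guarantees pointwise. So the desired continuous selection is unambiguous on all of $[0,1]$, and the extreme value theorem then closes the argument.
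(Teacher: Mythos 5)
Your proposal is correct and follows essentially the same route as the paper: continuity of $\beta\mapsto\lambda(\beta)$ on the compact interval $[0,1]$, the extreme value theorem, and the pointwise strict negativity from Lemma~\ref{lemma:ChFA-lambda(beta)geq0}. If anything, your justification of the continuous selection of the non-zero eigenvalue --- factoring the permanent root $z=0$ out of the characteristic polynomial so that $\lambda(\beta)$ is the root of a degree-one polynomial with coefficients affine in $\beta$ --- is more careful than the paper's bare assertion that ``the eigenvalue is a continuous function of the matrix.''
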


\begin{proof}
By Lemma~\ref{lemma:ChFA-lambda(beta)geq0}, $-\Re(\lambda(\beta))>0$
for any $\beta\in[0,1]$, so $\Lambda_1\geq 0$. Suppose
$\Lambda_1=0$. Because the eigenvalue $\lambda(\beta)$ is a
continuous function of the matrix $f(\beta)$, where $f(\beta)$ is
also continuous on $[0,1]$ with respect to $\beta$, so
$\lambda(\beta)$ is a continuous function of $\beta$ on $[0,1]$.
This is due to the fact that a composition of continuous functions
is still continuous. Noticing $\Re(\cdot)$ is also a continuous
function, so $-\Re(\lambda(\beta))$ is continuous with respect to
$\lambda(\beta)$, and thus with respect to $\beta$ on $[0,1]$. Since
a continuous function on a closed interval can achieve its minimum,
there exists $\beta_0\in[0,1]$ such that $-\Re(\lambda(\beta_0))$
achieves the minimum $\Lambda_1$, i.e.
$-\Re(\lambda(\beta_0))=\Lambda_1=0$. This is contradicted to
Lemma~\ref{lemma:ChFA-lambda(beta)geq0}. Therefore, $\Lambda_1>0$.
\end{proof}

\par For any $t\in[0,\infinity)$, $\beta(t)\in [0,1]$.
 Noticing
$B(t)=f(\beta(t))$, therefore
\begin{equation}\label{eq:ChFA-Local-201}
\begin{split}
&\{\lambda\mid\lambda \;\mbox{is $B(t)$'s non-zero
   eigenvalue},t>0\}\\
=& \{\lambda\mid\lambda \;\mbox{is $f(\beta(t))$'s non-zero
   eigenvalue},t>0\}\\
 \subseteq &\{\lambda\mid\lambda \;\mbox{is $f(\beta)$'s non-zero
   eigenvalue}; \beta\in[0,1]\}.
 \end{split}
\end{equation}
Thus,
\begin{equation}\label{eq:ChFA-Local-201}
\begin{split}
\Lambda &\triangleq\inf\{-\Re(\lambda)\mid\lambda \;\mbox{is
$B(t)$'s
nonzero   eigenvalue},t>0\}\\
&\geq \inf \{-\Re(\lambda)\mid\lambda \;\mbox{is $f(\beta)$'s
non-zero eigenvalue}; \beta\in[0,1]\}=\Lambda_1.
 \end{split}
\end{equation}
Because $\Lambda_1>0$ by
Lemma~\ref{lemma:ChFA-lambda(beta)-UniformLowerBound}, so
$\Lambda>0$. That is,
\begin{corollary}\label{corollary:ChFA-Lambda(beta(t))-LowerBound}
Let
   \begin{equation*}\Lambda=\inf_{t\geq0}\{-\Re(\lambda(t))\mid\lambda(t) \;\mbox{is $B(t)$'s non-zero
   eigenvalue}\},
   \end{equation*}
   then $\Lambda>0$.
\end{corollary}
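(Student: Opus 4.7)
The corollary is essentially a direct corollary of Lemma~\ref{lemma:ChFA-lambda(beta)-UniformLowerBound} together with the fact that $B(t)$ is a reparametrisation of $f(\beta)$. My plan is to carry out the inclusion argument that is sketched immediately before the statement and then invoke the positivity result for $\Lambda_1$.

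First, I would recall that $B(t)=f(\beta(t))$, where $\beta(t)=\frac{1}{t}\int_0^t\alpha(s)\,ds$ with $\alpha(s)\in[0,1]$, so $\beta(t)\in[0,1]$ for every $t\geq 0$. Since for each $t$ the matrix $B(t)$ is literally $f(\beta)$ evaluated at the point $\beta=\beta(t)\in[0,1]$, the eigenvalues of $B(t)$ at time $t$ are exactly the eigenvalues of $f(\beta(t))$. In particular,
\begin{equation*}
\{\lambda : \lambda \text{ is a non-zero eigenvalue of } B(t),\ t\geq 0\}
\;\subseteq\;
\{\lambda : \lambda \text{ is a non-zero eigenvalue of } f(\beta),\ \beta\in[0,1]\}.
\end{equation*}

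Second, I would take real parts and infima. Since taking an infimum over a larger set can only make the value smaller (or equal), the set inclusion yields
\begin{equation*}
\Lambda \;=\; \inf_{t\geq 0}\{-\Re(\lambda(t)) : \lambda(t) \text{ non-zero eigenvalue of } B(t)\}
\;\geq\;
\inf_{\beta\in[0,1]}\{-\Re(\lambda(\beta)) : \lambda(\beta) \text{ non-zero eigenvalue of } f(\beta)\}
\;=\;\Lambda_1.
\end{equation*}

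Finally, Lemma~\ref{lemma:ChFA-lambda(beta)-UniformLowerBound} gives $\Lambda_1>0$, and hence $\Lambda\geq\Lambda_1>0$, which is the claimed conclusion.

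There is essentially no obstacle here: the only subtlety is making sure that the parameter range of $\beta(t)$ actually lies in $[0,1]$, which is immediate from $\alpha(s)\in[0,1]$, and that the inclusion of eigenvalue sets is handled correctly. All real work has been done in Lemma~\ref{lemma:ChFA-lambda(beta)geq0} (strict negativity of the non-zero eigenvalue of $f(\beta)$ for each $\beta\in[0,1]$) and Lemma~\ref{lemma:ChFA-lambda(beta)-UniformLowerBound} (the uniform positive bound $\Lambda_1>0$ obtained by compactness of $[0,1]$ and continuity of $\beta\mapsto\lambda(\beta)$); this corollary simply transfers the uniform bound from the $\beta$-parametrisation to the $t$-parametrisation.
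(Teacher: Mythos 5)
Your argument is correct and is essentially identical to the paper's own: the paper also establishes the set inclusion $\{\lambda : \lambda$ a non-zero eigenvalue of $B(t),\ t>0\}\subseteq\{\lambda : \lambda$ a non-zero eigenvalue of $f(\beta),\ \beta\in[0,1]\}$ via $B(t)=f(\beta(t))$ with $\beta(t)\in[0,1]$, deduces $\Lambda\geq\Lambda_1$, and concludes from Lemma~\ref{lemma:ChFA-lambda(beta)-UniformLowerBound} that $\Lambda>0$. No differences worth noting.
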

Based on this corollary, we can prove the approximation of $X(t)$ to
$\hat{X}(t)$.
\begin{proposition}\label{proposition:ChFA-X(t)-hatX(t)-UserProvider-NotExpl}
 Let $X(t)=(x_1(t),x_2(t))^T=e^{tB(t)}X(0)$, i.e. the solution of
 (\ref{eq:ChFA-ODEs-X-UserProvider}). Let $\hat{X}(t)$ be defined by
 (\ref{eq:ChFA-Model-UserProvider-hat(X)(t)}), i.e.
\begin{equation}\label{eq:ChFA-hatX(t)}
   \hat{X}(t)=\left( \begin{array}{c}
                \hat{x}_1(t)   \\
                \hat{x}_2(t)\\
              \end{array}
            \right)
            =U(t)\left(
              \begin{array}{cc}
                1 &  \\
                 & 0\\
              \end{array}
            \right)U(t)^{-1}\left(
              \begin{array}{c}
                x_1(0)   \\
                x_2(0)\\
              \end{array}
            \right).
\end{equation}
Then
$\displaystyle\lim_{t\rightarrow\infinity}\|X(t)-\hat{X}(t)\|=0$.
\end{proposition}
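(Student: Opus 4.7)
The plan is to exploit the spectral decomposition of $B(t)$ together with the uniform spectral gap provided by Corollary~\ref{corollary:ChFA-Lambda(beta(t))-LowerBound}. First I would write the difference in closed form. Since $B(t) = f(\beta(t))$ is diagonalisable for each $t$ as
\[
B(t) = U(t)\begin{pmatrix} 0 & 0 \\ 0 & \lambda(\beta(t)) \end{pmatrix}U(t)^{-1},
\]
the matrix exponential factors accordingly, giving
\[
e^{tB(t)} X(0) = U(t)\begin{pmatrix} 1 & 0 \\ 0 & e^{t\lambda(\beta(t))} \end{pmatrix}U(t)^{-1} X(0).
\]
Subtracting the defining expression for $\hat{X}(t)$, the middle factor kills the first diagonal block and only the second remains:
\[
X(t) - \hat{X}(t) = U(t)\begin{pmatrix} 0 & 0 \\ 0 & e^{t\lambda(\beta(t))} \end{pmatrix}U(t)^{-1} X(0).
\]

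Next I would estimate each factor on the right. Since $\lambda(\beta(t))$ is real (as noted in the paragraph preceding the proposition) and $\Re(\lambda(\beta(t))) \le -\Lambda$ for every $t \ge 0$ by Corollary~\ref{corollary:ChFA-Lambda(beta(t))-LowerBound}, the diagonal factor has norm bounded by $e^{-\Lambda t}$, with $\Lambda > 0$. For the outer factors, observe that $\beta(t) \in [0,1]$ for all $t$, so $U(t) = g(\beta(t))$ and $U(t)^{-1} = g(\beta(t))^{-1}$ belong to the image of the continuous matrix-valued maps $\beta \mapsto g(\beta)$ and $\beta \mapsto g(\beta)^{-1}$ on the compact interval $[0,1]$. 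By compactness and continuity there is a constant $K > 0$ with $\|g(\beta)\|\,\|g(\beta)^{-1}\| \le K$ uniformly in $\beta \in [0,1]$. Combining these bounds,
\[
\|X(t) - \hat{X}(t)\| \le K\,\|X(0)\|\,e^{-\Lambda t} \longrightarrow 0 \quad \text{as } t \to \infty.
\]

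The only delicate point, and the one I would spend the most care on, is justifying that $\beta \mapsto g(\beta)^{-1}$ really is continuous on the whole closed interval $[0,1]$, not just on $(0,1]$. Continuity of $g$ is immediate from its explicit construction from $f(\beta)$, but continuity of the inverse requires $g(\beta)$ to stay nonsingular uniformly, and this is precisely where Lemma~\ref{lemma:ChFA-lambda(beta)geq0} is needed: it supplies $\lambda(\beta) \ne 0$ for every $\beta \in [0,1]$, so the two eigenvalues $0$ and $\lambda(\beta)$ are distinct and the eigenvector matrix $g(\beta)$ is invertible throughout $[0,1]$. Once this is in place, $\det g(\beta)$ is a nonvanishing continuous function on a compact set, hence bounded away from zero, and $g(\beta)^{-1}$ is continuous on $[0,1]$ by Cramer's rule. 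The remainder of the argument is the routine norm estimate above.
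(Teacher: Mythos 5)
Your proposal is correct and follows essentially the same route as the paper's own proof: the same closed-form expression $X(t)-\hat{X}(t)=U(t)\,\mathrm{diag}(0,e^{t\lambda(t)})\,U(t)^{-1}X(0)$, the same uniform bound $K$ on $\|g(\beta)\|$ and $\|g^{-1}(\beta)\|$ via continuity on the compact interval $[0,1]$, and the same appeal to Corollary~\ref{corollary:ChFA-Lambda(beta(t))-LowerBound} to get the decay factor $e^{-\Lambda t}$. Your extra care in justifying that $g(\beta)$ is nonsingular on all of $[0,1]$ (via Lemma~\ref{lemma:ChFA-lambda(beta)geq0} and the distinctness of the eigenvalues) is a welcome refinement of a step the paper simply asserts, but it does not change the structure of the argument.
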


\begin{proof}
Notice that eigenvectors of a matrix are continuous functions of the
matrix. Since $g(\beta)$ is composed of the eigenvectors of the
matrix $f(\beta)$ and $f(\beta)$ is continuous on $[0,1]$ with
respect to $\beta$, therefore $g(\beta)$ is continuous on $[0,1]$
with respect to $\beta$. Because the inverse of a matrix is a
continuous mapping, so $g^{-1}(\beta)$, i.e. the inverse of
$g(\beta)$, is continuous with respect to $g(\beta)$, and therefore
is continuous on $[0,1]$ with respect to $\beta$ since $g(\beta)$ is
continuous on $[0,1]$. Since any continuous function is bounded on a
compact set $[0,1]$, both $g(\beta)$ and $g^{-1}(\beta)$ are bounded
on $[0,1]$. That is, there exists $K>0$ such that $\|g(\beta)\|\leq
K$ and $\|g^{-1}(\beta)\|\leq K$ for all $\beta\in[0,1]$. Because
$$
   \{U(t)\mid t\in[0,\infinity)\}=\{g(\beta(t))\mid
   t\in[0,\infinity)\}\subseteq \{g(\beta)\mid \beta\in[0,1]\},
$$
we have $$\displaystyle\sup_{t\geq 0}\|U(t)\|\leq
\sup_{\beta\in[0,1]}\|g(\beta)\|\leq K.$$ Similarly,
$\displaystyle\sup_{t\geq 0}\left\|U^{-1}(t)\right\|\leq K$.
 Notice
\begin{equation*}
\begin{split}
    X(t)-\hat{X}(t)=& e^{tB(t)}X(0)-U(t)\left(
              \begin{array}{cc}
                1 & 0 \\
                0 & 0 \\
              \end{array}
            \right)U(t)^{-1}X(0)\\
    =&\left[U(t)\left(
              \begin{array}{cc}
                1 & 0 \\
                0 & e^{t\lambda(t)}\\
              \end{array}
            \right)U(t)^{-1}-U(t)\left(
              \begin{array}{cc}
                1 & 0 \\
                0 & 0 \\
              \end{array}
            \right)U(t)^{-1}\right]X(0)  \\
 =&U(t)\left(
              \begin{array}{cc}
                0 & 0 \\
                0 & e^{t\lambda(t)}\\
              \end{array}
            \right)
            U(t)^{-1}X(0),
   \end{split}
\end{equation*}
where $\lambda(t)$ is $B(t)$'s nonzero eigenvalue. By a similar
argument to $\lambda(\beta)$, $\lambda(t)$ is also real.
 Therefore,
$$-\lambda(t)=\Re(-\lambda(t))=-\Re(\lambda(t))\geq \Lambda>0$$
or $\lambda(t)\leq -\Lambda<0$, where $\Lambda$ is defined in
Corollary~\ref{corollary:ChFA-Lambda(beta(t))-LowerBound}. Then
\begin{equation*}
\begin{split}
  \|X(t)-\hat{X}(t)\|
 =&\left\|U(t)\left(
              \begin{array}{cc}
                0 & 0 \\
                0 & e^{t\lambda(t)}\\
              \end{array}
            \right)U(t)^{-1}X(0) \right\|\\
\leq & \|U(t)\|
         \left\|\left(
              \begin{array}{cc}
                0 & 0 \\
                0 & e^{t\lambda(t)}\\
              \end{array}
            \right)\right\|\|U(t)^{-1}\|\|X(0)\|\\
\leq & K^2\|X(0)\|e^{t\lambda(t)}\\
\leq & K^2\|X(0)\|e^{-t\Lambda}.
   \end{split}
\end{equation*}
Here we have used the norm property: $\|AB\|\leq \|A\|\|B\|$. Since
$\Lambda>0$, we have
$\displaystyle\lim_{t\rightarrow\infinity}\|X(t)-\hat{X}(t)\|=0$.
\end{proof}

\textbf{2: An lower-bound estimation on population in local
derivatives}\\

In the following, we will prove that there exists $T$, such that
$\hat{x}_1(t)\geq cM$ for any $t>T$.  We first define a function
$$
     h(\beta)=(h_1(\beta),h_2(\beta))^T=g(\beta)\left(
                               \begin{array}{cc}
                                 1 & 0 \\
                                 0 & 0 \\
                               \end{array}
                             \right)g^{-1}(\beta)\left(
                               \begin{array}{c}
                                 x_1(0)\\
                                 x_2(0)\\
                               \end{array}
                             \right).
$$
Clearly, we have $\hat{X}(t)=h(\beta(t))$ and
$\hat{x}_1(t)=h_1(\beta(t))$. Since $\beta(t)\in[0,1]$ for all $t$,
the following proposition can imply $\hat{x}_1(t)\geq cM$ .

\begin{proposition}\label{proposition:ChFA-Fact2-x1-LowerBound}
There exists $c>0$ such that
\begin{equation*}
      \inf_{\beta\in[0,1]}h_1(\beta)\geq cM.
\end{equation*}
where $M=x_1(0)+x_2(0)$, $c$ is independent of $M$.
\end{proposition}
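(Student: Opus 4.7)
The plan is to exhibit $h_1(\beta)$ as $M$ times a strictly positive continuous function of $\beta$ on the compact interval $[0,1]$, and then extract the uniform lower bound by compactness. The decisive observation is that every column of $f(\beta)$ in~(\ref{eq:ChFA-local-f(beta)}) sums to zero (reflecting the conservation law for the component type $X$), so $(1,1)$ is a left null eigenvector of $f(\beta)$. Since $\lambda(\beta)\neq 0$ by Lemma~\ref{lemma:ChFA-lambda(beta)geq0}, $f(\beta)$ has a one-dimensional null space, and the rank-one spectral projector onto this null space along the eigenspace of $\lambda(\beta)$ must take the form $w(\beta)(1,1)$, where $w(\beta)$ is the null right eigenvector of $f(\beta)$ normalised so that $w_1(\beta)+w_2(\beta)=1$. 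Applying this projector to $X(0)$ yields $h(\beta)=(x_1(0)+x_2(0))\,w(\beta)=M\,w(\beta)$, so in particular $h_1(\beta)=M\,w_1(\beta)$.

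It then suffices to show that $w_1(\beta)\geq c>0$ uniformly in $\beta\in[0,1]$. For $\beta\in(0,1]$ the matrix $f(\beta)^T$ is the infinitesimal generator of an irreducible two-state CTMC, so its stationary distribution $w(\beta)$ has strictly positive entries by standard Markov chain theory. At the boundary $\beta=0$ a direct eigenspace computation from $f(0)$ gives $w(0)=(1,0)^T$, so $w_1(0)=1>0$. Continuity of $w_1$ on $[0,1]$ follows from the continuity of the eigenvectors of a matrix with a simple eigenvalue as functions of the matrix entries, by the same reasoning used in the proof of Proposition~\ref{proposition:ChFA-X(t)-hatX(t)-UserProvider-NotExpl} to bound $g(\beta)$ and $g^{-1}(\beta)$. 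Hence $w_1$ is a continuous, strictly positive function on the compact set $[0,1]$, and setting $c=\min_{\beta\in[0,1]}w_1(\beta)>0$ completes the argument.

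The main obstacle is the degeneracy at $\beta=0$: the CTMC interpretation of $f(\beta)^T$ breaks down there, so the positivity of $w_1$ at and near $\beta=0$ must be argued separately rather than inherited from irreducibility. This subtlety is manageable because Lemma~\ref{lemma:ChFA-lambda(beta)-UniformLowerBound} already guarantees that $\lambda(\beta)$ stays uniformly bounded away from zero, so the spectral decomposition varies continuously all the way up to and including $\beta=0$. An alternative route that avoids the case split is to apply the Perron--Frobenius theorem directly to the nonnegative matrix $I+\frac{1}{m}f(\beta)^T$ for $m$ sufficiently large, whose Perron eigenvector coincides with $w(\beta)$ and has strictly positive entries for every $\beta\in[0,1]$.
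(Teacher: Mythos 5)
Your proof is correct, and its skeleton is the same as the paper's: establish continuity of the relevant quantity on the compact interval $[0,1]$, prove pointwise positivity by treating $\beta>0$ (irreducible two-state CTMC, strictly positive stationary distribution) and $\beta=0$ (explicit computation of the null eigenvector) separately, and conclude by compactness. What you do differently, and what is a genuine improvement in presentation, is the opening algebraic step: you observe that the columns of $f(\beta)$ sum to zero, so $(1,1)$ is a left null vector, and hence the spectral projector $g(\beta)\,\mathrm{diag}(1,0)\,g^{-1}(\beta)$ equals $w(\beta)(1,1)$ with $w(\beta)$ the normalised right null eigenvector; this gives $h(\beta)=M\,w(\beta)$ at once. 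The paper instead reaches the same identification dynamically, by showing that $h(\beta_0)$ is the $t\to\infinity$ limit of $Z(t)=e^{tf(\beta_0)}X(0)$ and then recognising that limit as the stationary distribution (for $\beta_0>0$) or solving the equilibrium equations together with the conservation constraint (for $\beta_0=0$). Your route makes the independence of $c$ from $M$ --- and from the split of $M$ between $x_1(0)$ and $x_2(0)$ --- completely transparent, where the paper hides it in a ``without loss of generality $M=1$''. The paper also first locates the minimiser $\beta_0$ and only then proves positivity there; you prove positivity everywhere and then minimise, which is equivalent.

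One caveat: your closing ``alternative route that avoids the case split'' does not work as stated. At $\beta=0$ the matrix $I+\frac{1}{m}f(0)^T$ is a reducible stochastic matrix (one state is absorbing), so Perron--Frobenius only yields a nonnegative, not strictly positive, eigenvector --- and indeed $w(0)=(1,0)^T$ has a zero entry. So the case split at $\beta=0$ cannot be avoided this way; fortunately your main argument handles it correctly by computing $w_1(0)=1$ directly, and only the positivity of the \emph{first} coordinate is needed.
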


\begin{proof} Without loss of generality, we assume $M=1$.
We will show $\displaystyle \inf_{\beta\in[0,1]}h_1(\beta)=c>0$.
Since $h_1(\beta)$ is a continuous function of $\beta$ which is due
to the continuity of $g(\beta)$ and $g^{-1}(\beta)$, $h_1(\beta)$
can achieve its minimum  on $[0,1]$. That is, there exists  $
\beta_0\in[0,1]$, such that
 $$ h_1(\beta_0)=\inf_{\beta\in[0,1]}h_1(\beta)=c.$$  Consider the matrix
$$f(\beta_0)=\left(
         \begin{array}{cc}
           -a\beta_0 & b \\
           a\beta_0 & -b \\
         \end{array}
       \right)$$
and a set of linear ODEs
\begin{equation}\label{eq:ChFA-local-linear-new}
  \left(
    \begin{array}{c}
      \frac{\mathrm{d}z_1}{\mathrm{d}t} \\
      \frac{\mathrm{d}z_2}{\mathrm{d}t} \\
    \end{array}
  \right)=f(\beta_0)
        \left(
    \begin{array}{c}
      z_1 \\
      z_2 \\
    \end{array}
  \right).
\end{equation}
The solution of (\ref{eq:ChFA-local-linear-new}), given an initial
value  $Z(0)=X(0)=(x_1(0),x_2(0))^T$, is \linebreak
$Z(t)=e^{tf(\beta_0)}X(0)$.

According to (\ref{eq:ChFA-Model-UserProvider-Diagonal-B(t)}),
$f(\beta_0)$ can be diagonalised as
$$
      f(\beta_0)=g(\beta_0)\left(
         \begin{array}{cc}
           0 & 0 \\
           0 &\lambda(\beta_0) \\
         \end{array}
       \right)g^{-1}(\beta_0).
$$
where $\lambda(\beta_0)$ is the nonzero and real eigenvalue of
$f(\beta_0)$. Thus
\begin{equation}\label{eq:ChFA-local-3}
\begin{split}
 Z(t)=e^{tf(\beta_0)}X(0)&=g(\beta_0)\left(
              \begin{array}{cc}
                1 & 0 \\
                0 & e^{t\lambda(\beta_0)} \\
              \end{array}
            \right)g(\beta_0)^{-1}\left(\begin{array}{c}
                x_1(0)   \\
                x_2(0) \\
              \end{array}\right).
\end{split}
 \end{equation}
Because $\lambda(\beta_0)<0$ by
Lemma~\ref{lemma:ChFA-lambda(beta)geq0}, so as time goes to
infinity,
\begin{equation}\label{eq:ChFA-local-4}
\begin{split}
 Z(t)&=g(\beta_0)\left(
              \begin{array}{cc}
                1 & 0 \\
                0 & e^{t\lambda(\beta_0)} \\
              \end{array}
            \right)g(\beta_0)^{-1}\left(\begin{array}{c}
                x_1(0)   \\
                x_2(0) \\
              \end{array}\right)\\
              &\longrightarrow
              g(\beta_0)\left(
              \begin{array}{cc}
                1 & 0 \\
                0 & 0 \\
              \end{array}
            \right)g(\beta_0)^{-1}\left(\begin{array}{c}
                x_1(0)   \\
                x_2(0) \\
              \end{array}\right)=h(\beta_0).
\end{split}
 \end{equation}
That is, $\displaystyle\lim_{t\rightarrow\infinity}Z(t)=h(\beta_0)$.
In the following, we discuss two possible cases: $\beta_0>0$ and
$\beta_0=0$.

\par  If $\beta_0>0$, then
the transpose of the matrix $f(\beta_0)$, i.e.\;$f(\beta_0)^T$, is
 an infinitesimal generator of an irreducible CTMC, which has two states and the
transition rates between these two states are $a\beta_0$ and $b$
respectively. Moreover, the transient distribution of this CTMC,
denoted by $Z(t)=(z_1(t),z_2(t))^T$, satisfies the ODEs
(\ref{eq:ChFA-local-linear-new}). As time goes to infinity, the
transient distribution $Z(t)$ converges to the unique steady-state
probability distribution. Since
$\displaystyle\lim_{t\rightarrow\infinity}Z(t)=h(\beta_0)$,
therefore $h(\beta_0)=(h_1(\beta_0),h_2(\beta_0))^T$ is the
steady-state probability distribution and thus $h_1(\beta_0)>0$. So
$\inf_{\beta\in[0,1]}h_1(\beta)=h_1(\beta_0)>0$.

\par If $\beta_0=0$, then
$$f(\beta_0)=\left(
         \begin{array}{cc}
           0 & b \\
           0 & -b \\
         \end{array}
       \right).$$
Since $\displaystyle\lim_{t\rightarrow\infinity}Z(t)=h(\beta_0)$,
therefore
$\displaystyle\left(\frac{\mathrm{d}z_1}{\mathrm{d}t},\frac{\mathrm{d}z_2}{\mathrm{d}t}\right)^T$
converges to zero. Letting time go to infinity on the both sides of
(\ref{eq:ChFA-local-linear-new}), we obtain the following
equilibrium equations,
\begin{equation}\label{eq:ChFA-local-100-01}
  \left(
    \begin{array}{c}
      0 \\
     0 \\
    \end{array}
  \right)=f(\beta_0)
        \left(
    \begin{array}{c}
      h_1(\beta_0) \\
      h_2(\beta_0) \\
    \end{array}
  \right).
\end{equation}
By the conservation law, $(h_1(\beta_0)+ h_2(\beta_0))^T=M=1$.
Therefore, $(h_1(\beta_0), h_2(\beta_0))^T$ satisfies
\begin{equation}\label{eq:ChFA-local-100}
       \left\{\begin{array}{c}
                f(\beta_0)(h_1(\beta_0), h_2(\beta_0))^T=0,\\
                h_1(\beta_0)+h_2(\beta_0)=M=1.
              \end{array}\right.
\end{equation}
Solving (\ref{eq:ChFA-local-100}), we obtain the unique solution
$(h_1(\beta_0), h_2(\beta_0))^T=(1,0)^T$. Therefore, $h_1(\beta_0)$
is one, and thus $\inf_{\beta\in[0,1]}h_1(\beta)=h_1(\beta_0)>0$.
\end{proof}

\begin{remark}
As $\beta$ tends to $0$,
$$f(\beta)=\left(
         \begin{array}{cc}
           -a\beta & b \\
           a\beta & -b \\
         \end{array}
       \right)
\longrightarrow f(0)=\left(
         \begin{array}{cc}
           0 & b \\
           0 & -b \\
         \end{array}
       \right).
$$
Correspondingly, for the equilibrium $h(\beta)
=(h_1(\beta),h_2(\beta))^T =\left(\frac{bM}{a\beta+b},\frac{a\beta
M}{a\beta+b}\right)^T$ satisfying $f(\beta)h(\beta)=0$ and
$h_1(\beta)+h_2(\beta)=M$, we have $
 \left(\frac{bM}{a\beta+b},
 \frac{a\beta M}{a\beta+b}\right)^T
 \rightarrow (M,0)^T
$ as $\beta$ tends to zero. From the explicit expression, i.e.
$h_1(\beta)=\frac{bM}{a\beta+b}$, the minimum and maximum of
$h_1(\beta)$ are $\frac{bM}{a+b}$ and $M$ respectively, which
correspond to the matrices $f(1)$ and $f(0)$ respectively. In the
context of the PEPA model, $f(1)$ corresponds to a free subsystem
and there is no synchronisation effect on it, i.e. the subsystem of
component type $X$  is independent of $Y$. The matrix $f(0)$
reflects that the subsystem of $X$ has been influenced by the
subsystem of $Y$, i.e. the rates of shared activities are determined
by $Y$, that is, the term $a\min\{x_1,y_1\}$ has been replaced by
$ay_1$. Therefore the exit rates from the local derivative $X_1$
correspondingly become smaller since now $ay_1<ax_1$. In order to
balance the flux, which is described by the equilibrium equation,
the population of $X_1$ must increase. That is why the equilibrium
$h_1(\beta)$ increases as $\beta$ decreases. In short,
synchronisations  can increase the populations in syncrhonised local
derivatives in the steady state.

\par As an application of the above facts, if $h_1(\beta_0)>0$ for
some $\beta_0>0$, then we can claim that $h_1(0)>0$ because
$h_1(0)\geq h_1(\beta_0)>0$.
\end{remark}

\par Obviously, Proposition~\ref{proposition:ChFA-Fact2-x1-LowerBound}
has a corollary:
\begin{corollary}\label{corollary:ChFA-hat(x1)>cM}
There exists $c>0$ such that for any $t\in[0,\infinity)$,
$\hat{x}_1(t)\geq cM.$
\end{corollary}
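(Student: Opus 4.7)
The plan is to derive this corollary as an immediate consequence of Proposition~\ref{proposition:ChFA-Fact2-x1-LowerBound}, using the bridging identity already established between $\hat{x}_1(t)$ and $h_1$. Recall from the construction just before the proposition that $\hat{X}(t)=h(\beta(t))$ and in particular $\hat{x}_1(t)=h_1(\beta(t))$. So the corollary is really just the statement that a pointwise bound on $h_1$ over $[0,1]$ transfers to a uniform-in-$t$ bound on $\hat{x}_1$.

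First I would recall that for every $t\geq 0$, $\beta(t)=\frac{1}{t}\int_0^t \alpha(s)\,ds$ lies in $[0,1]$, since $0\leq \alpha(s)\leq 1$; this was noted right after formula~(\ref{eq:ChFA-locla-B(t)}). Then, applying Proposition~\ref{proposition:ChFA-Fact2-x1-LowerBound}, we obtain a constant $c>0$ (independent of $M$) such that
\begin{equation*}
   h_1(\beta) \geq cM \quad \text{for all } \beta\in[0,1].
\end{equation*}
Substituting $\beta=\beta(t)$ and using $\hat{x}_1(t)=h_1(\beta(t))$ yields $\hat{x}_1(t)\geq cM$ for every $t\in[0,\infty)$, which is precisely the claim.

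There is essentially no obstacle: the corollary is a pure restriction of the infimum bound to the trajectory $\{\beta(t)\}_{t\geq 0}\subseteq[0,1]$. The substantive work has already been done in Proposition~\ref{proposition:ChFA-Fact2-x1-LowerBound} (continuity of $h_1$ on the compact interval $[0,1]$, attainment of the minimum, and the case analysis $\beta_0>0$ versus $\beta_0=0$ showing that the minimum value is strictly positive). The only thing worth emphasising in the write-up is that the constant $c$ is the same one produced by the proposition, so it does not depend on $t$ nor on the total population $M$; this is exactly what makes the bound uniform and usable in the subsequent argument (where it will be combined with Proposition~\ref{proposition:ChFA-X(t)-hatX(t)-UserProvider-NotExpl} to conclude $x_1(t)>y_1(t)$ eventually, provided $cM>N$).
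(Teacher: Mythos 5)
Your proposal is correct and is essentially the paper's own argument: the paper introduces $h$ precisely so that $\hat{X}(t)=h(\beta(t))$, notes $\beta(t)\in[0,1]$, and then presents the corollary as an immediate consequence of Proposition~\ref{proposition:ChFA-Fact2-x1-LowerBound}. Your write-up just makes the (trivial) substitution step explicit, with the same constant $c$ independent of $t$ and $M$.
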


\par Proposition
~\ref{proposition:ChFA-X(t)-hatX(t)-UserProvider-NotExpl} and
Proposition~\ref{proposition:ChFA-Fact2-x1-LowerBound} can lead to
the following lemma.
\begin{lemma}\label{lem:ChFA-x1(t)>cM}
There exists $c>0, T>0$, such that $x_1(t)\geq  cM$ for all $t\geq
T$.
\end{lemma}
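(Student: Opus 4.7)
The plan is to combine the two preceding results directly: Corollary~\ref{corollary:ChFA-hat(x1)>cM} provides a uniform lower bound $\hat{x}_1(t)\geq cM$ for all $t\geq 0$, while Proposition~\ref{proposition:ChFA-X(t)-hatX(t)-UserProvider-NotExpl} guarantees $\|X(t)-\hat{X}(t)\|\to 0$ as $t\to\infty$. Since $|x_1(t)-\hat{x}_1(t)|\leq \|X(t)-\hat{X}(t)\|$, one obtains $x_1(t)\geq \hat{x}_1(t) - \|X(t)-\hat{X}(t)\| \geq cM - \|X(t)-\hat{X}(t)\|$, and the problem reduces to showing that the error term can be made strictly smaller than $cM$ after some finite time $T$.

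First I would invoke the explicit bound from the proof of Proposition~\ref{proposition:ChFA-X(t)-hatX(t)-UserProvider-NotExpl}, namely $\|X(t)-\hat{X}(t)\|\leq K^2\|X(0)\|e^{-t\Lambda}$, where $K$ and $\Lambda>0$ are the constants arising there (the uniform bound on $\|g(\beta)\|$ and $\|g^{-1}(\beta)\|$ on $[0,1]$, and the positive lower bound on $-\Re(\lambda(t))$ from Corollary~\ref{corollary:ChFA-Lambda(beta(t))-LowerBound}). Because $\|X(0)\|\leq c_0 M$ for some absolute constant $c_0$ (the initial populations sum to $M$), this gives $\|X(t)-\hat{X}(t)\|\leq K^2 c_0 M e^{-t\Lambda}$.

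Next I would choose $T$ so that $K^2 c_0 e^{-T\Lambda}\leq c/2$, i.e.\ $T\geq \frac{1}{\Lambda}\log\!\left(\frac{2K^2 c_0}{c}\right)$. Crucially, this $T$ is \emph{independent} of $M$ because $M$ cancels in the inequality. For every $t\geq T$ I then get
\begin{equation*}
    x_1(t)\;\geq\; \hat{x}_1(t) - \|X(t)-\hat{X}(t)\| \;\geq\; cM - \tfrac{c}{2}M \;=\; \tfrac{c}{2}M.
\end{equation*}
Setting $c':=c/2$ yields the conclusion $x_1(t)\geq c'M$ for all $t\geq T$, which (after relabelling $c'$ as $c$) is exactly the statement of the lemma.

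There is essentially no obstacle left, since the two key ingredients have already been established: the approximation result and the strictly positive lower bound on the equilibrium direction. The only point worth checking carefully is that the constant $c$ and the decay rate $\Lambda$ do not themselves depend on $M$ (they depend only on the rate constants $a,b$ of the model and the shape of the matrix family $f(\beta)$ on $[0,1]$), so that $T$ can indeed be taken independent of the population $M$. This independence is what makes the lemma useful in the subsequent argument where one compares $x_1(t)$ with the bounded quantity $y_1(t)\leq N$ to conclude that $Q_2$ dominates the system after time $T$.
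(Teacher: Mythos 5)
Your proof is correct and follows essentially the same route as the paper: combine the uniform lower bound $\hat{x}_1(t)\geq cM$ from Corollary~\ref{corollary:ChFA-hat(x1)>cM} with the approximation $\|X(t)-\hat{X}(t)\|\to 0$ from Proposition~\ref{proposition:ChFA-X(t)-hatX(t)-UserProvider-NotExpl}. The only difference is that you unfold the explicit bound $K^2\|X(0)\|e^{-t\Lambda}\leq K^2c_0Me^{-t\Lambda}$ to make $T$ independent of $M$, a refinement the paper's $\epsilon$-argument does not bother to extract (and does not need for the subsequent comparison with $y_1(t)\leq N$), but which is a legitimate and slightly sharper conclusion.
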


\begin{proof}
By Proposition~\ref{proposition:ChFA-Fact2-x1-LowerBound} or
Corollary~\ref{corollary:ChFA-hat(x1)>cM}, there exists $c_1, T_1>0$
such that $\hat{x}_1(t)\geq c_1M$ for any $t>T_1$. By
Proposition~\ref{proposition:ChFA-X(t)-hatX(t)-UserProvider-NotExpl},
$\lim_{t\rightarrow\infinity}|x_1(t)-\hat{x}_1(t)|=0$, which implies
that for any $\epsilon$, there exists $T_2>0$ such that for any
$t>T_2$, $x_1(t)>\hat{x}_1(t)-\epsilon$. Choose $T_2>T_1$, then we
have
$$x_1(t)>\hat{x}_1(t)-\epsilon\geq c_1M-\epsilon.$$  Therefore, there
exist $c,T>0$ such that such that $x_1(t)\geq cM$ for all $t>T$.
\end{proof}

Because $x_1(t)\geq cM$, provided $cM>N$ we  have $x_1(t)\geq
cM>N\geq y_1(t)$, i.e., the system will finally become linear. In
the following we will show how to apply our method to more general
cases.

\subsection{Convergence for two component types and one
synchronisation (II): general case}

This section deals with such an arbitrary PEPA model which has two
component types and one synchronisation. The local action rates of
the shared activity are not assumed to be the same. The main result
of this section is a convergence theorem: as long as the population
of one component type is sufficiently larger than the population of
the other, then the solution of the derived ODEs converges as time
tends to infinity.

\subsubsection{Features of coefficient matrix}

We assume the component types to be $X$ and $Y$. The component type
$X$ is assumed to have local derivatives $X_1,X_2,\cdots,X_m$, while
$Y$ has local derivatives $Y_1,Y_2,\cdots,Y_n$. We use $x_i(t)$ to
denote the population of $X$ in $X_i\;(i=1,\cdots,m)$ at time $t$.
Similarly, $y_j(t)$ denotes the population of $Y$ in
$Y_j\;(j=1,\cdots,n)$ at time $t$. Without loss of generality, we
assume  the synchronisation is associated with the local derivatives
$X_1$ and $Y_1$, i.e. the nonlinear term in the derived ODEs is
$\min\{rx_1(t),sy_1(t)\}$ where $r$ and $s$ are some constants. In
fact, if the synchronisation is associated with $X_i$ and $Y_j$, by
appropriately permuting their suffixes, i.e. $i\rightarrow
1,\;i+1\rightarrow 2,\;\cdots\;, i-1\rightarrow m$, $j\rightarrow
1,\;j+1\rightarrow 2,\;\cdots\;, j-1\rightarrow n$, the
synchronisation will be associated with $X_1$ and $Y_1$. According
to the mapping semantics presented previously, the derived ODEs from
this class of PEPA model are
\begin{equation}\label{eq:ChFA-TwoType-OneSync-ODEs}
\begin{split}
\frac{\mathrm{d}\mathbf{x}}{\mathrm{d}t}=\sum_{l}lf(\mathbf{x},l)
\end{split}
\end{equation}
where $\mathbf{x}=(x_1(t),\cdots,x_m(t),y_1(t),\cdots,y_n(t))^T$.
 For convenience, we denote
$$X(t)=\left(x_1(t),x_2(t),\cdots,x_m(t)\right)^T,$$
$$Y(t)=\left(y_1(t),y_2(t),\cdots,y_n(t)\right)^T.$$

In~(\ref{eq:ChFA-TwoType-OneSync-ODEs}) all terms are linear except
for those containing ``$\min\{rx_1(t),sy_1(t)\}$''. Notice
 $$
\min\{rx_1(t),sy_1(t)\}=I_{\{rx_1(t)\leq
sy_1(t)\}}rx_1(t)+I_{\{rx_1(t)> sy_1(t)\}}sy_1(t).
$$
When $rx_1(t)\leq sy_1(t)$, which is indicated by $I_{\{rx_1(t)\leq
sy_1(t)\}}=1$ and $I_{\{rx_1(t)> sy_1(t)\}}=0$, we can replace
$\min\{rx_1(t),sy_1(t)\}$  by $rx_1(t)$
in~(\ref{eq:ChFA-TwoType-OneSync-ODEs}). Then
(\ref{eq:ChFA-TwoType-OneSync-ODEs}) becomes linear since all
nonlinear terms are replaced by linear terms $rx_1(t)$, so the ODEs
have the following form,
\begin{equation}\label{loc11}
\left(
  \begin{array}{c}
   \frac{\mathrm{d}X}{\mathrm{d}t} \\
  \frac{\mathrm{d}Y}{\mathrm{d}t}\\
  \end{array}
\right)=Q_1 \left(
  \begin{array}{c}
   X \\
  Y\\
  \end{array}
\right),
\end{equation}
where $Q_1$ is a coefficient matrix. Similarly, if $rx_1(t)>
sy_1(t)$,  $\min\{rx_1(t),sy_1(t)\}$ can be replaced by $sy_1(t)$
in~(\ref{eq:ChFA-TwoType-OneSync-ODEs}). Then
(\ref{eq:ChFA-TwoType-OneSync-ODEs}) can become
\begin{equation}
\left(
  \begin{array}{c}
   \frac{\mathrm{d}X}{\mathrm{d}t} \\
  \frac{\mathrm{d}Y}{\mathrm{d}t}\\
  \end{array}
\right)=Q_2 \left(
  \begin{array}{c}
   X \\
  Y\\
  \end{array}
\right),
\end{equation}
where $Q_2$ is another coefficient matrix corresponding to the case
of $rx_1(t)>sy_1(t)$.

\par In short, the derived ODEs (\ref{eq:ChFA-TwoType-OneSync-ODEs}) are just the following
\begin{equation}\label{eq:ChFA-TwoType-OneSync-Q1Q2}
\left(
  \begin{array}{c}
   \frac{\mathrm{d}X}{\mathrm{d}t} \\
  \frac{\mathrm{d}Y}{\mathrm{d}t}\\
  \end{array}
\right)=I_{\{rx_1\leq sy_1\}}Q_1 \left(
  \begin{array}{c}
   X \\
  Y\\
  \end{array}
\right)+I_{\{rx_1>sy_1\}}Q_2 \left(
  \begin{array}{c}
   X \\
  Y\\
  \end{array}
\right).
\end{equation}
The case discussed in the previous section is a special case of this
kind of form. If the conditions $rx_1(t)\leq sy_1(t)$ and
$rx_1(t)>sy_1(t)$ occur alternately, then the matrices $Q_1$ and
$Q_2$ will correspondingly alternately dominate the system, as
Figure~\ref{fig:ChFA-Q1-Q2-Alternative} illustrates.

\begin{figure}[htb]
\begin{center}
  \includegraphics[width=8cm]{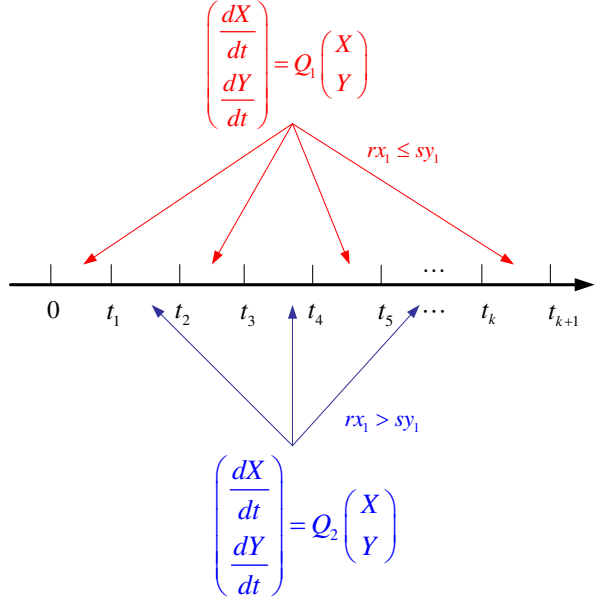}\\
  \caption{Illustration of derived ODEs with component types and one synchronisation}
  \label{fig:ChFA-Q1-Q2-Alternative}
  \end{center}
\end{figure}

Similar to the cases discussed in the previous two sections, the
convergence problem of (\ref{eq:ChFA-TwoType-OneSync-Q1Q2}) can be
divided into two subproblems, i.e. to examine whether the following
two properties hold:
\begin{enumerate}
  \item There exists a time $T$, such that either $x_1\leq y_1,\;\forall t>T$
  or $x_1\leq y_1,\;\forall t>T$.
  \item The eigenvalues of $Q_1$ and $Q_2$ other than zeros have
  negative real parts.
\end{enumerate}
The first item  can guarantee (\ref{eq:ChFA-TwoType-OneSync-Q1Q2})
to eventually have a constant linear form, while the second item
ensures the convergence of the bounded solution of the linear ODEs.
If the answers to these two problems are both positive, then the
convergence of the solution of (\ref{eq:ChFA-TwoType-OneSync-Q1Q2})
will hold. The study of these two problems are discussed in the next
two subsections. In the remainder of this subsection, we first
investigate the structure property of the coefficient matrices $Q_1$
and $Q_2$ in (\ref{eq:ChFA-TwoType-OneSync-Q1Q2}).

\par The structure of the coefficient matrices $Q_1$
and $Q_2$ is determined by the following two propositions, which
indicate that they are either block lower-triangular or block
upper-triangular.
\begin{proposition}\label{proposition: ChFQ-Q1-structure}
$Q_1$ in (\ref{eq:ChFA-TwoType-OneSync-Q1Q2})  can be written as
\begin{equation}
Q_1=\left(
  \begin{array}{cc}
    \hat{Q}_1 & 0 \\
    W & V_{n\times n} \\
  \end{array}
\right)_{(m+n)\times(m+n)},
\end{equation}
where $\hat{Q}_1^T$ is an infinitesimal generator matrix with the
dimension ${m\times m}$, and
\begin{equation}
W_{n\times m}=\left(
  \begin{array}{cccc}
               w_{11} & 0 & \cdots & 0 \\
               w_{21} & 0 & \cdots & 0 \\
               \vdots & \vdots & \vdots & \vdots \\
               w_{n1} & 0 & \cdots & 0
             \end{array}
\right),
\end{equation}
where $w_{11}<0$, $w_{j1}(j=2,\cdots,n)\geq0$ and
$\sum_{j=1}^nw_{j1}=0$. Here $V$ and $W$ satisfy that if we let
\begin{equation}
P=\left(W_1+V_1,V_2,\cdots,V_n\right),
\end{equation}
i.e. $P$'s first column is the sum of $V$'s first column and $W$'s
first column, while $P$'s other columns are the same to $V$'s other
columns, then $P^T$ is also an infinitesimal generator matrix.
\end{proposition}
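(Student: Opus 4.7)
The plan is to unpack the right-hand side of the ODE~(\ref{eq:ChFA-TwoType-OneSync-ODEs}) under the hypothesis $rx_1 \le sy_1$, sort the labelled activities into three classes (individual-to-$X$, individual-to-$Y$, and the unique synchronisation), and read off the contribution each class makes to the coefficient matrix $Q_1$. In the regime $rx_1\le sy_1$, the $\min$ in Definition~\ref{def:TransitionRateFunction} collapses, so $f(\mathbf{x},\alpha^{w})=c\,r x_1$ for a constant $c$ determined by the product of rate ratios. Consequently every $\mathrm{d}x_i/\mathrm{d}t$ is a linear combination of $x_1,\ldots,x_m$ alone, and every $\mathrm{d}y_j/\mathrm{d}t$ is a linear combination of $x_1$ together with $y_1,\ldots,y_n$. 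This immediately gives the block form $Q_1=\bigl(\begin{smallmatrix}\hat{Q}_1 & 0\\ W & V\end{smallmatrix}\bigr)$ with $W$ supported on its first column.

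The second step is to check that $\hat{Q}_1^T$ is an infinitesimal generator. Since the synchronised activity, in this regime, influences the $X$-derivatives exactly like an individual $X$-activity of rate $rx_1$, the $X$-block of the system is closed and behaves as the coefficient matrix of a purely non-synchronised $X$-flow. The argument of Proposition~\ref{pro:ChFP-NonSynQmatrix} then applies verbatim: off-diagonal non-negativity comes from the pre/post structure in Definition~\ref{definition:LabelledAcitvity}, and the vanishing column sums follow from Proposition~\ref{pro:ChFP-ODEsConservationLaw} applied to the $X$-component.

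The third step records the shape of $W$. Only column~$1$ is nonzero because $x_1$ is the only $x$-variable appearing in the $Y$-equations. The sign pattern follows from the activity matrix of Definition~\ref{definition: JieActivityMatrix}: $Y_1\in\mathrm{pre}(\alpha)$ contributes $w_{11}<0$, each $Y_j$ ($j\ge 2$) is either in $\mathrm{post}(\alpha)$ (giving $w_{j1}>0$) or is untouched by $\alpha$ (giving $w_{j1}=0$); the identity $\sum_j w_{j1}=0$ is the special case, for the $Y$-block of the synchronisation's column of the activity matrix, of the equality of the number of $-1$'s and $+1$'s established in the proof of Proposition~\ref{pro:ChFP-ODEsConservationLaw}.

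Finally, to prove that $P^T$ is an infinitesimal generator I would apply Proposition~\ref{pro:ChFP-NonSynQmatrix} again, this time to the $Y$-dynamics that retains only the individual $Y$-activities, which gives that $V^T$ itself is an infinitesimal generator. Columns $2,\ldots,n$ of $P$ are inherited unchanged from $V$, so they sum to zero and have non-negative off-diagonal entries. For column~$1$ of $P$, the sum is $\sum_j w_{j1}+\sum_j V_{j1}=0+0=0$; off-diagonal non-negativity holds since, for $j\neq 1$, both $w_{j1}$ and $V_{j1}$ are non-negative; and the diagonal entry $w_{11}+V_{11}$ is non-positive because both summands are. The main obstacle is purely bookkeeping: translating Definitions~\ref{def:TransitionRateFunction} and \ref{definition: JieActivityMatrix} into the entry-by-entry statements about $\hat{Q}_1$, $V$, and $W$, and being careful that the conservation argument is applied separately to the $X$-block, the $Y$-block of the synchronisation, and the individual $Y$-activities. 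All the substantive tools — linearisation of the $\min$, the conservation law, and the generator structure of non-synchronised flows — are already available from earlier in the paper.
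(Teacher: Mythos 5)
Your proposal is correct and follows essentially the same route as the paper: linearise the $\min$ under $rx_1\le sy_1$ to get the block-triangular form with $W$ supported on its first column, invoke Proposition~\ref{pro:ChFP-NonSynQmatrix} for the decoupled $X$-block, and read the sign pattern and zero column sum of $W_1$ off the pre/post structure and the conservation law. The only cosmetic difference is in the last step: the paper obtains $P^T$ as a generator in one stroke by substituting $y_1$ for $x_1$ and recognising $PY$ as a de-synchronised $Y$-flow, whereas you reach the same conclusion by first noting $V^T$ is a generator and then verifying the three generator conditions for column $W_1+V_1$ entry by entry; both rest on the same lemma and are equally valid.
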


\begin{proof}
Let
\begin{equation}
  Q_1=\left(
       \begin{array}{cc}
        \hat{Q}_1 & U\\
        W & V
       \end{array}
     \right),
\end{equation}
where $\hat{Q}_1$ and $V$ are $m\times m$ and $n\times n$ matrices
respectively. Suppose $rx_1(t)\leq sy_1(t)$, then
\begin{equation}\label{eq:ChFA-ProProof0}
\begin{split}
\left(
  \begin{array}{c}
   \frac{\mathrm{d}X}{\mathrm{d}t} \\
  \frac{\mathrm{d}Y}{\mathrm{d}t}\\
  \end{array}
\right) =Q_1 \left(
  \begin{array}{c}
   X \\
  Y\\
  \end{array}
\right)=\left(
       \begin{array}{cc}
        \hat{Q}_1 & U\\
        W & V
       \end{array}
     \right)\left(
  \begin{array}{c}
   X \\
  Y\\
  \end{array}
\right).
\end{split}
\end{equation}
So we have
 \begin{equation}\label{eq:ChFA-ProProof}
                \frac{\mathrm{d}X}{\mathrm{d}t}=\hat{Q}_1X+UY.
 \end{equation}
The condition $rx_1(t)\leq sy_1(t)$ implies that all nonlinear terms
$\min\{rx_1(t),sy_1(t)\}$ can be replaced by $x_1(t)$. This means
that the behaviour of the component type $X$ in
(\ref{eq:ChFA-ProProof0}) and (\ref{eq:ChFA-ProProof}) is
independent of the component type $Y$. Thus
in~(\ref{eq:ChFA-ProProof0}) $U$ must be a zero matrix, i.e.
\begin{equation*}
  Q_1=\left(
       \begin{array}{cc}
        \hat{Q}_1 & 0\\
        W & V
       \end{array}
     \right).
\end{equation*}
Moreover, (\ref{eq:ChFA-ProProof}) becomes
\begin{equation}\label{PropProof2}
                \frac{\mathrm{d}X}{\mathrm{d}t}=\hat{Q}_1X,
 \end{equation}
that is, there is no synchronisation in the ODEs corresponding to
the component type $X$ given $rx_1(t)\leq sy_1(t)$. Then by
Proposition~\ref{pro:ChFP-NonSynQmatrix}, $\hat{Q}_1^T$ is an
infinitesimal generator.

\par According to~(\ref{eq:ChFA-ProProof0}),
\begin{equation}\label{PropProof3}
 \begin{split}
         \frac{\mathrm{d}Y}{\mathrm{d}t}&=WX+VY\\
           &=(W_1,W_2,\cdots,W_m)(x_1,x_2,\cdots,x_m)^T+VY\\
           &=x_1W_1+VY+\sum_{i=2}^mx_iW_i,
 \end{split}
 \end{equation}
where $W=(W_1,W_2,\cdots,W_m)$. Notice that the component type $Y$
is synchronised with the component type $X$ only through the term
$\min\{rx_1(t),sy_1(t)\}=x_1(t)$. In other words,
in~(\ref{PropProof3}) $Y$ is directly dependent on only $x_1$ other
than $x_i\;(i\geq 2)$. This implies $W_1\neq 0$ while
$W_i=0\;(i=2,3,\cdots,m)$. Therefore,
\begin{equation}\label{PropProof4}
 \begin{split}
         \frac{\mathrm{d}Y}{\mathrm{d}t}&=x_1W_1+VY=x_1W_1+\sum_{j=1}^ny_jV_j,
 \end{split}
 \end{equation}
where $V_j\;(j=1,2,\cdots,n)$ are the columns of $V$. Denote
$W_{1}=(w_{11},w_{21},\cdots,w_{n1})^T$. Notice that $Y_1$ is a pre
local derivative of the shared activity, and $x_1w_{11}$ represents
the exit rates of the shared activity from $Y_1$. Therefore,
$w_{11}<0$. Moreover, $x_1w_{j1}$ $(j=2,\cdots,n)$ are the
synchronised entry rates for the local derivatives
$Y_j\;(j=2,\cdots,n)$ respectively, so
$w_{j1}\geq0\;(j=2,\cdots,n)$. By the conservation law, the total
synchronised exit rates are equal to the total synchronised entry
rates, i.e. $x_1\sum_{j=1}^nw_{j1}=0$ or $\sum_{j=1}^nw_{j1}=0$.

\par We have known that $x_1$ in~(\ref{PropProof4}) derives from the
synchronised term $\min\{rx_1,sy_1\}$. If the effect of the
synchronisation  on the behaviour of $Y$ is removed, i.e. recover
$y_1$ by replacing $x_1$, then (\ref{PropProof4}) will become
\begin{equation}\label{PropProof5}
 \begin{split}
         \frac{\mathrm{d}Y}{\mathrm{d}t}&=y_1W_1+VY=y_1W_1+\sum_{j=1}^ny_jV_j=PY,
 \end{split}
 \end{equation}
where $P=(W_1+V_1, V_2, \cdots, V_n)$. Since there is no
synchronisation contained in the subsystem of the component type
$Y$, according to Proposition~\ref{pro:ChFP-NonSynQmatrix}, $P^T$ is
the infinitesimal generator.
\end{proof}

Similarly, we can prove
\begin{proposition}\label{proposition: ChFQ-Q2-structure}
$Q_2$ in (\ref{eq:ChFA-TwoType-OneSync-Q1Q2}) can be written as
\begin{equation}
Q_2=\left(
  \begin{array}{cc}
    E_{m\times m} & F \\
    0 & \hat{Q}_2 \\
  \end{array}
\right)_{(m+n)\times(m+n)},
\end{equation}
where $\hat{Q}_2^T$ is an infinitesimal generator matrix with the
dimension ${n\times n}$, and
\begin{equation}
F_{m\times n}=\left(
  \begin{array}{cccc}
               f_{11} & 0 & \cdots & 0 \\
               f_{21} & 0 & \cdots & 0 \\
               \vdots & \vdots & \vdots & \vdots \\
               f_{m1} & 0 & \cdots & 0
             \end{array}
\right),
\end{equation}
where $f_{11}<0$, $f_{j1}(j=2,\cdots,m)\geq0$ and
$\sum_{j=1}^mf_{j1}=0$. Here $F$ and $E$ satisfy that if we let
\begin{equation}
R=\left(F_1+E_1,E_2,\cdots,E_n\right),
\end{equation}
then $R^T$ is also an infinitesimal generator matrix.
\end{proposition}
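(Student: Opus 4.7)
The plan is to mirror, with the roles of $X$ and $Y$ swapped, the argument used for Proposition on $Q_1$. The starting point is that $Q_2$ governs the ODEs in the regime $rx_1 > sy_1$, where the nonlinear term $\min\{rx_1,sy_1\}$ reduces to the linear term $sy_1$. Write $Q_2$ in block form
\[
Q_2 = \begin{pmatrix} E_{m\times m} & F_{m\times n} \\ G_{n\times m} & \hat{Q}_{2,\,n\times n} \end{pmatrix},
\]
so that $\tfrac{\mathrm{d}X}{\mathrm{d}t}=EX+FY$ and $\tfrac{\mathrm{d}Y}{\mathrm{d}t}=GX+\hat{Q}_2 Y$. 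I will establish in turn: (i) $G=0$ and $\hat{Q}_2^T$ is an infinitesimal generator; (ii) $F$ has only a first non-zero column with the stated sign pattern; (iii) $R^T$ is an infinitesimal generator.

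For step (i), since the only place where $X$-variables appear in the $Y$-equations is through the synchronized term, and in the regime $rx_1 > sy_1$ this term equals $sy_1$, the $Y$-subsystem contains no $x_i$, so $G=0$ and $\tfrac{\mathrm{d}Y}{\mathrm{d}t}=\hat{Q}_2 Y$. These ODEs for $Y$ are derived from a PEPA model with no remaining synchronisation (the $sy_1$ substitution has already removed it), so Proposition~\ref{pro:ChFP-NonSynQmatrix} applies directly and $\hat{Q}_2^T$ is an infinitesimal generator. For step (ii), in the $X$-equations $Y$ enters only via the synchronized term, which is $sy_1$; hence only $y_1$ appears in $FY$. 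Writing $F=(F_1,F_2,\dots,F_n)$ we therefore have $F_j=0$ for $j\ge 2$ and $F_1=(f_{11},\dots,f_{m1})^T$. By the semantics, $f_{11}$ is the per-unit coefficient of an exit rate from the synchronised pre-local derivative $X_1$, so $f_{11}<0$; the $f_{j1}$ for $j\ge 2$ are entry-rate coefficients for the post-local derivatives $X_j$, so $f_{j1}\ge 0$; and the conservation of component type $X$ along each firing of the shared action forces $\sum_{j=1}^m f_{j1}=0$.

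Step (iii) is the key structural observation and parallels the construction of $P$ in the $Q_1$ proof. The $X$-equations can be written as $\tfrac{\mathrm{d}X}{\mathrm{d}t}=EX+y_1 F_1$, where the single variable $y_1$ carries the entire influence of the synchronisation on $X$. ``Removing the synchronisation'' on the $X$-subsystem amounts to replacing the driver $y_1$ by $x_1$, giving $\tfrac{\mathrm{d}X}{\mathrm{d}t}=EX+x_1 F_1 = RX$ with $R=(E_1+F_1,E_2,\dots,E_m)$ (I read the $E_n$ in the statement as a typo for $E_m$). The resulting system describes the component type $X$ in a hypothetical unsynchronised PEPA model, so Proposition~\ref{pro:ChFP-NonSynQmatrix} applies to it and $R^T$ is an infinitesimal generator, as required.

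The main obstacle, and the only non-mechanical point, is step (iii): one must justify that substituting $x_1$ for $y_1$ in the single column $F_1$ faithfully recovers the ``$X$-alone'' subsystem, so that Proposition~\ref{pro:ChFP-NonSynQmatrix} may legitimately be invoked. This is handled exactly as in the symmetric argument for $P$ in the $Q_1$ proof: the column $F_1$ was built from the per-firing effect of the shared action on $X$, which does not depend on whether $X$ or $Y$ supplies the rate, so putting the $x_1$ multiplier in front of $F_1$ simply adds $F_1$ to the existing $x_1$-column $E_1$ of $E$, leaving the remaining columns unchanged. Every other assertion in the proposition reduces to routine sign and conservation checks on the activity matrix, analogous to those already carried out in the proof of Proposition on $Q_1$.
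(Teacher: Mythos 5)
Your proof is correct and is exactly the argument the paper intends: the paper omits the proof of this proposition with the remark ``Similarly, we can prove,'' and your write-up is the faithful role-swapped adaptation of the proof given for the $Q_1$ case, including the appeal to Proposition~\ref{pro:ChFP-NonSynQmatrix} for both $\hat{Q}_2^T$ and $R^T$ and the substitution of $x_1$ for $y_1$ to form $R$. Your reading of $E_n$ as a typo for $E_m$ is also the right call, since $E$ is $m\times m$.
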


\subsubsection{Eigenvalues of coefficient matrix}
 In this subsection, we will determine the eigenvalue property
of $Q_1$ and $Q_2$. First, the Perron-Frobenius theorem  gives an
estimation of eigenvalues for  nonnegative matrices.
\begin{theorem}\label{thm:ChFA-Perron-Frobenius Theorem}(\textbf{Perron-Frobenius}).
Let $A=(a_{ij})$ be a real $n\times n$ matrix with nonnegative
entries $a_{ij}\geq0$. Then the following statements hold:
\begin{enumerate}
  \item There is a real eigenvalue $r$ of $A$ such that any other
  eigenvalue $\lambda$ satisfies $|\lambda|\leq r$.
  \item $r$ satisfies $\min\limits_i\sum_{j}a_{ij}\leq r\leq
  \max\limits_i\sum_{j}a_{ij}$.
  \end{enumerate}
\end{theorem}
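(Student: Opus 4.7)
The plan is to handle part 1 by perturbing $A$ to a strictly positive matrix, extracting a dominant eigenvalue there via Brouwer's fixed-point theorem, and passing to the limit; part 2 will then drop out of a row-sum argument applied directly to $A$ via Gelfand's formula, with no perturbation needed.

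For part 1, I would set $A_\epsilon = A + \epsilon J$ where $J$ is the $n \times n$ all-ones matrix and $\epsilon > 0$, so every entry of $A_\epsilon$ is strictly positive. On the simplex $\Delta = \{x \in \mathbb{R}^n : x_i \geq 0,\ \sum_i x_i = 1\}$ the map $x \mapsto A_\epsilon x / \|A_\epsilon x\|_1$ is continuous, and Brouwer's theorem produces a fixed point $x_\epsilon$ giving a nonnegative right eigenvector with eigenvalue $r_\epsilon = \|A_\epsilon x_\epsilon\|_1 > 0$. Strict positivity of $A_\epsilon$ then promotes $x_\epsilon$ to a strictly positive vector. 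For any other eigenvalue $\lambda$ of $A_\epsilon$ with eigenvector $y$, the componentwise triangle inequality applied to $\lambda y = A_\epsilon y$ gives $|\lambda|\,|y| \leq A_\epsilon |y|$; pairing with $x_\epsilon$, viewed as a strictly positive \emph{left} eigenvector of $A_\epsilon^T$ with eigenvalue $r_\epsilon$, pinches this to $|\lambda| \leq r_\epsilon$, so $r_\epsilon$ is the spectral radius of $A_\epsilon$. Finally, pass to the limit along a convergent subsequence $x_{\epsilon_k} \to x_0 \in \Delta$, $r_{\epsilon_k} \to r$: then $A x_0 = r x_0$ exhibits $r$ as a real nonnegative eigenvalue of $A$, and continuity of the roots of $\det(tI - A_\epsilon)$ in $\epsilon$ transports the dominance property to $A$.

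For part 2, let $m = \min_i \sum_j a_{ij}$, $M = \max_i \sum_j a_{ij}$, and write $\mathbf{1}$ for the all-ones column vector. By definition $m\mathbf{1} \leq A\mathbf{1} \leq M\mathbf{1}$ componentwise, and since $A \geq 0$ preserves componentwise inequalities under left multiplication, an easy induction gives $m^k \mathbf{1} \leq A^k \mathbf{1} \leq M^k \mathbf{1}$ for every $k \geq 1$. Nonnegativity of $A^k$ also yields the identity $\|A^k\|_\infty = \|A^k \mathbf{1}\|_\infty$, so $m^k \leq \|A^k\|_\infty \leq M^k$. Gelfand's formula $r = \lim_{k \to \infty} \|A^k\|_\infty^{1/k}$, together with the identification of $r$ with $\rho(A)$ from part 1, delivers $m \leq r \leq M$ at once.

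The main obstacle is the limit step in part 1: one must show that the dominance $|\mu| \leq r_\epsilon$ valid for every $\mu \in \sigma(A_\epsilon)$ survives $\epsilon \to 0$ for \emph{every} eigenvalue of $A$, not just for the subsequential limit of the Perron eigenvalues themselves. The correct tool is continuity of the spectrum as a multiset in $\mathbb{C}$ under perturbation of the matrix entries, a standard consequence of continuity of roots of monic polynomials with continuously varying coefficients: each eigenvalue $\lambda$ of $A$ is a limit of a family $\lambda_\epsilon \in \sigma(A_\epsilon)$, and $|\lambda_\epsilon| \leq r_\epsilon$ passes to the limit. A minor companion point is that $r$ is independent of the chosen subsequence, which is automatic once it is identified with $\rho(A)$ through the row-sum bound of part 2.
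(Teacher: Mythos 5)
The paper never proves this theorem: it is quoted as the classical Perron--Frobenius theorem and used as a black box in the proof of Theorem~\ref{thm:ChFA-Q1Q2-Eigen-NegativeRealParts} and in Lemma~\ref{lemma:ChFA-lambda(beta)geq0}, so there is no in-paper argument to compare against. Your proposal is the standard Brouwer-plus-perturbation proof, and part 2 (the componentwise induction $m^k\mathbf{1}\leq A^k\mathbf{1}\leq M^k\mathbf{1}$ combined with $\|A^k\|_\infty=\|A^k\mathbf{1}\|_\infty$ and Gelfand's formula) is complete and correct. But one step in part 1 fails as written.

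The problem is the dominance step. From $A_\epsilon y=\lambda y$ you correctly obtain $|\lambda|\,|y|\leq A_\epsilon|y|$ componentwise, and you then ``pair with $x_\epsilon$.'' However, $x_\epsilon$ satisfies $A_\epsilon x_\epsilon=r_\epsilon x_\epsilon$, i.e.\ $x_\epsilon^TA_\epsilon^T=r_\epsilon x_\epsilon^T$: it is indeed a left eigenvector of $A_\epsilon^T$, exactly as you say, but that is the same thing as a right eigenvector of $A_\epsilon$, whereas the pairing needs a strictly positive \emph{left} eigenvector of $A_\epsilon$ itself. Taking the inner product of the displayed inequality with $x_\epsilon$ gives $|\lambda|\langle x_\epsilon,|y|\rangle\leq\langle A_\epsilon^Tx_\epsilon,|y|\rangle$, and $A_\epsilon^Tx_\epsilon$ is not $r_\epsilon x_\epsilon$ unless $A_\epsilon$ is symmetric, so nothing pinches. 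Two routine repairs: (i) run the same Brouwer argument on $A_\epsilon^T$ to produce $z_\epsilon>0$ with $z_\epsilon^TA_\epsilon=\rho_\epsilon z_\epsilon^T$, observe $\rho_\epsilon=r_\epsilon$ by pairing $z_\epsilon$ against $A_\epsilon x_\epsilon=r_\epsilon x_\epsilon$ (the inner product $z_\epsilon^Tx_\epsilon$ is strictly positive), and then pair $z_\epsilon$ with the componentwise inequality; or (ii) avoid left eigenvectors entirely by choosing the least $c>0$ with $|y|\leq c\,x_\epsilon$, iterating to $|\lambda|^k|y|\leq A_\epsilon^k|y|\leq c\,r_\epsilon^k x_\epsilon$, and letting $k\to\infinity$ after taking $k$-th roots. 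With either fix the remainder of your argument --- compactness of $\Delta$ for the limit $\epsilon\to0$, continuity of the spectrum to transport $|\lambda_\epsilon|\leq r_\epsilon$ to every eigenvalue of $A$, and the identification $r=\rho(A)$ --- goes through.
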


\begin{remark}\label{remark:ChFA-PF-theorem}
We should point out that in the second property, exchanging $i$
 and $j$ in $a_{ij}$ in the formula,
  we still have
 $\min\limits_i\sum_{j}a_{ji}\leq r\leq
  \max\limits_i\sum_{j}a_{ji}$. In fact, $A^T$ is also a real matrix with non-negative entries.
  Since $A^T$ and $A$ share the same eigenvalues, so $r$ is one of
  the eigenvalues of $A^T$, such that  any other eigenvalue
  $\lambda$ of $A^T$ satisfies $|\lambda|\leq r$. Notice $(A^T)_{ij}=A_{ji}$,
  By applying the Perron-Frobenius theorem to $A^T$,
   we have $$\min\limits_i\sum_{j}a_{ji}\leq r\leq
  \max\limits_i\sum_{j}a_{ji}.$$
\end{remark}

We cannot directly apply this theorem to our coefficient matrices
$Q_1$ and $Q_2$, since both of them have negative elements, not only
on the diagonal but also in other entries. However, we use  some
well-known techniques in linear algebra, i.e. the following Lemma
\ref{lemma:ChFA-Eigen_Lemma_TrangleMatrix} and
\ref{lemma:ChFA-Eigen_Lemma_V+rI} (which can be easily found in
linear algebra textbooks), to cope with this problem, and thus
derive estimates of their eigenvalues.

\begin{lemma}\label{lemma:ChFA-Eigen_Lemma_TrangleMatrix}
If $E_{m\times m}$ and $F_{n\times n}$ have eigenvalues
$\lambda_i\;(i=1,2,\cdots,m)$ and $\delta_j$\linebreak
$(j=1,2,\cdots,n)$ respectively, then each of
$$H_1=\left(
  \begin{array}{cc}
    E & 0 \\
    G & F \\
  \end{array}
\right),\;  H_2=\left(
  \begin{array}{cc}
    E & G \\
    0 & F \\
  \end{array}
\right)
$$ has eigenvalues $\lambda_i\;(i=1,2,\cdots,m)$ and $
\delta_j\;(j=1,2,\cdots,n)$.
\end{lemma}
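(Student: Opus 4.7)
The plan is to prove the result by computing the characteristic polynomial of each of $H_1$ and $H_2$ and showing that it factors as the product of the characteristic polynomials of $E$ and $F$. Once this factorisation is established, the multiset of roots of the characteristic polynomial of $H_i$ is exactly the union (with multiplicity) of the roots of $\det(\lambda I_m - E)$ and $\det(\lambda I_n - F)$, i.e.\ the eigenvalues $\lambda_1,\dots,\lambda_m$ of $E$ together with the eigenvalues $\delta_1,\dots,\delta_n$ of $F$.

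More concretely, for $H_1$ I would write
\begin{equation*}
   \lambda I_{m+n} - H_1 = \left(\begin{array}{cc} \lambda I_m - E & 0 \\ -G & \lambda I_n - F \end{array}\right),
\end{equation*}
and then invoke the standard determinant identity for block triangular matrices, namely that for any square blocks $A$ and $D$ and any rectangular $C$ of compatible size,
\begin{equation*}
\det\left(\begin{array}{cc} A & 0 \\ C & D \end{array}\right) = \det(A)\det(D).
\end{equation*}
Applying this to the displayed matrix gives $\det(\lambda I_{m+n}-H_1)=\det(\lambda I_m-E)\det(\lambda I_n-F)$, which yields the claim for $H_1$. The argument for $H_2$ is symmetric, using the identity $\det\bigl(\begin{smallmatrix} A & C \\ 0 & D\end{smallmatrix}\bigr)=\det(A)\det(D)$; alternatively one can observe that $H_2^T$ has the lower-triangular block form already handled, and that $H_2$ and $H_2^T$ share the same characteristic polynomial.

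The only step that needs any genuine justification is the block-triangular determinant identity itself, and this is standard: one proof proceeds by Laplace expansion along the first $m$ columns (for $H_1$, the zero block forces all surviving minors to come from choosing rows $1,\dots,m$ in the upper block and rows $m+1,\dots,m+n$ in the lower block), and another uses elementary row operations to clear the off-diagonal block $G$ without changing the determinant (valid whenever $A$ is invertible, and then extended to singular $A$ by continuity of the determinant). I do not expect any obstacle here — the result is a textbook fact in linear algebra — so the lemma follows immediately once the characteristic polynomial factorisation is in hand.
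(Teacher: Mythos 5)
Your proof is correct: the factorisation $\det(\lambda I_{m+n}-H_i)=\det(\lambda I_m-E)\det(\lambda I_n-F)$ via the block-triangular determinant identity is exactly the standard argument. The paper itself gives no proof of this lemma — it is cited as a textbook fact in linear algebra — so your write-up simply supplies the routine justification the authors chose to omit.
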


\begin{lemma}\label{lemma:ChFA-Eigen_Lemma_V+rI}
If $\lambda$ is an eigenvalue of $V$, then $\lambda+r$ is an
eigenvalue of $V+rI$, where $r$ is a scalar.
\end{lemma}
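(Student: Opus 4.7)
The plan is to give a direct eigenvector-based proof, since this is the cleanest route for such an elementary statement. First I would use the definition of eigenvalue: by hypothesis there exists a nonzero vector $v$ with $Vv = \lambda v$. The key computation is then to apply $V + rI$ to this same vector and use linearity together with $Iv = v$. This yields $(V+rI)v = Vv + rv = \lambda v + rv = (\lambda+r)v$, and since $v \neq 0$, the scalar $\lambda + r$ qualifies as an eigenvalue of $V + rI$ (with $v$ itself serving as a witnessing eigenvector).

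As a sanity check, I would also mention the characteristic-polynomial viewpoint: $\det\bigl((V+rI) - \mu I\bigr) = \det\bigl(V - (\mu-r)I\bigr)$, so $\mu$ is a root of the characteristic polynomial of $V+rI$ exactly when $\mu - r$ is a root of the characteristic polynomial of $V$. This shows that the spectrum of $V + rI$ is precisely the spectrum of $V$ shifted by $r$, which is a slightly stronger statement than what is required but confirms the direction we need.

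There is no real obstacle here: the lemma is a one-line consequence of the linearity of matrix multiplication and the definition of eigenvalue. The only thing to be careful about is ensuring that the eigenvector $v$ is genuinely nonzero, which is simply inherited from the assumption that $\lambda$ is an eigenvalue of $V$. I would keep the proof to two or three lines, presenting the eigenvector calculation as the main argument. This lemma will then be combined with Lemma~\ref{lemma:ChFA-Eigen_Lemma_TrangleMatrix} and the Perron-Frobenius theorem in the subsequent analysis of the eigenvalues of $Q_1$ and $Q_2$, by writing $Q_i$ (or its relevant diagonal blocks) as a shift of a nonnegative matrix and then reading off the location of the spectrum in the complex plane.
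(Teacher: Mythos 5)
Your proof is correct and is the standard textbook argument; the paper itself omits a proof for this lemma, noting only that it "can be easily found in linear algebra textbooks," and your eigenvector computation $(V+rI)v = (\lambda+r)v$ is exactly that standard argument. Nothing further is needed.
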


\begin{theorem}\label{thm:ChFA-Q1Q2-Eigen-NegativeRealParts}
The eigenvalues of both $Q_1$ and $Q_2$  are either zeros or have
negative real parts.
\end{theorem}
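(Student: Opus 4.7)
The plan is to reduce the spectral analysis of $Q_1$ and $Q_2$ to their diagonal blocks, and then to exploit a standard Perron--Frobenius shift trick. By Proposition~\ref{proposition: ChFQ-Q1-structure} the matrix $Q_1$ is block lower-triangular with diagonal blocks $\hat{Q}_1$ and $V$, and Lemma~\ref{lemma:ChFA-Eigen_Lemma_TrangleMatrix} then identifies the spectrum of $Q_1$ with the disjoint union of the spectra of these two blocks. Proposition~\ref{proposition: ChFQ-Q2-structure} together with the same lemma reduces the analysis of $Q_2$ to that of $E$ and $\hat{Q}_2$. Hence it suffices to show that every eigenvalue of each of the four matrices $\hat{Q}_1, V, E, \hat{Q}_2$ is either zero or has strictly negative real part.

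The key intermediate lemma I would establish is the following: if $M$ is a real square matrix whose transpose is an infinitesimal generator (non-positive diagonal, non-negative off-diagonal, vanishing column sums of $M$), then every eigenvalue $\lambda$ of $M$ satisfies either $\lambda=0$ or $\Re(\lambda)<0$. To prove it, set $r=\max_i(-M_{ii})\ge 0$; then $M+rI$ is entrywise non-negative and every column of $M+rI$ sums to $r$. Theorem~\ref{thm:ChFA-Perron-Frobenius Theorem}, applied in the transposed form noted in Remark~\ref{remark:ChFA-PF-theorem}, bounds the Perron root of $M+rI$ by $r$, and Lemma~\ref{lemma:ChFA-Eigen_Lemma_V+rI} translates this back to $|\lambda+r|\le r$ for every eigenvalue $\lambda$ of $M$. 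Any point of this closed disk with $\Re(\lambda)=0$ must satisfy $r^2\ge|\lambda+r|^2=r^2+|\Im(\lambda)|^2$, which forces $\lambda=0$. This lemma applies immediately to $\hat{Q}_1$ and $\hat{Q}_2$, whose transposes are declared to be infinitesimal generators in Propositions~\ref{proposition: ChFQ-Q1-structure} and~\ref{proposition: ChFQ-Q2-structure}.

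The remaining, and really the only non-routine, task is to verify that $V^T$ and $E^T$ are also infinitesimal generators. The sign conditions on $V$ can be read off from the underlying PEPA semantics: for $i\ne j$ the entry $V_{ij}$ accumulates non-synchronised transition rates of the $Y$-component from $Y_j$ to $Y_i$ and is therefore non-negative, while $V_{jj}$ is a sum of negatives of exit rates and is therefore non-positive. The column-sum property is then obtained algebraically from the relation $P=(W_1+V_1,V_2,\cdots,V_n)$ given in Proposition~\ref{proposition: ChFQ-Q1-structure}: for $j\ge 2$ the $j$-th columns of $V$ and $P$ coincide and hence sum to zero because $P^T$ is an infinitesimal generator; for $j=1$ one computes $\sum_i V_{i1}=\sum_i P_{i1}-\sum_i w_{i1}=0-0=0$, invoking both $\sum_i P_{i1}=0$ and $\sum_i w_{i1}=0$ stated in the same proposition. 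An entirely analogous argument based on $R=(F_1+E_1,E_2,\cdots,E_m)$ in Proposition~\ref{proposition: ChFQ-Q2-structure} establishes the corresponding property for $E^T$. The hardest part of the whole proof is really this sign-and-column-sum bookkeeping for $V$ and $E$; once it is in hand, the Perron--Frobenius shift immediately closes the argument for both $Q_1$ and $Q_2$.
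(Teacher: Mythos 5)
Your proof is correct and follows essentially the same route as the paper's: block-triangular reduction of $Q_1$ and $Q_2$ to their diagonal blocks via the triangular-spectrum lemma, followed by the shift $M+rI$ and the Perron--Frobenius column-sum bound $|\lambda+r|\le r$, from which $\Re(\lambda)\le 0$ with equality forcing $\lambda=0$. The only differences are cosmetic: you derive the zero column sums of $V$ and $E$ explicitly from the stated relations with $P,W$ and $R,F$ where the paper simply asserts them from rate balance, and you package the shift argument as a reusable lemma applied to all four blocks rather than invoking it for $V$ and dispatching $\hat{Q}_1$ and the $Q_2$ case with a ``similarly''.
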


\begin{proof}
\par We only give the proof for $Q_1$'s case. By Proposition~\ref{proposition:
ChFQ-Q1-structure},
\begin{equation}
Q_1=\left(
  \begin{array}{cc}
    \hat{Q}_1 & 0 \\
    W & V_{n\times n} \\
  \end{array}
\right)_{(m+n)\times(m+n)}.
\end{equation}
According to Lemma~\ref{lemma:ChFA-Eigen_Lemma_TrangleMatrix}, if
all eigenvalues of $\hat{Q}_1$ and $V$ are determined, then the
eigenvalues of $Q_1$ will be determined. Let us consider $V$ first.

\par Notice that only diagonal elements of $V$ are possibly negative
 (which can be deduced from Proposition~\ref{proposition: ChFQ-Q1-structure}). Let
$r=\sup\limits_i|V_{ii}|>0$, then all the entries of $V+rI$ are
nonnegative. Let $\lambda$ be an arbitrary eigenvalue of $V$, then
by Lemma~\ref{lemma:ChFA-Eigen_Lemma_V+rI}, $\lambda+r$ is an
eigenvalue of $V+rI$.

\par Notice the sum of the elements of any column of
$V$ is zero (because the sum of entry rates equals to the sum of
exit rates), so $V$ has a zero eigenvalue with the corresponding
eigenvector $\mathbf{1}$, i.e. $V\mathbf{1}=\mathbf{0}$. Thus
$r=0+r$ is an eigenvalue of $V+rI$. Moreover,
$$
\min\limits_i\sum_{j}(V+rI)_{ji}=r=
  \max\limits_i\sum_{j}(V+rI)_{ji}.
$$
Applying the Perron-Frobenius theorem (Theorem
\ref{thm:ChFA-Perron-Frobenius Theorem}) and Remark
\ref{remark:ChFA-PF-theorem} to $V+rI$, so
\begin{equation}\label{eq:ChFA-Eigen_Inequality}
   |\lambda+r|\leq r.
\end{equation}
Let $\lambda=a+bi$, then (\ref{eq:ChFA-Eigen_Inequality}) implies
that $a\leq 0$, and if $a=0$ then $b=0$. In other words, $V$'s
eigenvalues are either zeros or have negative real parts.

\par Similarly, $\hat{Q}_1$'s
eigenvalues other than zeros have negative real parts. By
Lemma~\ref{lemma:ChFA-Eigen_Lemma_TrangleMatrix}, the eigenvalues of
$Q_1$ other than zeros have negative real parts. The proof is
complete.
\end{proof}

\subsubsection{Convergence theorem}
\label{subsec:ChFA-twoComType-OneSyn-ConvergenceTheorem} Now we deal
with another subproblem: whether or not after a long time, we always
have $rx_1\geq sy_1$ (or $rx_1< sy_1$). If the population of
 $X$ is significantly larger than the population of $Y$,
intuitively, there will finally be a greater number of $X$ in the
local derivative $X_1$, than the number of $Y$ in $Y_1$.  This will
lead to $rx_1>sy_1$.

\begin{lemma}\label{Lemma_LowerBound}
Under the assumptions in Section~\ref{Subsection_Features of
Matrix}, for the following ODEs
\begin{equation*}
\left(
  \begin{array}{c}
   \frac{\mathrm{d}X}{\mathrm{d}t} \\
  \frac{\mathrm{d}Y}{\mathrm{d}t}\\
  \end{array}
\right)=I_{\{rx_1\leq sy_1\}}Q_1 \left(
  \begin{array}{c}
   X \\
  Y\\
  \end{array}
\right)+I_{\{rx_1>sy_1\}}Q_2 \left(
  \begin{array}{c}
   X \\
  Y\\
  \end{array}
\right),
\end{equation*}
there exists $c_1>0, c_2>0,T>0$, such that $x_1(t)\geq c_1M$,
$y_1(t)\geq c_2N$ for any $t\geq T$, where  $c_1$ and $c_2$ are
independent of $M$ and $N$.
\end{lemma}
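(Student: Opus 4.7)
The plan is to extend the averaging/spectral argument used in Lemma~\ref{lem:ChFA-x1(t)>cM} to this general setting and to apply it symmetrically to both $x_1$ and $y_1$. Because there is only one shared activity, the unique nonlinear term in (\ref{eq:ChFA-TwoType-OneSync-ODEs}) is $\min\{rx_1(t), sy_1(t)\}$. I introduce the throttling factor
$$\alpha_X(t) = \begin{cases} \min\{rx_1(t), sy_1(t)\}/(rx_1(t)), & x_1(t) > 0, \\ 1, & x_1(t) = 0, \end{cases}$$
and $\alpha_Y(t)$ symmetrically (with $sy_1$ in the denominator); both lie in $[0,1]$. Substituting $\min\{rx_1, sy_1\} = \alpha_X(t)\,rx_1$ turns the $X$-block of (\ref{eq:ChFA-TwoType-OneSync-ODEs}) into a time-varying linear system $\frac{dX}{dt} = f_X(\alpha_X(t))\,X(t)$, where $f_X(\alpha) = f_{X,0} + \alpha f_{X,1}$ is affine in $\alpha$ and $f_X(1) = \hat{Q}_1$ from Proposition~\ref{proposition: ChFQ-Q1-structure}. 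Analogously $\frac{dY}{dt} = f_Y(\alpha_Y(t))\,Y(t)$ with $f_Y(1) = \hat{Q}_2$ from Proposition~\ref{proposition: ChFQ-Q2-structure}.

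First I would verify that $f_X(\alpha)^T$ is an infinitesimal generator for every $\alpha \in [0,1]$: scaling the shared activity's rate by $\alpha$ preserves the nonnegativity of off-diagonals and leaves column sums zero, so the argument of Proposition~\ref{pro:ChFP-NonSynQmatrix} carries over. Affineness of $f_X$ in $\alpha$ then lets me invoke the averaging trick of Section~\ref{section:Proof-Not-Rely-On-Explicit-Expressions}:
$$X(t) = e^{t B_X(t)} X(0), \qquad B_X(t) = f_X(\beta_X(t)), \qquad \beta_X(t) = \tfrac{1}{t}\int_0^t \alpha_X(s)\,ds \in [0,1].$$

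Next I would establish the uniform spectral-gap analogue of Lemma~\ref{lemma:ChFA-lambda(beta)-UniformLowerBound}: the nonzero eigenvalues of $f_X(\beta)$ have real parts bounded above by some $-\Lambda_X < 0$, uniformly for $\beta \in [0,1]$. Pointwise, the Perron--Frobenius argument of Theorem~\ref{thm:ChFA-Q1Q2-Eigen-NegativeRealParts} applied to the generator $f_X(\beta)^T$ yields $\Re(\lambda) < 0$ for every nonzero eigenvalue $\lambda$; uniformity then follows from continuity of the spectrum on the compact interval $[0,1]$, provided no additional zero eigenvalue emerges as $\beta$ varies. The conservation law forces exactly one zero eigenvalue at every $\beta$ (with constant left eigenvector $\mathbf{1}$), and under the standard PEPA well-formedness assumption that the singleton chain underlying $X$ is irreducible, no further zeros appear, so $\Lambda_X > 0$. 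This is the step I expect to demand the most care: in the limit $\beta \to 0$ the shared activity is effectively switched off and the modified chain could \emph{a priori} become reducible, so ruling out the emergence of extra zero eigenvalues is the main technical obstacle. With $\Lambda_X > 0$ in hand, the argument of Proposition~\ref{proposition:ChFA-X(t)-hatX(t)-UserProvider-NotExpl} yields $\|X(t) - \hat{X}(t)\| = O(e^{-\Lambda_X t})$, where $\hat{X}(t) = h_X(\beta_X(t))$ is the projection of $X(0)$ onto the zero-eigenvector of $f_X(\beta_X(t))$, normalised so that $\sum_i \hat{x}_i(t) = M$.

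Finally, following Proposition~\ref{proposition:ChFA-Fact2-x1-LowerBound}, I would prove $\inf_{\beta \in [0,1]} h_{X,1}(\beta) \geq c_1 M$ for some $c_1 > 0$ independent of $M$: by continuity of $h_X$ in $\beta$ and compactness of $[0,1]$, it suffices to show $h_{X,1}(\beta) > 0$ for every $\beta$. For $\beta > 0$, $f_X(\beta)^T$ generates an irreducible CTMC, whose unique steady-state distribution has strictly positive entries; for $\beta = 0$ one solves the singular equilibrium system $f_X(0) h_X(0) = 0$ together with $\sum_i h_{X,i}(0) = M$ and, as in the $\beta_0 = 0$ case of Proposition~\ref{proposition:ChFA-Fact2-x1-LowerBound}, verifies that $h_{X,1}(0) > 0$. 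Combining this lower bound with the exponential approximation yields $x_1(t) \geq c_1 M$ for all $t \geq T_1$, and the strictly parallel argument applied to the $Y$-subsystem using $\alpha_Y$, $f_Y$ and $\hat{Q}_2$ gives $y_1(t) \geq c_2 N$ for all $t \geq T_2$, with $c_2$ independent of $N$. Setting $T = \max(T_1, T_2)$ completes the proof.
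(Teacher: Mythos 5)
Your proposal follows essentially the same route as the paper's proof, which is itself presented as a direct generalisation of Lemma~\ref{lem:ChFA-x1(t)>cM}: introduce the throttling factor $\alpha(t)$ and its time average $\beta(t)$, write $X(t)=e^{tB(t)}X(0)$, obtain a uniform spectral gap (via Theorem~\ref{thm:ChFA-Q1Q2-Eigen-NegativeRealParts} and the analogue of Corollary~\ref{corollary:ChFA-Lambda(beta(t))-LowerBound}), approximate $X(t)$ by the normalised zero-eigenvector $\hat{X}(t)$, bound its first entry below uniformly in $\beta\in[0,1]$, and argue symmetrically for $y_1$. The only point you elide is that the general $B(t)$ need not be diagonalisable, so the construction of $\hat{X}(t)$ and the decay estimate cannot be taken verbatim from Proposition~\ref{proposition:ChFA-X(t)-hatX(t)-UserProvider-NotExpl} but require the Jordan-form version the paper develops in~\ref{section:Appendix-Jordan-Form}; the simplicity of the zero eigenvalue at $\beta=0$, which you rightly flag as the delicate step, is handled there by a rank argument.
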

\begin{proof}
The proof is essentially the same as the proof of
Lemma~\ref{lem:ChFA-x1(t)>cM}. We only give the sketch of the proof
for $x_1(t)\geq c_1M$. By introducing new two functions $\alpha(t)$
and $\beta(t)$,
\begin{equation*}
\alpha(t)=\left\{\begin{array}{cc}
            \frac{\min\{rx_1(t),sy_1(t)\}}{rx_1(t)}, & x_1(t)\neq0, \\
            1, & x_1(t)=0,
          \end{array}\right.
\end{equation*}
$$
    \beta(t)=\frac1t\int_0^t \alpha(s)ds,
$$
the nonlinear term $\min\{rx_1(t),sy_1(t)\}$ equals
$r\alpha(t)x_1(t)$, and thus the ODEs associated with the subsystem
$X$ can be written as
\begin{equation}\label{eq:ChFA-local52}
\frac{\mathrm{d}X}{\mathrm{d}t}=A(t)X
\end{equation}
where $A(t)$ is related to $\alpha(t)$. The solution of
(\ref{eq:ChFA-local52}) is $X(t)=e^{tB(t)} X(0)$, where $B(t)$ is
defined by $\displaystyle B(t)=\frac1t\int_0^t A(s)ds$, and thus
$B(t)$ is related to $\beta(t)$.

Notice that according to
Theorem~\ref{thm:ChFA-Q1Q2-Eigen-NegativeRealParts} and its proof,
the eigenvalues of $B(t)$ other than zeros have negative real parts
for any $t>0$.  By a similar proof to
Corollary~\ref{corollary:ChFA-Lambda(beta(t))-LowerBound}, we have
\begin{equation}\label{eq:ChFA-General-TwoOne-Lambda-Estimation}
\Lambda=\inf_{t\geq0}\{-\Re(\lambda)\mid\lambda \;\mbox{is $B(t)$'s
non-zero eigenvalue}\}>0.
\end{equation}
This fact will lead to the conclusion that $X(t)$ can be
approximated by $\hat{X}(t)$, where $\hat{X}(t)$ is constructed
similarly to  the one in
Proposition~\ref{proposition:ChFA-X(t)-hatX(t)-UserProvider-NotExpl}.
Because a general $B(t)$ considered here may not be diagonalisable,
so the construction of $\hat{X}(t)$ is a little bit more
complicated. We detail the construction as well as  the proof of the
following result in~\ref{section:Appendix-Jordan-Form}: $$
           \lim_{t\rightarrow \infinity}\|X(t)-\hat{X}(t)\|=0.
$$
Then, by similar arguments to
Proposition~\ref{proposition:ChFA-Fact2-x1-LowerBound} and
Corollary~\ref{corollary:ChFA-hat(x1)>cM}, we can prove that
$\inf_{t>T}\hat{x}_1(t)\geq cM$, where $\hat{x}_1(t)$ is the first
entry of $\hat{X}(t)$. Then, by a similar proof to the proof of
Lemma~\ref{lem:ChFA-x1(t)>cM}, we can conclude that there exists a
number $c_1$ such that $x_1(t)\geq c_1M$ after a time $T$.
\end{proof}

\begin{lemma}\label{OnlyOneDominates}Under the assumptions of Lemma~\ref{Lemma_LowerBound},
if  $M>K_1N$ or $N>K_2M$, where constants $K_1>0$ and $K_2>0$ are
sufficiently large, then there exists $T$ such that $rx_1(t)\geq
sy_1(t),\;\forall t\geq T$ or  $rx_1(t)\leq sy_1(t),\;\forall t\geq
T$ respectively.
\end{lemma}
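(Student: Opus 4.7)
The plan is to read this as a direct corollary of Lemma~\ref{Lemma_LowerBound} combined with the conservation law established in Proposition~\ref{pro:ChFP-ODEsConservationLaw}. The strategy is to sandwich $x_1(t)$ and $y_1(t)$ between lower bounds proportional to the population sizes and upper bounds given by those same population sizes, and then force a strict comparison by choosing $K_1, K_2$ large enough.

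First, I would invoke Lemma~\ref{Lemma_LowerBound} to produce constants $c_1, c_2 > 0$ and a time $T_0 > 0$, independent of $M$ and $N$, such that
\begin{equation*}
    x_1(t) \geq c_1 M, \qquad y_1(t) \geq c_2 N, \qquad \forall t \geq T_0.
\end{equation*}
Next I would apply the conservation law (Proposition~\ref{pro:ChFP-ODEsConservationLaw}): since $\sum_{i=1}^m x_i(t) = \sum_{i=1}^m x_i(0) = M$ and $\sum_{j=1}^n y_j(t) = N$ for all $t$, together with the nonnegativeness from Theorem~\ref{thm:ChFA-AnalyticalProof-BoundedSolution}, we obtain the matching upper bounds $x_1(t) \leq M$ and $y_1(t) \leq N$ for all $t \geq 0$.

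Combining these two observations gives, for $t \geq T_0$,
\begin{equation*}
    r x_1(t) \geq r c_1 M \quad \text{and} \quad s y_1(t) \leq s N,
\end{equation*}
so choosing $K_1 \rmdef s/(r c_1)$ yields $r x_1(t) \geq r c_1 M > s N \geq s y_1(t)$ whenever $M > K_1 N$, which proves the first alternative with $T = T_0$. Symmetrically, from $s y_1(t) \geq s c_2 N$ and $r x_1(t) \leq r M$, setting $K_2 \rmdef r/(s c_2)$ delivers $s y_1(t) > r x_1(t)$ whenever $N > K_2 M$, proving the second alternative.

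The main (and in fact only) thing to verify carefully is that the constants $c_1$ and $c_2$ produced by Lemma~\ref{Lemma_LowerBound} are genuinely independent of the population sizes $M$ and $N$; this is exactly what is asserted there, since they arise from the uniform spectral estimate \eqref{eq:ChFA-General-TwoOne-Lambda-Estimation} and a minimisation of $h_1(\beta)$ over the compact interval $\beta \in [0,1]$, neither of which sees $M$ or $N$. No obstacle beyond this bookkeeping is expected, because the delicate analytical work (the uniform lower bound on the nonzero eigenvalues of $B(t)$, the approximation $X(t) \to \hat{X}(t)$, and the strict positivity of $\hat{x}_1$) has already been absorbed into Lemma~\ref{Lemma_LowerBound}; the present lemma simply converts those lower bounds into a strict comparison with $sy_1$ (resp.\ $rx_1$) via the trivial upper bound from the conservation law.
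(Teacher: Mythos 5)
Your proposal is correct and follows essentially the same route as the paper's own proof: lower bounds $x_1(t)\geq c_1M$, $y_1(t)\geq c_2N$ from Lemma~\ref{Lemma_LowerBound}, trivial upper bounds $x_1(t)\leq M$, $y_1(t)\leq N$ from boundedness/conservation, and then the choice $K_1\geq s/(rc_1)$ (resp.\ $K_2\geq r/(sc_2)$) to force the comparison. The only cosmetic difference is that you make the symmetric case and the role of the conservation law explicit, whereas the paper omits them as "similar".
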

\begin{proof}
  By the boundedness of solutions, $0\leq x_1(t)\leq M$ and $0\leq y_1(t)\leq N$ for any $t$.
  Suppose $M>K_1N$. By Lemma~\ref{Lemma_LowerBound}, there exists $c,T>0$, $x_1(t)\geq cM,\;\forall t\geq T$.
  Since $K_1$ is assumed to be large enough such that $K_1\geq \frac{s}{rc}$,
  then $rcM>sN$. So for any $t>T$, we have
  $$rx_1(t)\geq rcM\geq sN\geq sy_1(t).$$
If $N>K_2M$, the proof is similar and omitted here.
\end{proof}

\par  Now we state our convergence theorem.
\begin{theorem}
If $M>K_1N$ or $N>K_2M$, where constants $K_1, K_2>0$ are
sufficiently large, then the solution of the derived ODEs
(\ref{eq:ChFA-TwoType-OneSync-Q1Q2}), i.e.
\begin{equation*}
\left(
  \begin{array}{c}
   \frac{\mathrm{d}X}{\mathrm{d}t} \\
  \frac{\mathrm{d}Y}{\mathrm{d}t}\\
  \end{array}
\right)=I_{\{rx_1\leq sy_1\}}Q_1 \left(
  \begin{array}{c}
   X \\
  Y\\
  \end{array}
\right)+I_{\{rx_1>sy_1\}}Q_2 \left(
  \begin{array}{c}
   X \\
  Y\\
  \end{array}
\right),
\end{equation*}
 converges to a finite limit as time goes to
infinity.
\end{theorem}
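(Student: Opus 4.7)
The plan is to assemble this as a direct consequence of the three structural results already established, so almost no new work is required. First I would invoke Lemma~\ref{OnlyOneDominates}: under the hypothesis $M>K_1N$ (respectively $N>K_2M$) with $K_1,K_2$ large enough, there exists a time $T>0$ after which the sign of $rx_1(t)-sy_1(t)$ is fixed, so that on $[T,\infty)$ exactly one of the indicators $I_{\{rx_1\leq sy_1\}}$, $I_{\{rx_1>sy_1\}}$ is identically $1$. Consequently the nonlinear, piecewise-linear system (\ref{eq:ChFA-TwoType-OneSync-Q1Q2}) collapses on $[T,\infty)$ to the constant-coefficient linear system
\begin{equation*}
   \frac{\mathrm{d}}{\mathrm{d}t}\binom{X}{Y}=Q_i\binom{X}{Y},\qquad i\in\{1,2\},
\end{equation*}
with a single matrix $Q_i$ acting as coefficient for all $t\geq T$.

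Next I would bring in Theorem~\ref{thm:ChFA-Q1Q2-Eigen-NegativeRealParts}, which guarantees that every eigenvalue of $Q_1$ and $Q_2$ is either zero or has strictly negative real part. Combined with Theorem~\ref{thm:ChFA-AnalyticalProof-BoundedSolution}, which tells us that the whole trajectory $(X(t),Y(t))$ stays in the bounded box dictated by the conservation law (so in particular $(X(T),Y(T))$ is a finite initial datum for the linear flow and the solution on $[T,\infty)$ remains bounded), we are in precisely the setting of Corollary~\ref{Corollary D.2.1.}: a bounded solution of a linear constant-coefficient system whose matrix has spectrum contained in $\{0\}\cup\{\lambda:\Re\lambda<0\}$ must tend to a finite limit. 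Hence $\lim_{t\to\infty}(X(t),Y(t))$ exists.

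Concretely, the steps I would carry out in order are: (i) apply Lemma~\ref{OnlyOneDominates} to obtain $T$ and the dominating matrix $Q_i$; (ii) rewrite the solution on $[T,\infty)$ as $(X(t),Y(t))^T=\exp((t-T)Q_i)\,(X(T),Y(T))^T$; (iii) decompose this using the Jordan form of $Q_i$ into its kernel component (eigenvalue $0$) and its contracting component (eigenvalues with negative real part); (iv) observe that the contracting component decays exponentially while the kernel component is constant, so the limit is the projection of $(X(T),Y(T))^T$ onto $\ker Q_i$ along the stable subspace. Boundedness is needed here to rule out nontrivial polynomial growth inside any Jordan block attached to eigenvalue $0$, but this is exactly the content of Corollary~\ref{Corollary D.2.1.}.

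There is essentially no hard step left: the structural work was done in Lemma~\ref{Lemma_LowerBound} (which underlies Lemma~\ref{OnlyOneDominates}) and in Theorem~\ref{thm:ChFA-Q1Q2-Eigen-NegativeRealParts}. The only subtle point to flag, rather than an obstacle, is that the transition time $T$ depends on the initial data and on how strict the inequality $M>K_1N$ (or $N>K_2M$) is; this merely affects when convergence begins, not whether it occurs. Thus the theorem follows by combining Lemma~\ref{OnlyOneDominates}, Theorem~\ref{thm:ChFA-AnalyticalProof-BoundedSolution}, Theorem~\ref{thm:ChFA-Q1Q2-Eigen-NegativeRealParts} and Corollary~\ref{Corollary D.2.1.}.
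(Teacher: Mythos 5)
Your proposal is correct and follows essentially the same route as the paper's own proof: invoke Lemma~\ref{OnlyOneDominates} to fix the dominating matrix after a finite time $T$, then combine the eigenvalue property from Theorem~\ref{thm:ChFA-Q1Q2-Eigen-NegativeRealParts} with boundedness of the solution and Corollary~\ref{Corollary D.2.1.} to conclude convergence. The extra detail you give about the Jordan decomposition and the projection onto $\ker Q_i$ is already subsumed in Corollary~\ref{Corollary D.2.1.}, so nothing further is needed.
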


\begin{proof} Suppose $M>K_1N$, then by Lemma
\ref{OnlyOneDominates}, there exists a time $T>0$, such that after
time $T$, $rx_1(t)\geq sy_1(t)$, so
(\ref{eq:ChFA-TwoType-OneSync-Q1Q2}) becomes
\begin{equation}
\left(
  \begin{array}{c}
   \frac{\mathrm{d}X}{\mathrm{d}t} \\
  \frac{\mathrm{d}Y}{\mathrm{d}t}\\
  \end{array}
\right)=Q_2 \left(
  \begin{array}{c}
   X \\
  Y\\
  \end{array}
\right).
\end{equation}
Since $Q_2$'s eigenvalues other than zeros have strict negative real
parts according to
Theorem~\ref{thm:ChFA-Q1Q2-Eigen-NegativeRealParts}, and the
solution of the above equation is bounded, then by
Corollary~\ref{Corollary D.2.1.}, the solution converges to a finite
limit as time goes to infinity.
\end{proof}

\begin{remark}
We should point out that for a general PEPA model with two component
types and one synchronisation, the limit of the solution of the
derived ODEs is determined by the populations of these two component
types rather than the particular starting state. That is to say,
whatever the initial state is, as long as the total populations of
the components are the same, the limit that the solution converges
to will always be the same. We do not plan to discuss this topic in
detail in this paper.
\end{remark}

\section{Related work}

As we have mentioned in the introduction section, the fluid
approximation approach was first proposed by Hillston
in~\cite{Jane2} to deal with the state space explosion problem in
the context of PEPA. An interpretation as well as a justification of
this approximation approach has been demonstrated by Hayden  in his
dissertation~\cite{Richard-Undergraduate-Thesis}.
In~\cite{Richard-Undergraduate-Thesis,Richard-general-moment-2008},
generation of similar systems of coupled ODEs for higher-order
moments such as variance has been addressed. Additionally, the
dissertation~\cite{Richard-Undergraduate-Thesis} discusses how to
derive stochastic differential equations from PEPA models.

More recently, some extensions of the previous mapping from PEPA to
ODEs have been presented by Bradley \emph{et al}.
in~\cite{WormAttacks}. In particular, passive rates are introduced
into the fluid approximation. In the recent
paper~\cite{Richard-passive-fluid-semantics}, different existing
styles of passive cooperation in fluid models are compared and
intensively discussed. Moreover, a new passive fluid semantics for
passive cooperation, which can be viewed as approximating the first
moments of the component counting processes, has been provided, with
a theoretical justification. The
paper~\cite{SB-Improved-ContinuousApproximation-Epidemiological}
considers the application of this fluid approximation approach with
modifications in the context of epidemiology. In this paper, the
notions of side and self-loops are added to the activity matrix, and
the rates are calculated differently, for the purpose of deriving
from PEPA models the most commonly used ODEs in the context of
epidemiology. In~\cite{Mirco-2009} and \cite{Mirco-Fluid-Semantics}
by Tribastone, a new operational semantics is proposed to give a
compact symbolic representation of PEPA models. This semantics
extends the application scope of the fluid approximation of PEPA by
incorporating all the operators of the language and removing earlier
assumptions on the syntactical structure of the models amenable to
this analysis.

\par The fluid approximation approach has also been applied to timed Petri
nets to deal with the state space explosion
problem~\cite{Silva05-Contunuization-Timed-Petri-Nets,Silva06-continuous-Petri-nets}.
The comparison between the fluid approximation of PEPA models and
timed continuous Petri nets has been demonstrated by Galpin
in~\cite{Vashti08-ODEs-PEPA-Petri}. This paper has established links
between two continuous approaches to modelling the performance of
systems. In the paper, a translation from PEPA models to continuous
Petri nets and \emph{vice versa} has been presented. In addition, it
has been shown that the continuous approximation using PEPA has
infinite server semantics.  The fluid approximation approach has
also been used by Thomas to derive asymptotic solutions for a class
of closed queueing
networks~\cite{Nigel-ODEs-PEPA-Queueing-Networks}. In this paper, an
analytical solution to a class of models, specified using PEPA, is
derived through the ODE approach. It is shown that ``this solution
is identical to that used for many years as an asymptotic solution
to the mean value analysis of closed queueing networks''.

\par Moreover, the relationship between the fluid approximation and the
underlying CTMCs for a special PEPA model has been revealed by
Geisweiller \emph{et al}. in \cite{Jane4}: the ODEs derived from the
PEPA description are the limits of the sequence of underlying CTMCs.
It has been shown in~\cite{Gilmore2005} by Gilmore that for some
special examples the equilibrium points of the ODEs derived from
PEPA models coincide the steady-state probability distributions of
the CTMCs underlying the nonsynchronised PEPA models.

\par In addition, there are several papers which discuss how to derive
response time from the fluid approximation of PEPA models.
In~\cite{Richard-responsetimes-fluidanalysis}, by constructing an
absorption operator for the PEPA language, Bradley \emph{et al}.
allow general PEPA models to be analysed for fluid-generated
response times. Clark \emph{et al}. demonstrate
in~\cite{ResponseTimeODE} how to derive expected passage response
times using Little's law based on averaged populations of entities
in an equilibrium state. This technique has been generalised into
one for obtaining a full response-time profile computing the
probability of a component observing the completion of a response at
a given time after the initiation of the request,
see~\cite{Allan-Response-Time-ODEs}. Moreover, an error in the
passage specification in the approach taken
in~\cite{Richard-responsetimes-fluidanalysis} has been uncovered and
rectified in~\cite{Allan-Response-Time-ODEs} by Clark. The ODE
method associated with the PEPA language has demonstrated successful
application in the performance analysis of large scale
systems~\cite{Harrison-Ubiquitous,Zhaoyishi-Approximate,Zhaoyishi-Fluid,Zhaoyishi-Efficient,JieDing}.

\section{Conclusions}

In this paper, we have demonstrated how to derive the fluid
approximation from a general PEPA model via the numerical
representation of the PEPA language, which extended the current
mapping semantics of fluid approximations.  The fundamental
properties of the fluid approximation such as the existence and
uniqueness, boundedness and nonnegativeness of the solutions of the
derived ODEs have been established. Moreover, the convergence of the
solution under a particular condition for general models has been
verified. This particular condition relates some famous constants of
Markov chains such as the spectral gap and the Log-Sobolev constant.
For the models without synchronisations or with one synchronisation
and two component types, we have determined the convergence.
Furthermore, the consistency between the fluid approximation and the
CTMC in the context of PEPA has been revealed. If a model has no
synchronisations, then the derived ODEs are just the probability
distribution evolution equations of the underlying CTMC except for a
scaling factor. For any general PEPA model, the ODEs can be taken as
the corresponding density dependent CTMC with the concentration
level infinity. In addition, the coefficient matrices of some
derived ODEs were studied: their eigenvalues are either zeros or
have negative real parts.  The structural property of invariance has
been shown to play an important role in the proof of convergence for
some PEPA models. Due to the limitation of pages, we have not shown
how to derive performance measurers from the fluid approximation,
and have not given the numerical comparison between the fluid
approximation and the CTMC in terms of performance measures. For
more details, please refer to~\cite{JieThesis}.

\par In addition to the established results, this paper has also demonstrated
comprehensive techniques and methods to investigate the PEPA
language: not only both theoretical and experimental, probabilistic
and analytic, but also syntactic and numerical (investigating the
models based on the numerical representation of PEPA), qualitative
and quantitative (exploiting the structural property to verify the
convergence). These results and techniques are expected to have more
applications in performance modelling and evaluation of large scale
systems.

\section*{Acknowledgements}

Partial work of the first author was carried out in LFCS, School of
Informatics and IDCom, School of Engineering, The University of
Edinburgh,  when he was a PhD student and supported by the Mobile
VCE (www.mobilevce.com) Programme 4.

\bibliographystyle{elsarticle-num}
\bibliography{JieDing_August2010-NoDOI}

\appendix

\section{}\label{section:Appendix-Some-Theorems}

\begin{theorem}(\textbf{Kurtz theorem}~\cite{KurtzBook}, page 456).
Let $X_n$ be a family of density dependent CTMCs with the
infinitesimal generators
$$
     q^{(n)}_{k,k+l}=nf(k/n,l),
$$
where $f(x,l)\;(x\in E\subset R^h,\; l\in \mathbb{Z}^h)$ is a
continuous function, $k$ is a numerical state vector and $l$ is a
transition vector.

\par Suppose $X(t)\in E$ satisfies
 \begin{equation*}
    \frac{\mathrm{d}x}{\mathrm{d}t}=F(x)
\end{equation*}
where $F(x)=\sum_llf(x,l).$ Suppose that for each compact $K\subset
E$,
\begin{equation}
       \sum_l\|l\|\sup_{x\in K}f(x,l)< \infinity
\end{equation}
and there exists $M_K>0$ such that
\begin{equation}\label{eq: Kurtz Lipschitz}
         \|F(x)-F(y)\|\leq M_K\|x-y\|,\quad x,y\in K.
\end{equation}

If $\lim_{n\rightarrow\infinity}\frac{X_n(0)}{n}=x_0$, then for
every $t\geq 0$,
\begin{equation}
        \lim_{n\rightarrow\infinity}\sup_{s\leq
        t}\left\|\frac{X_n(s)}{n}-X(s)\right\|=0\quad  a.s.
\end{equation}
\end{theorem}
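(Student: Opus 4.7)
The plan is to use the classical random time-change representation for density-dependent Markov chains. Represent $X_n$ via independent unit-rate Poisson processes $\{Y_l\}_l$, one per transition vector, as
\begin{equation*}
X_n(t) = X_n(0) + \sum_l l\, Y_l\!\left(\int_0^t n f(X_n(s)/n, l)\,\mathrm{d}s\right).
\end{equation*}
Dividing by $n$ and centering each Poisson process by its compensator $\tilde Y_l(u) := Y_l(u) - u$ gives the fundamental decomposition
\begin{equation*}
\frac{X_n(t)}{n} = \frac{X_n(0)}{n} + \int_0^t F\!\left(\frac{X_n(s)}{n}\right)\mathrm{d}s + M_n(t),
\quad M_n(t) := \frac{1}{n}\sum_l l\, \tilde Y_l\!\left(n\!\int_0^t f(X_n(s)/n,l)\,\mathrm{d}s\right).
\end{equation*}
Subtracting the integral form $X(t) = x_0 + \int_0^t F(X(s))\,\mathrm{d}s$ produces an inequality ready for Gronwall.

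Because the Lipschitz bound on $F$ is only local, I would localize with a stopping time. Fix $t>0$, choose a compact $K \subset E$ containing an open neighbourhood of the deterministic trajectory $\{X(s):s\in[0,t]\}$, and set $\tau_n := \inf\{s\ge 0 : X_n(s)/n \notin K\}$. On $\{s \le \tau_n\}$ the local Lipschitz constant $M_K$ applies, and the summability hypothesis $\sum_l \|l\|\sup_{x\in K} f(x,l) < \infty$ controls both the total jump size of $X_n/n$ and the time-change argument feeding each $\tilde Y_l$: each stays inside a uniformly bounded interval of $\mathbb{R}_{\ge 0}$. The same summability lets one truncate to a finite set $L_\varepsilon$ of transition vectors, with the tail contribution to $\|M_n\|$ bounded uniformly in $n$ by an $\varepsilon$ that can be sent to zero at the end.

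Next I would invoke the functional strong law of large numbers for the unit-rate Poisson process, $\sup_{u\le U}|Y_l(nu)/n - u| \to 0$ a.s.\ for every fixed $U$. For each $l \in L_\varepsilon$ this gives $\sup_{s\le t\wedge\tau_n}\|n^{-1}\tilde Y_l(n\cdot)\| \to 0$ a.s., and the finite sum yields $\sup_{s\le t\wedge\tau_n}\|M_n(s)\| \to 0$ a.s. Combining with the assumption $X_n(0)/n \to x_0$ and applying Gronwall to $\|X_n(s)/n - X(s)\|$ on $[0, t\wedge\tau_n]$ gives
\begin{equation*}
\sup_{s\le t\wedge\tau_n}\left\|\frac{X_n(s)}{n}-X(s)\right\| \le \left(\left\|\frac{X_n(0)}{n}-x_0\right\| + \sup_{s\le t\wedge\tau_n}\|M_n(s)\| + \varepsilon\right) e^{M_K t} \xrightarrow{a.s.} 0 \text{ (then } \varepsilon \downarrow 0\text{).}
\end{equation*}

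The main obstacle will be removing the stopping time $\tau_n$ to upgrade the conclusion from $[0,t\wedge\tau_n]$ to all of $[0,t]$. I would argue as follows: since $X$ stays in the interior of $K$ on $[0,t]$, there is a positive distance $\delta$ from the trajectory to $\partial K$; the a.s.\ uniform bound above forces $\sup_{s\le t\wedge\tau_n}\|X_n(s)/n - X(s)\| < \delta$ for all sufficiently large $n$ (on a set of full probability), which is incompatible with $\tau_n \le t$. Hence $t\wedge\tau_n = t$ eventually almost surely, and the uniform convergence transfers to $[0,t]$. A secondary subtlety is the bookkeeping between the tail-truncation $\varepsilon$ and the a.s.\ limit over the finite set $L_\varepsilon$: one takes $n \to \infty$ first (using the SLLN for each fixed $l$), then $\varepsilon \downarrow 0$, taking care that the tail estimate is deterministic in $n$.
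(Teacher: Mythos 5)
The paper does not prove this statement at all: it is quoted verbatim as a classical result from Kurtz's book (\cite{KurtzBook}, page 456) and used as a black box, so there is no internal proof to compare against. Your sketch is, in substance, the standard proof of that cited theorem (Ethier--Kurtz, Chapter 11): the random time-change representation by independent unit-rate Poisson processes, centering by compensators, the functional strong law for Poisson processes, Gronwall's inequality, and localization by a stopping time that is removed at the end because the deterministic trajectory stays a positive distance inside $K$. All the essential ingredients are present and correctly ordered. The one imprecision is your closing claim that the tail estimate over $l \notin L_\varepsilon$ is ``deterministic in $n$'': the tail of $M_n$ still involves the Poisson processes $Y_l$, so it is random; the correct statement is that $\limsup_{n}$ of the tail is a.s.\ bounded by a constant multiple of $t\sum_{l\notin L_\varepsilon}\|l\|\sup_{x\in K}f(x,l)$ (using $|n^{-1}\tilde Y_l(nu)|\leq n^{-1}Y_l(nu)+u$ and the SLLN), which is what lets you send $\varepsilon\downarrow 0$ after $n\rightarrow\infinity$. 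With that repair the argument is complete.
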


The following lemma can be found in any good book on differential
calculus.
\begin{lemma}\label{Lemma:Differential-Inequality} Let $y(t)$ be a differentiable function
defined
    for $t\geq 0$. Suppose $a,b\in \mathbb{R}$, $a\neq0$.
If $y(t)$ satisfies
$\displaystyle\frac{\mathrm{d}y}{\mathrm{d}t}\geq ay(t)+b,\;t>0$,
    then
    $$
                 y(t)\geq e^{at}\left(y(0)+\frac ba\right)-\frac ba.
    $$
Similarly, if $y(t)$ satisfies
$\displaystyle\frac{\mathrm{d}y}{\mathrm{d}t}\leq ay(t)+b,\;t>0$,
    then
    $$
                 y(t)\leq e^{at}\left(y(0)+\frac ba\right)-\frac ba.
    $$
\end{lemma}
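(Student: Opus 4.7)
The plan is to reduce the inequality to a monotonicity statement via the standard integrating factor trick, which neatly handles the inhomogeneous term $b$ by an affine shift.

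First I would define $z(t) = y(t) + b/a$ (this is well-defined because $a \neq 0$). A direct substitution gives $a y(t) + b = a(z(t) - b/a) + b = a z(t)$, so the hypothesis $y'(t) \geq a y(t) + b$ rewrites cleanly as the homogeneous inequality $z'(t) \geq a z(t)$. This shift is the key observation — it removes $b$ entirely and reduces the claim to the linear case, for which the conclusion should be $z(t) \geq e^{at} z(0)$, i.e.\ $y(t) + b/a \geq e^{at}(y(0) + b/a)$, exactly the stated bound.

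Next I would multiply through by the always-positive integrating factor $e^{-at}$ and recognize the left-hand side as a derivative:
\[
  \frac{\mathrm{d}}{\mathrm{d}t}\bigl(e^{-at} z(t)\bigr) \;=\; e^{-at}\bigl(z'(t) - a z(t)\bigr) \;\geq\; 0.
\]
Thus $e^{-at} z(t)$ is non-decreasing on $[0,\infty)$, so $e^{-at} z(t) \geq z(0)$ for all $t \geq 0$, giving $z(t) \geq e^{at} z(0)$. Re-expressing in terms of $y$ yields the claimed lower bound. The ``$\leq$'' version is proved by the identical argument: under $y'(t) \leq a y(t) + b$ we get $\tfrac{\mathrm{d}}{\mathrm{d}t}(e^{-at} z(t)) \leq 0$, so $e^{-at} z(t)$ is non-increasing, and $z(t) \leq e^{at} z(0)$ follows.

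There is no real obstacle here; the only point to watch is that $e^{-at}$ is strictly positive regardless of the sign of $a$, so multiplying the differential inequality by it preserves the direction of the inequality — this is what makes the argument insensitive to whether $a > 0$ or $a < 0$. The hypothesis $a \neq 0$ is used only to form the shift $b/a$; if one preferred to avoid it, the same conclusion could be obtained by writing $y(t) = e^{at} y(0) + \int_0^t e^{a(t-s)}\bigl(y'(s) - a y(s)\bigr)\,\mathrm{d}s$ and bounding the integrand by $b$, but the integrating-factor route above is cleaner and delivers the stated formula directly.
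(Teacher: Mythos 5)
Your proof is correct and follows essentially the same route as the paper's: both multiply by the integrating factor $e^{-at}$ and integrate; the paper applies it directly to $y$ and computes $\int_0^t b e^{-as}\,\mathrm{d}s$, while you first perform the affine shift $z=y+b/a$ so that the resulting quantity is simply monotone. This is only a cosmetic repackaging of the same argument, and all steps check out.
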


\begin{proof} Let $W(t)=y(t)e^{-at}$, then
$\displaystyle\frac{\mathrm{d}W}{\mathrm{d}t}=e^{-at}\left(\frac{\mathrm{d}y}{\mathrm{d}t}-ay\right)\geq
be^{-at}$. Integrating on both sides, so $
        W(t)-W(0)\geq b\int_0^te^{-as}ds$.
Thus $y(t)e^{-at}-y(0)\geq\frac{b}{a}(1-e^{-at})$. \linebreak So $
y(t)\geq e^{at}\left(y(0)+\frac ba\right)-\frac ba$. The second
conclusion can be similarly proved.
\end{proof}

\begin{theorem}\label{Theorem D.1.1.}
(\textbf{Fundamental Inequality}, \cite{DifferentialEquations2},
page 14). If
$\displaystyle\frac{\mathrm{d}\mathbf{x}}{\mathrm{d}t}=\mathbf{f}(\mathbf{x},t)$
is defined on a set $U$ in $\mathbb{R}^n\times \mathbb{R}$ with the
Lipschitz condition
$$
      \|\mathbf{f}(\mathbf{x}_1,t)-\mathbf{f}(\mathbf{x}_2,t)\|<K\|\mathbf{x}_1-\mathbf{x}_2\|
$$
for all $(\mathbf{x}_1,t)$ and $(\mathbf{x}_2,t)$ on $U$, and if for
$\epsilon_i, \delta\in \mathbb{R}$, and $\mathbf{u}_1(t)$ and
$\mathbf{u}_2(t)$ are two continuous, piecewise differentiable
functions on $U$ into $\mathbb{R}^n$ with $$
   \left\|\frac{\mathrm{d}\mathbf{u}_i(t)}{\mathrm{d}t}-\mathbf{f}(\mathbf{u}_i(t),t)\right\|\leq \epsilon_i,
\quad\mbox{and}\quad
    \|\mathbf{u}_1(t_0)-\mathbf{u}_2(t_0)\|\leq \delta,
$$
then
$$
   \|\mathbf{u}_1(t)-\mathbf{u}_2(t)\|\leq \delta e^{K(t-t_0)}+\left(\frac{\epsilon_1+\epsilon_2}{K}\right)
   \left(e^{K(t-t_0)}-1\right).
$$
\end{theorem}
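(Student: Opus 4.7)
My plan is to convert the approximate ODE hypotheses into integral form, derive an integral inequality on $v(t) := \|\mathbf{u}_1(t) - \mathbf{u}_2(t)\|$, and finish with a Gr\"onwall-type argument to obtain the stated closed-form estimate.

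First I would record, for each $i \in \{1,2\}$, the residual $\mathbf{g}_i(t) := \mathbf{u}_i'(t) - \mathbf{f}(\mathbf{u}_i(t),t)$, which by hypothesis satisfies $\|\mathbf{g}_i(t)\| \leq \epsilon_i$ wherever $\mathbf{u}_i$ is differentiable. Since $\mathbf{u}_i$ is continuous and piecewise differentiable, integrating from $t_0$ to $t$ gives
\[
  \mathbf{u}_i(t) - \mathbf{u}_i(t_0) \;=\; \int_{t_0}^{t} \mathbf{f}(\mathbf{u}_i(s),s)\,ds \;+\; \int_{t_0}^{t} \mathbf{g}_i(s)\,ds.
\]
Subtracting the $i=2$ identity from the $i=1$ identity, taking norms, using the triangle inequality, and applying the Lipschitz bound $\|\mathbf{f}(\mathbf{u}_1(s),s) - \mathbf{f}(\mathbf{u}_2(s),s)\| \leq K\,v(s)$ yields the integral inequality
\[
  v(t) \;\leq\; \delta + (\epsilon_1+\epsilon_2)(t-t_0) + K \int_{t_0}^{t} v(s)\,ds.
\]

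For the remaining step I would invoke a Gr\"onwall-style argument. Setting $V(t) := \int_{t_0}^{t} v(s)\,ds$, the inequality above reads $V'(t) - K V(t) \leq \delta + (\epsilon_1+\epsilon_2)(t-t_0)$. Multiplying by the integrating factor $e^{-K(t-t_0)}$, integrating from $t_0$ to $t$, and evaluating the elementary integrals of $\delta\,e^{-Ku}$ and $(\epsilon_1+\epsilon_2)u\,e^{-Ku}$ produces an explicit bound on $V(t)$. Substituting this back into the integral inequality for $v(t)$ and simplifying should give exactly
\[
  v(t) \;\leq\; \delta\,e^{K(t-t_0)} + \frac{\epsilon_1+\epsilon_2}{K}\left(e^{K(t-t_0)} - 1\right),
\]
as required. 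A shortcut is to apply Lemma~\ref{Lemma:Differential-Inequality} directly to $V(t)$ with $a = K$ and $b = \delta + (\epsilon_1+\epsilon_2)(t-t_0)$, though because the forcing term here is not constant one actually needs the time-dependent version obtained by the integrating-factor method above.

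The main obstacle is not conceptual but technical. One has to handle the piecewise differentiability with care so that the fundamental theorem of calculus applies on each smooth piece and the finitely many non-differentiability points in any bounded interval contribute nothing to the integral. The other delicate point is the arithmetic: the affine-in-$\tau$ forcing $(\epsilon_1+\epsilon_2)(t-t_0)$, after being passed through the integrating factor and integrated, must collapse exactly to $\frac{\epsilon_1+\epsilon_2}{K}(e^{K(t-t_0)}-1)$ with no residual polynomial-in-$(t-t_0)$ term, and verifying this cancellation is the one spot where a sign or factor error could derail an otherwise routine computation.
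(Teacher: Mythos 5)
Your proof is correct, but note that the paper itself offers no proof of this statement to compare against: Theorem~\ref{Theorem D.1.1.} is imported verbatim from the cited reference and used as a black box (its only application is in Case~2 of Theorem~\ref{thm:ChFA-AnalyticalProof-BoundedSolution}). On its own merits your argument is sound. The integral identity, the triangle inequality, and the Lipschitz bound do yield $v(t)\leq \delta+(\epsilon_1+\epsilon_2)(t-t_0)+K\int_{t_0}^t v(s)\,ds$, and the integrating-factor computation you defer does close: with $T=t-t_0$ and $\epsilon=\epsilon_1+\epsilon_2$ one gets $KV(t)\leq \delta(e^{KT}-1)-\epsilon T+\frac{\epsilon}{K}(e^{KT}-1)$, and substituting back into the integral inequality the $+\epsilon T$ and $-\epsilon T$ terms cancel exactly, leaving $v(t)\leq \delta e^{KT}+\frac{\epsilon}{K}(e^{KT}-1)$ with no residual polynomial term — so the one spot you flagged as delicate does work out. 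Two remarks on the route. First, a naive application of the standard Gr\"onwall lemma to the non-decreasing majorant $\delta+\epsilon(t-t_0)$ would only give the weaker bound $(\delta+\epsilon T)e^{KT}$; your choice to run the integrating factor on $V$ rather than on $v$ is what recovers the sharp constant, and it was the right call. Second, the more classical shortcut is to bound the derivative of $v$ itself, $v'(t)\leq Kv(t)+\epsilon_1+\epsilon_2$, and invoke the paper's own Lemma~\ref{Lemma:Differential-Inequality} with $a=K$ and the \emph{constant} $b=\epsilon_1+\epsilon_2$, which yields the stated bound in one line; the cost of that route is that one must justify differentiating the norm (it fails to be differentiable where $\mathbf{u}_1-\mathbf{u}_2$ vanishes, and the $\mathbf{u}_i$ are only piecewise differentiable), typically via an upper Dini derivative. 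Your integral formulation sidesteps that technicality entirely, which is a reasonable trade of a slightly longer computation for a cleaner justification.
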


\section{}\label{section:Appendix-spectral-gap-log-sobolev}

The material presented here is extracted
from~\cite{SCoste-FiniteMarkovChains}. Let $(K,\pi)$ be a Markov
chain on a finite set $S$, where $K$ is a Markov kernel and $\pi$ is
the  stationary probability distribution associated with $K$. For
any real function $f, g$ on $S$, define an inner product
``$\langle\cdot,\cdot\rangle$'' as
$$
\langle f, g\rangle=\sum_{x\in S}f(x)g(x)\pi(x).
$$
Denote $\|f\|_2=\sqrt{\langle f, f\rangle}$, and
$$
    l^2(\pi)=\{f: \|f\|_2< \infinity\}.
$$
Then $l^2(\pi)$ is a Hilbert space with the norm $\|\cdot\|_2$. We
say that $K^{*}$ is \emph{adjoint} to $K$ if
$$
       \langle Kf,g\rangle=\langle f,K^{*}g\rangle, \quad \forall
       f,g\in L^2(\pi).
$$
It follows that
$$
     K^{*}(x,y)=\frac{\pi(y)}{\pi(x)}K(y,x).
$$
If $K=K^{*}$, then $K$ is called \emph{self-adjoint}. If $K$ is
self-adjoint on $l^2(\pi)$, then $(K,\pi)$ is \emph{reversible}.

For a function in $f\in l^2(\pi)$, denote its mean and variance by
$\pi(f)$ and $Var(f)$ respectively, that is
$$
  \pi(f)=\sum_{x\in S}f(x)\pi(x),\quad Var(f)=\pi((f-\pi(f))^2).
$$
\begin{definition}\label{Definition C.2.1} (\textbf{Dirichlet form}). The form
$$
    \mathcal{E}(f,g)=\langle (I-K)f, g\rangle
$$
is called the Dirichlet form associated with $H_t=e^{-t(I-K)}$.
\end{definition}

\begin{remark}\label{Remark C.2.1}
 The Dirichlet form
$\mathcal{E}$ satisfies
$$
  \mathcal{E}(f,f)=\langle (I-K)f, f\rangle=\left\langle \left(I-\frac{K+K^{*}}{2}\right)f,
  f\right\rangle,
$$
$$
 \mathcal{E}(f,f)=\frac12\sum_{x,y}(f(x)-f(y))^2K(x,y)\pi(x).
$$
\end{remark}

\begin{definition}\label{Definition C.2.2}(\textbf{Spectral gap}). Let $K$ be a
Markov kernel with Dirichlet form $\mathcal{E}$. The spectral gap
$\lambda=\lambda(K)$ is defined by
$$
      \lambda=\min\left\{
         \frac{\mathcal{E}(f,f)}{Var(f)}: Var(f)\neq 0
      \right\}.
$$
\end{definition}

\begin{remark}\label{Remark C.2.2}
 In general $\lambda$ is the smallest non zero
eigenvalue of $I-\frac{K+K^{*}}{2}$. If $K$ is self-adjoint, then
$\lambda$ is the smallest non zero eigenvalue of $I-K$. Clearly, we
also have
$$
      \lambda=\min\left\{\mathcal{E}(f,f)
        : \|f\|_2=1, \pi(f)=0
      \right\}.
$$
\end{remark}

\par The definition of the logarithmic Sobolev (Log-Sobolev)
constant $\alpha$ is similar to that of the spectral gap $\lambda$
where the variance has been replaced by
$$
      \mathcal{L}(f)=\sum_{x\in
      S}f(x)^2\log\left(\frac{f(x)^2}{\|f\|_2^2}\right)\pi(x).
$$

\begin{definition}\label{Definition C.2.2}(\textbf{Log-Sobolev constant}). Let $K$
be an irreducible Markov chain with stationary measure $\pi$. The
logarithmic Sobolev constant $\alpha=\alpha(K)$ is defined by
$$
      \alpha=\min\left\{
         \frac{\mathcal{E}(f,f)}{\mathcal{L}(f)}: \mathcal{L}(f)\neq 0
      \right\}.
$$
\end{definition}

\begin{lemma}\label{Lemma C.2.1} For any finite Markov chain $K$ with
stationary measure $\pi$, the Log-Sobolev constant $\alpha$ and the
spectral gap $\lambda$ satisfy
$$
     \frac{1-2\pi(*)}{\log[1/\pi(*)-1]}\lambda\leq \alpha
     \leq \frac{\lambda}{2},
$$
where $\pi(*)=\min_{x}\pi(x)$.
\end{lemma}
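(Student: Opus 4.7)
The plan is to treat the two inequalities separately, as they require rather different techniques.

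For the upper bound $\alpha\le\lambda/2$, I would use a linearization around the constant function~$1$. Let $g$ be an eigenfunction realizing the spectral gap, normalized so that $\pi(g)=0$ and $\pi(g^2)=1$, so that $\mathcal{E}(g,g)=\lambda$. Plug the test function $f_\epsilon=1+\epsilon g$ into the definition of~$\alpha$: since the Dirichlet form annihilates constants, $\mathcal{E}(f_\epsilon,f_\epsilon)=\epsilon^2\mathcal{E}(g,g)=\epsilon^2\lambda$. A second-order Taylor expansion of $f_\epsilon^2\log(f_\epsilon^2/\|f_\epsilon\|_2^2)$ about $\epsilon=0$, using $\pi(g)=0$ to kill the linear term in the integral, yields $\mathcal{L}(f_\epsilon)=2\epsilon^2+O(\epsilon^3)$. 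Hence $\alpha\le \mathcal{E}(f_\epsilon,f_\epsilon)/\mathcal{L}(f_\epsilon)\to \lambda/2$ as $\epsilon\downarrow 0$.

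For the lower bound the strategy is to reduce it, via the Poincar\'e inequality, to a comparison between the entropy functional and the variance functional. By homogeneity take a test $f$ with $\pi(f^2)=1$ and write $\rho=f^2$; this is a probability density against~$\pi$ satisfying the pointwise bound $\rho(x)\le 1/\pi(x)\le 1/\pi(*)$. The goal is to show that for every such~$f$,
$$
   \mathcal{L}(f)=\pi(\rho\log\rho)\;\le\;\frac{\log[1/\pi(*)-1]}{1-2\pi(*)}\,Var(f),
$$
after which combining this with $Var(f)\le \mathcal{E}(f,f)/\lambda$ gives the claimed bound $\alpha\ge (1-2\pi(*))\lambda/\log[1/\pi(*)-1]$.

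The main obstacle will be extracting the sharp constant $\log(1/\pi(*)-1)/(1-2\pi(*))$ in this entropy-variance comparison. The natural route is a reduction to a two-point extremal problem: among all densities $\rho$ with $\pi(\rho)=1$ and the pointwise bound $\rho\le 1/\pi(*)$, the ratio $\pi(\rho\log\rho)/Var(\sqrt\rho)$ is maximized by a function taking only two values, one of them saturating the upper bound $1/\pi(*)$. This reduction is the delicate step; I would handle it by a Lagrange multiplier or rearrangement argument on the simplex of densities, exploiting convexity of $x\mapsto x\log x$ against concavity of $x\mapsto \sqrt x$. Once the reduction is in place, the extremal two-point problem becomes a one-variable calculus exercise whose optimum, achieved at the weight $p=\pi(*)$, yields precisely the claimed constant. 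The full argument follows the development of Diaconis and Saloff-Coste in~\cite{SCoste-FiniteMarkovChains}.
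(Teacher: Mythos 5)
The paper offers no proof of this lemma: the appendix in which it appears opens by declaring that the material is ``extracted from''~\cite{SCoste-FiniteMarkovChains}, and the statement is simply quoted from Diaconis and Saloff-Coste. So there is no in-paper argument to match your sketch against; what you have written is a reconstruction of the standard proof from that same reference. Your upper bound $\alpha\le\lambda/2$ is complete and correct: with $f_\epsilon=1+\epsilon g$, $\pi(g)=0$, $\pi(g^2)=1$, one computes $\pi(f_\epsilon^2\log f_\epsilon^2)=3\epsilon^2+O(\epsilon^3)$ and $\pi(f_\epsilon^2)\log\pi(f_\epsilon^2)=\epsilon^2+O(\epsilon^4)$, so $\mathcal{L}(f_\epsilon)=2\epsilon^2+O(\epsilon^3)$ while $\mathcal{E}(f_\epsilon,f_\epsilon)=\epsilon^2\lambda$, and the ratio tends to $\lambda/2$. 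For the lower bound you correctly isolate the real content, namely the entropy--variance comparison $\mathcal{L}(f)\le L(\pi(*))\,Var(f)$ with $L(p)=\log(1/p-1)/(1-2p)$, after which the Poincar\'e inequality $\lambda\,Var(f)\le\mathcal{E}(f,f)$ immediately yields $\alpha\ge\lambda/L(\pi(*))$; one should also note in passing that $|f|$ may be substituted for $f$ (which only decreases the Dirichlet form and increases nothing relevant), so the restriction to nonnegative $f$ is harmless, and that the constant $L(\pi(*))$ must be read as its limiting value $2$ when $\pi(*)=1/2$.

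The one genuine gap is that this entropy--variance inequality --- which is the entire difficulty of the lower bound --- is left as a strategy rather than a proof. You are right that the two-point space with weights $(\pi(*),1-\pi(*))$ is extremal and that $L(p)$ is exactly its entropy-to-variance ratio, but the reduction of the general finite space to that two-point problem is not a routine Lagrange-multiplier or rearrangement exercise; it is the substance of Theorem~A.1 in Diaconis and Saloff-Coste, and as sketched your argument would still have to establish it. Since you explicitly defer that step to the very reference the paper itself cites for the whole lemma, your proposal is no less complete than what it replaces, but a self-contained proof would require writing out that comparison inequality in full.
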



\section{}\label{section:Appendix-Eigenvalues-Matrices}

\par In this appendix, we claim that all
eigenvalues of $Q_i\;(i=1,2,3,4)$ appearing in
(\ref{eq:ChFA-Model-XY-ODEQ1Q2Q3Q4}) in
Section~\ref{subsec:ChFA-Analytica-proof-InterestingModel} other
than zeros have negative real parts. We do not worry about $Q_1$ and
$Q_4$ since they are lower or upper block triangular matrices and
the eigenvalues of this kind of matrices can be well estimated: all
eigenvalues of $Q_1$ and $Q_4$ are either zeros or have negative
real parts. All that we want to do here is to show that both $Q_2$
and $Q_3$ also have this property.

\par By symbolic calculation using Matlab,  $Q_3$'s eigenvalues are
$$\lambda_{1,2,3}=0 (\mbox{three folds}), \lambda_4=-c_4-c_3,$$
$$
 \lambda_5=-\frac12(a_1+a_2+c_1+c_2)+\frac12\sqrt{(a_1-a_2+c_1+c_2)^2-4a_1c_2},
$$
$$
 \lambda_6= -\frac12(a_1+a_2+c_1+c_2)-\frac12\sqrt{(a_1-a_2+c_1+c_2)^2-4a_1c_2}.
$$
If $(a_1-a_2+c_1+c_2)^2-4a_1c_2<0$, then the real parts of
$\lambda_5$ and $\lambda_6$ are $-\frac12(a_1+a_2+c_1+c_2)$, which
is negative. Otherwise, $$(a_1-a_2+c_1+c_2)^2-4a_1c_2\geq0.$$ In
this case,
$$(a_1-a_2+c_1+c_2)^2-4a_1c_2\leq(a_1-a_2+c_1+c_2)^2<(a_1+a_2+c_1+c_2)^2,$$
so $$
-\frac12(a_1+a_2+c_1+c_2)+\frac12\sqrt{(a_1-a_2+c_1+c_2)^2-4a_1c_2}<0.
$$
This means that $\lambda_5$ and $\lambda_6$ are both negative. Thus
$\lambda_i\;(i=1,2,\cdots,6)$ are either $0$ or have negative real
parts.

\par Similarly, $Q_2$'s eigenvalues are $\delta_{1,2,3}=0$,
$\delta_4=-c_1-c_2$,
$$
 \delta_5=-\frac12(a_1+a_2+c_3+c_4)+\frac12\sqrt{(a_2-a_1+c_3+c_4)^2-4a_2c_3},
$$
$$
 \delta_6=-\frac12(a_1+a_2+c_3+c_4)-\frac12\sqrt{(a_2-a_1+c_3+c_4)^2-4a_2c_3}.
$$
By similar argument, we still have that
$\delta_{i}\;(i=1,2,\cdots,6)$ are either zeros or have negative
real parts.

\section{}\label{section:Appendix-Jordan-Form}

\subsection{}

In this subsection, we use $\Re(z)$ and $\Im(z)$ to respectively
represent the real and imaginary parts of a complex number $z$. The
following is mainly extracted from~\cite{DynamcalSystem_good} (page
39$\thicksim$42).

\begin{theorem}\label{Theorem D.2.1.}(\textbf{The Jordan Canonical Form}).
Let $A$ be a real matrix with real eigenvalues $\lambda_j$,
$j=1,\cdots,k$ and complex eigenvalues $\lambda_j=a_j+ib_j$ and
$\bar{\lambda}_j=a_j-ib_j$, \linebreak $j=k+1,\cdots,n$. Then there
exists a basis
$\{\mathbf{v}_1,\cdots,\mathbf{v}_k,\mathbf{v}_{k+1},\mathbf{u}_{k+1},
\cdots,\mathbf{v}_n,\mathbf{u}_n\}$ for $\mathbb{R}^{2n-k}$ where
$\mathbf{v}_j$, $j=1,\cdots,k$ and $\mathbf{w}_j$, $j=k+1,\cdots,n$
are generalized eigenvectors of $A$, \linebreak
$\mathbf{u}_j=\Re(\mathbf{w}_j)$ and
$\mathbf{v}_j=\Im(\mathbf{w}_j)$ for $j=k+1,\cdots,n$, such that the
matrix \linebreak
$P=\{\mathbf{v}_1,\cdots,\mathbf{v}_k,\mathbf{v}_{k+1},\mathbf{u}_{k+1},\cdots,\mathbf{v}_{n},\mathbf{u}_{n}\}$
is invertible and
$$
   P^{-1}AP=\left[
           \begin{array}{ccc}
             B_1 &  &  \\
              & \ddots &  \\
              &  & B_r
           \end{array}
   \right]
$$
where the elementary Jordan blocks $B=B_j$, $j=1,\cdots,r$ are
either of the form
\begin{equation}\label{eq:ChFA-local-Appendix-1}
B=\left[
           \begin{array}{ccccc}
             \lambda & 1 & 0 & \cdots & 0 \\
                0 &\lambda & 1 & \cdots & 0 \\
               \cdots &  &  && \\
             0 & \cdots & & \lambda &1\\
             0 & \cdots & &0 &\lambda
           \end{array}
   \right]
\end{equation}
for $\lambda$ one of the real eigenvalues of $A$ or of the form
\begin{equation}\label{eq:ChFA-local-Appendix-2}
B=\left[
           \begin{array}{ccccc}
             D & I_2 & 0 & \cdots & 0 \\
                0 & D & I_2 & \cdots & 0 \\
               \cdots &  &  && \\
             0 & \cdots & & D & I_2\\
             0 & \cdots & &0 & D
           \end{array}
   \right]
\end{equation}
with
$$
     D=\left[
           \begin{array}{cc}
             a & -b \\
             b & a
           \end{array}
     \right],\quad
   I_2=\left[
           \begin{array}{cc}
             1 & 0 \\
             0 & 1
           \end{array}
     \right]\quad \mbox{and}\quad
  0=\left[
           \begin{array}{cc}
             0 & 0 \\
             0 & 0
           \end{array}
     \right]
$$
for $\lambda=a+ib$ one of the complex eigenvalues of $A$.
\end{theorem}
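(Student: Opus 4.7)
The plan is to derive the real Jordan canonical form from the complex Jordan canonical form by a conjugate-pairing argument. First I would establish the complex JCF. Working over $\mathbb{C}$, decompose $\mathbb{C}^{2n-k}$ as the direct sum of the generalized eigenspaces $E_\lambda = \ker((A-\lambda I)^{m_\lambda})$ attached to the distinct eigenvalues of $A$, where $m_\lambda$ is the algebraic multiplicity; this is the primary decomposition applied to the characteristic polynomial. On each $E_\lambda$ the operator $N_\lambda := (A-\lambda I)|_{E_\lambda}$ is nilpotent, and the standard cyclic decomposition of a nilpotent operator yields a basis made of Jordan chains $\mathbf{w}, N_\lambda\mathbf{w}, \ldots, N_\lambda^{p-1}\mathbf{w}$. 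Reading $A$ along such a chain produces precisely an elementary Jordan block with $\lambda$ on the diagonal and ones on the superdiagonal.

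Next I would exploit that $A$ is real. Because $A\overline{\mathbf{z}} = \overline{A\mathbf{z}}$, eigenvalues and Jordan chains come in conjugate pairs: if $\{N_\lambda^{j-1}\mathbf{w}\}_{j=1}^{p}$ is a chain for $\lambda = a+ib$ with $b \neq 0$, then $\{N_{\bar\lambda}^{j-1}\overline{\mathbf{w}}\}_{j=1}^{p}$ is a chain for $\bar\lambda$. For each such conjugate pair I would discard the $\bar\lambda$-chain and replace the $\lambda$-chain by the $2p$ real vectors $\mathbf{u}_j = \Re(N_\lambda^{j-1}\mathbf{w})$ and $\mathbf{v}_j = \Im(N_\lambda^{j-1}\mathbf{w})$. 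Since $\mathbf{w}$ and $\overline{\mathbf{w}}$ lie in distinct generalized eigenspaces they are $\mathbb{C}$-linearly independent, and the $\mathbb{C}$-change of basis $\{N_\lambda^{j-1}\mathbf{w},\,N_{\bar\lambda}^{j-1}\overline{\mathbf{w}}\} \leftrightarrow \{\mathbf{u}_j,\mathbf{v}_j\}$ given by $\mathbf{w} = \mathbf{u} + i\mathbf{v}$ is invertible; this forces the pairs $(\mathbf{u}_j,\mathbf{v}_j)$ to be $\mathbb{R}$-linearly independent. Combined with the Jordan chains attached to real eigenvalues, they form the required basis of $\mathbb{R}^{2n-k}$.

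Finally I would compute $A$ in this real basis. Within a complex Jordan chain the relation $A(N_\lambda^{j-1}\mathbf{w}) = \lambda N_\lambda^{j-1}\mathbf{w} + N_\lambda^{j-2}\mathbf{w}$ (with the convention $N_\lambda^{-1}\mathbf{w}:=0$), separated into real and imaginary parts via $\lambda = a+ib$, gives
\begin{equation*}
A\mathbf{u}_j = a\mathbf{u}_j - b\mathbf{v}_j + \mathbf{u}_{j-1}, \qquad A\mathbf{v}_j = b\mathbf{u}_j + a\mathbf{v}_j + \mathbf{v}_{j-1}.
\end{equation*}
Ordering the chain basis as $(\mathbf{v}_1,\mathbf{u}_1,\mathbf{v}_2,\mathbf{u}_2,\ldots,\mathbf{v}_p,\mathbf{u}_p)$, these equations, read column by column, assemble exactly into block~(\ref{eq:ChFA-local-Appendix-2}) with $D$ on the diagonal and $I_2$ on the superdiagonal. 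For a real eigenvalue the analogous chain relations give block~(\ref{eq:ChFA-local-Appendix-1}). Collecting all blocks yields the claimed block-diagonal form $P^{-1}AP$.

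The hard part will be the nilpotent cyclic-decomposition step hidden inside the complex JCF: one has to choose bases of the successive quotients $N_\lambda^{i}E_\lambda / N_\lambda^{i+1}E_\lambda$, lift them through $N_\lambda$, and complete by a complement in $\ker N_\lambda$, matching the partition of chain lengths to the nullity sequence $\dim\ker N_\lambda^{i+1} - \dim\ker N_\lambda^{i}$. This combinatorial dot-diagram argument is the delicate core of the theorem. The passage from complex to real is conceptually clean but also requires careful indexing so that the $2\times 2$ identity superdiagonal blocks land where claimed and not elsewhere; in particular, the pairs $(\mathbf{v}_j,\mathbf{u}_j)$ must be listed in the order dictated by the chain, oldest to newest. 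With those two points under control, the rest is bookkeeping.
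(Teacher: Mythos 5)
The paper does not prove this theorem at all: it is quoted verbatim as a classical result, with the appendix explicitly stating that the material is ``mainly extracted from'' the cited dynamical-systems textbook (pages 39--42), so there is no in-paper argument to compare yours against. Judged on its own, your outline is the standard and correct route to the real Jordan form: complex primary decomposition into generalized eigenspaces, cyclic decomposition of the nilpotent parts into Jordan chains, conjugate-pairing of chains for non-real eigenvalues, and replacement of each conjugate pair of chains by the real and imaginary parts $\mathbf{u}_j=\Re(\mathbf{w}_j)$, $\mathbf{v}_j=\Im(\mathbf{w}_j)$, whose $\mathbb{R}$-independence follows from the invertibility of the complex change of basis. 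Your final computation correctly produces the $D$ and $I_2$ blocks in the ordering $(\mathbf{v}_j,\mathbf{u}_j)$. One bookkeeping slip worth fixing: if you write a chain as $\mathbf{w}, N_\lambda\mathbf{w},\dots,N_\lambda^{p-1}\mathbf{w}$ and set $\mathbf{w}_j:=N_\lambda^{j-1}\mathbf{w}$, then $A\mathbf{w}_j=\lambda\mathbf{w}_j+\mathbf{w}_{j+1}$, not $\lambda\mathbf{w}_j+\mathbf{w}_{j-1}$; to get the recursion you state (and hence ones on the superdiagonal rather than the subdiagonal) you must index the chain from the kernel end upward, i.e.\ $\mathbf{w}_j:=N_\lambda^{p-j}\mathbf{w}$. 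You flag the nilpotent cyclic-decomposition step as the hard core and defer it, which is fair for a sketch, but a complete proof would need that dot-diagram argument written out; as it stands the proposal is a correct skeleton of the textbook proof the paper cites rather than a self-contained proof.
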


\par The Jordan canonical form of $A$ yields some explicit
information about the form of $\mathbf{x}=e^{At}\mathbf{x}_0$, i.e.
the solution of the initial value problem
\begin{equation}\label{eq:ChFA-ClassicalODEsTheory}
\left\{
\begin{split}
\frac{\mathrm{d}\mathbf{x}}{\mathrm{d}t}=& A\mathbf{x} \\
\mathbf{x}(0)= & \mathbf{x}_0
\end{split}
\right.
\end{equation}
That is,
\begin{equation}\label{eq:ChFA-Appendix-x(t)=Pexp(Bt)P^{-1}}
   \mathbf{x}(t)=P\,\mathrm{diag}\left[e^{B_jt}\right]P^{-1}\mathbf{x}_0,
\end{equation}
where $B_j$ are the elementary Jordan blocks of $A$, $j=1,\cdots,r$.
Here $\mathrm{diag}\left[e^{B_jt}\right]$ represents
$$
\mathrm{diag}\left[e^{B_jt}\right]=\left(
                                     \begin{array}{cccc}
                                       e^{B_1t} & 0 & \cdots & 0 \\
                                       0 &  e^{B_2t} &\cdots & 0 \\
                                       \vdots & \vdots & \ddots & \vdots \\
                                       0 & 0 & \cdots & e^{B_rt} \\
                                     \end{array}
                                   \right).
$$
In the following, the notation $\mathrm{diag}[\cdot]$ indicates the
similar meaning. If $B_j=B$ is an $m\times m$ matrix of the form
(\ref{eq:ChFA-local-Appendix-1}) and $\lambda$ is a real eigenvalue
of $A$ then
\begin{equation}\label{eq:ChFA-Appendix-e^{Bt}-RealEigenvalue}
e^{Bt}=e^{\lambda t}\left[
\begin{array}{ccccc}
  1 & t & t^2/2! & \cdots & t^{m-1}/(m-1)! \\
  0 & 1 & t & \cdots & t^{m-2}/(m-2)!  \\
  0 & 0 & 1 & \cdots & t^{m-3}/(m-3)!  \\
  \cdots &  &  &  &  \\
  0 & \cdots &  & 1 & t\\
 0 & \cdots &  & 0 & 1\\
\end{array}
\right].
\end{equation}
If $B_j=B$ is an $2m\times 2m$ matrix of the form
(\ref{eq:ChFA-local-Appendix-2}) and $\lambda=a+ib$ is a complex
eigenvalue of $A$, then
\begin{equation}\label{eq:ChFA-Appendix-e^{Bt}-ComplexEigenvalue}
e^{Bt}=e^{at}\left[
\begin{array}{ccccc}
  R & Rt & Rt^2/2! & \cdots & Rt^{m-1}/(m-1)! \\
  0 & R & Rt & \cdots & Rt^{m-2}/(m-2)!  \\
  0 & 0 & R & \cdots & Rt^{m-3}/(m-3)!  \\
  \cdots &  &  &  &  \\
  0 & \cdots &  & R & Rt\\
 0 & \cdots &  & 0 & R\\
\end{array}
\right]
\end{equation}
where $R$ is the rotation matrix
$$
  R=\left[
             \begin{array}{cc}
               \cos bt & -\sin bt \\
               \sin bt & \cos bt
             \end{array}
   \right].
$$

\begin{theorem}\label{Theorem D.2.2.}
If $\mathbf{x}(t)$ satisfies (\ref{eq:ChFA-ClassicalODEsTheory}),
then each coordinate in $\mathbf{x}(t)$ is a linear combination of
functions of the form
$$
      t^{k}e^{at}\cos bt \quad \mbox{or} \quad t^{k}e^{at}\sin bt
$$
where $\lambda=a+ib$ is an eigenvalue of the matrix $A_{n\times n}$
and $0\leq k\leq n-1$.
\end{theorem}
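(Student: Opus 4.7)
The plan is to read off the conclusion directly from the Jordan-form representation of the solution provided by Theorem~\ref{Theorem D.2.1.}, which has already expressed $\mathbf{x}(t)=e^{At}\mathbf{x}_0$ in the explicit form $\mathbf{x}(t)=P\,\mathrm{diag}[e^{B_j t}]\,P^{-1}\mathbf{x}_0$ of equation (\ref{eq:ChFA-Appendix-x(t)=Pexp(Bt)P^{-1}}). Since $P$, $P^{-1}$ and $\mathbf{x}_0$ are constant, every coordinate of $\mathbf{x}(t)$ is a fixed real linear combination (with coefficients coming from the rows of $P$, the entries of $P^{-1}\mathbf{x}_0$, and the placement of the Jordan blocks) of the entries of the block-diagonal matrix $\mathrm{diag}[e^{B_j t}]$. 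Hence the theorem reduces to inspecting the entries of each $e^{B_j t}$.

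I would then split into the two cases listed in Theorem~\ref{Theorem D.2.1.}. For a Jordan block $B_j$ of size $m$ corresponding to a real eigenvalue $\lambda$, formula (\ref{eq:ChFA-Appendix-e^{Bt}-RealEigenvalue}) shows that every entry of $e^{B_j t}$ has the form $\frac{t^k}{k!}e^{\lambda t}$ with $0\leq k\leq m-1$; writing $\lambda=a$ and $b=0$ this is $t^k e^{at}\cos bt$ (with the corresponding $\sin$ term vanishing), which matches the claimed form. For a block $B_j$ of size $2m$ associated with a complex eigenvalue $\lambda=a+ib$, formula (\ref{eq:ChFA-Appendix-e^{Bt}-ComplexEigenvalue}) shows that entries of $e^{B_j t}$ are of the form $\frac{t^k}{k!}e^{at}\cos bt$ or $\frac{t^k}{k!}e^{at}\sin bt$ for $0\leq k\leq m-1$, again of the required form.

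Finally I would check the degree bound $k\leq n-1$. The sizes of the real Jordan blocks $m$ and of the complex Jordan blocks $2m$ are each bounded above by the total dimension of the ambient space; since $A$ is $n\times n$, every block has size at most $n$, so in both cases $k\leq n-1$. Assembling these observations, each coordinate of $\mathbf{x}(t)$ is a finite real linear combination of functions of the form $t^k e^{at}\cos bt$ and $t^k e^{at}\sin bt$ with $\lambda=a+ib$ ranging over the eigenvalues of $A$ and $0\leq k\leq n-1$, which proves the theorem.

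There is essentially no analytical obstacle here: the hard work has already been done in Theorem~\ref{Theorem D.2.1.} and the explicit exponentiation formulas (\ref{eq:ChFA-Appendix-e^{Bt}-RealEigenvalue})--(\ref{eq:ChFA-Appendix-e^{Bt}-ComplexEigenvalue}). The only point requiring a little care is the degree bound, which simply uses that the sum of the sizes of the Jordan blocks equals $n$. Thus the statement is a direct corollary of the Jordan canonical form and the proof amounts to citing (\ref{eq:ChFA-Appendix-x(t)=Pexp(Bt)P^{-1}}) together with the two block expressions.
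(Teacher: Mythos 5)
Your proof is correct and follows exactly the route the paper intends: the paper itself states this theorem without proof (citing the textbook from which the appendix material is extracted), but it sets up precisely the Jordan-form representation $\mathbf{x}(t)=P\,\mathrm{diag}[e^{B_jt}]P^{-1}\mathbf{x}_0$ and the explicit block exponentials (\ref{eq:ChFA-Appendix-e^{Bt}-RealEigenvalue})--(\ref{eq:ChFA-Appendix-e^{Bt}-ComplexEigenvalue}) from which you read off the conclusion. Your case split and the degree bound $k\leq n-1$ via the block sizes are both handled correctly, so nothing is missing.
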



\begin{corollary}\label{Corollary D.2.1.}
If the eigenvalues of $A$ are either zeros or have negative real
parts, and $\mathbf{x}(t)$ is bounded in $[0,\infinity)$, then
$\mathbf{x}(t)$ converges to a finite limit as time goes to
infinity.
\end{corollary}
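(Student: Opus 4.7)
\medskip

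\noindent\textbf{Proof proposal.} The plan is to apply Theorem~\ref{Theorem D.2.2.} directly to obtain an explicit functional form for each coordinate of $\mathbf{x}(t)$, and then use the boundedness hypothesis to kill the polynomially growing terms contributed by the zero eigenvalues. First, by Theorem~\ref{Theorem D.2.2.}, for each coordinate $x_i(t)$ there exist constants $c_{i,\lambda,k}$ such that
\begin{equation*}
   x_i(t) \;=\; \sum_{\lambda = a+ib}\;\sum_{k=0}^{n-1}
     \bigl[\,c^{(1)}_{i,\lambda,k}\,t^k e^{at}\cos(bt) + c^{(2)}_{i,\lambda,k}\,t^k e^{at}\sin(bt)\bigr],
\end{equation*}
where the outer sum runs over the eigenvalues $\lambda$ of $A$. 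Under the hypothesis, every eigenvalue satisfies either $\lambda = 0$ or $\Re(\lambda) = a < 0$, so the plan is to split the sum accordingly.

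Next, for each eigenvalue with $a < 0$, I would observe that $|t^k e^{at}\cos(bt)| \le t^k e^{at}$ and $|t^k e^{at}\sin(bt)| \le t^k e^{at}$, and since $t^k e^{at}\to 0$ as $t\to\infty$ for any fixed $k$ when $a<0$ (standard calculus), all contributions coming from eigenvalues with strictly negative real part collapse to a function $g_i(t)$ with $\lim_{t\to\infty} g_i(t) = 0$. The remaining contribution comes from the zero eigenvalue (if present), where $a = b = 0$, so the trigonometric factors are $\cos(0) = 1$ and $\sin(0) = 0$, and the exponential factor is $1$. Thus the zero-eigenvalue contribution to $x_i(t)$ is a plain polynomial $p_i(t) = \sum_{k=0}^{m_i} c_{i,k} t^k$ for some degree $m_i \le n-1$. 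Combining, $x_i(t) = p_i(t) + g_i(t)$ with $g_i(t)\to 0$.

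Now I invoke boundedness. Since $\mathbf{x}(t)$ is assumed bounded on $[0,\infty)$ and $g_i(t)$ is bounded (it tends to $0$), the polynomial $p_i(t) = x_i(t) - g_i(t)$ must be bounded on $[0,\infty)$. A polynomial in $t$ is bounded on $[0,\infty)$ if and only if it is constant, so $c_{i,k} = 0$ for all $k \ge 1$, and $p_i(t) \equiv c_{i,0}$. Therefore $\lim_{t\to\infty} x_i(t) = c_{i,0} + \lim_{t\to\infty} g_i(t) = c_{i,0}$, which is a finite limit, and this holds for every coordinate $i$. Assembling the coordinates yields the claimed convergence of $\mathbf{x}(t)$.

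The main subtlety, and what I would be most careful about, is the case where $A$ has a zero eigenvalue whose Jordan block is nontrivial (size $\ge 2$): such a block \emph{does} produce genuine polynomial terms $t, t^2, \ldots$ via the explicit form~(\ref{eq:ChFA-Appendix-e^{Bt}-RealEigenvalue}) with $\lambda = 0$. A priori these coefficients $c_{i,k}$ depend linearly on $\mathbf{x}_0$ through the formula $\mathbf{x}(t) = P\,\mathrm{diag}[e^{B_j t}]P^{-1}\mathbf{x}_0$, so one might worry that only special initial data make them vanish. The resolution is exactly the boundedness hypothesis on the particular trajectory under consideration: we do not claim the coefficients vanish for every $\mathbf{x}_0$, only that they must vanish for the $\mathbf{x}_0$ that produces the bounded $\mathbf{x}(t)$ at hand, and that is enough for the statement as worded.
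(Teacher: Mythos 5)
Your proposal is correct and follows essentially the same route as the paper's own proof: decompose each coordinate via Theorem~\ref{Theorem D.2.2.} into terms $t^k e^{at}\cos bt$ and $t^k e^{at}\sin bt$, note that the $a<0$ contributions vanish at infinity, and use boundedness to force the zero-eigenvalue polynomial part to be constant. Your write-up is in fact slightly more careful than the paper's (explicitly isolating the polynomial $p_i(t)=x_i(t)-g_i(t)$ and noting that the coefficients need only vanish for the particular bounded trajectory), but the underlying argument is the same.
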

\begin{proof}
The solution is composed of the terms like
 $t^{k}e^{at}\cos bt$ and $t^{k}e^{at}\sin bt$.
If $a<0$, then $t^{k}e^{at}\cos bt$ and $t^{k}e^{at}\sin bt$
converge as time goes to infinity. If $a=b=0$, we will see $k=0$. In
fact, in this case $t^{k}e^{at}\cos bt=t^{k}$. If $k>0$, then this
term $t^{k}$ in the solution will make the solution unbounded as $t$
tends to infinity. So $k$ must be zero in the terms corresponding to
$a=b=0$. Thus, $t^{k}e^{at}\cos bt=1$ and $t^{k}e^{at}\sin bt=0$. So
the solution converges.
\end{proof}

\subsection{}

As the above subsections illustrate, the following problem
\begin{equation*} \left\{
\begin{split}
\frac{\mathrm{d}\mathbf{x}}{\mathrm{d}t}=& A\mathbf{x} \\
\mathbf{x}(0)= & \mathbf{x}_0
\end{split}
\right.
\end{equation*}
has solution $\mathbf{x}(t)=e^{At}\mathbf{x}_0$, which equals
\begin{equation}\label{eq:ChFA-Appendix-local-3}
   \mathbf{x}(t)=P\,\mathrm{diag}\left[e^{B_jt}\right]P^{-1}\mathbf{x}_0,
\end{equation}
where $B_j$ are the elementary Jordan blocks of $A$, $j=1,\cdots,r$.
Suppose the rank of $A_{n\times n}$ is $n-1$. This means that zero
is a one fold eigenvalue of $A$.

\par According to (\ref{eq:ChFA-Appendix-local-3}), we construct a
corresponding $\hat{\mathbf{x}}$ in the form
\begin{equation}\label{eq:ChFA-Appendix-local-4}
   \hat{\mathbf{x}}=P\,\mathrm{diag}\left[\hat{B}_j\right]P^{-1}\mathbf{x}_0,
\end{equation}
where $\hat{B}_j(t)$ is defined as follows. If $e^{B_jt}$ in
(\ref{eq:ChFA-Appendix-local-3}) has the form of
(\ref{eq:ChFA-Appendix-e^{Bt}-ComplexEigenvalue}), then $\hat{B}_j$
is defined by
\begin{equation}\label{eq:ChFA-Appendix-Bj(t)-ComplexEigenvalue}
\hat{B}_j=\mathbf{0}_{2m\times 2m}.
\end{equation}

If $e^{B_jt}$ has the form of
(\ref{eq:ChFA-Appendix-e^{Bt}-RealEigenvalue}), and the
corresponding real eigenvalue $\lambda<0$, then $\hat{B}_j$ is
defined by
\begin{equation}\label{eq:ChFA-Appendix-Bj(t)-RealEigenvalue}
\hat{B}_j=\mathbf{0}_{m\times m}
\end{equation}
If $\lambda=0$, we know that  zero is a one fold eigenvalue of $A$
due to its rank $n-1$. So $m=1$. Then,
\begin{equation}\label{eq:ChFA-Appendix-Bj(t)-RealEigenvalue}
\hat{B}_j=1.
\end{equation}
In short, only for the zero eigenvalue is $\hat{B}_j$ set to one,
otherwise it is set to zeros. Clearly, we have
\begin{lemma}\label{Lemma D.2.1.}
If zero is a one fold eigenvalue of $A$ and all other eigenvalues of
$A$ have negative real parts, then
$$
    \lim_{t\rightarrow\infinity}|\hat{\mathbf{x}}(t)-\mathbf{x}(t)|=0.
$$
\end{lemma}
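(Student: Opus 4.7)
The plan is to subtract the two expressions and show that each Jordan block contribution either vanishes identically or decays to zero. Writing
\[
\mathbf{x}(t)-\hat{\mathbf{x}}(t)=P\,\mathrm{diag}\!\left[e^{B_j t}-\hat{B}_j\right]P^{-1}\mathbf{x}_0,
\]
it suffices, by continuity of left/right multiplication by the fixed matrices $P$ and $P^{-1}\mathbf{x}_0$, to show that $\|e^{B_j t}-\hat{B}_j\|\to 0$ for every Jordan block $B_j$. I would split the argument into two cases according to whether $B_j$ corresponds to an eigenvalue with strictly negative real part or to the zero eigenvalue.

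First, suppose $B_j$ is associated with an eigenvalue $\lambda$ (real or complex) with $a=\Re(\lambda)<0$. I would appeal directly to the explicit formulas (\ref{eq:ChFA-Appendix-e^{Bt}-RealEigenvalue}) and (\ref{eq:ChFA-Appendix-e^{Bt}-ComplexEigenvalue}): every entry of $e^{B_j t}$ is of the form $t^{k}e^{at}\cos bt$ or $t^{k}e^{at}\sin bt$ with $0\leq k\leq n-1$. Since exponential decay dominates polynomial growth whenever $a<0$, each such entry tends to $0$ as $t\to\infty$. Because $\hat{B}_j$ is defined to be the zero matrix in both (\ref{eq:ChFA-Appendix-Bj(t)-ComplexEigenvalue}) and the $\lambda<0$ clause of (\ref{eq:ChFA-Appendix-Bj(t)-RealEigenvalue}), we get $e^{B_j t}-\hat{B}_j\to \mathbf{0}$.

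Second, consider the unique block associated with the zero eigenvalue. The hypothesis that $0$ is a simple eigenvalue forces the corresponding Jordan block to have size $1\times 1$ with $\lambda=0$, so $e^{B_j t}=e^{0}=1$ for all $t$, while by construction $\hat{B}_j=1$ as well; hence the block difference is identically zero for every $t$. Summing over all blocks, $\mathrm{diag}[e^{B_j t}-\hat{B}_j]\to \mathbf{0}$, which gives the claim.

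I do not anticipate a genuine obstacle here: the statement is essentially a bookkeeping consequence of the Jordan canonical form together with the elementary fact that $t^{k}e^{at}\to 0$ for $a<0$, which is also precisely the mechanism already exploited in Theorem~\ref{Theorem D.2.2.} and Corollary~\ref{Corollary D.2.1.}. The only delicate point worth double-checking is that simplicity of the zero eigenvalue really does yield a $1\times 1$ block rather than a nontrivial nilpotent block of size one (so that no spurious constant term $t^{k}$ with $k\geq 1$ survives); this is immediate from the definition of algebraic multiplicity in the Jordan form recalled in Theorem~\ref{Theorem D.2.1.}.
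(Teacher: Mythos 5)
Your proposal is correct and follows essentially the same route as the paper: decompose the difference block-by-block via the Jordan form, note the zero-eigenvalue block contributes $1-1=0$ because simplicity forces a $1\times 1$ block, and kill the remaining blocks using $t^{k}e^{at}\to 0$ for $a<0$ (the paper packages this as $\|e^{Bt}\|\leq C_1(t)e^{-\Lambda t}$ with $\Lambda=\inf\{-\Re(\lambda)\}>0$). No substantive difference.
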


\begin{proof}
If $B$ is the Jordan block corresponding to the one fold zero
eigenvalue, then according to
(\ref{eq:ChFA-Appendix-Bj(t)-RealEigenvalue}), $e^{Bt}=e^{0}=1$, and
 $e^{Bt}-\hat{B}=1-1=0$. For any non-zero eigenvalue, since $\hat{B}=0$ then
 $e^{Bt}-\hat{B}=e^{Bt}$. Notice that by (\ref{eq:ChFA-Appendix-e^{Bt}-RealEigenvalue}) and
(\ref{eq:ChFA-Appendix-e^{Bt}-ComplexEigenvalue}),
$$\|e^{Bt}\|\leq C_1(t)e^{-\Lambda t},$$
where $C_1(t)$ is a polynomial of $t$ with the maximum order $k$,
and
$$\Lambda=\inf\{-\Re(\lambda)\mid
\lambda \; \mbox{is non-zero eigenvalue of}\; A\}>0.
$$
Therefore,
\begin{equation}\label{eq:ChFA-Appendix-x(t)-hat(x)-Estimation}
     \|\mathbf{x}(t)-\hat{\mathbf{x}}\|\leq
     \sum_{B}\|e^{Bt}-\hat{B}\|\leq C_2(t)e^{-\Lambda t}\longrightarrow
     0
\end{equation}
as $t$ goes to infinity, where $C_2(t)$ is a polynomial function of
$t$.
\end{proof}

\par The above construction can be extended to the  problem of
\begin{equation}\label{eq:ChFA-Appendix-local-10}
\frac{\mathrm{d}X}{\mathrm{d}t}=A(t)X
\end{equation}
with the initial value $X(0)$, which is discussed in
Section~\ref{subsec:ChFA-twoComType-OneSyn-ConvergenceTheorem}. The
solution is
$$X(t)=e^{tC(t)} X(0),$$
where $\displaystyle C(t)=\frac1t\int_0^t A(s)ds$. We can similarly
define a function $f$ such that $C(t)=f(\beta(t))$, where $\beta(t)$
is similarly defined according to $A(t)$. Therefore,
$$X(t)=e^{tC(t)} X(0)=e^{tf(\beta(t))}X(0).$$

\par For a fixed
$\beta$, 
\begin{equation}\label{eq:ChFA-Appendix-local-4-0}
e^{tf(\beta)}X(0)=P(\beta)\,\mathrm{diag}\left[e^{B_j(\beta)t}\right]P(\beta)^{-1}X(0),
\end{equation}
where $B_j(\beta)$ are the elementary Jordan blocks of $f(\beta)$,
$j=1,\cdots,r(\beta)$. Repeating the previous construction process
with $B_j(\beta)$ for each $j$, we obtain the constructed matrix
$\hat{B}(\beta)_j$. We define
\begin{equation}\label{eq:ChFA-Appendix-local-4}
   h(\beta)=P(\beta)\,\mathrm{diag}\left[\hat{B}(\beta)_j\right]P(\beta)^{-1}X(0).
\end{equation}

For convenience, suppose the dimension of $A(t)$ in
(\ref{eq:ChFA-Appendix-local-10}) is $n\times n$. We should point
out that for any $t$, the rank of $A(t)$ is $n-1$, and thus for any
$t$, $A(t)$'s zero eigenvalue is one fold. In fact, the rank of any
infinitesimal generator with dimension $n\times n$ is $n-1$. This
implies that any $n-1$ columns or rows of this generator are
linearly independent. According to the definition of $A(t)$ in
Section~\ref{subsec:ChFA-twoComType-OneSyn-ConvergenceTheorem},
$A(t)$ is an infinitesimal generator if $\alpha(t)\neq 0$. If
$\alpha(t)=0$, $A(t)$ is also a generator after one column is
modified (see Proposition~\ref{proposition: ChFQ-Q1-structure}
and~\ref{proposition: ChFQ-Q2-structure}), which means that the
other $n-1$ columns are linearly independent. So whatever $t$ is,
the rank of $A(t)$ is $n-1$. Thus, the zero eigenvalue is one fold.
Therefore, $f(\beta)$'s zero eigenvalue is also one fold for any
$\beta$. So each entry of all blocks $\hat{B}(\beta)_j$ is zero,
except for the one corresponding to the zero eigenvalue, in which
case this block is a scalar one. This implies that for any $\beta$
all entries of the matrix $\mbox{diag}[\hat{B}(\beta)_j]$ are zeros,
except for a diagonal entry with one.

\par By permutation, $\mathrm{diag}\left[\hat{B}(\beta)_j\right]$ can always be
transformed into the form
$$
     \left[
                \begin{array}{cc}
                  1 & \mathbf{0} \\
                  \mathbf{0} & \mathbf{0}
                \end{array}
     \right].
$$
Correspondingly, $P(\beta)$ is permuted into $U(\beta)$. Therefore,
the formulae (\ref{eq:ChFA-Appendix-local-4}) can be written as
\begin{equation}\label{eq:ChFA-Appendix-local-5}
 h(\beta)=U(\beta) \left[
                \begin{array}{cc}
                  1 & \mathbf{0} \\
                  \mathbf{0} & \mathbf{0}
                \end{array}
     \right]U(\beta)^{-1}X(0),
\end{equation}

\par Now we prove a proposition which is used in
Section~\ref{subsec:ChFA-twoComType-OneSyn-ConvergenceTheorem}.

\begin{proposition}\label{Proposition D.2.1.}
 For the $e^{tf(\beta)}X(0)$  in (\ref{eq:ChFA-Appendix-local-4-0})
and $h(\beta)$ in (\ref{eq:ChFA-Appendix-local-4}), we have
$$
    \lim_{t\rightarrow\infinity}\|e^{tf(\beta)}X(0)-h(\beta)\|=0.
$$
\end{proposition}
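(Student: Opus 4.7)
The plan is to reduce Proposition~\ref{Proposition D.2.1.} to Lemma~\ref{Lemma D.2.1.} by treating $\beta$ as a fixed parameter and verifying, for the constant matrix $A := f(\beta)$, that (i) zero is a one-fold eigenvalue and (ii) every other eigenvalue has strictly negative real part. Once these two facts are in hand, the decomposition $e^{tf(\beta)} = P(\beta)\,\mathrm{diag}[e^{B_j(\beta)t}]\,P(\beta)^{-1}$ given in (\ref{eq:ChFA-Appendix-local-4-0}) together with the definition of $h(\beta)$ in (\ref{eq:ChFA-Appendix-local-4}) reduces the estimate to a blockwise computation.

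First I would verify the rank condition. The matrix $f(\beta)$ arises in the same way that $A(t)$ arose in Section~\ref{subsec:ChFA-twoComType-OneSyn-ConvergenceTheorem}, so the argument already given there shows that $f(\beta)$ has rank $n-1$: either $f(\beta)^T$ is itself an infinitesimal generator (when $\beta > 0$), or one column is modified into a generator column (when $\beta = 0$), and in both cases $n-1$ columns are linearly independent. Consequently $0$ is a simple eigenvalue of $f(\beta)$, so the Jordan block corresponding to it is the scalar~$1\times 1$ block $B_j = 0$, and by construction $\hat B_j = 1 = e^{B_j t}$ for all $t$.

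Next I would verify the spectral condition. By Theorem~\ref{thm:ChFA-Q1Q2-Eigen-NegativeRealParts} (and the Perron--Frobenius argument used in its proof, mirroring Lemma~\ref{lemma:ChFA-lambda(beta)geq0}), every nonzero eigenvalue $\lambda$ of $f(\beta)$ satisfies $\Re(\lambda) < 0$. Setting $\Lambda(\beta) = \min\{-\Re(\lambda) : \lambda \ne 0 \text{ is an eigenvalue of } f(\beta)\} > 0$, the explicit forms (\ref{eq:ChFA-Appendix-e^{Bt}-RealEigenvalue}) and (\ref{eq:ChFA-Appendix-e^{Bt}-ComplexEigenvalue}) give, for every non-zero-eigenvalue Jordan block $B_j$, a polynomial bound $\|e^{B_j t}\| \le C_j(t)\,e^{-\Lambda(\beta)t}$, while $\hat B_j = 0$ by construction. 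Summing blockwise and absorbing $P(\beta)$, $P(\beta)^{-1}$ and $X(0)$ into a constant (or, more carefully, into a polynomial factor as in (\ref{eq:ChFA-Appendix-x(t)-hat(x)-Estimation})) yields
\[
\bigl\| e^{tf(\beta)}X(0) - h(\beta) \bigr\|
\;\le\; C(t)\,e^{-\Lambda(\beta)t} \;\longrightarrow\; 0
\]
as $t \to \infty$, which is the required conclusion.

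The only genuine obstacle is the spectral claim: the Jordan-form calculation itself is routine, but one must be careful to rule out purely imaginary nonzero eigenvalues of $f(\beta)$. This is precisely where the Perron--Frobenius / generator structure of $f(\beta)$ is essential, and it is already handled in the earlier sections; no new input is needed. If one wanted a uniform-in-$\beta$ statement one could additionally invoke the continuity argument of Lemma~\ref{lemma:ChFA-lambda(beta)-UniformLowerBound} to replace $\Lambda(\beta)$ by a positive constant independent of $\beta \in [0,1]$, but that is not required for the pointwise limit asserted in Proposition~\ref{Proposition D.2.1.}.
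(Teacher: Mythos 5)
Your proposal is correct and follows essentially the same route as the paper: Jordan decomposition of $f(\beta)$, the one-fold zero eigenvalue coming from the rank-$(n-1)$ generator structure, negativity of the real parts of the nonzero eigenvalues via the Perron--Frobenius argument, and the resulting polynomial-times-exponential bound on the non-zero blocks, exactly as in the estimate (\ref{eq:ChFA-Appendix-x(t)-hat(x)-Estimation}) of Lemma~\ref{Lemma D.2.1.}.

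The one substantive difference is your final remark. You work with the pointwise rate $\Lambda(\beta)>0$ for a fixed $\beta$ and explicitly dismiss the uniform bound $\Lambda_1=\inf_{\beta\in[0,1]}\{-\Re(\lambda)\}>0$ as unnecessary. The paper's proof instead establishes exactly this uniform $\Lambda_1$ (by the continuity argument of Lemma~\ref{lemma:ChFA-lambda(beta)-UniformLowerBound}), and that is not an idle strengthening: the proposition is immediately applied in Corollary~\ref{Corollary D.2.1-2} with $\beta=\beta(t)$ varying in time, where a $\beta$-dependent decay rate $C_\beta(t)e^{-\Lambda(\beta(t))t}$ gives no conclusion unless $\Lambda(\beta)$ is bounded away from zero (and the constants coming from $P(\beta)$, $P(\beta)^{-1}$ are bounded) uniformly over $\beta\in[0,1]$, as in Proposition~\ref{proposition:ChFA-X(t)-hatX(t)-UserProvider-NotExpl}. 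So your argument proves the literal statement as written, but if you intend it to feed into the corollary you should keep the uniform infimum rather than discard it.
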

\begin{proof}
By a similar estimation as
(\ref{eq:ChFA-Appendix-x(t)-hat(x)-Estimation}), we have
$$\|e^{tf(\beta)}X(0)-h(\beta)\|\leq C(t)e^{-\Lambda_1t}.$$
where $C(t)$ is a polynomial of $t$. By a similar proof to
Lemma~\ref{lemma:ChFA-lambda(beta)-UniformLowerBound}, we have
$$
\Lambda_1=\inf_{\beta\in[0,1]}\{-\Re(\lambda)|\lambda \;\mbox{is
$f(\beta)$'s non-zero eigenvalue}\}>0.
$$
Then
$$\|e^{tf(\beta)}X(0)-h(\beta)\|\leq C(t)e^{-\Lambda_1t}\longrightarrow 0$$
as $t$ goes to infinity.
\end{proof}

Let $\hat{X}(t)=h(\beta(t))$, where $\beta(t)\in[0,1]$, and notice
$X(t)=f(\beta(t))$. As a consequence of this proposition, we have
\begin{corollary}\label{Corollary D.2.1-2}
 Let $X(t)$ be the solution of
$\frac{\mathrm{d}X}{\mathrm{d}t}=A(t)X$ which is discussed in
Section~\ref{subsec:ChFA-twoComType-OneSyn-ConvergenceTheorem}, and
let $\hat{X}(t)=h(\beta(t))$, then
$$
    \lim_{t\rightarrow\infinity}\|X(t)-\hat{X}(t)\|=0.
$$
\end{corollary}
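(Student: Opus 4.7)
The plan is to derive this corollary directly from Proposition~\ref{Proposition D.2.1.} by specialising the fixed parameter $\beta$ to the time-dependent trajectory $\beta(t)$. First I would recall the representations set up in Section~\ref{subsec:ChFA-twoComType-OneSyn-ConvergenceTheorem} and in the construction preceding~(\ref{eq:ChFA-Appendix-local-4}): namely $X(t)=e^{tC(t)}X(0)=e^{tf(\beta(t))}X(0)$ and $\hat X(t)=h(\beta(t))$. Consequently
\[
\|X(t)-\hat X(t)\|=\|e^{tf(\beta(t))}X(0)-h(\beta(t))\|,
\]
so the corollary amounts to reading off the bound of Proposition~\ref{Proposition D.2.1.} along the moving parameter $\beta(t)\in[0,1]$.

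The substance of the argument is therefore to verify that the estimate $\|e^{tf(\beta)}X(0)-h(\beta)\|\le C(t)e^{-\Lambda_1 t}$ supplied by Proposition~\ref{Proposition D.2.1.} is \emph{uniform} in $\beta\in[0,1]$, so that it transfers unchanged when $\beta$ is replaced by $\beta(t)$. The exponential rate $\Lambda_1>0$ is already uniform: an argument verbatim to Lemma~\ref{lemma:ChFA-lambda(beta)-UniformLowerBound} shows that $-\Re\lambda(\beta)$ is a continuous, strictly positive function on the compact set $[0,1]$ (strict positivity being the general-case analogue of Lemma~\ref{lemma:ChFA-lambda(beta)geq0}, which persists because $f(\beta)^{T}$ is either an irreducible infinitesimal generator when $\beta>0$ or a Perron--Frobenius-type nonnegative matrix when $\beta=0$). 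For the prefactor $C(t)$, I would note that the Jordan blocks of $f(\beta)$ have size at most the fixed ambient dimension $n$, so the polynomial factors $(1+t+\cdots+t^{n-1})$ that appear in $\|e^{tB_j(\beta)}\|$ dominate uniformly in $\beta$; the remaining $\beta$-dependence lies in the similarity transforms $P(\beta), P(\beta)^{-1}$, which are bounded on the compact interval $[0,1]$.

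With uniformity in hand, the conclusion is immediate: for every $t\ge 0$, setting $\beta=\beta(t)\in[0,1]$ in the proposition gives
\[
\|X(t)-\hat X(t)\|\le C(t)e^{-\Lambda_1 t}\longrightarrow 0\qquad\text{as }t\to\infty,
\]
since the polynomial $C(t)$ is dominated by the exponential $e^{-\Lambda_1 t}$.

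The main obstacle is the uniform-in-$\beta$ control of $P(\beta)$ and $P(\beta)^{-1}$: the Jordan normal form of $f(\beta)$ can degenerate at exceptional values of $\beta$ where eigenvalues collide, so a naive continuity argument for $P(\beta)^{-1}$ may fail. The cleanest remedy is to sidestep Jordan forms altogether and to represent $h(\beta)$ as the spectral projection $\Pi(\beta)X(0)$ onto $\ker f(\beta)$ along the stable invariant subspace; this projection depends continuously on $\beta$ precisely because of the uniform spectral gap $\Lambda_1>0$ (the two spectral groups $\{0\}$ and the stable eigenvalues remain separated by a fixed distance), and hence is uniformly bounded on $[0,1]$. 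A resolvent/Dunford integral representation of $e^{tf(\beta)}(I-\Pi(\beta))$ then yields the bound $C(t)e^{-\Lambda_1 t}$ with constants depending only on $\sup_{\beta\in[0,1]}\|\Pi(\beta)\|$ and on $\Lambda_1$, not on the possibly singular Jordan structure. Once this uniform estimate is secured, the passage from fixed $\beta$ to $\beta=\beta(t)$ is purely notational.
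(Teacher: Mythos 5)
Your proposal follows essentially the same route as the paper: the paper's entire proof of Corollary~\ref{Corollary D.2.1-2} is the one-line observation that $X(t)=e^{tf(\beta(t))}X(0)$ and $\hat{X}(t)=h(\beta(t))$ with $\beta(t)\in[0,1]$, so the bound of Proposition~\ref{Proposition D.2.1.} applies pointwise in $t$ with $\beta=\beta(t)$ and the uniform-in-$\beta$ rate $\Lambda_1>0$ does the rest. Where you go beyond the paper is in insisting that the prefactor $C(t)$ in $\|e^{tf(\beta)}X(0)-h(\beta)\|\le C(t)e^{-\Lambda_1 t}$ be uniform in $\beta\in[0,1]$, and in flagging that the paper's Jordan-form derivation does not obviously deliver this: the estimate in the proof of Proposition~\ref{Proposition D.2.1.} is inherited from~(\ref{eq:ChFA-Appendix-x(t)-hat(x)-Estimation}), which suppresses the factors $\|P(\beta)\|\,\|P(\beta)^{-1}\|$, and the Jordan basis can indeed degenerate (and $P(\beta)^{-1}$ blow up) at parameter values where eigenvalues collide. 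This is a real, if tacit, gap in the paper's argument once $\beta$ is allowed to move with $t$, and your remedy --- replacing the Jordan decomposition by the spectral projection $\Pi(\beta)$ onto $\ker f(\beta)$, which is continuous and hence bounded on $[0,1]$ thanks to the uniform spectral separation $\Lambda_1>0$, together with a resolvent representation of $e^{tf(\beta)}(I-\Pi(\beta))$ --- is the standard and correct way to secure the uniform bound. In short: same skeleton as the paper, but your version actually closes the uniformity step that the paper only asserts.
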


\subsection{}\label{section:Appendix-Proof-Lemma}
This subsection presents a proof of
Lemma~\ref{lemma:ChFA-lambda(beta)geq0} in
Section~\ref{section:Proof-Not-Rely-On-Explicit-Expressions}.  Let
$\lambda(\beta)$ be a nonzero eigenvalue of the following
$f(\beta)$:
 \begin{equation*}
 f(\beta)=\left(
         \begin{array}{cc}
           -a\beta & b \\
           a\beta & -b \\
         \end{array}
       \right).
\end{equation*}
We will prove the following lemma which states that the real part of
$\lambda(\beta)$ is negative. The proof given here does not rely on
the explicit expression of the eigenvalue.

\par \textbf{Lemma}~\ref{lemma:ChFA-lambda(beta)geq0}:
For any $\beta\in [0,1]$, $\Re(\lambda(\beta))<0$, where
$\lambda(\beta)$ is a nonzero eigenvalue of $f(\beta)$.

\begin{proof}
After a shift $\max\{a\beta,b\}I$, $f(\beta)$ becomes
$\tilde{f}(\beta)=f(\beta)+\max\{a\beta,b\}I$, which is a
nonnegative matrix. Then similarly to the proof of
Theorem~\ref{thm:ChFA-Q1Q2-Eigen-NegativeRealParts}, which is based
on the Perron-Frobenious theorem
(Theorem~\ref{thm:ChFA-Perron-Frobenius Theorem}), we can conclude
that the eigenvalue other than zero has negative real part.
\end{proof}









\end{document}